\crefname{lemma}{Lemma}{Lemmas}
\crefname{proposition}{Proposition}{Propositions}
\crefname{definition}{Definition}{Definitions}
\crefname{theorem}{Theorem}{Theorems}
\crefname{conjecture}{Conjecture}{Conjectures}
\crefname{corollary}{Corollary}{Corollaries}
\crefname{example}{Example}{Examples}
\crefname{section}{Section}{Sections}
\crefname{appendix}{Appendix}{Appendices}
\crefname{figure}{Figure}{Figures}
\crefname{equation}{Equation}{Equations}
\crefname{table}{Table}{Tables}
\crefname{item}{Property}{Properties}
\crefname{remark}{Remark}{Remarks}
\newtheorem{theorem}{Theorem}
\newtheorem{definition}[theorem]{Definition}
\newtheorem{corollary}[theorem]{Corollary}
\newtheorem{lemma}[theorem]{Lemma}
\newtheorem{remark}[theorem]{Remark}
\DeclareDocumentCommand\bra{ s m t\ket s g }
{ 
	\IfBooleanTF{#3}
	{ 
		\IfBooleanTF{#1}
		{ 
			\IfNoValueTF{#5}
			{\braket*{#2}{} \IfBooleanTF{#4}{*}{}}
			{\braket*{#2}{#5}}
		}
		{
			\IfBooleanTF{#4}
			{ 
				\IfNoValueTF{#5}
				{\braket{#2}{} *}
				{\braket*{#2}{#5}}
			}
			{\braket{#2}{\IfNoValueTF{#5}{}{#5}}} 
		}
	}
	{ 
		\IfBooleanTF{#1}
		{\vphantom{#2}\left\langle\smash{#2}\right\rvert}
		{\left\langle{#2}\right\rvert}
		\IfBooleanTF{#4}{*}{}
		\IfNoValueTF{#5}{}{#5}
	}
}
\DeclareDocumentCommand\ket{ s m }
{ 
	\IfBooleanTF{#1}
	{\vphantom{#2}\left\lvert\smash{#2}\right\rangle} 
	{\left\lvert{#2}\right\rangle} 
}
\DeclareDocumentCommand\innerproduct{ s m g }
{ 
	\IfBooleanTF{#1}
	{ 
		\IfNoValueTF{#3}
		{\vphantom{#2}\left\langle\smash{#2}\middle\vert\smash{#2}\right\rangle}
		{\vphantom{#2#3}\left\langle\smash{#2}\middle\vert\smash{#3}\right\rangle}
	}
	{ 
		\IfNoValueTF{#3}
		{\left\langle{#2}\middle\vert{#2}\right\rangle}
		{\left\langle{#2}\middle\vert{#3}\right\rangle}
	}
}
\DeclareDocumentCommand\braket{}{\innerproduct} 
\DeclareDocumentCommand\outerproduct{ s m g }
{ 
	\IfBooleanTF{#1}
	{ 
		\IfNoValueTF{#3}
		{\vphantom{#2}\left\lvert\smash{#2}\middle\rangle\!\middle\langle\smash{#2}\right\rvert}
		{\vphantom{#2#3}\left\lvert\smash{#2}\middle\rangle\!\middle\langle\smash{#3}\right\rvert}
	}
	{ 
		\IfNoValueTF{#3}
		{\left\lvert{#2}\middle\rangle\!\middle\langle{#2}\right\rvert}
		{\left\lvert{#2}\middle\rangle\!\middle\langle{#3}\right\rvert}
	}
}
\newcommand{\Umax}{\op U_\mathrm{max}}
\newcommand\hop{v}
\newcommand\os{u}
\newcommand\n{n} 
\newcommand\tcoeff{b}
\newcommand\ee{\mathrm e}
\newcommand\ii{\mathrm i}
\newcommand\dd{\mathrm d}
\newcommand\field{\mathds}
\newcommand\id{\mathds 1}
\newcommand\op{}
\newcommand\R{\mathcal R}
\DeclareMathOperator{\BigO}{O}
\DeclareMathOperator{\cost}{\mathcal T_\mathrm{cost}}
\DeclareMathOperator{\argmax}{argmax}
\DeclareMathOperator{\Tr}{Tr}
\DeclareMathOperator{\poly}{poly}
\newcommand\ad{a^{\dagger}}
\DeclareDocumentCommand{\circuit}{m}{%
\Qcircuit @C=1.5em @R=1.2em {
#1
}
}
\newcommand{\cN}{\mathcal N}
\DeclareDocumentCommand\bra{ s m t\ket s g }
{ 
	\IfBooleanTF{#3}
	{ 
		\IfBooleanTF{#1}
		{ 
			\IfNoValueTF{#5}
			{\braket*{#2}{} \IfBooleanTF{#4}{*}{}}
			{\braket*{#2}{#5}}
		}
		{
			\IfBooleanTF{#4}
			{ 
				\IfNoValueTF{#5}
				{\braket{#2}{} *}
				{\braket*{#2}{#5}}
			}
			{\braket{#2}{\IfNoValueTF{#5}{}{#5}}} 
		}
	}
	{ 
		\IfBooleanTF{#1}
		{\vphantom{#2}\left\langle\smash{#2}\right\rvert}
		{\left\langle{#2}\right\rvert}
		\IfBooleanTF{#4}{*}{}
		\IfNoValueTF{#5}{}{#5}
	}
}
\DeclareDocumentCommand\ket{ s m }
{ 
	\IfBooleanTF{#1}
	{\vphantom{#2}\left\lvert\smash{#2}\right\rangle} 
	{\left\lvert{#2}\right\rangle} 
}
\DeclareDocumentCommand\innerproduct{ s m g }
{ 
	\IfBooleanTF{#1}
	{ 
		\IfNoValueTF{#3}
		{\vphantom{#2}\left\langle\smash{#2}\middle\vert\smash{#2}\right\rangle}
		{\vphantom{#2#3}\left\langle\smash{#2}\middle\vert\smash{#3}\right\rangle}
	}
	{ 
		\IfNoValueTF{#3}
		{\left\langle{#2}\middle\vert{#2}\right\rangle}
		{\left\langle{#2}\middle\vert{#3}\right\rangle}
	}
}
\DeclareDocumentCommand\braket{}{\innerproduct} 
\DeclareDocumentCommand\outerproduct{ s m g }
{ 
	\IfBooleanTF{#1}
	{ 
		\IfNoValueTF{#3}
		{\vphantom{#2}\left\lvert\smash{#2}\middle\rangle\!\middle\langle\smash{#2}\right\rvert}
		{\vphantom{#2#3}\left\lvert\smash{#2}\middle\rangle\!\middle\langle\smash{#3}\right\rvert}
	}
	{ 
		\IfNoValueTF{#3}
		{\left\lvert{#2}\middle\rangle\!\middle\langle{#2}\right\rvert}
		{\left\lvert{#2}\middle\rangle\!\middle\langle{#3}\right\rvert}
	}
}
\newtheoremstyle{natcomms}
{3pt}
{3pt}
{\normalfont}
{}
{\itshape}
{.}
{.5em}
{}%
\theoremstyle{natcomms}
\renewcommand\P{\mathcal P}
\DeclareDocumentCommand{\circuit}{m}{%
\Qcircuit @C=1.5em @R=1.2em {
#1
}
}
\title{Hamiltonian Simulation Algorithms for Near-Term Quantum Hardware}
\author[1,2]{Laura Clinton \thanks{laura@phasecraft.io}}
\author[1,3]{Johannes Bausch \thanks{johannes@phasecraft.io}}
\author[1]{Toby Cubitt \thanks{toby@phasecraft.io}}
\affil[1]{PhaseCraft Ltd.}
\affil[2]{Department of Computer Science, University College London}
\affil[3]{CQIF, DAMTP, University of Cambridge}
\date{November 2019}
\begin{document}

\maketitle

\begin{abstract}
The quantum circuit model is the de-facto way of designing quantum algorithms.
Yet any level of abstraction away from the underlying hardware incurs overhead.
In this work, we develop quantum algorithms for Hamiltonian simulation ``one level below'' the circuit model, exploiting the underlying control over qubit interactions available in most quantum hardware and deriving analytic circuit identities for synthesising multi-qubit evolutions from two-qubit interactions. We then analyse the impact of these techniques under the standard error model where errors occur per gate, and an error model with a constant error rate per unit time.

To quantify the benefits of this approach, we apply it to time-dynamics simulation of the 2D spin Fermi-Hubbard model.
Combined with new error bounds for Trotter product formulas tailored to the non-asymptotic regime and an analysis of error propagation, we find that e.g.\ for a $5\times5$ Fermi-Hubbard lattice we reduce the circuit depth from $1,243,586$ using the best previous fermion encoding and error bounds in the literature, to $3,209$ in the per-gate error model, or the circuit-depth-equivalent to $259$ in the per-time error model.
This brings Hamiltonian simulation, previously beyond reach of current hardware for non-trivial examples, significantly closer to being feasible in the NISQ era.
\end{abstract}

\clearpage
\thispagestyle{empty}
\enlargethispage{5em}

\section*{Introduction}\label{sec:intro}
Quantum computing is on the cusp of entering the era in which quantum hardware can no longer be simulated effectively classically, even on the world's biggest supercomputers~\cite{Villalonga2020,Bremner2011,Aaronson2010,Bremner2017,Harrow2017}.
Google recently achieved the first so-called ``quantum supremacy'' milestone demonstrating this~\cite{GoogleAI}.
Whilst reaching this milestone is an impressive experimental physics achievement, the very definition of this goal allows it to be a demonstration that has no useful practical applications~\cite{Preskill2012}.
The recent Google results are of exactly this nature.
By far the most important question for quantum computing now is to determine whether there are useful applications of this class of noisy, intermediate-scale quantum (NISQ) hardware~\cite{Preskill2018}.

However, current quantum hardware is still extremely limited, with $\approx 50$ qubits capable of implementing quantum circuits up to a gate depth of $\approx 20$~\cite{GoogleAI}.
This is far too limited to run useful instances of even the simplest textbook quantum algorithms, let alone implement the error correction and fault-tolerance required for large-scale quantum computations.
Estimates of the number of qubits and gates required to run Shor's algorithm on integers that cannot readily be factored on classical computers place it -- and related number-theoretic algorithms -- well into the regime of requiring a fully scalable, fault-tolerant quantum computer~\cite{Haner2016,Roetteler2017}.
Studies of practically relevant combinatorial problems tell a similar story for capitalising on the quadratic speedup of Grover's algorithm~\cite{Montanaro2015}.
Quantum computers are naturally well-suited for simulation of quantum many-body systems~\cite{Feynman1982,Lloyd1996a} -- a task that is notoriously difficult on classical computers. Quantum simulation is likely to be one of the first practical applications of quantum computing.
But, whilst the number of qubits required to run interesting quantum simulations may be lower than for other applications, careful studies of the gate counts required for a quantum chemistry simulation of molecules that are not easily tractible classically~\cite{google-chemisty},
or for simple condensed matter models~\cite{Kivlichan2019}, remain far beyond current hardware.

With severely resource-constrained hardware such as this, squeezing every ounce of performance out of it is crucial.
The quantum circuit model is the standard way to design quantum algorithms, and quantum gates and circuits provide a highly convenient abstraction of quantum hardware.
Circuits sit at a significantly lower level of abstraction than even assembly code in classical computing.
But any layer of abstraction sacrifices some overhead for the sake of convenience.
The quantum circuit model is no exception.

In the underlying hardware, quantum gates are typically implemented by controlling interactions between qubits.
E.g.\ by changing voltages to bring superconducting qubits in and out of resonance; or by laser pulses to manipulate the internal states of trapped ions.
By restricting to a fixed set of standard gates, the circuit model abstracts away the full capabilities of the underlying hardware.
In the NISQ era, it is not clear this sacrifice is justified.
The Solovay-Kitaev theorem tells us that the overhead of any particular choice of universal gate set is at most poly-logarithmic~\cite{Kitaev1997,Dawson2005}.
But when the available circuit depth is limited to $\approx 20$, even a constant factor improvement could make the difference between being able to run an algorithm on current hardware, and being beyond the reach of foreseeable hardware.

The advantages of designing quantum algorithms ``one level below'' the circuit model become particularly acute in the case of Hamiltonian time-dynamics simulation.
To simulate evolution under a many-body Hamiltonian $\op H=\sum_{\langle i,j\rangle} \op h_{ij}$, the basic Trotterization algorithm~\cite{Lloyd1996a,Nielsen_and_Chuang} repeatedly time-evolves the system under each individual interaction $\op h_{ij}$ for a small time-step $\delta$,
\begin{equation}\label{eq:trotter-intro}
\ee^{-\ii  H T} \simeq \prod_{n=0}^{T/\delta} \left(\prod_{\langle i,j\rangle} \ee^{-\ii h_{ij}\delta}\right).
\end{equation}
To achieve good precision, $\delta$ must be small.
In the circuit model, each $\ee^{-\ii  h_{ij}\delta}$ Trotter step necessarily requires at least one quantum gate to implement.
Thus the required circuit depth -- and hence the total run-time -- is at least $T/\delta$.
Contrast this with the run-time if we were able to implement $\ee^{-\ii  h_{ij}\delta}$ directly in time $\delta$.
The total run-time would then be $T$, which improves on the circuit model algorithm by a factor of $1/\delta$.
This is ``only'' a constant factor improvement, in line with the Solovay-Kitaev theorem.
But this ``constant'' can be very large; indeed, it diverges to $\infty$ as the precision of the algorithm increases.

It is unrealistic to assume the hardware can implement $\ee^{-\ii  h_{ij}\delta}$ for any desired interaction $ h_{ij}$ and any time $\delta$.
Furthermore, the available interactions are typically limited to at most a handful of specific types, determined by the underlying physics of the device's qubit and quantum gate implementations.
And these interactions cannot be switched on and off arbitrarily fast, placing a limit on the smallest achievable value of $\delta$.
There are also experimental challenges associated with implementing gates with small $\delta$ with the same fidelities as those with $\delta \approx \BigO(1)$.

A major criticism of analogue computation (classical and quantum) is that it cannot cope with errors and noise.
The ``N'' in NISQ stands for ``noisy''; errors and noise will be a significant factor in all foreseeable quantum hardware.
But near-term hardware has few resources to spare even on basic error correction, let alone fault-tolerance.
Indeed, near-term hardware may not always have the necessary capabilities. E.g.\ the intermediate measurements required for active error-correction are not possible in all superconducting circuit hardware~\cite[Sec.~II]{GoogleAI+Supplementary}.

Algorithms that cope well with errors and noise, and still give reasonable results without active error correction or fault-tolerance, are thus critical for NISQ applications.

Designing algorithms ``one level below'' the circuit model can also in some cases reduce the impact of errors and noise during the algorithm.
Again, this benefit is particularly acute in Hamiltonian simulation algorithms.
If an error occurs on a qubit in a quantum circuit, a two-qubit gate acting on the faulty qubit can spread the error to a second qubit.
In the absence of any error-correction or fault-tolerance, errors can spread to an additional qubit with each two-qubit gate applied, so that after circuit depth $n$ the error can spread to all $n$ qubits.

In the circuit model, each $\ee^{-\ii  h_{ij}\delta}$ Trotter step requires at least one two-qubit gate.
So a single error can be spread throughout the quantum computer after simulating time-evolution for time as short as $\delta n$.
However, if a two-qubit interaction $\ee^{-\ii  h_{ij}\delta}$ is implemented directly, one would intuitively expect it to only ``spread the error'' by a small amount $\delta$ for each such time-step.
Thus we might expect it to take time $O(n)$ before the error can propagate to all $n$ qubits -- a factor of $1/\delta$ improvement. Another way of viewing this is that, in the circuit model, the Lieb-Robinson velocity~\cite{Lieb1972} at which effects propagate in the system is always $O(1)$, regardless of what unitary dynamics is being implemented by the overall circuit.
  In contrast, the Trotterized Hamiltonian evolution has the same Lieb-Robinson velocity as the dynamics being simulated: $O(1/\delta)$ in the same units.

The Fermi-Hubbard model is believed to capture, in a simplified toy model, key aspects of high-temperature superconductors, which are still less well understood theoretically than their low-temperature brethren.
Its Hamiltonian is given by a combination of on-site and hopping terms:
\begin{align}\label{eq:FH-H-intro}
  H_{\text{FH}} &\coloneqq  \sum_{i=1}^{N}  h_{\text{on-site}}^{(i)} \ + \sum_{i<j,\sigma}  h_{\text{hopping}}^{(i,j,\sigma)} \\
&\coloneqq  \os \sum_{i=1}^{N}   \ad_{i \uparrow} a_{i \uparrow} \ad_{i \downarrow} a_{i \downarrow}  +  \hop \sum_{i<j,\sigma}   \left(\ad_{i \sigma} a_{j \sigma} + \ad_{j\sigma} a_{i \sigma}\right). \nonumber
\end{align}
describing electrons with spin $\sigma = \uparrow$ or $\downarrow$ hopping between neighbouring sites on a lattice, with an on-site interaction between opposite-spin electrons at the same site.
The Fermi-Hubbard model serves as a particularly good test-bed for NISQ Hamiltonian simulation algorithms for a number of reasons~\cite[Sec.~IV]{Bauer2020}, beyond the fact that it is a scientifically interesting model in its own right:
\begin{enumerate}
\item%
  The Fermi-Hubbard model was a famous, well-studied condensed matter model long before quantum computing was proposed. It is therefore less open to the criticism of being an artificial problem tailored to fit the algorithm.
\item%
  It is a fermionic model, which poses particular challenges for simulation on (qubit-based) quantum computers. Most of the proposed practical applications of quantum simulation involve fermionic systems, either in quantum chemistry or materials science. So achieving quantum simulation of fermionic models is an important step on the path to practical quantum computing applications.
\item%
  There have been over three decades of research developing ever-more-sophisticated classical simulations of Fermi-Hubbard-model physics~\cite{LeBlanc2015}.
  This gives clear benchmarks against which to compare quantum algorithms.
  And it reduces the likelihood of there being efficient classical algorithms, which haven't been discovered because little interest or effort has been devoted to the model.
\end{enumerate}

The state-of-the-art quantum circuit-model algorithm for simulating the time dynamics of the 2D Fermi-Hubbard model on an $8\times 8$ lattice requires $\approx 10^{7}$ Toffoli gates ~\cite[Sec.~C:~Tb.~2]{Kivlichan2019}.
This includes the overhead for fault-tolerance, which is necessary for the algorithm to achieve reasonable precision with the gate fidelities available in current and near-term hardware. But it additionally incorporates performing phase estimation, which is a significant extra contribution to the gate count.
Thus, although this result is indicative of the scale required for standard circuit-model Hamiltonian simulation, a direct comparison of this result with time-dynamics simulation would be unfair.

To establish a fair benchmark, using available Trotter error bounds from the literature~\cite{Childs2017} with the best previous choice of fermion encoding in the literature~\cite{Verstraete2005}, we calculate that one could achieve a Fermi-Hubbard time-dynamics simulation on a $5\times 5$ square lattice, up to time $T=7$ and to within $10\%$ accuracy, using $50$ qubits and $1,243,586$ standard two-qubit gates.
This estimate assumes the effects of decoherence and errors in the circuit can be neglected, which is certainly over-optimistic.

Our results rely on developing more sophisticated techniques for synthesising many-body interactions out of the underlying one- and two-qubit interactions available in the quantum hardware (see 
Results). This gives us access to $\ee^{-\ii  h_{ij}\delta}$ for more general interactions $h_{ij}$. We then quantify the type of gains discussed here under two precisely defined error models, which correspond to different assumptions about the hardware. By using the aforementioned techniques to synthesise local Trotter steps, exploiting a recent fermion encoding specifically designed for this type of algorithm ~\cite{DK}, deriving tighter error bounds on the higher-order Trotter expansions that account for all constant factors, and carefully analysing analytically and numerically the impact and rate of spread of errors in the resulting algorithm, we improve on this by multiple orders of magnitude even in the presence of decoherence.
For example, we show that a $5\times 5$ Fermi-Hubbard time-dynamics simulation up to time $T=7$ can be performed to $10\%$ accuracy in what we refer to as a per-gate error model with $\approx 50$ qubits and the equivalent of circuit depth $72,308$.
This is a conservative estimate and based on analytic Trotter error bounds that we derive in this paper.
Using numerical extrapolation of Trotter errors, a circuit depth of $3,209$ can be reached.
In the second error model, which we refer to as a per-time error model, we prove rigorously that the same simulation is achievable in a circuit-depth-equivalent run-time of $1,686$; numerical error computations bring this down to $259$.
In the per-time model, for some parameter regimes we are also able to exploit the inherent partial error-detection properties of local fermionic encodings to enable error mitigation strategies to reduce the resource cost.
This brings Hamiltonian simulation, previously beyond reach of current hardware for non-trivial examples, significantly closer to being feasible in the NISQ era.
\section*{Results and Discussion}
\subsection*{Circuit Error Models}
We consider two error models for quantum computation in this work. The first error model assumes that noise occurs at a constant rate per gate, independent of the time it takes to implement that gate.
This is the standard error model in quantum computation theory, in which the cost of a computation is proportional to its circuit depth.
We refer to this model as the per gate error model.
The second error model assumes that noise occurs at a constant rate per unit time.
This is the traditional model of errors in physics, where dissipative noise is more commonly modelled by continuous-time master equations, which translates to the per-time error model.
In this model, the errors accumulate proportionately to the time the interactions in the system are switched on, thus with the total pulse lengths.
We refer to this as the per time error model We emphasise that these error models are not fundamentally about execution time, but about an error budget required to execute a particular circuit.
While it is clear that deeper circuits experience more decoherence, how much each gate contributes to it can be analysed from two different perspectives.
The two error models we study correspond to two difference models of how noise scales in quantum hardware.

Which of these more accurately models errors in practice is hardware-dependent.
For example, in NMR experiments, the per-time model is common~\cite{Khaneja_2001,Khaneja_2002,Khaneja2005}.
The per-time model is not without basis in more recent quantum hardware, too.
Recent work has developed and experimentally tested duration-scaled two-qubit gates using Qiskit Pulse and IBM devices~\cite{ibm-earnest2021,ibm-stenger2021}.
In~\cite{ibm-stenger2021} the authors experimentally observe braiding of Majorana zero modes using and IBM device and parameterized two-qubit gates.
They also find a relationship between relative gate errors and the duration of these parameterised gates which is further validated in~\cite{ibm-earnest2021}.
The authors of~\cite{ibm-earnest2021} explicitly attribute the reduction in error -- seen using these duration-scaled gates in place of CNOT gates -- to the shorter schedules of the scaled gates relative to the coherence time.

Nonetheless, the standard per-gate error model is also very relevant to current quantum hardware hardware.
Therefore, throughout this paper we carry out full error analyses of all our algorithms in both of these error models.

Both of these error models are idealisations.
Both are reasonable from a theoretical perspective and supported by certain experiments.
Analysing both error models allows different algorithm implementations to be compared fairly under different error regimes.
In particular, analysing both of these error models gives a more stringent test of new techniques than considering only the ``standard error model'' of quantum computation, which corresponds to the per gate model.

We show that in both error models, significant gains can be achieved using our new techniques.

In our analysis, for simplicity we treat single-qubit gates as a free resource in both error models.
There are three reasons for making this simplification, Firstly, single-qubit gates can typically be implemented with an order of magnitude higher fidelity in hardware, so contribute significantly less to the error budget than two-qubit gates. Secondly, they do not propagate errors to the same extent as two-qubit gates (cf.\ only costing T gates in studies of fault-tolerant quantum computation). Thirdly, any quantum circuit can be decomposed with at most a single layer of single-qubit gates between each layer of two-qubit gates. Thus including single-qubit gates in the per gate error model changes the absolute numbers by a constant factor $\leq 2$ in the worst case. Nor does it significantly affect comparisons between different algorithm designs. This is particularly true of product formula simulation algorithms, where the algorithms are composed of the same layers of gates repeated over and over.

Additionally, there is a benefit to utilising our synthesis techniques
regardless of error model. Decomposing the simulation into gates of the form  $\ee^{-\ii \op h_{ij}\delta}$ using these methods allows us to exploit the underlying error detection properties of fermionic encodings, as explained in Supplementary Methods and demonstrated in \Cref{fig:cost-delta-5x5} (see below).

\Cref{tab:NumericCost-intro-per-gate,tab:NumericCost-intro-per-time} compare these results, showing how the combination of sub-circuit algorithms, recent advances in fermion encodings (VC~$=$~Verstraete-Cirac encoding~\cite{Verstraete2005}, compact~$=$~encoding reported in~\cite{DK}), and tighter Trotter bounds (both analytic and numeric) successively reduce the run-time of the simulation algorithm. ($\cost=$ circuit-depth for per-gate error model, or sum of pulse lengths for per-time error model.)

\subsection*{Synthesis of Encoded Fermi-Hubbard Hamiltonian Trotter Layers}\label{sec:fermion-encoding-main}

To simulate fermionic systems on a quantum computer, one must encode the fermionic Fock space into qubits.
There are many encodings in the literature~\cite{Jordan1928} but we confine our analysis to two: the Verstraete-Cirac (VC) encoding~\cite{Verstraete2005}, and the compact encoding recently introduced in~\cite{DK}.
We have selected these two encodings as they minimise the maximum Pauli weight of the encoded interactions, which is a key factor in the efficiency of Trotter-based algorithms and of our sub-circuit techniques: weight-$4$ (VC) and weight-$3$ (compact), respectively.
By comparison, the classic Jordan-Wigner transformation~\cite{Jordan1928} results in a maximum Pauli weight that scales as as $\BigO\left(L\right)$ with the lattice size $L$; the Bravyi-Kitaev encoding~\cite{Bravyi2002} has interaction terms of weight $O(\log L)$; and the Bravyi-Kitaev superfast encoding~\cite{Bravyi2002} results in weight-8 interactions.

Under the compact encoding, the fermionic operators in \cref{eq:FH-H-intro} are mapped to operators on qubits arranged on two stacked square grids of qubits (one corresponding to the spin up, and one to the spin down sector, as shown in \cref{fig:intro-onsite}),
augmented by a face-centred ancilla in a checkerboard pattern, with an enumeration explained in \cref{fig:intro-ordering}.

The on-site, horizontal and vertical local terms in the Fermi-Hubbard Hamiltonian \cref{eq:FH-H-intro} are mapped under this encoding to qubit operators as follows:
\begin{align}
     \op h_{\text{on-site}}^{(i)} &\rightarrow \frac{\os}{4} \left(\id - Z_{i \uparrow}\right)\left(\id - Z_{i \downarrow}\right) \\
    \op h_{\text{hopping,hor}}^{(i,j,\sigma)} &\rightarrow \frac{\hop}{2}\left(
     X_{i,\sigma}X_{j,\sigma}Y_{f_{ij}',\sigma}+Y_{i,\sigma}Y_{j,\sigma}Y_{f_{ij}',\sigma}\right) \\
    \op h_{\text{hopping,vert}}^{(i,j,\sigma)} &\rightarrow \frac{\hop}{2}
        (-1)^{g(i,j)}\left( X_{i,\sigma}X_{j,\sigma}X_{f_{ij}',\sigma}+Y_{i,\sigma}Y_{j,\sigma}X_{f_{ij}',\sigma}\right),
\end{align}
where qubit $f'_{ij}$ is the face-centered ancilla closest to vertex $(i,j)$, and $g(i,j)$ indicates an associated sign choice in the encoding, as explained in~\cite{DK}.

If the VC encoding is used, the fermionic operators in \cref{eq:FH-H-intro} are mapped to qubits arranged on two stacked square grids of qubits (again with one corresponding to spin up, the other to spin down, as shown in \cref{fig:intro-VC}),
augmented by an ancilla qubit for each data qubit and with an enumeration explained in \cref{fig:intro-ordering-VC}.
In this case the on-site, horizontal and vertical local terms are mapped to
\begin{align}
     \op h_{\text{on-site}}^{(i)} &\rightarrow \frac{\os}{4} \left(\id - Z_{i \uparrow}\right)\left(\id - Z_{i \downarrow}\right) \\
    \op h_{\text{hopping,hor}}^{(i,j,\sigma)} &\rightarrow \frac{\hop}{2}\left(
     X_{i,\sigma}Z_{i',\sigma}X_{j,\sigma}+Y_{i,\sigma}Z_{i',\sigma}Y_{j,\sigma} \right)\\
    \op h_{\text{hopping,vert}}^{(i,j,\sigma)} &\rightarrow \frac{\hop}{2}\left(
        X_{i,\sigma}Y_{i',\sigma}Y_{j,\sigma}X_{j',\sigma}-Y_{i,\sigma}Y_{i',\sigma}X_{j,\sigma}X_{j',\sigma}\right),
\end{align}
where $i'$ indicates the ancilla qubit associated with qubit $i$.

In both encodings, we partition the resulting Hamiltonian $H$ -- a sum of on-site, horizontal and vertical qubit interaction terms on the augmented square lattice -- into $M=5$ layers $\op H=H_1 + H_2 + H_3 + H_4 + H_5$, as shown in \cref{fig:intro-compact,fig:intro-VC}.
The Hamiltonians for each layer do not commute with one another. Each layer is a sum of mutually-commuting local terms acting on disjoint subsets of the qubits. For instance, $H_5=\sum_i h_\mathrm{on-site}^{(i)}$ is a sum of all the two-local, non-overlapping, on-site terms.


The Trotter product formula $\mathcal P_p(T,\delta)$ comprises local unitaries, corresponding to the local interaction terms that make up the five layers of Hamiltonians that we decomposed the Fermi Hubbard Hamiltonian into.

In order to implement each step of the product formula as a sequence of gates, we would ideally simply execute all two-, three- (for the compact encoding), or four-local (for the VC encoding) interactions necessary for the time evolution directly within the quantum computer.
Yet this is an unrealistic assumption, as the quantum device is more likely to feature a very restricted set of one- and two-qubit interactions.

As outlined in the introduction, we assume in our model that arbitrary single qubit unitaries are available, and that we have access to the continuous family of gates $\{\exp(\ii t Z\otimes Z)\}$ for arbitrary values of $t$.
In contrast, the gates we wish to implement all have the form $\exp(\ii \delta Z^{\otimes k})$ for $k=3$ or~$4$.
(Or different products of $k$ Pauli operators, but these are all equivalent up to local unitaries, which we are assuming are available.)

It is well known that a unitary describing the evolution under any $k$-local Pauli interaction can be straightforwardly decomposed into CNOT gates and single qubit rotations~\cite[Sec.~4.7.3]{Nielsen_and_Chuang}.
For instance, we can decompose evolution under a $3$-local Pauli as
\begin{align}\label{eq:conj-method}
    \ee^{\ii \delta Z_1Z_2Z_3} &= \ee^{-\ii \pi/4 Z_1 X_2} \ee^{\ii \delta Y_2 Z_3} \ee^{\ii \pi/4 Z_1 X_2},
\end{align}
where we then further decompose the remaining $2$-local evolutions in \cref{eq:conj-method} using the exact same method as
\begin{align}
    \ee^{\ii \delta Y_2 Z_3}  &=  \ee^{-\ii \pi/4 Y_2 X_3} \ee^{ \ii \delta Y_3 } \ee^{\ii \pi/4 Y_2 X_3}.
\end{align}
This effectively corresponds to decomposing $\ee^{\ii \delta Z_1Z_2Z_3}$ into CNOT gates and single qubit rotations, as $\ee^{\pm \ii \pi/4 Z_i Z_j}$ is equivalent to a CNOT gate up to single qubit rotations.
To generate evolution under any $k$-local Pauli interaction we can simply iterate this procedure, which yields a constant overhead $\propto 2 (k-1) \times \pi/4$.

Can we do better?
Even optimized variants of Solovay-Kitaev to decompose multi-qubit gates~-- beyond introducing an additional error -- generally yield gate sequences multiple orders of magnitude larger, as e.g.\ demonstrated in~\cite{Pham2013}. While more recent results conjecture that an arbitrary three-qubit gate can be implemented with at most eight $\BigO(1)$ two-local entangling gates~\cite{Martinez2016}, this is still worse than the conjugation method for the particular case of a rank one Pauli interaction that we are concerned with.

For small pulse times $\delta$, the existing decompositions are thus inadequate, as they all introduce a gate cost $\Omega(1) + \BigO(\delta)$.
In this paper, we develop a series of analytic pulse sequence identities (see \cref{lem:Weight-Increase-Depth4-ap,lem:Weight-Increase-Depth5-ap}
in Supplementary Methods, which allow us to decompose the three-qubit and four-qubit gates as approximately The approximations in \cref{eq:intro-gatedec-3,eq:intro-gatedec-4} are shown to first order in $\delta$. Exact analytic expressions, which also hold for $\delta \geq 1$, are derived in Supplementary Methods. The constants in \cref{eq:intro-gatedec-4} have been rounded to the third significant figure.
\begin{align}
    \ee^{\ii \delta Z_1Z_2Z_3} &\approx \ee^{-\ii \sqrt{\delta/2} Z_1X_2} \ee^{\ii \sqrt{\delta/2} Y_2 Z_3} \nonumber \\
    &\hspace{11mm} \times \ee^{\ii \sqrt{\delta/2} Z_1X_2} \ee^{-\ii \sqrt{\delta/2} Y_2 Z_3},
    \label{eq:intro-gatedec-3} \\
    \ee^{\ii \delta Z_1Z_2Z_3Z_4} &\approx \ee^{- \ii 0.22  \delta^{2/3} Y_2 Z_3 Z_4} \ee^{- \ii 1.13 \delta^{1/3} Z_1 X_2} \ee^{ \ii 0.44 \delta^{2/3} Y_2 Z_3 Z_4} \nonumber \\
    &\hspace{11mm} \times \ee^{\ii 1.13 \delta^{1/3} Z_1 X_2} \ee^{- \ii 0.22 \delta^{2/3} Y_2 Z_3 Z_4}.
    \label{eq:intro-gatedec-4}
\end{align}
In reality we use the exact versions of these decompositions, which we also note are still exact for $\delta \geq 1$. The depth-5 decomposition in \cref{eq:intro-gatedec-4} yields the shortest overall run-time when breaking down higher-weight interactions in a recursive fashion, assuming that the remaining 3-local gates are decomposed using an expression similar to \cref{eq:intro-gatedec-3}.
We also carry out numerical studies that indicate that these decompositions are likely to be optimal. (See Supplementary Methods.) 
These circuit decompositions allow us to establish that, for a weight-k interaction term, there exists a pulse sequence which implements the evolution operator for time $\delta$ with an overhead $\propto \delta^{1/(k-1)}$, achieved by recursively applying these decompositions. While we have only made reference to interactions of the form $Z^{\otimes k}$, we remark that this is sufficient as we can obtain any other interaction term of the same weight, for example $ZXZ$, by conjugating $Z^{\otimes k}$ by single qubit rotations, $H$ and $SHS^\dagger$ in this example (where $H$ is a Hadamard and $S$ a phase gate).

For the interactions required for our Fermi-Hubbard simulation, the overhead of decomposing short-pulse gates with this analytic decomposition is $\propto \sqrt\delta$ for any weight-3 interaction term, and $\propto \delta^{1/3}$ for weight-4.
The asymptotic run-time is thus $\BigO(T \delta_0^{w})$ for $w=-1/2$ (compact encoding) or $w=-2/3$ (VC encoding).
We show the exact scaling for $k=3$ and $k=4$ in \cref{fig:intro-gatedec-overhead}, as compared to the standard conjugation method.

\begin{figure*}[htbp]
  \centering
  \includegraphics[width=\linewidth]{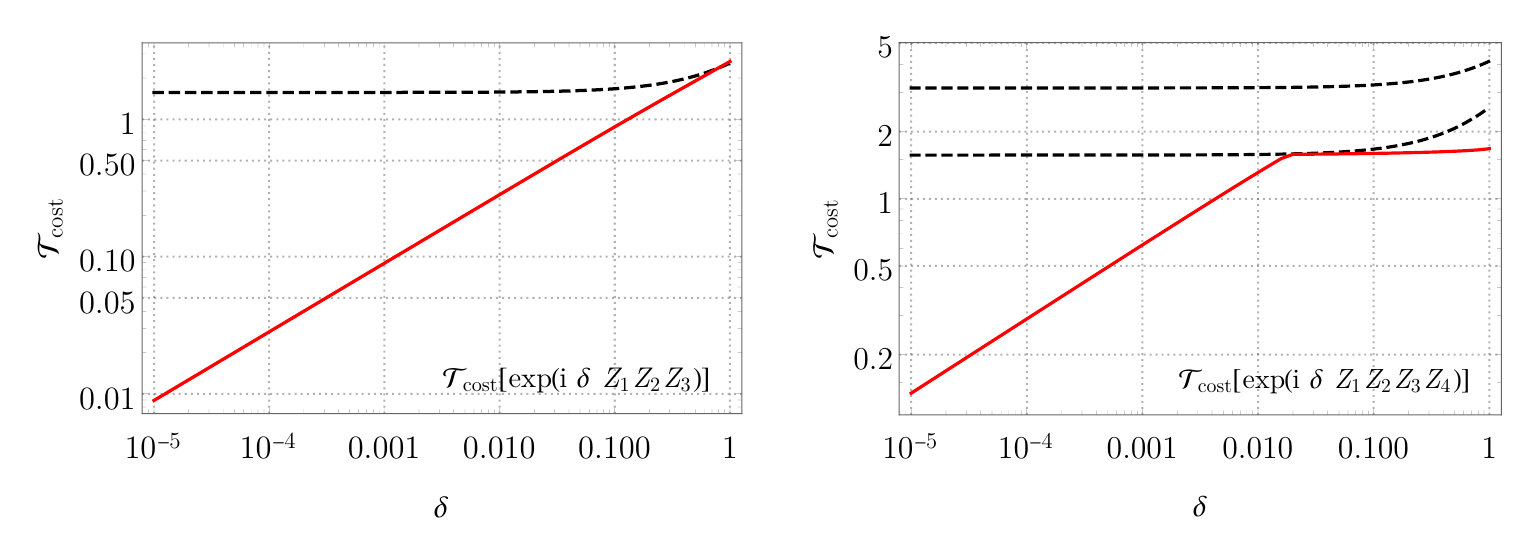}
    \caption{Gate decomposition cost $\cost$ for decomposing $\exp(\ii \delta Z^{\otimes 3})$ (left) and $\exp(\ii \delta Z^{\otimes 4})$ (right), for $\delta\in[10^{-5},1]$. The lower dashed line is the cost obtained by conjugation decomposition, $\pi/2+\delta$. The upper dashed line is the cost for a once-nested conjgation, $\pi+\delta$.
    Decomposing the four-local gate with an outer depth-5 and an inner depth-4 formula according to \cref{eq:intro-gatedec-3,eq:intro-gatedec-4} only saturates the lower conjugation cost bound.
    }
    \label{fig:intro-gatedec-overhead}
\end{figure*}

\subsection*{Tighter Error Bounds for Trotter Product Formulas}\label{sec:Trotter-bounds-main}

There are by now a number of sophisticated quantum algorithms for Hamiltonian simulation, achieving optimal asymptotic scaling in some or all parameters~\cite{Berry2015,Low2016,Berry2014}.
Recently, \cite{Childs2019} have shown that previous error bounds on Trotter product formulae were over-pessimistic.
They derived new bounds showing that the older, simpler, product-formula algorithms achieve almost the same asymptotic scaling as the more sophisticated algorithms.

For near-term hardware, achieving good asymptotic scaling is almost irrelevant; what matters is minimising the actual circuit depth for the particular target system being simulated.
Similarly, in the NISQ regime we do not have the necessary resources to implement full active error-correction and fault-tolerance.
But we can still consider ways of minimising the output error probability for the specific computation being carried out.
Simple product-formula algorithms allow good control of error propagation in the absence of active error-correction and fault-tolerance.
Furthermore, combining product-formula algorithms with our circuit decompositions allows us to exploit the error detection properties of fermionic encodings.
We can use this to relax the effective noise rates required for accurate simulations, especially if we are willing to allow the simulation to include some degree of simulated natural noise.
This is explained further the Supplementary Methods and the results of this technique are shown in \cref{fig:cost-delta-5x5}

For these reasons, we choose to implement the time evolution operator $U(T) \coloneqq \exp(-\ii T H)$ by employing Trotter product formulae $U(T) =: \P_p(\delta)^{T/\delta} + \R_p(T,\delta)$. Here, $\R_p\left(T,\delta\right)$ denotes the error term remaining from the approximate decomposition into a product of individual terms, defined directly as $\R_p\left(T,\delta\right) \coloneqq U(T) - \P_p\left(\delta\right)^{T/\delta}$.
This includes the simple first-order formula~\cite{Lloyd1996a}
\begin{align}
  \P_1\left(\delta\right)^{T/\delta} &\coloneqq \prod_{n=1}^{T/\delta} \prod_{i=1}^M \ee^{-\ii \op H_i \delta}, \\
\intertext{as well as higher-order variants~\cite{Suzuki1992,Suzuki1991,Childs2019}}
    \P_2\left(\delta\right) &\coloneqq \prod_{j=1}^M \ee^{-\ii \op H_j \delta/2} \prod_{j=M}^1 \ee^{-\ii \op H_j \delta/2},  \\
    \P_{2k}\left(\delta\right) &\coloneqq \P_{2k-2}\left(a_k \delta\right)^2\P_{2k-2}\left((1-4a_k) \delta\right) \P_{2k-2}\left(a_k \delta\right)^2
\end{align}
for $k\in\field N$, where the coefficients are given by $a_k \coloneqq1/\left(4-4^{1/\left(2k-1\right)}\right)$.
It is easy to see that, while for higher-order formulas not all pulse times equal $\delta$, they still asymptotically scale as $\Theta(\delta)$.
The product formula $\P_p\left(\delta\right)^{T/\delta}$ then approximates a time-evolution under $U(\delta)^{T/\delta} \approx U(T)$, and it describes the sequence of local unitaries to be implemented as a quantum circuit.

Choosing the Trotter step $\delta$ small means that corrections for every factor in this formula come in at $\BigO \left(\delta^{p+1} \right)$ for $p\in\{1, 2k : k\in\field N\}$. Since we have to perform $T/\delta$ many rounds, the overall error scales roughly as $\BigO \left(T\delta^p \right)$.
Yet this rough estimate is insufficient if we need to calculate the largest-possible $\delta$ for our Hamiltonian simulation.

The Hamiltonian dynamics remain entirely within one fermion number sector, as $H_\mathrm{FH}$ commutes with the total fermion number operator.
Let $\Lambda$ denote the number of fermions present in the simulation, such that $\| H_i|_{\Lambda\ \text{fermions}} \| \le \Lambda$ as shown in
\cref{thm:norm-bound}.
Let $M=5$ denote the number of non-commuting Trotter layers, and set $\epsilon_p(T,\delta) \coloneqq \| \mathcal R_p(T,\delta) \|$, and as shorthand $\epsilon_p(\delta) \coloneqq \epsilon_p(\delta,\delta)$, so that $\epsilon_p(T,\delta) = T/\delta \times \epsilon_p(\delta)$.

To obtain a bound on $\P_{p}\left(\delta\right)$, we apply the variation of constants formula~\cite[Th,~4.9]{RealAnalysis} to $\mathcal R_p(\delta)$, with the condition that $\P_{p}\left(0\right)=\id$, which always holds.
As in~\cite[sec.~3.2]{Childs2019}, for $\delta\ge0$, we obtain
\begin{equation}\label{eq:integral-representation-main}
    \P_{p}\left(\delta\right) = U\left(\delta\right) + \R_{p}\left(\delta\right) = \ee^{- \ii \delta \op H} + \int_{0}^{\delta} \ee^{- \ii \left(\delta-\tau\right)\op H} \op R_{p}\left(\tau\right) d\tau
\end{equation}
where the integrand $\op R_p\left(\tau\right)$ is defined as
\begin{align}
    \op R_{p}\left(\tau\right)\coloneqq\frac{d}{d\tau} \P_{p}\left(\tau\right) -\left(- \ii \op H\right) \P_{p}\left(\tau\right).
\end{align}
Now, if $\P_{p}\left(\delta\right)$ is accurate up to $p$\textsuperscript{th} order -- meaning that $\R_{p}\left(\delta\right) = \BigO\left(\delta^{p+1}\right)$ -- it holds that the integrand $\op R_{p}\left(\delta\right) = \BigO\left(\delta^p\right)$. This allows us to restrict its partial derivatives for all $0\le j\le p-1$ to $\partial_\tau^j \op R_{p}\left(0\right) = 0$. For full details see \cref{rem:order-conditions-ap}
and~\cite[Order Conditions]{Childs2019}.

Then, following~\cite{Childs2019}, we perform a Taylor expansion of $\op R_{p}\left(\tau\right)$ around $\tau=0$, simplifying the error bound $\epsilon_{p}(\delta) \equiv \| \mathcal R_{p}(\delta) \|$ to
\begin{align}
    \epsilon_{p}(\delta)
    &=\left\| \int_{0}^{\delta} \ee^{- \ii \left(\delta-\tau\right)\op H} \op R_{p}\left(\tau\right) \dd\tau \right\|
    \leq  \int_{0}^{\delta} \| \op R_{p}\left(\tau\right) \| \dd\tau\\
    &=\int_0^\delta \bigg( \| \op R_{p}\left(0\right) \| + \| \op R'_{p}\left(0\right) \| \tau + \ldots + \\
    &\hspace{15mm} \| \op R_{p}^{\left(p-1\right)}\left(0\right)\| \frac{\tau^{p-1}}{\left(p-1\right)!}
    + \| \op S_{p}\left(\tau, 0\right) \|
      \bigg) \dd\tau .
\end{align}
Here we use the aforementioned order condition that for all $0\le j\le p-1$ the partial derivatives satisfy $\partial_\tau^j \op R_{p}\left(0\right) = 0$, leaving all but the $p$\textsuperscript{th} or higher remainder terms -- $\op S_{p}\left(\tau, 0\right)$ -- equal to zero.
Thus
\begin{align}
    \epsilon_{p}(\delta) &\le \int_0^\delta \| \op S_{p}\left(\tau, 0\right) \| \dd \tau \nonumber \\
    &= p \int_0^\delta \int_0^1 \left(1-x\right)^{p-1} \| \op R_{p}^{\left(p\right)}\left(x\tau\right)\| \frac{\tau^{p}}{p!} \dd x \dd\tau,
    \label{eq:higher-trotter-error-1-intro}
\end{align}
where we used the integral representation for the Taylor remainder $\op S_{p}\left(\tau,0\right)$.

Motivated by this, we look for simple bounds on the $p$\textsuperscript{th} derivative of the integrand $\|\op R_{p}\left(\tau\right)\|$.
At this point our work diverges from~\cite{Childs2019} by focusing on obtaining bounds on $\|\op R_{p}\left(\tau\right)\|$ which have the tightest constants for NISQ-era system sizes, but which now are not optimal in system size.
(See \cref{fig:bounds-comparison}
and
\cref{lem:trotter-tech1-ap,lem:R-bound-Comms-ap}
in Supplementary Methods for details.)
We derive the following explicit error bounds (see \cref{th:trotter-error-ap,cor:trotter-error-ap}):
\begin{align}
    \epsilon_p(\delta)
    &\le \delta^{p+1} M^{p+1} \Lambda^{p+1} G_p \\
\intertext{where}
    G_p &\coloneqq \times \begin{cases}
        1 & p=1 \\
        \displaystyle \frac{2}{\left(p+1\right)!} \left( \frac{10}3 \right)^{\left(p+1\right)\left(p/2-1\right)} & \text{$p=2k$, $k\ge 1$},
    \end{cases}
    \\
\intertext{and}
    \epsilon_p\left(\delta\right)
    &\le \frac{2 \delta^{p+1} M^{p+1} \Lambda^{p+1} }{\left(p+1\right)!} H_p^{p+1}\\
\intertext{where}
    H_p &\coloneqq \prod_{i=1}^{p/2-1} \frac{ 4+4^{1/\left(2i+1\right)}}{\left| 4 - 4^{1/\left(2i+1\right)} \right| }.
\end{align}
The above expressions hold for generic Trotter formulae.
Using \cref{lem:R-bound-Comms-ap}
we can exploit commutation relations for the specific Hamiltonian at hand (whose structure determines $N$ and $n$, see Supplementary Methods). This yields the bound (see \cref{th:Trotter-Er-Commutator-ap}):
\begin{align}\label{eq:Trotter-Er-Commutator-intro}
    \epsilon_p \left(\delta \right)
    &\le  C_1 \frac{T \delta^{p}}{\left(p+1\right)!} + \nonumber\\
    &\hspace{7mm} C_2 \frac{T}{\delta} \int_0^\delta p \int_0^1 \left(1-x\right)^{p-1}  \frac{x \tau^{p+1}}{p!} \ee^{x \tau N B_p} \dd x \dd\tau \\
\intertext{where}
    C_1 &\coloneqq \n p B_p^{2} \Lambda^{p-1} N \left(M H_p - B_p + B_p \left(\frac{N}{\Lambda}\right)\right)^{p-1} \times \nonumber \\
    &\hspace{10mm}\left(\left(S_p M\right)^{2}-\left(S_p M\right)\right),\\[2mm]
    C_2 &\coloneqq \n  B_p^{2}\left( M H_p \Lambda\right)^{p} N \left(\left(S_p M\right)^{2}-\left(S_p M\right)\right),\\
\intertext{and}
    B_p &\coloneqq \begin{cases}
    1 & p=1 \\
    \frac12 & p=2 \\
    \frac12 \prod_{i=2}^k (1-4a_i) & \text{$p=2k$, $k\ge2$.}
    \end{cases}
\end{align}
These analytic error bounds are then combined with a Taylor-of-Taylor method, by which we expand the Taylor coefficient $R_p^{(p)}$ in \cref{eq:higher-trotter-error-1-intro} itself in terms of a power series to some higher order $q > p$, with corresponding series coefficients $R_p^{(q)}$, and a corresponding remainder-of-remainder error term $\epsilon_{p,q+1}$.
The tightest error expression we obtain is (see \cref{cor:taylor-error-bound-ap}
in Supplementary Methods)
\begin{equation}
    \epsilon_p(\delta) \le \sum_{l=p}^q \frac{\delta^{l+1}\Lambda^{l+1}}{(l+1)!} f(p,M,l)
    + \epsilon_{p,q+1}(\delta),
\end{equation}
where the $f(p,M,l)$ are exactly-calculated coefficients (using a computer algebra package) that exploit cancellations between the $M$ non-commuting Trotter layers, for a product formula of order $p$ and series expansion order $l$ (given in \cref{tab:coefficients-ap}).
The series' remainder $\epsilon_{p,q+1}$ therein is then derived from the analytic bounds in \cref{eq:Trotter-Er-Commutator-intro} (see Supplementary Methods for technical details).

Henceforth, we will assume the tightest choice of $\epsilon_p(\delta)$ amongst all the derived error expressions and choice of $p\in\{1,2,4\}$.
In order to guarantee a target error bound $\epsilon_p(T,\delta)\le \epsilon_\mathrm{target}$, we invert these explicitly derived error bounds and obtain a maximum possible Trotter step $\delta_0 = \delta_0(\epsilon_\mathrm{target})$.

\subsection*{Benchmarking the Sub-Circuit-Model}\label{sec:benchmarking-sub-circuit-model}

How significant is the improvement of the measures set out in previous sections, as benchmarked against state-of-the-art results from literature?
A first comparison is in terms of exact asymptotic bounds (which we derive in \cref{th:FH-optimum-dig,th:FH-optimum-analogue}),
in terms of the number of non-commuting Trotter layers $M$, fermion number $\Lambda$, simulation time $T$ and target error $\epsilon_\mathrm{target}$:

\noindent
Standard Circuit Synthesis:
\begin{align}
    \cost\left(\mathcal P_p(\delta)^{T/\delta}\right) &= \BigO\left(M^{2 + \frac1p} \Lambda^{1 + \frac1p} T^{1+\frac1p} \epsilon^{-\frac1p}_\mathrm{target}\right), \\
\intertext{Sub-circuit Synthesis:}
    \cost\left(\mathcal P_p(\delta)^{T/\delta}\right) &=\BigO\left(
     M^{\frac32 + \frac1{2p}} \Lambda^{\frac12 +\frac1{2p}} T^{1 + \frac1{2p}} \epsilon^{-\frac1{2p}}_\mathrm{target}
    \right).
\end{align}
Here we write $\cost$ for the ``run-time'' of the quantum circuits -- i.e.\ the sum of pulse times of all gates within the circuit. (See \cref{def:time-cost}
for a detailed discussion of the cost model we employ.)

\begin{table}[]
    \centering
    \begin{tabular}{c c c c}
    \toprule
      \parbox[c][][c]{8em}{\centering Fermion encoding}
      & \parbox[c][][c]{8em}{\centering Trotter bounds}
      & \parbox[c][][c]{8em}{\centering Standard decomposition}
      & \parbox[c][][c]{8em}{\centering Subcircuit decomposition} \\
      \midrule
     \multirow{3}{*}{VC}
     &\cite[Prop.~F.4.]{Childs2017} & 1,243,586 & 977,103 \\
      & analytic & 121,478 & 95,447 \\
      & numeric & 5,391 & 4,236 \\
     \midrule
     \multirow{2}{*}{compact}
         & analytic & 98,339 & 72,308 \\
         & numeric & 4,364 & 3,209 \\
    \end{tabular}
    \caption{Per-gate run-times. A comparison of the run-time $\cost$ for lattice size $L\times L$ with $L=5$, overall simulation time $T=7$ and target Trotter error $\epsilon_\mathrm{target} = 0.1$, with $\Lambda=5$ fermions and coupling strengths $|\os|, |\hop|\le r=1$.
    Obtained by minimising over product formulas up to $4$\textsuperscript{th} order.
    $\cost=$~circuit-depth for per-gate error model.
    In either gate decomposition case---standard and sub-circuit---we account single-qubit rotations as a free resource as explained in the Introduction; the value of $\cost$ depends only on the two-qubit gates/interactions. Two-qubit unitaries are counted by unit time per gate in the per gate error model.  Here compact and VC denote the choice of fermionic encoding.}
    \label{tab:NumericCost-intro-per-gate}
\end{table}
\begin{table}[]
    \centering
    \begin{tabular}{c c c c}
      \toprule
      \parbox[c][][c]{8em}{\centering Fermion encoding}
      & \parbox[c][][c]{8em}{\centering Trotter bounds}
      & \parbox[c][][c]{8em}{\centering Standard decomposition}
      & \parbox[c][][c]{8em}{\centering Subcircuit decomposition} \\
    \midrule
     \multirow{3}{*}{VC}
     &\cite[Prop.~F.4.]{Childs2017} & 976,710 & 59,830\\
      & analytic & 95,409 & 17,100 \\
      & numeric & 4,234 & 1,669 \\
     \midrule
     \multirow{2}{*}{compact}
         & analytic & 77,236 & 1,686 \\
         & numeric & 3,428 & 259 \\
    \end{tabular}
    \caption{Per-time run-times. A comparison of the run-time $\cost$ for lattice size $L\times L$ with $L=5$, overall simulation time $T=7$ and target Trotter error $\epsilon_\mathrm{target} = 0.1$, with $\Lambda=5$ fermions and coupling strengths $|\os|, |\hop|\le r=1$.
    Obtained by minimising over product formulas up to $4$\textsuperscript{th} order.
    $\cost=\cost(\P_p(\delta_0)^{T/\delta_0})$ for per-time error model.
    In either gate decomposition case---standard and sub-circuit---we account single-qubit rotations as a free resource; the value of $\cost$ depends only on the two-qubit gates/interactions. Two-qubit unitaries are counted by their respective pulse lengths. Here compact and VC denote the choice of fermionic encoding.}
    \label{tab:NumericCost-intro-per-time}
\end{table}

Beyond asymptotic scaling, and in order to establish a more comprehensive benchmark that takes into account potentially large but hidden constant factors, we employ our tighter Trotter error bounds that account for all constant factors, and
concretely target a $5\times 5$ Fermi-Hubbard Hamiltonian for overall simulation time $T=7$ (which is roughly the Lieb-Robinson time required for the ``causality-cone'' to spread across the whole lattice, and for correlations to potentially build up between any pair of sites), in the sector of $\Lambda=5$ fermions, and coupling strengths $|\os|, |\hop| \le r=1$ as given in \cref{eq:FH-H-intro}.
For this system, we choose the optimal Trotter product formula order $p$ that yields the lowest overall run-time, while still achieving a target error of $\epsilon_\mathrm{target} = 0.1$.

The results are given in \cref{tab:NumericCost-intro-per-time,tab:NumericCost-intro-per-gate}, where we emphasise that in order to maintain a fair comparison, we always account single-qubit gates as a free resource, for the reasons discussed in the Introduction, and two-qubit gates are either accounted at one unit of time per gate in the per-gate error model (making the run-time equal the circuit depth), or accounted at their pulse length for the per-time error model.

Our Trotter error bounds yield an order-of-magnitude improvement as compared to~\cite[Prop~F.4]{Childs2017}.
And even for existing gate decompositions by conjugation, the recently published lower-weight compact encoding yields a small but significant improvement.
The most striking advantage comes from utilising the sub-circuit sequence decompositions developed in this paper, in particular in conjunction with the lower-weight compact fermionic encoding.

Overall, the combination of Trotter error bounds, numerics, compact fermion encoding and sub-circuit-model algorithm design, allows us to improve the run-time of the simulation algorithm from $976,\!710$, to $259$ -- an improvement of more than three orders of magnitude over that obtainable using the previous state-of-the-art methods, and a further improvement over results in the pre-existing literature~\cite{Kivlichan2019}.

\begin{figure}[p]
  \centering
 \includegraphics[width=\linewidth]{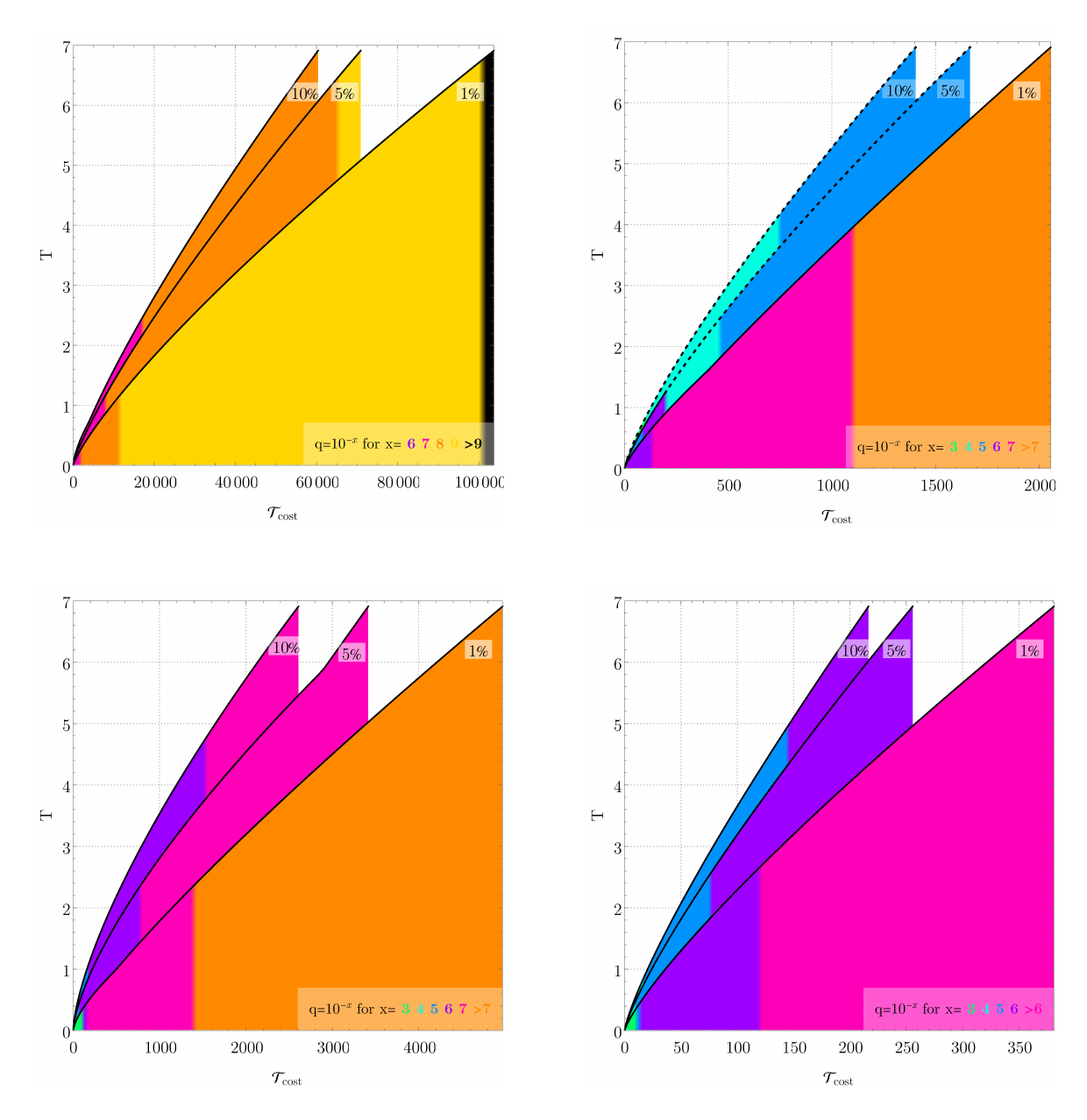}
 \caption{Target simulation time $T$ vs cost $\cost$ for a $5\times5$ lattice FH Hamiltonian $\op H_{\text{FH}}$ from
   \cref{def:unencoded-H}
   using encoding of~\cite{DK}. Per gate (left column) and per time (right column) error models.
   The three lines represent $1\%$, $5\%$, and $10\%$ Trotter error $\epsilon$ given in
  \cref{eq:trotter-epsilon-ap},
    minimized over formula order $p\in\{1,2,4,6\}$.
   Analytic Trotter bounds (top row), get $\delta_0$ from
   \cref{cor:trotter-error-ap,th:Trotter-Er-Commutator-ap,cor:taylor-error-bound-ap};
   numerical bounds (bottom row) by numerical extrapolation (see Supplementary Methods).
    Colors indicate achievable $T$ for a given noise parameter $q$, keeping Trotter and depolarizing errors below the $1\%$, $5\%$ or $10\%$ bound, accordingly. E.g.\ the purple section of the bottom right $1\%$ plot indicates that all $T$ in that range needs $q=10^{-6}$, with Trotter and decoherence error below $1\%$.
    Dashed lines indicate where error mitigation from \cite{DK} can reduce the noise requirements. Additional lattice sizes and details shown in \cref{fig:cost-analytic-delta-ignore-pulse-lengths,fig:cost-analytic-delta,fig:cost-numeric-delta-ignore-pulse-lengths,fig:cost-numeric-delta}
    .}
  \label{fig:cost-delta-5x5}
  \thisfloatpagestyle{empty}
\end{figure}

\subsection*{Sub-Circuit Algorithms on Noisy Hardware}\label{sec:noise-bounds-main}

As ours is a study of quantum simulation on near-term hardware, we cannot neglect decoherence errors that inevitably occur throughout the simulation.
To address this concern, we assume an iid noise model described by the qubit depolarizing channel
\begin{equation}\label{eq:intro-depol}
    \mathcal N_q(\rho) = (1-q)\rho + \frac q3 \big( X \rho X + Y \rho Y + Z \rho Z \big)
\end{equation}
applied to each individual qubit in the circuit, and after each gate layer in the Trotter product formula, such that the bit, phase, and combined bit-phase-flip probability $q$ is proportional to the elapsed time of the preceding layer.
Whilst this standard error model is simplistic, it is a surprisingly good match to the errors seen in some hardware~\cite{GoogleAI}.

Within this setting, a simple analytic decoherence error bound can readily be derived (see Supplementary Methods), by calculating the probability that zero errors appear throughout the circuit.
If $V$ denotes the volume of the circuit (defined as $\cost \times L^2$), then the depolarising noise parameter $q < 1-(1-\epsilon_\mathrm{target})^{1/V}$ -- i.e.\, it needs to shrink exponentially quickly with the circuit's volume.
We emphasise that this is likely a crude overestimate.
As briefly discussed at the start, one of the major advantages of sub-circuit circuits is that, under a short-pulse gate, an error is only weakly propagated due to the reduced Lieb-Robinson velocity (discussed further in~\cite{Error-Mapping}).

Yet irrespective of this overestimate, can we derive a tighter error bound by other means?
In~\cite{Error-Mapping}, the authors analyse how noise on the physical qubits translates to errors in the fermionic code space.
To first order and in the compact encoding, all of $\{X, Y, Z\}$ errors on the face, and $\{X, Y\}$ on the vertex qubits can be detected.
$Z$ errors on the vertex qubits result in an undetectable error, as evident from the form of $\op h_\mathrm{on-site}$ from
\cref{eq:h-onsite}
. It is shown in~\cite[Sec.~3.2]{Error-Mapping} that this $Z$ error corresponds to fermionic phase noise in the simulated Fermi-Hubbard model.

It is therefore a natural extension to the notion of simulation to allow for some errors to occur, if they correspond to physical noise in the fermionic space.
And indeed, as discussed more extensively in~\cite[Sec.~2.4]{Error-Mapping}, phase noise is a natural setting for many fermionic condensed matter systems coupled to a phonon bath~\cite{Ng2015,Kauch2020,Zhao2017,Melnikov2016,Openov2005,Fedichkin2004,Scully1993} and~\cite[Ch.~6.1\&eq.~6.17]{Wellington2014}.

How can we exploit the encoding's error mapping properties?
Under the assumption that $X$, $Y$ and $Z$ errors occur uniformly across all qubits, as assumed in \cref{eq:intro-depol}, each Pauli error occurs with probability $q/3$.
We further assume that we can measure all stabilizers (including a global parity operator) once at the end of the entire circuit, which can be done by dovetailing a negligible depth $4$ circuit to the end of our simulation (see
Supplementary Methods for more details).
We then numerically simulate a stochastic noise model for the circuit derived from aforementioned Trotter formula for a specific target error $\epsilon_\mathrm{target}$, for a Fermi-Hubbard Hamiltonian on an $L\times L$ lattice for $L\in\{3,5,10\}$.

Whenever an error occurs, we keep track of the syndrome violations they induce (including potential cancellations that happen with previous syndromes), using results from~\cite{Error-Mapping} on how Pauli errors translate to error syndromes with respect to the fermion encoding's stabilizers (summarized in
\cref{tab:error-mapping}
).
We then bin the resulting circuit runs into the following categories:
\begin{enumerate}
    \item detectable error: at least one syndrome remains triggered, even though some may have canceled throughout the simulation,
    \item undetectable phase noise: no syndrome was ever violated, and the only errors are $Z$ errors on the vertex qubits which map to fermionic phase noise, and
    \item undetectable non-phase noise: syndromes were at some point violated, but they all canceled.
    \item errors not happening in between Trotter layers: naturally, not all errors happen in between Trotter layers, so this category encompasses all those cases where errors happen in between gates in the gate decomposition.
\end{enumerate}

This categorization allows us to calculate the maximum depolarizing noise parameter $q$ to be able to run a simulation for time $T=\lfloor \sqrt 2 L \rfloor$ with target Trotter error $\epsilon_\mathrm{t} \le \epsilon_\mathrm{target} \in \{ 1\%, 5\%, 10\% \}$, where we allow the resulting undetectable non-phase noise and the errors not happening in between Trotter layers errors to also saturate this error bound, i.e.\ $\epsilon_\mathrm{s}\le\epsilon_\mathrm{target}$.
The overall error is thus guaranteed to stay below a propagated error probability of $( \epsilon_\mathrm{t}^2 + \epsilon_{s}^2)^{1/2} \in \{ 1.5\%, 7.1\%, 15\% \}$, respectively.

In order to achieve these decoherence error bounds, one needs to postselect ``good'' runs and discard ones where errors have occurred, as determined from the single final measurement of all stabilizers of the compact encoding.
The required overhead due to the postselected runs is mild, and shown in
\cref{fig:postsel-probs}.

We plot the resulting simulation cost vs.\ target simulation time in \cref{fig:cost-delta-5x5} and \cref{fig:cost-numeric-delta-ignore-pulse-lengths,fig:cost-analytic-delta-ignore-pulse-lengths,fig:cost-numeric-delta,fig:cost-analytic-delta},
  where we color the graphs according to the depolarizing noise rate required to achieve the target error bound.
For instance, in the tightest per-time error model (bottom right plot in \cref{fig:cost-delta-5x5}), a depolarizing noise parameter $q=10^{-5}$ allows simulating a $5\times5$ FH Hamiltonian for time $T\approx 5$, while satisfying a $15\%$ error bound, the required circuit-depth-equivalent is $\cost\approx 140$---and for time $T\approx 2.5$ for a $7.1\%$ error bound, for $\cost\approx 70$.

In this work, we have derived a method for designing quantum algorithms ``one level below'' the circuit model, by designing analytic sub-circuit identities to decompose the algorithm into.
As a concrete example, we applied these techniques to the task of simulating time-dynamics of the spin Fermi-Hubbard Hamiltonian on a square lattice. Together with Trotter product formulae error bounds applied to the recent compact fermionic encoding, we estimate these techniques provide a three orders of magnitude reduction in circuit-depth-equivalent. The authors of  \cite{Childsnew2019} have recently extended their work on error bounds in \cite{Childsnew2019}, beyond their results in \cite{Childs2019}.
We have not yet incorporated their new bounds into our analysis, and this may give further improvements over our analytic error bounds.

Naturally, any real world implementation on actual quantum hardware will allow and require further optimizations; for instance, all errors displayed within this paper are in terms of operator norm, which indicates the worst-case error deviation for any simulation. However, when simulating time dynamics starting from a specific initial configuration and a distinct final measurement setup, a lower error rate results.
We have accounted for this in a crude way, by analysing simulation of the Fermi-Hubbard model dynamics with initial states of bounded fermion number. But the error bounds -- even the numerical ones -- are certainly pessimistic for any specific computation.
Furthermore, while we already utilize numerical simulations of Trotter errors, more sophisticated techniques such as Richardson extrapolation for varying Trotter step sizes might show promise in improving our findings further.

It is conceivable that other algorithms that require small unitary rotations will similarly benefit from designing the algorithms ``one level below'' the circuit model.
Standard circuit decompositions of many interesting quantum algorithms will remain unfeasible on real hardware for some time to come.
Whereas our sub-circuit-model algorithms, with their shorter overall run-time requirements and lower error-propagation even in the absence of error correction, potentially bring these algorithms and applications within reach of near-term NISQ hardware.

\subsection*{Acknowledgements}
The authors thank Joel Klassen for providing the proof of \cref{thm:norm-bound}
, and for many useful discussions.

\subsection*{Author Contributions}
The authors L. C., J. B. and T. C. contributed equally to this work.

\subsection*{Competing Interests}
The authors declare no competing interests.

\subsection*{Code Availability}
The code to support the findings in this work is available upon request from the authors.

\newpage
\appendix
\part*{Supplementary Figures}

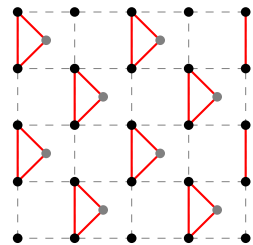
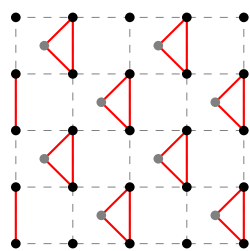
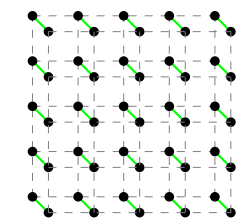
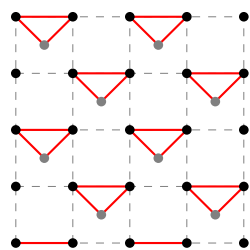
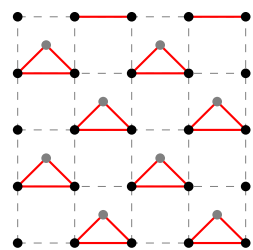
\begin{figure}[htbp]
\begin{subfigure}[b]{0.5\linewidth}
\centering
\begin{tikzpicture}[x=2cm,y=2cm,scale=.65]
 \tikzstyle{every node}=[font=\small]
\draw[dashed][step=2cm,gray, thin] (0,0) grid (2,2);

\draw[solid][step=1cm,black, thick] (0,2) -- (2,2);
\draw[solid][step=1cm,black, thick] (0,1) -- (2,1);
\draw[solid][step=1cm,black, thick] (0,0) -- (2,0);

\draw[solid][step=1cm,black, thick] (2,2) -- (2,1);
\draw[solid][step=1cm,black, thick] (0,0) -- (0,1);

\node[label={[label distance=-2.5]45:1}] at (0,2)[fill=black,circle,scale=0.5]{};
\node[label={[label distance=-2.5, color=gray]45:$f'_{12}$}]at (0.5,1.5)[fill=gray,circle,scale=0.5]{};
\node[label={[label distance=-2.5]45:2}] at (1,2)[fill=black,circle,scale=0.5]{};
\node[label={[label distance=-2.5]45:3}] at (2,2)[fill=black,circle,scale=0.5]{};
\node[label={[label distance=-2.5]45:4}] at (2,1)[fill=black,circle,scale=0.5]{};
\node[label={[label distance=-2.5, color=gray]45:$f'_{54}$}] at (1.5,0.5)[fill=gray,circle,scale=0.5]{};
\node[label={[label distance=-2.5]45:5}] at (1,1)[fill=black,circle,scale=0.5]{};
\node[label={[label distance=-2.5]45:6}] at (0,1)[fill=black,circle,scale=0.5]{};
\node[label={[label distance=-2.5]45:7 $\ldots$}] at (0,0)[fill=black,circle,scale=0.5]{};

\end{tikzpicture}
\caption{Qubit numbering.}\label{fig:intro-ordering}
\end{subfigure}
\vspace{2em}
\begin{subfigure}[b]{0.5\linewidth}
\centering
\begin{tikzpicture}[rotate=90,x=-1cm,scale=.75]

\draw[dashed][step=1cm,gray,very thin] (0,0) grid (4,4);

\foreach \x in {0,2}{
    \foreach \y in {0,2,4}{
        \draw[solid][step=1cm,red,thick] (\x,\y) -- (\x+1,\y);
        }
    }
\foreach \x in {1,3}{
    \foreach \y in {1,3}{
        \draw[solid][step=1cm,red,thick] (\x,\y) -- (\x+1,\y);
        }
    }
\foreach \x in {1.5,3.5}{
    \foreach \y in {0.5,2.5}{
        \draw[solid][step=1cm,red,thick] (\x,\y) -- (\x-0.5,\y+0.5);
         \draw[solid][step=1cm,red,thick] (\x,\y) -- (\x+0.5,\y+0.5);
        }
    }
\foreach \x in {0.5,2.5}{
    \foreach \y in {1.5,3.5}{
        \draw[solid][step=1cm,red,thick] (\x,\y) -- (\x-0.5,\y+0.5);
         \draw[solid][step=1cm,red,thick] (\x,\y) -- (\x+0.5,\y+0.5);
        }
    }
\foreach \x in {1.5,3.5}{
    \foreach \y in {0.5,2.5}{
        \node at (\x,\y)[fill=gray,circle,scale=0.35]{};
        }
    }
\foreach \x in {0.5,2.5}{
    \foreach \y in {1.5,3.5}{
        \node at (\x,\y)[fill=gray,circle,scale=0.35]{};
        }
    }
\foreach \x in {0,1,2,3,4}{
    \foreach \y in {0,1,2,3,4}{
        \node at (\x,\y)[fill=black,circle,scale=0.35]{};
        }
    }
\end{tikzpicture}
\caption{Hopping terms in $\op H_3$.}
\end{subfigure}
\vspace{2em}
\begin{subfigure}[b]{0.5\linewidth}
\centering
\begin{tikzpicture}[rotate=90,x=-1cm,scale=.75]

\draw[dashed][step=1cm,gray,very thin] (0,0) grid (4,4);

\foreach \x in {1,3}{
    \foreach \y in {0,2,4}{
        \draw[solid][step=1cm,red,thick] (\x,\y) -- (\x+1,\y);
        }
    }

\foreach \x in {0,2}{
    \foreach \y in {1,3}{
        \draw[solid][step=1cm,red,thick] (\x,\y) -- (\x+1,\y);
        }
    }

\foreach \x in {1.5,3.5}{
    \foreach \y in {0.5,2.5}{
        \draw[solid][step=1cm,red,thick] (\x,\y) -- (\x-0.5,\y-0.5);
         \draw[solid][step=1cm,red,thick] (\x,\y) -- (\x+0.5,\y-0.5);
        }
    }

\foreach \x in {0.5,2.5}{
    \foreach \y in {1.5,3.5}{
        \draw[solid][step=1cm,red,thick] (\x,\y) -- (\x-0.5,\y-0.5);
         \draw[solid][step=1cm,red,thick] (\x,\y) -- (\x+0.5,\y-0.5);
        }
    }

\foreach \x in {1.5,3.5}{
    \foreach \y in {0.5,2.5}{
        \node at (\x,\y)[fill=gray,circle,scale=0.35]{};
        }
    }

\foreach \x in {0.5,2.5}{
    \foreach \y in {1.5,3.5}{
        \node at (\x,\y)[fill=gray,circle,scale=0.35]{};
        }
    }

\foreach \x in {0,1,2,3,4}{
    \foreach \y in {0,1,2,3,4}{
        \node at (\x,\y)[fill=black,circle,scale=0.35]{};
        }
    }
\end{tikzpicture}
\caption{Hopping terms in $\op H_4$.}
\end{subfigure}
\hfill
\begin{subfigure}[b]{0.5\linewidth}
\centering
\begin{tikzpicture}[scale=0.6]
\draw[dashed][step=1cm,gray,very thin] (0,0) grid (4,4);

\foreach \x in {0,1,2,3,4}{
    \foreach \y in {0,1,2,3,4}{
        \draw[solid][step=1cm,green,thick] (\x,\y) -- (\x+0.35,\y-0.35);
        }
    }

\foreach \x in {0,1,2,3,4}{
    \foreach \y in {0,1,2,3,4}{
        \node at (\x,\y)[fill=black,circle,scale=0.35]{};
        }
    }

\foreach \x in {0.35,1.35,2.35,3.35,4.35}{
    \foreach \y in {-0.35,0.65,1.65,2.65,3.65}{
        \node at (\x,\y)[fill=black,circle,scale=0.35]{};
        }
    }
\begin{scope}[shift={(0.35,-0.35)}]
\draw[dashed][step=1cm,gray,very thin] (0,0) grid (4,4);
\end{scope}
\end{tikzpicture}
\caption{On-site terms in $H_5$.}\label{fig:intro-onsite}
\end{subfigure}
\vspace{2em}
\begin{subfigure}[b]{0.5\linewidth}
\centering
\begin{tikzpicture}[scale=.75]

\draw[dashed][step=1cm,gray,very thin] (0,0) grid (4,4);

\foreach \x in {0,2}{
    \foreach \y in {0,2,4}{
        \draw[solid][step=1cm,red,thick] (\x,\y) -- (\x+1,\y);
        }
    }

\foreach \x in {1,3}{
    \foreach \y in {1,3}{
        \draw[solid][step=1cm,red,thick] (\x,\y) -- (\x+1,\y);
        }
    }

\foreach \x in {1.5,3.5}{
    \foreach \y in {0.5,2.5}{
        \draw[solid][step=1cm,red,thick] (\x,\y) -- (\x-0.5,\y+0.5);
         \draw[solid][step=1cm,red,thick] (\x,\y) -- (\x+0.5,\y+0.5);
        }
    }

\foreach \x in {0.5,2.5}{
    \foreach \y in {1.5,3.5}{
        \draw[solid][step=1cm,red,thick] (\x,\y) -- (\x-0.5,\y+0.5);
         \draw[solid][step=1cm,red,thick] (\x,\y) -- (\x+0.5,\y+0.5);
        }
    }

\foreach \x in {1.5,3.5}{
    \foreach \y in {0.5,2.5}{
        \node at (\x,\y)[fill=gray,circle,scale=0.35]{};
        }
    }

\foreach \x in {0.5,2.5}{
    \foreach \y in {1.5,3.5}{
        \node at (\x,\y)[fill=gray,circle,scale=0.35]{};
        }
    }

\foreach \x in {0,1,2,3,4}{
    \foreach \y in {0,1,2,3,4}{
        \node at (\x,\y)[fill=black,circle,scale=0.35]{};
        }
    }
\end{tikzpicture}
\caption{Hopping terms in $\op H_1$.}
\end{subfigure}
\vspace{2em}
\begin{subfigure}[b]{0.5\linewidth}
\centering
\begin{tikzpicture}[scale=.75]

\draw[dashed][step=1cm,gray,very thin] (0,0) grid (4,4);

\foreach \x in {1,3}{
    \foreach \y in {0,2,4}{
        \draw[solid][step=1cm,red,thick] (\x,\y) -- (\x+1,\y);
        }
    }

\foreach \x in {0,2}{
    \foreach \y in {1,3}{
        \draw[solid][step=1cm,red,thick] (\x,\y) -- (\x+1,\y);
        }
    }

\foreach \x in {1.5,3.5}{
    \foreach \y in {0.5,2.5}{
        \draw[solid][step=1cm,red,thick] (\x,\y) -- (\x-0.5,\y-0.5);
         \draw[solid][step=1cm,red,thick] (\x,\y) -- (\x+0.5,\y-0.5);
        }
    }

\foreach \x in {0.5,2.5}{
    \foreach \y in {1.5,3.5}{
        \draw[solid][step=1cm,red,thick] (\x,\y) -- (\x-0.5,\y-0.5);
         \draw[solid][step=1cm,red,thick] (\x,\y) -- (\x+0.5,\y-0.5);
        }
    }

\foreach \x in {1.5,3.5}{
    \foreach \y in {0.5,2.5}{
        \node at (\x,\y)[fill=gray,circle,scale=0.35]{};
        }
    }

\foreach \x in {0.5,2.5}{
    \foreach \y in {1.5,3.5}{
        \node at (\x,\y)[fill=gray,circle,scale=0.35]{};
        }
    }

\foreach \x in {0,1,2,3,4}{
    \foreach \y in {0,1,2,3,4}{
        \node at (\x,\y)[fill=black,circle,scale=0.35]{};
        }
    }
\end{tikzpicture}
\caption{Hopping terms in $\op H_2$.}
\end{subfigure}

\caption{Compact encoding: qubit enumeration, and five mutually non-commuting interaction layers.}\label{fig:intro-compact}
\end{figure}

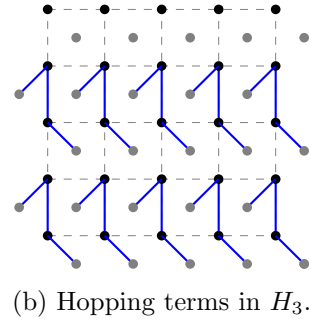
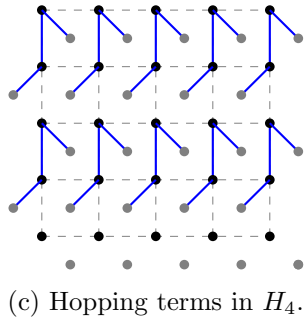
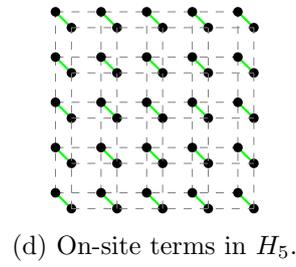
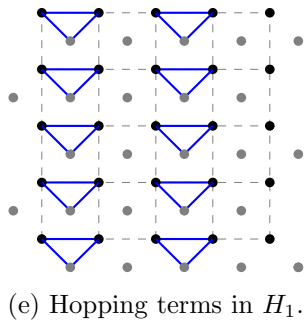
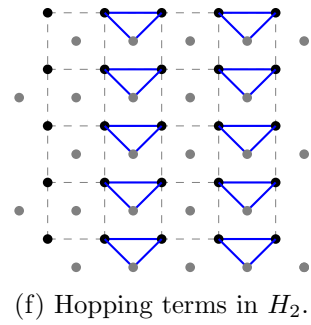
\begin{figure}[htbp]
\begin{subfigure}[b]{0.5\linewidth}
\centering
\begin{tikzpicture}[x=2cm,y=2cm,scale=.65]
\draw[dashed][step=2cm,gray, thin] (0,0) grid (2,2);

\draw[solid][step=1cm,black, thick] (0,2) -- (2,2);
\draw[solid][step=1cm,black, thick] (0,1) -- (2,1);
\draw[solid][step=1cm,black, thick] (0,0) -- (2,0);

\draw[solid][step=1cm,black, thick] (2,2) -- (2,1);
\draw[solid][step=1cm,black, thick] (0,0) -- (0,1);

\node[label={[label distance=0.1mm]45:1}] at (0,2)[fill=black,circle,scale=0.5]{};

\node[label={[label distance=0.1mm, color=gray]45:1'}] at (0.5,1.5)[fill=gray,circle,scale=0.5]{};

\node[label={[label distance=0.1mm]45:2}] at (1,2)[fill=black,circle,scale=0.5]{};

\node[label={[label distance=0.1mm, color=gray]45:2'}] at (1.5,1.5)[fill=gray,circle,scale=0.5]{};

\node[label={[label distance=0.1mm]45:3}] at (2,2)[fill=black,circle,scale=0.5]{};

\node[label={[label distance=0.1mm, color=gray]45:3'}] at (2.5,1.5)[fill=gray,circle,scale=0.5]{};

\node[label={[label distance=0.1mm]45:4}] at (2,1)[fill=black,circle,scale=0.5]{};

\node[label={[label distance=0.1mm, color=gray]45:4'}] at (1.5,0.5)[fill=gray,circle,scale=0.5]{};

\node[label={[label distance=0.1mm]45:5}] at (1,1)[fill=black,circle,scale=0.5]{};

\node[label={[label distance=0.1mm, color=gray]45:5'}] at (0.5,0.5)[fill=gray,circle,scale=0.5]{};

\node[label={[label distance=0.1mm]45:6}] at (0,1)[fill=black,circle,scale=0.5]{};

\node[label={[label distance=0.1mm, color=gray]45:6'}] at (-0.5,0.5)[fill=gray,circle,scale=0.5]{};

\node[label={[label distance=0.1mm]45:7}] at (0,0)[fill=black,circle,scale=0.5]{};

\node[label={[label distance=0.1mm, color=gray]45:7'}] at (0.5,-0.5)[fill=gray,circle,scale=0.5]{};

\node[label={[label distance=0.1mm]45:8}] at (1,0)[fill=black,circle,scale=0.5]{};

\node[label={[label distance=0.1mm, color=gray]45:8'}] at (1.5,-0.5)[fill=gray,circle,scale=0.5]{};

\node[label={[label distance=0.1mm]45:9}] at (2,0)[fill=black,circle,scale=0.5]{};

\node[label={[label distance=0.1mm, color=gray]45:9'}] at (2.5,-0.5)[fill=gray,circle,scale=0.5]{};

\end{tikzpicture}
\caption{Qubit numbering.}\label{fig:intro-ordering-VC}
\end{subfigure}
\vspace{2em}
\begin{subfigure}[b]{0.5\linewidth}
\centering
\begin{tikzpicture}[scale=0.75]

\draw[dashed][step=1cm,gray,very thin] (0,0) grid (4,4);

\foreach \x in {0,1,2,3,4}{
    \foreach \y in {0,2}{
        \draw[solid][step=1cm,blue,thick] (\x,\y) -- (\x,\y+1);
        }
    }

\foreach \x in {0,1,2,3,4}{
    \foreach \y in {0,1,2,3,4}{
        \node at (\x,\y)[fill=black,circle,scale=0.35]{};
        }
    }

\foreach \x in {-0.5,0.5,1.5,2.5,3.5}{
    \foreach \y in {0.5,2.5}{
         \draw[solid][step=1cm,blue,thick] (\x,\y) -- (\x+0.5,\y+0.5);
        }
    }

\foreach \x in {0.5,1.5,2.5,3.5,4.5}{
    \foreach \y in {-0.5,1.5}{
         \draw[solid][step=1cm,blue,thick] (\x,\y) -- (\x-0.5,\y+0.5);
        }
    }

\foreach \x in {0.5,1.5,2.5,3.5}{
    \foreach \y in {-0.5,0.5,1.5,2.5,3.5}{
        \node at (\x,\y)[fill=gray,circle,scale=0.35]{};
        }
    }
\node at (4.5,3.5)[fill=gray,circle,scale=0.35]{};
\node at (-0.5,2.5)[fill=gray,circle,scale=0.35]{};
\node at (4.5,1.5)[fill=gray,circle,scale=0.35]{};
\node at (-0.5,0.5)[fill=gray,circle,scale=0.35]{};
\node at (4.5,-0.5)[fill=gray,circle,scale=0.35]{};
\node at (4.5,-0.5)[fill=gray,circle,scale=0.35]{};
\end{tikzpicture}
\caption{Hopping terms in $\op H_3$.}
\end{subfigure}
\vspace{2em}
\begin{subfigure}[b]{0.5\linewidth}
\centering
\begin{tikzpicture}[scale=0.75]

\draw[dashed][step=1cm,gray,very thin] (0,0) grid (4,4);

\foreach \x in {0,1,2,3,4}{
    \foreach \y in {1,3}{
        \draw[solid][step=1cm,blue,thick] (\x,\y) -- (\x,\y+1);
        }
    }

\foreach \x in {0,1,2,3,4}{
    \foreach \y in {0,1,2,3,4}{
        \node at (\x,\y)[fill=black,circle,scale=0.35]{};
        }
    }

\foreach \x in {-0.5,0.5,1.5,2.5,3.5}{
    \foreach \y in {0.5,2.5}{
         \draw[solid][step=1cm,blue,thick] (\x,\y) -- (\x+0.5,\y+0.5);
        }
    }

\foreach \x in {0.5,1.5,2.5,3.5,4.5}{
    \foreach \y in {1.5,3.5}{
         \draw[solid][step=1cm,blue,thick] (\x,\y) -- (\x-0.5,\y+0.5);
        }
    }

\foreach \x in {0.5,1.5,2.5,3.5}{
    \foreach \y in {-0.5,0.5,1.5,2.5,3.5}{
        \node at (\x,\y)[fill=gray,circle,scale=0.35]{};
        }
    }
\node at (4.5,3.5)[fill=gray,circle,scale=0.35]{};
\node at (-0.5,2.5)[fill=gray,circle,scale=0.35]{};
\node at (4.5,1.5)[fill=gray,circle,scale=0.35]{};
\node at (-0.5,0.5)[fill=gray,circle,scale=0.35]{};
\node at (4.5,-0.5)[fill=gray,circle,scale=0.35]{};
\node at (4.5,-0.5)[fill=gray,circle,scale=0.35]{};

\end{tikzpicture}
\caption{Hopping terms in $\op H_4$.}
\end{subfigure}
\hfill
\begin{subfigure}[b]{0.5\linewidth}
\centering
\begin{tikzpicture}[scale=0.6]
\draw[dashed][step=1cm,gray,very thin] (0,0) grid (4,4);

\foreach \x in {0,1,2,3,4}{
    \foreach \y in {0,1,2,3,4}{
        \draw[solid][step=1cm,green,thick] (\x,\y) -- (\x+0.35,\y-0.35);
        }
    }

\foreach \x in {0,1,2,3,4}{
    \foreach \y in {0,1,2,3,4}{
        \node at (\x,\y)[fill=black,circle,scale=0.35]{};
        }
    }

\foreach \x in {0.35,1.35,2.35,3.35,4.35}{
    \foreach \y in {-0.35,0.65,1.65,2.65,3.65}{
        \node at (\x,\y)[fill=black,circle,scale=0.35]{};
        }
    }
\begin{scope}[shift={(0.35,-0.35)}]
\draw[dashed][step=1cm,gray,very thin] (0,0) grid (4,4);
\end{scope}
\end{tikzpicture}
\caption{On-site terms in $H_5$.}\label{fig:intro-onsite-vc}
\end{subfigure}
\vspace{2em}
\begin{subfigure}[b]{0.5\linewidth}
\centering
\begin{tikzpicture}[scale=0.75]
\draw[dashed][step=1cm,gray,very thin] (0,0) grid (4,4);

\foreach \x in {0,2}{
    \foreach \y in {0,1,2,3,4}{
        \draw[solid][step=1cm,blue,thick] (\x,\y) -- (\x+1,\y);
        }
    }

\foreach \x in {0,1,2,3,4}{
    \foreach \y in {0,1,2,3,4}{
        \node at (\x,\y)[fill=black,circle,scale=0.35]{};
        }
    }
\foreach \x in {0.5,2.5}{
    \foreach \y in {-0.5,0.5,1.5,2.5,3.5}{
        \draw[solid][step=1cm,blue,thick] (\x,\y) -- (\x-0.5,\y+0.5);
         \draw[solid][step=1cm,blue,thick] (\x,\y) -- (\x+0.5,\y+0.5);
        }
    }

\foreach \x in {0.5,1.5,2.5,3.5}{
    \foreach \y in {-0.5,0.5,1.5,2.5,3.5}{
        \node at (\x,\y)[fill=gray,circle,scale=0.35]{};
        }
    }
\node at (4.5,3.5)[fill=gray,circle,scale=0.35]{};
\node at (-0.5,2.5)[fill=gray,circle,scale=0.35]{};
\node at (4.5,1.5)[fill=gray,circle,scale=0.35]{};
\node at (-0.5,0.5)[fill=gray,circle,scale=0.35]{};
\node at (4.5,-0.5)[fill=gray,circle,scale=0.35]{};
\node at (4.5,-0.5)[fill=gray,circle,scale=0.35]{};
\end{tikzpicture}
\caption{Hopping terms in $\op H_1$.}
\end{subfigure}
\vspace{2em}
\begin{subfigure}[b]{0.5\linewidth}
\centering
\begin{tikzpicture}[scale=0.75]

\draw[dashed][step=1cm,gray,very thin] (0,0) grid (4,4);

\foreach \x in {1,3}{
    \foreach \y in {0,1,2,3,4}{
        \draw[solid][step=1cm,blue,thick] (\x,\y) -- (\x+1,\y);
        }
    }

\foreach \x in {0,1,2,3,4}{
    \foreach \y in {0,1,2,3,4}{
        \node at (\x,\y)[fill=black,circle,scale=0.35]{};
        }
    }

\foreach \x in {1.5,3.5}{
    \foreach \y in {-0.5,0.5,1.5,2.5,3.5}{
        \draw[solid][step=1cm,blue,thick] (\x,\y) -- (\x-0.5,\y+0.5);
         \draw[solid][step=1cm,blue,thick] (\x,\y) -- (\x+0.5,\y+0.5);
        }
    }

\foreach \x in {0.5,1.5,2.5,3.5}{
    \foreach \y in {-0.5,0.5,1.5,2.5,3.5}{
        \node at (\x,\y)[fill=gray,circle,scale=0.35]{};
        }
    }
\node at (4.5,3.5)[fill=gray,circle,scale=0.35]{};
\node at (-0.5,2.5)[fill=gray,circle,scale=0.35]{};
\node at (4.5,1.5)[fill=gray,circle,scale=0.35]{};
\node at (-0.5,0.5)[fill=gray,circle,scale=0.35]{};
\node at (4.5,-0.5)[fill=gray,circle,scale=0.35]{};
\node at (4.5,-0.5)[fill=gray,circle,scale=0.35]{};
\end{tikzpicture}
\caption{Hopping terms in $\op H_2$.}
\end{subfigure}
\caption{VC encoding: qubit enumeration, and five mutually non-commuting interaction layers.}\label{fig:intro-VC}
\end{figure}

\newpage

\part*{Supplementary Methods}
\section*{The Sub-Circuit Model and Error Models}\label{sec:cost}
In this section we introduce the sub-circuit model, which we employ throughout this paper. We analyse it under two different error models; these respective models are applicable to NISQ devices with differing capabilities. Before defining these we introduce the mathematical definition of the sub-circuit model.
\begin{definition}[Sub-circuit Model]\label{def:short-pulse-circuit}
  Given a set of qubits $Q$, a set $I \subseteq Q \times Q$ specifying which pairs of qubits may interact, a fixed two qubit interaction Hamiltonian $\op h$, and a minimum switching time $t_{\text{min}}$, a sub-circuit pulse-sequence $\op C$ is a quantum circuit of $L$ pairs of alternating layer types $\op C = \prod_l^L \op U_l \op V_l $ with $\op U_l = \prod_{i \in Q} \op u_i^l$ being a layer of arbitrary single qubit unitary gates, and  $\op V_l = \prod_{ ij  \in \Gamma_l} \op v_{ij} \left(t_{ij}^l \right)$ being a layer of non-overlapping, variable time, two-qubit unitary gates:
  \begin{align}
    \op v_{ij}(t)=\ee^{\ii t \op h_{ij}}
  \end{align}
with the set $\Gamma_l \subseteq I$ containing no overlapping pairs of qubits, and $t \geq t_{min}$. Throughout this paper we assume $\op h_{ij} = Z_iZ_j$. As all $\sigma_i \sigma_j$ are equivalent to $Z_i Z_j$ up to single qubit rotations this can be left implicit and so we take  $\op h_{ij} =\sigma_i \sigma_j$.
\end{definition}



The traditional quantum circuit model measures its run-time in layer count. This also applies in the sub-circuit-model.
\begin{definition}[Circuit Depth] \label{def:cost-circuit depth}
Under a per-gate error model the cost of a sub-circuit pulse-sequence $\op C$ is defined as
\begin{align}
\cost(\op C) := L,
\end{align}
or simply the circuit depth.
\end{definition}

However, unlike the traditional quantum circuit model, the sub-circuit-model also allows for a different run-time metric for any given circuit $C$. Depending on the details of the underlying hardware, it can be appropriate to measure run-time as the total physical duration of the two-qubit interaction layers. This is justified for many implementations: for example superconducting qubits have interaction time scales of $\sim 50 - 500 \text{ns}$ \cite{kjaergaard2019}, while the single qubit energy spacing is on the order of $\sim 5 \text{Ghz}$, which gives a time scale for single qubit gates of $\sim 0.2 \text{ns}$.
\begin{definition}[Run-time] \label{def:time-cost}
The physical run-time of a sub-circuit pulse-sequence $\op C$ is defined as
\begin{align}
\cost(\op C) := \sum_l^L \max_{ij \in \Gamma_l}\left(t_{ij}^l\right)
\end{align}
The run-time is normalised to the physical interaction strength, so that $\vert h\vert = 1$.
\end{definition}

For both run-time and circuit depth we assume single qubit layers contribute a negligible amount to the total time duration of the circuit and we can cost standard gates according to both metrics as long as they are written in terms of a sub-circuit pulse-sequence. For example, according to \Cref{def:time-cost} a CNOT gate has $\cost = \pi/4$ as it is equivalent to $\ee^{-\ii \frac{\pi}{4} ZZ}$ up to single qubit rotations.

How does this second cost model affect the time complexity of algorithms? I.e., given a circuit $\op C$, does $\cost(\op C)$ ever deviate so significantly from $\op C$'s gate depth count that the circuit would have to be placed in a complexity class lower?
Under reasonable assumptions on the shortest pulse time we prove in the following that this is not the case.
\begin{remark}\label{rem:cost-model-overhead}
Let $\{ \op C_x \}_{x\in\field N}$ be a family of quantum circuits ingesting input of size $x$.
Denote with $m(x)$ the circuit depth of $\op C_x$; and let $\delta_0 = \delta_0(x):=\min_{l\in[L]}\max_{ij\in\Gamma_l}(\tau^t_{ij})$ be the shortest layer time pulse present in the circuit $\op C_x$, according to \cref{def:time-cost}.
Then if
\begin{align}
    \delta_0(x) = \begin{cases}
    \BigO(1) \\
    1/\poly x \\
    1/\exp(\poly x) \\
    \end{cases}
    \quad\Longrightarrow\quad
    m(x) = \cost(\op C) \times \begin{cases}
    \BigO(1) \\
    \BigO(\poly x) \\
    \BigO(\exp(\poly x))
    \end{cases}
\end{align}
Furthermore $\cost(\op C) = \BigO(m(x))$.
\end{remark}
\begin{proof}
Clear since $m(x) = \BigO(\cost(\op C)/\delta_0(x))$.
The second claim is trivial.
\end{proof}
An immediate consequence of using the cost model metric and the overhead of counting gates from \cref{rem:cost-model-overhead} can be summarised as follows.
\begin{corollary}
Let $\epsilon>0$. Any family of short-pulse circuits $\{ \op C_x \}$ with $\delta_0(x) = \BigO(1)$ can be approximated by a family of circuits $\{ \tilde{\op C}_x \}$ made up of gates from a fixed universal gate set; and such that $\tilde{\op C}_x$ approximates $\op C_x$ in operator norm to precision $\epsilon$ in time $\BigO(\log^4(\cost(\op C_x)/\epsilon))$.
\end{corollary}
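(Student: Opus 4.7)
The plan is to compile $\op C_x$ gate-by-gate using the Solovay--Kitaev theorem~\cite{Kitaev1997,Dawson2005}, with the gate-count bookkeeping handled by \cref{rem:cost-model-overhead}. First I would invoke \cref{rem:cost-model-overhead} under the hypothesis $\delta_0(x)=\BigO(1)$ to conclude that the depth $m(x)$ of $\op C_x$ satisfies $m(x)=\BigO(\cost(\op C_x))$; since each of the $m(x)$ layers contains at most one single- and one two-qubit gate per qubit, the total gate count $N$ is polynomial in $\cost(\op C_x)$ (absorbing the qubit count $|Q|$ as a per-step overhead), in particular $N=\BigO(\cost(\op C_x))$.

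Next I would replace each gate in $\op C_x$ by a Solovay--Kitaev approximation drawn from the fixed universal gate set. Each two-qubit pulse $\op v_{ij}(t)=\ee^{\ii t\op h_{ij}}$ is an element of $SU(4)$ (for any $t\ge \delta_0$), and each single-qubit rotation lies in $SU(2)$; Solovay--Kitaev yields, for any target precision $\eta>0$, an approximating sequence of length $\BigO(\log^c(1/\eta))$ with $c<4$. Setting $\eta:=\epsilon/N$ and composing the approximations layer-by-layer in the same order produces $\tilde{\op C}_x$.

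For the operator-norm bound I would use the standard telescoping estimate for unitaries,
\begin{align*}
\bigl\|U_1\cdots U_N - \tilde U_1\cdots \tilde U_N\bigr\| \le \sum_{k=1}^N \bigl\|U_k - \tilde U_k\bigr\|,
\end{align*}
which (each factor being unitary) gives $\|\op C_x-\tilde{\op C}_x\|\le N\eta=\epsilon$. The total elementary-gate count of $\tilde{\op C}_x$ is then $N\cdot\BigO(\log^c(N/\epsilon))=\BigO\bigl(\cost(\op C_x)\,\log^4(\cost(\op C_x)/\epsilon)\bigr)$, matching the claimed bound.

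No step presents a genuine obstacle: the substantive input---the existence of an efficient classical procedure producing the approximating sequences together with the $\log^c$ length bound---is quoted from Solovay--Kitaev as a black box, while all remaining arithmetic is bookkeeping afforded by \cref{rem:cost-model-overhead}. The only mild subtlety worth spelling out is that one must approximate gates within a \emph{layer} (which act on disjoint qubits and hence tensor) as well as across layers; both cases are handled by the same telescoping inequality above, so no separate argument is needed.
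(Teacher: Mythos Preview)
Your approach is essentially the paper's own: invoke \cref{rem:cost-model-overhead} to bound the layer count by $\BigO(\cost(\op C_x))$, then compile gate-by-gate via Solovay--Kitaev with the standard telescoping error bound---the paper's proof is precisely this two-line sketch, and you have merely spelled out the details it leaves implicit. (Your final count $\BigO\bigl(\cost(\op C_x)\,\log^4(\cost(\op C_x)/\epsilon)\bigr)$ correctly carries the leading $\cost$ factor that the corollary's stated bound appears to have dropped.)
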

\begin{proof}
By \cref{rem:cost-model-overhead}, there are $m(x) = \cost(\op C)\times\BigO(1)$ layers of gates in $\op C$; now apply Solovay-Kitaev to compile it to a universal target gate set.
\end{proof}
Indeed, we can take this further and show that complexity classes like BQP are invariant under an exchange of the two metrics ``circuit depth'' and ``$\cost$''; if e.g.\ $\delta_0(x) = 1/\poly x$, then again invoking Solovay-Kitaev lets one upper-bound and approximate any circuit while only introducing an at most poly-logarithmic overhead in circuit depth.
However, a stronger result than this is already known, independent of any lower bound on pulse times, which we cite here for completeness.
\begin{remark}[Poulin et. al. \cite{Poulin2011}]\label{rem:bqp-invariant-under-cost}
A computational model based on simulating a local Hamiltonian with arbitrarily-quickly varying local terms is equivalent to the standard circuit model.
\end{remark}
\newcommand{\Ucirc}{\op U_\mathrm{circ}}
\section*{Sub-Circuit Synthesis of Multi-Qubit Interactions}\label{ap:decomp-deets-ap}
\subsection*{Analytic Pulse Sequence Identities}
In this section we introduce the analytic pulse sequence identities we use to decompose local Trotter steps $\ee^{- \ii \delta h}$. Their recursive application allows us to establish, that for a $k$-qubit Pauli interaction $\op h$, there exists sub-circuit pulse-sequence  $C:=\prod_l^{L} U_l V_l$ which implements the evolution operator $\ee^{-\ii \delta \op h}$. Most importantly, for any target time $\delta \geq 0$ the run-time of that circuit is bounded as
\begin{align}\label{eq:general-k-cost}
\cost \left(C \right) \leq \mathcal{O}\left( \delta^{\frac{1}{k-1}}\right),
\end{align}
according to the notion of run-time established in \Cref{def:time-cost}.

For $k=2^n +1$ where $n \in \mathbb{Z}$ as noted by \cite{Dur2007}, this can be done inexactly using a well know identity from Lie algebra. For Hermitian operators $\op A$ and $\op B$ we have
\begin{align}
\ee^{-\ii t \op B}\ee^{-\ii t \op A}\ee^{\ii t \op B}\ee^{\ii t \op A} = \ee^{ t^2 \left[\op A, \op B \right]} + \mathcal{O}\left( t^3 \right).
\end{align}
We make this exact for all $t \in [0,2\pi]$ for anti-commuting Pauli interactions in \Cref{lem:Weight-Increase-Depth4-ap}, and use \Cref{lem:Weight-Increase-Depth5-ap} to extend it to all $k \in \mathbb{Z}$.
\begin{lemma}[Depth 4 Decomposition]\label{lem:Weight-Increase-Depth4-ap}
  Let $\op U(t) = \ee^{\ii t \op H}$ be the time-evolution operator for time $t$ under a Hamiltonian $\op H = \tfrac{1}{2\ii}[ \op h_1, \op h_2]$, where
  $\op h_1$ and $\op h_2$ anti-commute and both square to identity.
  For $0 \leq \delta \leq \pi/2$ or $\pi \leq \delta \leq 3\pi/2$, $\op U(t)$ can be decomposed as
  \begin{align}
    \op U(t) =\ee^{\ii t_1 \op h_1} \ee^{\ii t_2 \op h_2} \ee^{\ii t_2 \op h_1} \ee^{\ii t_1 \op h_2}
  \end{align}
  with pulse times $t_1,t_2$ given by
  \begin{align}
    t_1 &=\frac{1}{2} \tan^{-1}\left(\frac{1}{\sin (t)+\cos (t)}, \; \pm\frac{\sqrt{\sin (2 t)}}{\sin (t)+\cos (t)}\right)+\pi  c\\
    t_2 &=\frac{1}{2} \tan^{-1}\left(\cos (t)-\sin (t), \; \mp\sqrt{\sin (2 t)}\right)+\pi  c,
  \end{align}
  where $c \in \mathbb{Z}$, and corresponding signs are taken in the two expressions.

  For $\pi/2 \leq t \leq \pi$ or $3\pi/2 \leq t \leq 2 \pi$, $\op U(t)$ can be decomposed as
  \begin{align}
    \op U(t) & =\ee^{\ii t_1 \op h_1} \ee^{\ii t_2 \op h_2} \ee^{-\ii t_2 \op h_1} \ee^{-\ii t_1 \op h_2}
  \end{align}
  with pulse times $t_1,t_2$ given by
  \begin{align}
    t_1 &=\frac{1}{2} \tan^{-1}\left(\frac{1}{\cos (t)-\sin (t)}, \; \pm\frac{\sqrt{-\sin (2 t)}}{\cos (t)-\sin (t)}\right)+\pi  c\\
    t_2 &=\frac{1}{2} \tan^{-1}\left(\sin (t)+\cos (t), \; \pm\sqrt{-\sin (2 t)}\right)+\pi  c,
  \end{align}
  where $c \in \mathbb{Z}$, and corresponding signs are taken in the two expressions.
\end{lemma}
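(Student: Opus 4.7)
The plan is to reduce this identity to a computation in the Clifford subalgebra generated by $\op h_1$ and $\op h_2$. Under the stated relations ($\op h_j^2 = \id$ and $\{\op h_1, \op h_2\} = 0$), the set $\{\id, \op h_1, \op h_2, \op h_1 \op h_2\}$ is linearly independent and closes under multiplication, giving an algebra isomorphic to $M_2(\field C)$ via the identification $\op h_1 \leftrightarrow \sigma^x$, $\op h_2 \leftrightarrow \sigma^y$, $\op h_1 \op h_2 \leftrightarrow \ii \sigma^z$. The Hamiltonian simplifies to $\op H = \tfrac{1}{2\ii}[\op h_1,\op h_2] = -\ii\, \op h_1 \op h_2$, which corresponds to $\sigma^z$, so $\op U(t) = \cos(t)\,\id + \sin(t)\,\op h_1 \op h_2$ is the image of $\ee^{\ii t \sigma^z}$. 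The claim is therefore equivalent to a Pauli identity in $SU(2)$, reminiscent of an Euler-angle decomposition, where each factor $\ee^{\ii t_j \op h_k} = \cos(t_j)\,\id + \ii \sin(t_j)\,\op h_k$ behaves like a unit-norm ``quaternion.''

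Next I would expand the four-fold product by direct multiplication, using only $\op h_j^2 = \id$ and anti-commutation, and collect terms as $\alpha\,\id + \beta\,\op h_1 + \gamma\,\op h_2 + \delta\,\op h_1 \op h_2$ with $\alpha,\beta,\gamma,\delta$ explicit trigonometric polynomials in $(t_1, t_2)$. Matching to $\cos(t)\,\id + \sin(t)\,\op h_1 \op h_2$ yields four real conditions: $\alpha = \cos t$, $\delta = \sin t$, and $\beta = \gamma = 0$. A short calculation produces $\beta = \gamma$ (both equal $\tfrac{\ii}{2}\bigl(\sin 2t_1 + \sin 2t_2 \cos 2t_1\bigr)$), so the off-diagonal vanishing reduces to a single equation; and because the left-hand side is manifestly unitary, $\alpha^2 + \delta^2 = 1$ once $\beta = \gamma = 0$, so only one of the remaining scalar equations is independent. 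This leaves two genuine equations in the two unknowns $(t_1, t_2)$.

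From there it becomes a matter of trigonometric bookkeeping: applying sum/difference and double-angle identities rewrites both equations so that $2t_1$ and $2t_2$ are determined as angles of two-argument arctangents whose components involve $\sin t \pm \cos t$ and $\sqrt{\sin 2t}$, matching the closed-form expressions in the lemma. The $+\pi c$ shifts record the $\pi$-periodicity of $\sin(2t_j)$ and $\cos(2t_j)$ in $t_j$. The main obstacle---and the origin of the case split---is realness of $\sqrt{\sin 2t}$, which forces $t \in [0, \pi/2] \cup [\pi, 3\pi/2]$. On the complementary intervals $\sin 2t \le 0$, so I would instead work with the variant ansatz $\ee^{\ii t_1 \op h_1} \ee^{\ii t_2 \op h_2} \ee^{-\ii t_2 \op h_1} \ee^{-\ii t_1 \op h_2}$, whose sign flips in the last two factors produce opposite signs on the $bd$-cross-terms in the expansion. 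Repeating the same coefficient-matching procedure then yields the second pair of formulas with $\sqrt{-\sin 2t}$ in place of $\sqrt{\sin 2t}$, and together with the first case covers all $t \in [0, 2\pi]$; periodicity and the $\pi c$ freedom extend this to all real $t$.
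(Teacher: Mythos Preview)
Your proposal is correct and follows essentially the same approach as the paper: expand each factor via $\ee^{\ii s\,\op h_k}=\cos s\,\id+\ii\sin s\,\op h_k$ (using $\op h_k^2=\id$), multiply out in the four-dimensional algebra spanned by $\{\id,\op h_1,\op h_2,\op h_1\op h_2\}$, match coefficients against $\op U(t)=\cos t\,\id+\sin t\,\op h_1\op h_2$, and solve the resulting trigonometric system; the paper indicates exactly this by pointing to the analogous Depth~3 computation. Your additional remarks about the $SU(2)$/Pauli identification and the origin of the case split in the sign of $\sin 2t$ are helpful elaborations but do not constitute a different method.
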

\begin{proof}
  Follows similarly to \cref{lem:Rank-Increase-Depth3-ap}.
\end{proof}

\begin{lemma}[Depth 5 Decomposition]\label{lem:Weight-Increase-Depth5-ap}
  Let $\op U(t)$ be the time-evolution operator for time $t$ under a Hamiltonian $\op H = \tfrac{1}{2\ii} [ \op h_1,\op h_2]$.
  If $\op h_1$ and $\op h_2$ anti-commute and both square to identity, then $U(t)$ can be decomposed as
  \begin{align}
    \op U(t) & =\ee^{\ii t_1 \op h_2} \ee^{- \ii \phi \op h_1} \ee^{\ii t_2 \op h_2} \ee^{ \ii \phi \op h_1} \ee^{\ii t_1 \op h_2}
  \end{align}
  with pulse times $t_1,t_2,\phi$ given by
  \begin{align}
    t_1 &= \frac{1}{2} \tan^{-1}\left(\pm\sqrt{2} \sec (t) \csc (2 \phi ) \sqrt{\cos (2 t)-\cos (4 \phi )}, \; -2 \tan (t) \cot (2 \phi )\right)+\pi  c \\
    t_2 &= \tan^{-1}\left(\pm\frac{\csc (2 \phi ) \sqrt{\cos (2 t)-\cos (4 \phi )}}{\sqrt{2}}, \; \sin (t) \csc (2 \phi )\right)+2 \pi  c
  \end{align}
  where $c \in \mathbb{Z}$, and corresponding signs are taken in the two expressions.
\end{lemma}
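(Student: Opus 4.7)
The plan is to reduce the identity to an SU(2) calculation. Since $\op h_1,\op h_2$ anticommute and square to $\id$, the operator $\op H=\tfrac{1}{2\ii}[\op h_1,\op h_2]=-\ii\,\op h_1\op h_2$ itself squares to $\id$ and anticommutes with both $\op h_1$ and $\op h_2$. Hence $\{\id,\op h_1,\op h_2,\op H\}$ satisfies exactly the multiplication table of $\{\id,\sigma^x,\sigma^y,\sigma^z\}$, so any polynomial identity that holds after the substitution $\op h_1\mapsto\sigma^x$, $\op h_2\mapsto\sigma^y$, $\op H\mapsto\sigma^z$ automatically holds for the original operators. I therefore only need to establish
\begin{equation*}
\ee^{\ii t\sigma^z}
\;=\;
\ee^{\ii t_1\sigma^y}\,\ee^{-\ii\phi\sigma^x}\,\ee^{\ii t_2\sigma^y}\,\ee^{\ii\phi\sigma^x}\,\ee^{\ii t_1\sigma^y}.
\end{equation*}

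Next I would simplify the middle sandwich using the anticommutation $\sigma^x\sigma^y=-\sigma^y\sigma^x$, which gives the useful identity $\ee^{-\ii\phi\sigma^x}\sigma^y\ee^{\ii\phi\sigma^x}=\cos(2\phi)\sigma^y+\sin(2\phi)\sigma^z$, and therefore $\ee^{-\ii\phi\sigma^x}\ee^{\ii t_2\sigma^y}\ee^{\ii\phi\sigma^x}=\cos t_2\,\id+\ii\sin t_2(\cos 2\phi\,\sigma^y+\sin 2\phi\,\sigma^z)$. Then I would expand the outer factors via $\ee^{\ii t_1\sigma^y}M\ee^{\ii t_1\sigma^y}=(\ee^{\ii t_1\sigma^y}M\ee^{-\ii t_1\sigma^y})\ee^{2\ii t_1\sigma^y}$, using again the anticommutation with $\sigma^z$ to push everything into the Pauli basis. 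Matching coefficients of $\{\id,\sigma^x,\sigma^y,\sigma^z\}$ against $\cos t\,\id+\ii\sin t\,\sigma^z$ yields four conditions; the $\sigma^x$ coefficient should vanish automatically (a sanity check that the ansatz is well chosen), leaving the system
\begin{align*}
\cos t_2\cos 2t_1-\sin t_2\cos 2\phi\sin 2t_1 &= \cos t,\\
\cos t_2\sin 2t_1+\sin t_2\cos 2\phi\cos 2t_1 &= 0,\\
\sin t_2\sin 2\phi &= \sin t.
\end{align*}

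Finally I would solve this system. The third equation directly gives $\sin t_2=\sin t\csc 2\phi$, and then $\cos t_2=\pm\tfrac{1}{\sqrt 2\sin 2\phi}\sqrt{\cos 2t-\cos 4\phi}$ via the half-angle identity $\sin^2 2\phi-\sin^2 t=(\cos 2t-\cos 4\phi)/2$, matching the first argument of the claimed $\tan^{-1}$ formula for $t_2$. The first two equations form a rotation acting on $(\cos 2t_1,\sin 2t_1)$ by the angle whose cosine is $\cos t_2$ and sine is $\sin t_2\cos 2\phi$; I would verify that the squared norm of this rotation-vector equals $\cos^2 t$ (which it does, using $\sin t_2\sin 2\phi=\sin t$), invert it, and read off $\sin 2t_1=-\sin t\cot 2\phi/\cos t=-\tan t\cot 2\phi$ and $\cos 2t_1=\pm\tfrac{\sec t\csc 2\phi}{\sqrt 2}\sqrt{\cos 2t-\cos 4\phi}$, which is exactly the stated $\tan^{-1}$ expression for $2t_1$. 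The $+\pi c$ branches come from the $2\pi$-periodicity of $\sin,\cos$ in each exponent, combined with the fact that $\op h_j^2=\id$ makes $\ee^{\ii\alpha\op h_j}$ periodic with period $2\pi$.

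The main obstacle I anticipate is not the algebra itself but the careful bookkeeping of signs and branches: the $\pm$ option in $\cos t_2$ is correlated with the $\pm$ in $\cos 2t_1$ (since both arise from the same square root, and the second equation forces the product of signs), which is the source of the "corresponding signs" remark in the statement; and one has to check that the resulting $(\cos 2t_1,\sin 2t_1)$ always lies on the unit circle, which reduces again to the identity $\sin t_2\sin 2\phi=\sin t$. Once this has been done, essentially no computation remains beyond packaging the solutions into the two-argument $\arctan$ form stated in the lemma.
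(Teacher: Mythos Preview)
Your proposal is correct and follows essentially the same route as the paper. The paper's proof simply says ``Follows similarly to \cref{lem:Rank-Increase-Depth3-ap}'', and that lemma's proof does exactly what you do: use $h_j^2=\id$ to expand each exponential as $\cos(\cdot)\id+\ii\sin(\cdot)\,h_j$, multiply out, and match coefficients of $\id,h_1,h_2,H$ against those of $\ee^{\ii t H}$. Your explicit identification $h_1\mapsto\sigma^x$, $h_2\mapsto\sigma^y$, $H\mapsto\sigma^z$ is just a convenient way of naming the same four-dimensional algebra; the system of three equations you derive and its solution coincide with what the paper's direct expansion produces (the factor of $2$ in the statement's $\tan^{-1}$ arguments versus your $(\cos 2t_1,\sin 2t_1)$ is immaterial since the two-argument arctangent is scale-invariant).
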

\begin{proof}
  Follows similarly to \cref{lem:Rank-Increase-Depth3-ap}.
\end{proof}

\subsection*{Pulse-Time Bounds on Analytic Decompositions}
In our later analysis we apply these methods to the interactions in the Fermi-Hubbard Hamiltonian. Depending on the fermionic encoding used, these interaction terms are at most $3$-local or $4$-local. \Cref{fig:circuits} depicts exactly how \cref{lem:Weight-Increase-Depth4-ap,lem:Weight-Increase-Depth5-ap} are used to decompose  3-local and 4-local interactions of the form $Z^{\otimes k}$.

We establish bounds on the run-time (\cref{def:time-cost}) of these circuits. The exact run-time of the circuit $C_a$ -- defined in \cref{fig:circuits-a} -- follows directly from \cref{def:time-cost} as
\begin{align}
\cost(C_a(t)) =  2  |t^a_1(t)| +  2 |t^a_2(t)|.
\end{align}
We have labelled the functions $t_i(t)$ from \cref{lem:Weight-Increase-Depth4-ap} as $t^a_i(t)$ in order to distinguish them from those given in \cref{lem:Weight-Increase-Depth5-ap}, which are now labelled $t^b_i(t)$. This is to avoid confusion when using both identities in the one circuit, such as in circuit $C_b$ where we use \cref{lem:Weight-Increase-Depth4-ap} to decompose the remaining 3-local gates.

The exact run-time of the circuit $C_b$ -- defined in \cref{fig:circuits-b} -- is left in terms of $\cost(C_a)$ and again follows directly from \cref{def:time-cost} as
\begin{align}
\cost(C_b(t,\phi)) = 2 \cost(C_a(t^b_1(t,\phi))) + \cost(C_a(t^b_2(t,\phi))) +2 |\phi| .
\end{align}
\Cref{lem:Weight-Increase-Depth4-param-bounds-ap,lem:Weight-Increase-Depth5-param-bounds-ap} bound these two functions and determine the optimal choice of the free pulse-time $\phi$. Inserting these bounds into the above $\cost$ expressions gives
\begin{equation}\label{eq:decomp-bounds}
\cost(C(t)) \leq \begin{cases}
            2 \sqrt{2 t} & C=C_a \\
            7 \sqrt[3]{t} & C=C_b
           \end{cases}.
\end{equation}
As $Z^{\otimes k}$ is equivalent to any $k$-local Pauli term up to single qubit rotations, these bounds hold for any three or four local Pauli interactions.

\begin{figure}[t]
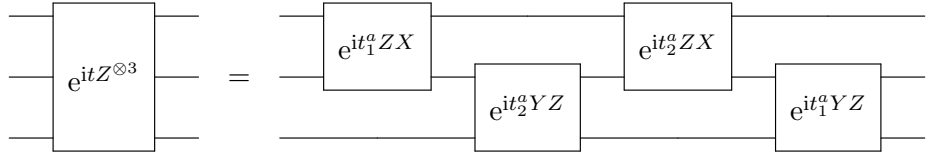
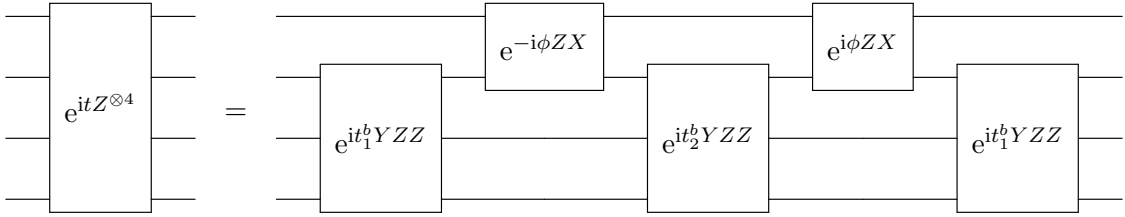

\begin{subfigure}[b]{\textwidth}
\[
\circuit{
& \multigate{2}{ \ee^{\ii t Z^{\otimes 3}}} & \qw \\
& \ghost{\ee^{\ii t Z^{\otimes 3}}} & \qw \\
& \ghost{\ee^{\ii t Z^{\otimes 3}}} & \qw
}
\quad\raisebox{-.82cm}{$=$} \quad
\circuit{
& \multigate{1}{\ee^{ \ii  t^a_1 Z X}} & \qw & \multigate{1}{\ee^{\ii  t^a_2 Z X}} & \qw & \qw \\
& \ghost{\ee^{\ii t^a_1 Z X}} & \multigate{1}{\ee^{\ii  t^a_2 Y Z}} & \ghost{\ee^{\ii  t^a_1 Z X}} & \multigate{1}{\ee^{\ii t^a_1 Y Z}} & \qw \\
& \qw & \ghost{\ee^{\ii t^a_2 Y Z}} & \qw & \ghost{\ee^{\ii t^a_1 Y Z}} & \qw
}
\]
\caption{Circuit $C_a$. The pulse times $t^a_i(t)$ are defined in \cref{lem:Weight-Increase-Depth4-ap} and the run-time is bounded as $\cost(C_a(t))  \leq 2 \sqrt{2 t}$. }\label{fig:circuits-a}
\end{subfigure}
\begin{subfigure}[b]{\textwidth}
\[
\circuit{
& \multigate{3}{ \ee^{\ii t Z^{\otimes 4}}} & \qw \\
& \ghost{\ee^{\ii t Z^{\otimes 4}}} & \qw \\
& \ghost{\ee^{\ii t Z^{\otimes 4}}} & \qw \\
& \ghost{\ee^{\ii t Z^{\otimes 4}}} & \qw
}
\quad\raisebox{-1.26cm}{$=$} \quad
\circuit{
& \qw & \multigate{1}{\ee^{-\ii \phi Z X}} & \qw & \multigate{1}{\ee^{\ii \phi Z X}} & \qw & \qw \\
& \multigate{2}{\ee^{ \ii t^b_1 Y Z Z}}  & \ghost{\ee^{-\ii  \phi Z X}} & \multigate{2}{\ee^{\ii t^b_2 Y Z Z}} & \ghost{\ee^{\ii  \phi Z X}} & \multigate{2}{\ee^{ \ii t^b_1 Y Z Z}} & \qw \\
& \ghost{\ee^{ \ii t^b_1 Y Z Z}} & \qw & \ghost{\ee^{\ii  t^b_2 Y Z Z}} & \qw & \ghost{\ee^{\ii t^b_1 Y Z Z}} & \qw \\
& \ghost{\ee^{ \ii t^b_1 Y Z Z}} & \qw & \ghost{\ee^{\ii  t^b_2 Y Z Z}} & \qw & \ghost{\ee^{\ii t^b_1 Y Z Z}} & \qw
}
\]
\caption{Circuit $C_b$. The pulse times $t^b_i(t)$ are defined in \cref{lem:Weight-Increase-Depth5-ap}. The three local gates are further decomposed using \cref{lem:Weight-Increase-Depth4-ap}, though this is not shown here. If $\phi = \left(\frac{1}{4}(3+2\sqrt{2}) t\right)^{1/3}$ then the run-time is bounded as $\cost(C_b(t))  \leq 7 \sqrt[3]{t}$.}\label{fig:circuits-b}
\end{subfigure}
\caption{The definitions of circuits $C_a(t)$ and $C_b(t)$ -- which respectively generate evolution under a three and four local Pauli interaction for target time $t \geq 0$.}\label{fig:circuits}
\end{figure}

\begin{lemma}\label{lem:Weight-Increase-Depth4-param-bounds-ap}
  Let $\op H$ be as in \cref{lem:Weight-Increase-Depth4-ap}.
  For $0 \leq t \leq \pi/2$, the pulse times $t_i$ in \cref{lem:Weight-Increase-Depth4-ap} can be bounded by
  \begin{align}
    |t_1| & \leq \sqrt{\frac{t}{2}}\\
    |t_2| + |t_1|  & \leq  \sqrt{2t}.
  \end{align}
\end{lemma}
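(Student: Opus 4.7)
The plan is to read off explicit formulas for $|t_1|$ and $|t_2|$ from \cref{lem:Weight-Increase-Depth4-ap} by fixing the integer constant $c = 0$ and choosing the $\pm$ sign to minimise magnitudes, and then verify both inequalities via elementary one-variable analysis. Using the identities $(\cos t \pm \sin t)^2 = 1 \pm \sin 2t$, the two-argument $\tan^{-1}$ arguments have norm one, so the reported angle is the principal atan2. Writing $u \coloneqq \sqrt{\sin 2t}$ and tracking which quadrant the atan2 arguments for $t_2$ land in as $t$ varies, one obtains
\begin{equation*}
    |2 t_1| = \arctan(u), \qquad |2 t_2| = \begin{cases} \arcsin(u), & t \in [0, \pi/4],\\ \pi - \arcsin(u), & t \in [\pi/4, \pi/2]. \end{cases}
\end{equation*}
The two sub-cases agree at $t = \pi/4$, where $u = 1$ and both give $\pi/2$.

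The first inequality is then immediate: since $\arctan(x) \le x$ for $x \ge 0$ and $\sin(2t) \le 2t$, one has $|2 t_1| = \arctan(u) \le u \le \sqrt{2t}$, hence $|t_1| \le \sqrt{t/2}$.

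For the sum bound $|t_1| + |t_2| \le \sqrt{2t}$, I would rewrite it as $\arctan(u) + \arcsin(u) \le 2\sqrt{2t}$ on $[0,\pi/4]$ and $\arctan(u) + \pi - \arcsin(u) \le 2\sqrt{2t}$ on $[\pi/4,\pi/2]$, and introduce the slack functions $g_\pm(t) \coloneqq 2\sqrt{2t} - |2t_1| - |2t_2|$. Direct evaluation gives $g_-(0)=0$, $g_-(\pi/4)=g_+(\pi/4)= 2\sqrt{\pi/2} - 3\pi/4 > 0$, and $g_+(\pi/2) = 2\sqrt{\pi} - \pi > 0$. It remains to verify non-negativity on the interior of the intervals; differentiating and using $\sqrt{1-\sin 2t} = |\cos t - \sin t|$ and $1 + \sin 2t = (\cos t + \sin t)^2$ reduces the derivative-sign condition to a single-variable trigonometric inequality.

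The main obstacle is precisely this derivative analysis for $g_\pm$. Near $t = 0$ the slack function $g_-$ has a $0/0$ structure, and a Taylor expansion is required to verify positivity; one checks that the leading non-vanishing term is of order $t^{5/2}$ with positive coefficient, so positivity holds in a right-neighbourhood of zero. Combined with the positive endpoint values at $t = \pi/4$ and $t = \pi/2$, one finishes by showing that any interior zero of $g_\pm'$ corresponds to a local maximum of $g_\pm$, so the function cannot dip below the positive endpoint values; alternatively, one can bound $g_\pm$ from below by an explicit concave auxiliary function whose positivity is manifest, avoiding the fine analysis of $g_\pm'$ altogether. The naive shortcut $\arctan(u) + \arcsin(u) \le 2u$ fails (it is violated near $u=1$), so the extra factor $\sqrt{2t}/u = \sqrt{v/\sin v} \ge 1$ must genuinely be exploited, which is what makes the calculus step delicate rather than routine.
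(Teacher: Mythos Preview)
Your approach is essentially the same as the paper's: fix a specific branch of the $t_i$ formulas, verify the signs of $t_1$ and $t_2$ so that $|t_1|+|t_2|=t_2-t_1$, and then show the slack $\sqrt{2t}-(t_2-t_1)$ is non-negative. The paper phrases this last step as ``Taylor expand and show the remainder $R_{12}$ is negative'' without further detail, whereas you make the formulas explicit via $u=\sqrt{\sin 2t}$ and split at $t=\pi/4$.

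Your treatment of the first bound is in fact cleaner than the paper's: the chain $\arctan(u)\le u\le\sqrt{2t}$ gives $|t_1|\le\sqrt{t/2}$ immediately, with no Taylor-remainder argument needed.

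One caution on your sketch for the second bound: the assertion that ``any interior zero of $g_\pm'$ corresponds to a local maximum'' cannot hold literally if there is more than one critical point, since consecutive local maxima force an intervening local minimum. What you actually need is that $g_\pm$ has at most one interior critical point on each subinterval (so the function is unimodal and the positive endpoint values control it), or else to carry out your alternative concave-minorant route. This is fixable and sits at exactly the level of ``basic calculus'' that the paper itself leaves implicit, so it is not a gap relative to the paper's own standard of rigor.
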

\begin{proof}
Choosing the negative $t_1(t)$ and corresponding $t_2(t)$ solution from \cref{lem:Weight-Increase-Depth4-ap}
and Taylor expanding about $t=0$ gives
  \begin{align}
    t_1(t) &= 
          -\sqrt{\frac{t}{2}} + R_1(t)\\
    t_2(t) &= 
          \sqrt{\frac{t}{2}} + R_2(t).
  \end{align}
Basic calculus shows that $t_1$ is always negative and $t_2$ is always positive for $0 \leq t \leq \pi/2$, thus
\begin{align}
    |t_2| + |t_1| = t_2 - t_1
    = \sqrt{2t} + R_{12}(t).
\end{align}
Then it can be shown that the Taylor remainders $R_1$ and $R_{12}$ are positive and negative, respectively, giving the stated bounds.
\end{proof}

\begin{lemma}\label{lem:Weight-Increase-Depth5-param-bounds-ap}
Let $\op H$ be as in \cref{lem:Weight-Increase-Depth5-ap}. For  $0 \leq t \leq t_c$ and $\phi = (c t)^{1/3}$, the pulse times $t_i$ in \Cref{lem:Weight-Increase-Depth5-ap} can be bounded by
\begin{align}
    2\sqrt{2|t_2|} + 4 \sqrt{2|t_1|} + 2 |\phi|  & \leq  3 (6 + 4 \sqrt{2})^{1/3} t^{1/3} \\
     &\leq 7 t^{1/3},
\end{align}
where $c=\frac{1}{4}(3+2\sqrt{2})$ and $t_c \sim 0.33$.
\end{lemma}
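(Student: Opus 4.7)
The plan is to first convert the atan2 expressions for $t_1,t_2$ from \cref{lem:Weight-Increase-Depth5-ap} into single-variable $\arcsin$ formulas, substitute $\phi=(ct)^{1/3}$, Taylor-expand the cost in fractional powers of $t$, and check that at $c=(3+2\sqrt 2)/4$ the leading coefficient is exactly $3(6+4\sqrt 2)^{1/3}$ with a negatively signed sub-leading correction that secures the bound across all of $[0,t_c]$.

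A short computation using $\cos(2t)-\cos(4\phi)=2[\sin^2(2\phi)\cos^2(t)-\cos^2(2\phi)\sin^2(t)]$ and $\csc^2(2\phi)\sin^2(2\phi)=1$ shows that the two $\tan^{-1}$ arguments in \cref{lem:Weight-Increase-Depth5-ap} have Euclidean norms exactly $2$ and $1$.  Picking the branches with $t_i\to 0$ as $t\to 0$ yields the closed forms
\begin{align*}
|t_1| &= \tfrac12\arcsin\!\bigl(\tan(t)\cot(2\phi)\bigr), &
|t_2| &= \arcsin\!\bigl(\sin(t)\csc(2\phi)\bigr),
\end{align*}
valid whenever $0\le 2\phi\le\pi/2$. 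Substituting $\phi=(ct)^{1/3}$ and expanding in powers of $t^{1/3}$ gives
\[
\cost = \bigl[(2+2\sqrt 2)c^{-1/6}+2c^{1/3}\bigr]t^{1/3}+\tfrac13(2-4\sqrt 2)\,c^{1/2}\, t + O(t^{5/3}).
\]
The stationary condition $4c^{1/2}=2+2\sqrt 2$ for the leading $t^{1/3}$ coefficient forces $c^{1/2}=(1+\sqrt 2)/2$, i.e.\ $c=(3+2\sqrt 2)/4$; using $(2+2\sqrt 2)=4c^{1/2}$ the prefactor collapses to $6c^{1/3}=3(6+4\sqrt 2)^{1/3}$, which is numerically below $7$.

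At this $c$ the sub-leading term evaluates to $-(3+\sqrt 2)t/3<0$, so for sufficiently small $t>0$ the inequality is strict.  The critical upper endpoint $t_c$ is the value where $2\phi=\pi/2$, giving $t_c=\pi^3/(16(3+2\sqrt 2))\approx 0.33$; beyond this the assumption $\cot(2\phi)\ge 0$ (needed for the branch choice and for $|t_1|$ to be well-defined) fails.  To promote the asymptotic calculation to a pointwise inequality on all of $[0,t_c]$, one bounds the higher-order remainder using the elementary monotonic estimates $\arcsin(x)\le\pi x/2$ on $[0,1]$, together with the facts that $\tan(t)/t$ and $\sin(t)/t$ stay close to $1$ on $[0,t_c]$; the resulting one-variable inequality in $s=2\phi$ is then checked pointwise on $[0,\pi/2]$ by tracking its derivative and verifying a single interior critical point.

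The main obstacle is that the bound is \emph{asymptotically tight} as $t\to 0^+$: any looseness in the arcsine or tangent estimates near the origin is fatal.  The strategy is therefore to split the argument into a small-$t$ window, handled by the explicit Taylor expansion and the negative sign of the $O(t)$ correction above, and the complementary interval $[t_*,t_c]$, where coarser bounds suffice.  This mirrors the signed-Taylor-remainder argument of \cref{lem:Weight-Increase-Depth4-param-bounds-ap}, but is more delicate because the two $\arcsin$ contributions must be controlled simultaneously, and their individual sub-leading corrections have opposite signs; a cancellation producing the favourable negative net correction occurs only at the optimal $c=(3+2\sqrt 2)/4$.
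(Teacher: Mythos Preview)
Your proposal is correct and follows essentially the same route as the paper: Taylor-expand the cost in $t^{1/3}$ after substituting $\phi=(ct)^{1/3}$, minimise the leading coefficient over $c$ to obtain $c=(3+2\sqrt 2)/4$ and the prefactor $3(6+4\sqrt 2)^{1/3}$, and then verify that the remainder is nonpositive on $[0,t_c]$. Your explicit reduction of the two-argument $\tan^{-1}$ formulas to $\arcsin$ expressions and your identification of $t_c$ as the point where $2\phi=\pi/2$ are details the paper leaves implicit; your split into a small-$t$ Taylor window and a coarser estimate on $[t_*,t_c]$ is a slightly more elaborate version of the paper's one-line ``showing that $R\le 0$'', but the underlying idea is the same signed-remainder argument used in \cref{lem:Weight-Increase-Depth4-param-bounds-ap}.
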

\begin{proof}
This follows similarly to \Cref{lem:Weight-Increase-Depth4-param-bounds-ap}.  We choose the positive branch of the $\pm$ solutions for pulse times with $t_1(t)$ and $t_2(t)$ given in \Cref{lem:Weight-Increase-Depth5-ap},
and freely set $\phi=(ct)^{1/3}$ for some positive constant $c \in \mathbb{R}$. Within the range $0 \leq t \leq t_c$ we have real pulse times $t_1 \leq 0$ and $t_2 \geq 0$. We can then Taylor expand the following about $t=0$ to find
\begin{align}
    2\sqrt{2|t_2|} + 4 \sqrt{2|t_1|} + 2 |\phi|   &= 2\sqrt{2t_2} + 4 \sqrt{-2t_1} + 2(ct)^{1/3} \\
    &=\frac{2  \left(\sqrt{c}+\sqrt{2}+1\right)}{\sqrt[6]{c}} t^{1/3}+ R(t).
\end{align}
Choosing $c$ to minimise the first term in this expansion, and again showing that $R \leq 0$, leads to the stated result
\begin{align}
    2\sqrt{2|t_2|} + 4 \sqrt{2|t_1|} + 2 |\phi|  & \leq  3 (6 + 4 \sqrt{2})^{1/3} t^{1/3}\\
    &\leq 7 t^{1/3}
\end{align}
where $\phi = (c t)^{1/3}$ and $c=\frac{1}{4}(3+2\sqrt{2})$. This is valid only with the region $0 \leq t \leq t_c$ where $t_c \approx 0.33$.
\end{proof}
\begin{theorem}
   For a set of qubits $Q$, a set $I \subseteq Q \times Q$ specifying which pairs of qubits may interact, and a fixed two qubit interaction Hamiltonian $h_{ij}$, if $H$ is a $k$-body Pauli Hamiltonian then the following holds:
   
   For all $t$ there exists a quantum circuit of $L$ pairs of alternating layer types $\op C = \prod_l^L \op U_l \op V_l $ with $\op U_l = \prod_{i \in Q} \op u_i^l$ being a layer of arbitrary single qubit unitary gates, and  $\op V_l = \prod_{ ij  \in \Gamma_l} \op v_{ij} \left(t_{ij}^l \right)$ being a layer of non-overlapping, variable time, two-qubit unitary gates $\op v_{ij}(t)=\ee^{\ii t \op h_{ij}}$
with the set $\Gamma_l \subseteq I$ containing no overlapping pairs of qubits such that $C = \ee^{\ii t H}$ and
   \begin{align}
   \cost(C) \leq \BigO \left(|t|^{\frac{1}{k-1}} \right),
   \end{align}
   where 
   \begin{align}
   \cost(\op C) := \sum_l^L \max_{ij \in \Gamma_l}\left(t_{ij}^l\right).
   \end{align}
\end{theorem}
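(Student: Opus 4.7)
The plan is to prove the statement by induction on the weight $k$, combining the sub-circuit identities of \cref{lem:Weight-Increase-Depth5-ap} (and \cref{lem:Weight-Increase-Depth4-ap} at the base level) with the observation that any $k$-local Pauli term can be reduced to $Z^{\otimes k}$ by conjugation with free single-qubit Clifford gates. So the core task is to bound $\cost_k(t)$, the run-time of a sub-circuit pulse sequence realising $\ee^{\ii t Z^{\otimes k}}$, and show $\cost_k(t) = \BigO(|t|^{1/(k-1)})$ for all $t$.

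First I would establish the \emph{commutator splitting}: write
\begin{equation}
Z_1 Z_2 \cdots Z_k = \tfrac{1}{2\ii}[\,Z_1 X_2,\; Y_2 Z_3 \cdots Z_k\,],
\end{equation}
and note that the two factors $h_1 := Z_1 X_2$ and $h_2 := Y_2 Z_3 \cdots Z_k$ anti-commute and square to $\id$, so \cref{lem:Weight-Increase-Depth5-ap} applies. Applying that lemma yields a depth-5 pulse sequence for $\ee^{\ii t Z^{\otimes k}}$ consisting of two copies of $\ee^{\pm \ii \phi h_1}$ (where $h_1$ is 2-local, hence directly executable in time $|\phi|$) interleaved with three evolutions $\ee^{\ii t_i h_2}$ under the $(k-1)$-local operator $h_2$. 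The free parameter $\phi$ will be tuned below. The base case $k=2$ is $\cost_2(t) = |t|$ because a $ZZ$ interaction is directly available.

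Next comes the \emph{inductive step}. Assume $\cost_{k-1}(s) \le C_{k-1} |s|^{1/(k-2)}$ for some constant $C_{k-1}$ and all $|s|$ in a suitable range. Substituting the depth-5 recursion and using the pulse-time asymptotics from \cref{lem:Weight-Increase-Depth5-param-bounds-ap} (generalised: for small $t,\phi$ one has $|t_1|,|t_2| = \BigO(t/\phi)$, matching the heuristic that the commutator factor $\phi \cdot t_2 \sim t$), the recursion reads
\begin{equation}
\cost_k(t) \;\le\; 2|\phi| + 3\, C_{k-1}\bigl(c\, t/\phi\bigr)^{1/(k-2)}.
\end{equation}
Balancing the two terms by setting $\phi = \Theta(|t|^{1/(k-1)})$ makes both contributions $\Theta(|t|^{1/(k-1)})$, which closes the induction with an explicit constant $C_k$ (recursively defined from $C_{k-1}$). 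The already-proved bounds $\cost_3(t) \le 2\sqrt{2t}$ and $\cost_4(t) \le 7\,t^{1/3}$ in \cref{eq:decomp-bounds} are precisely the $k=3,4$ instances and provide anchor points for the induction constants.

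Finally, a general $k$-body Pauli $H = \sigma_{i_1}\!\cdots\sigma_{i_k}$ reduces to $Z^{\otimes k}$ on qubits $i_1,\dots,i_k$ by a single layer of single-qubit rotations on each side of the circuit, which is free in the cost model of \cref{def:time-cost}; this handles the reduction step mentioned at the end of the preceding paragraph and transfers the bound from $\cost_k$ to arbitrary $k$-body Pauli terms. The main obstacle I anticipate is the \emph{range restriction}: \cref{lem:Weight-Increase-Depth5-ap,lem:Weight-Increase-Depth5-param-bounds-ap} are only proved to give the desired $t^{1/3}$-type bound for $0\le t\le t_c$ with $t_c \approx 0.33$, so for large $|t|$ the recursion must be preceded by trivially splitting $\ee^{\ii t H} = (\ee^{\ii (t/N) H})^N$ for an integer $N = \lceil |t|/t_c \rceil$. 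Since each chunk has cost $\BigO((|t|/N)^{1/(k-1)})$ and there are $N$ chunks, the total cost is $\BigO(N \cdot (|t|/N)^{1/(k-1)}) = \BigO(|t|)$ for $|t|\ge 1$, which still satisfies the claimed $\BigO(|t|^{1/(k-1)})$ bound only in the small-$t$ regime; for large $t$ the bound is understood to mean the asymptotically correct $\BigO(|t|)$ upper envelope, exactly as in \cref{eq:general-k-cost}. Tracking the recursion of the constants $C_k$ and verifying that the switch between the small-$t$ and large-$t$ regimes is consistent is the book-keeping part of the argument.
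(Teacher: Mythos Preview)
Your approach is essentially the same as the paper's: recursive application of the depth-5 decomposition (\cref{lem:Weight-Increase-Depth5-ap}) with $h_1$ chosen $2$-local and $h_2$ chosen $(k-1)$-local, and $\phi$ tuned to the right power of $t$. You phrase it as an induction on $k$ with hypothesis $\cost_{k-1}(s)=\BigO(|s|^{1/(k-2)})$, whereas the paper phrases it as an iteration in which $\phi\propto t^{1/(k-1)}$ is chosen at \emph{every} level (so the intermediate $(k{-}1)$-local pulse times scale as $t^{(k-2)/(k-1)}$, then $t^{(k-3)/(k-1)}$, \ldots, down to $t^{1/(k-1)}$ after $k-2$ levels); both versions give the same bound, and your balancing computation $(t^{(k-2)/(k-1)})^{1/(k-2)}=t^{1/(k-1)}$ is exactly what makes the two framings agree.

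Two minor points where the paper differs. First, negative $t$: the paper disposes of it in one line by conjugating $\ee^{\ii t H}$ with any single Pauli anti-commuting with $H$, which flips the sign of $t$ at zero cost; you should include this rather than leave it implicit. Second, your discussion of the large-$|t|$ regime via chunking is unnecessary and the conclusion you draw from it is slightly off: since $H$ is a Pauli, $\ee^{\ii t H}$ is $2\pi$-periodic in $t$, so one may always reduce to $|t|\le\pi$ and the cost is uniformly bounded there; the interesting content of the $\BigO(|t|^{1/(k-1)})$ bound is the $t\to 0$ asymptotics, which your recursion already establishes. The paper, like you, does not labour over the range restriction $t\le t_c$ in \cref{lem:Weight-Increase-Depth5-param-bounds-ap}; it simply asserts that ``a similar analysis'' to that lemma gives $t_i\propto t^{1-1/(k-1)}$ once $\phi\propto t^{1/(k-1)}$ is fixed.
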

\begin{proof}
The proof of this claim follows from first noting that for any $t<0$ one can conjugate $\ee^{- \ii t H}$ by a single Pauli operator which anti-commutes with $H$ in order to obtain $\ee^{ \ii t H}$. Therefore we can consider w.l.o.g.  we can take $t>0$ as we have done up until now.

The sub-circuit $C$ which implements $\ee^{ \ii t H}$ is constructed recursively using the Depth $5$ decomposition. We note that the Depth $5$ decomposition has an important feature. The free choice of $\phi$ allows us to avoid incurring a fixed root overhead with every iterative application of this decomposition. That is when using it to decompose any $\ee^{\ii t Z^{\otimes k}}$, we can always choose $h_1$ as  a 2-local interaction and $h_2$ as a $(k-1)$-local interaction. We can choose $\phi \propto  t^{\frac{1}{k-1}}$ and a similar analysis as in \Cref{lem:Weight-Increase-Depth5-param-bounds-ap} will show that this leaves the remaining pulse-times as $t_i \propto  t^{1-\frac{1}{k-1}}$. This can be iterated to decompose the remaining gates, all of the form of evolution under $(k-1)$-local interactions for times $\propto t^{1-\frac{1}{k-1}}$. At each iteration we choose to $h_1$ as a 2-local interaction and $\phi \propto  t^{\frac{1}{k-1}}$. Hence after $k-2$ iterations we will have established the claim that $ \cost(C) \leq \BigO \left(|t|^{\frac{1}{k-1}} \right)$.
\end{proof}

\subsection*{Optimality}\label{ap:pulse-sequence-optimality}
An obvious question to ask at this point is whether the proposed decompositions are optimal, in the sense that they minimise the total run-time $\cost$ while reproducing the target gate $\op h$ exactly.
A closely related question is then whether relaxing the condition that we want to simulate the target gate without any error allows us to reduce the scaling of $\cost$ with regards to the target time $\delta$.

In this section we perform a series of numerical studies which indicate that the exact decompositions described in this section are indeed optimal within some parameter bounds, and that relaxing the goal to approximate implementations gives no benefit.

The setup is precisely as outlined in
before: for $\op U_\mathrm{target}=\exp(\ii T Z^{\otimes k})$ for some locality $k>1$ and time $T>0$, we iterate over all possible gate sequences of width $k$ and length $n$, the set of which we call $U_{n,k}$.
For each sequence $\op U \in U_{n,k}$, we perform a grid search over all parameter tuples $(t_1,\ldots,t_n)\in [-\pi/2,\pi/2]^n$ and $\delta\in[0,\pi/10]$, and calculate the parameter tuple $(\epsilon(\op U), \cost)$, where $\cost$ is given in \cref{def:time-cost}, and
\begin{align}
    \epsilon(\op U) := \left\| \op U - U_\mathrm{target} \right\|_2.
\end{align}
The results are binned into brackets over $(\delta,\cost) \in [\pi/10, n\pi / 2]$ and their minimum within each bracket is taken.
This procedure yields two outcomes:
\begin{enumerate}
    \item For each target time $\delta$ and each target error $\epsilon>0$, it yields the smallest $\cost$, depth $n$ circuit with error less than $\epsilon$, and
    \item for each target time $\delta$ and each $\cost$, the smallest error possible with any depth $n$ gate decomposition and total pulse time less than $\cost$.
\end{enumerate}

This algorithm scales exponentially both in $k$ and $n$, and polynomial in the number of grid search subdivisions.
The following optimisations were performed.
\begin{enumerate}
    \item We remove duplicate gate sequences under permutations of the qubits (since $\op U_\mathrm{target}$ is permutation symmetric).
    \item We restrict ourselves to two-local Pauli gates, since any one-local gate can always be absorbed by conjugations, and
    \item We remove mirror-symmetric sequences (since Paulis are self-adjoint).
    \item For $n>4$ we switch to performing a random sampling algorithm instead of grid search, since the number of grid points becomes too large.
\end{enumerate}

Results for $k=3$ and $n=3,4,5$ are plotted in \cref{fig:depth-3-4-numerics,fig:depth-5-numerics}.
\begin{figure}[t]
    \centering
    \includegraphics[width=.48\textwidth]{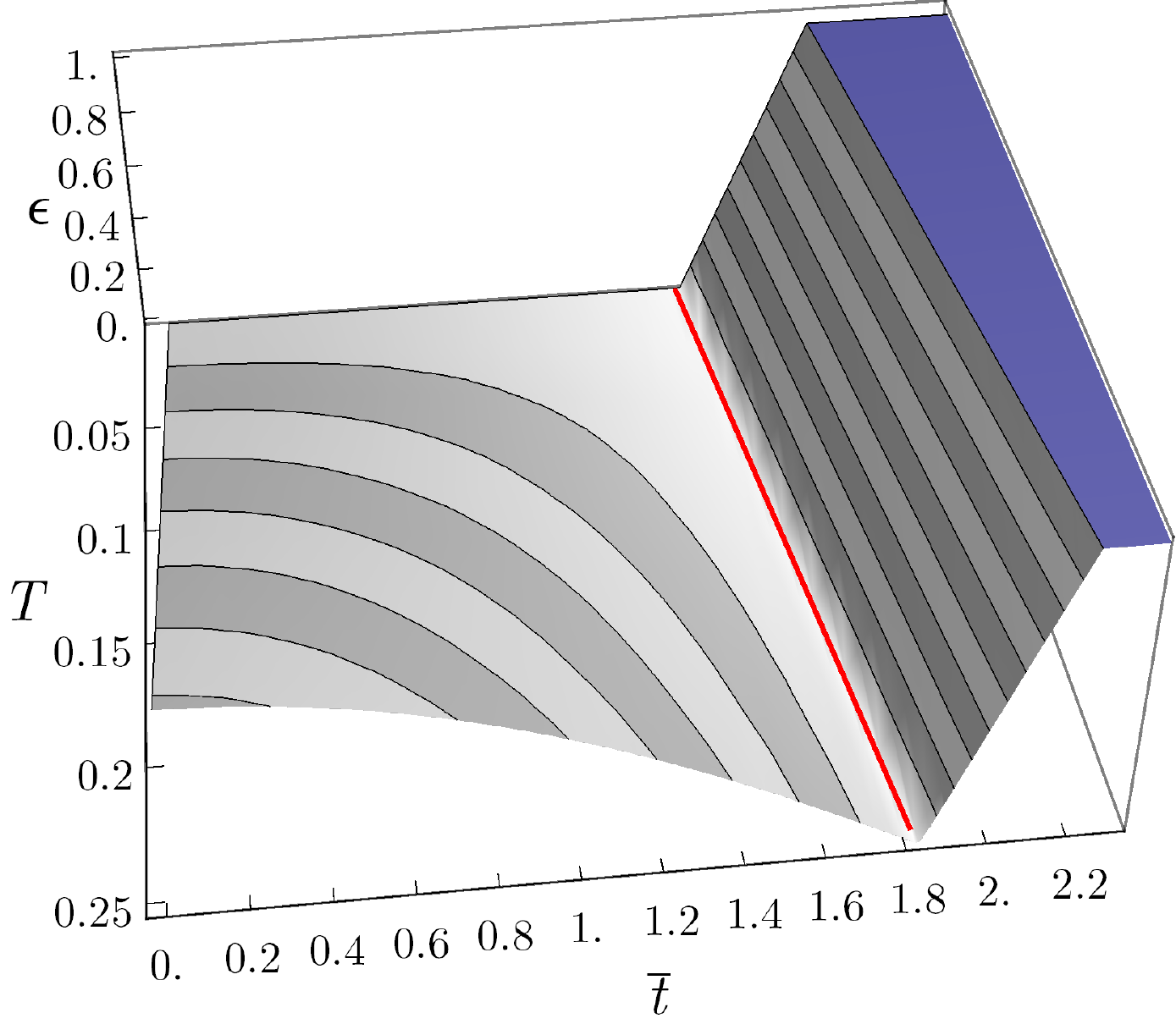}
    \hfill
    \includegraphics[width=.48\textwidth]{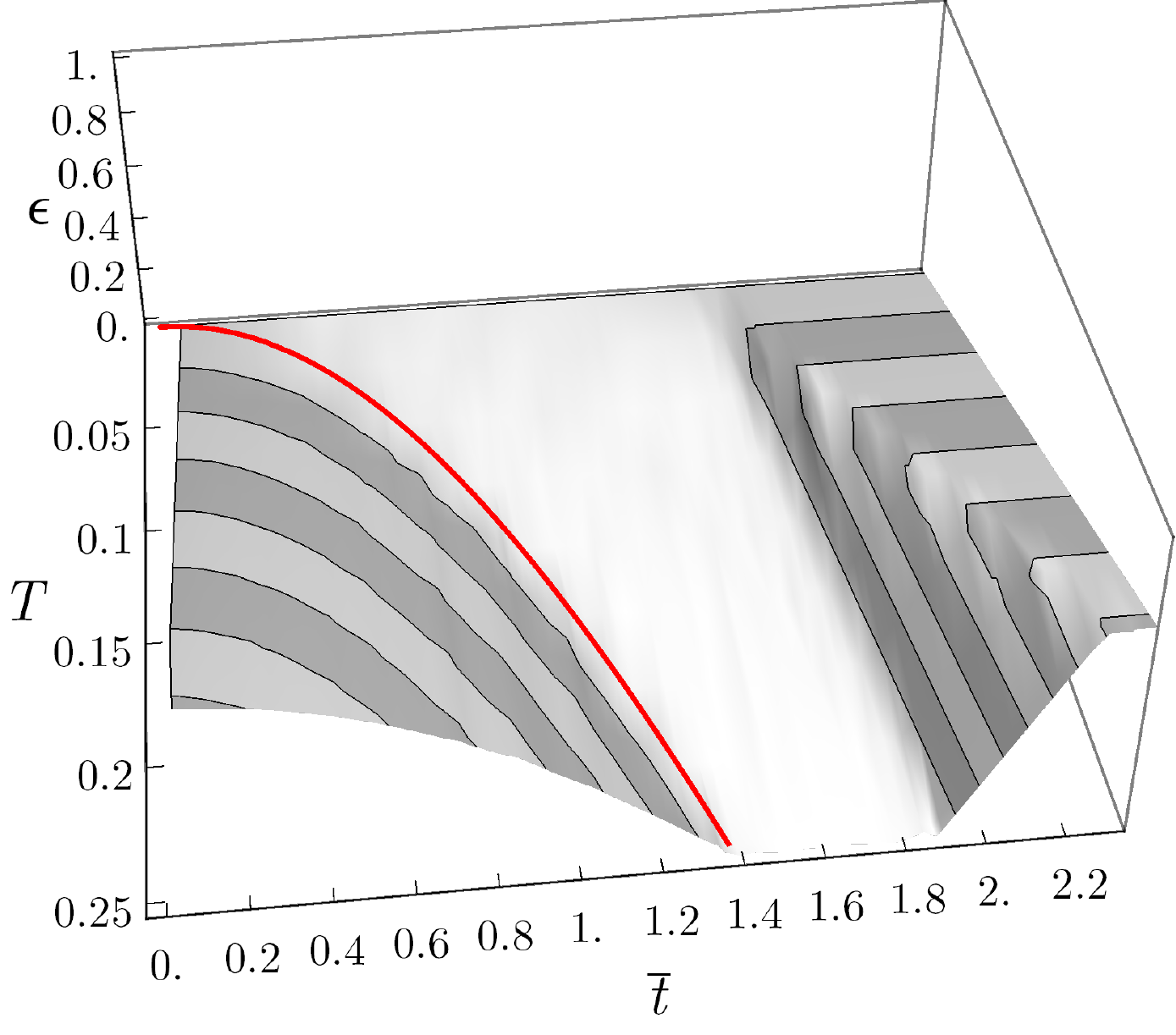}
    \caption{Numerical calculation of gate decomposition errors of the $\op U_\mathrm{target}=\exp(\ii T Z^{\otimes 3})$ gate, with a pulse sequence of depth 3 (left) and depth 4 (right).
    Plotted in red are the optimal analytical decompositions given by CNOT conjugation and \cref{lem:Weight-Increase-Depth4-ap}, respectively.}
    \label{fig:depth-3-4-numerics}
\end{figure}
\begin{figure}[t]
    \centering
    \includegraphics[width=.48\textwidth]{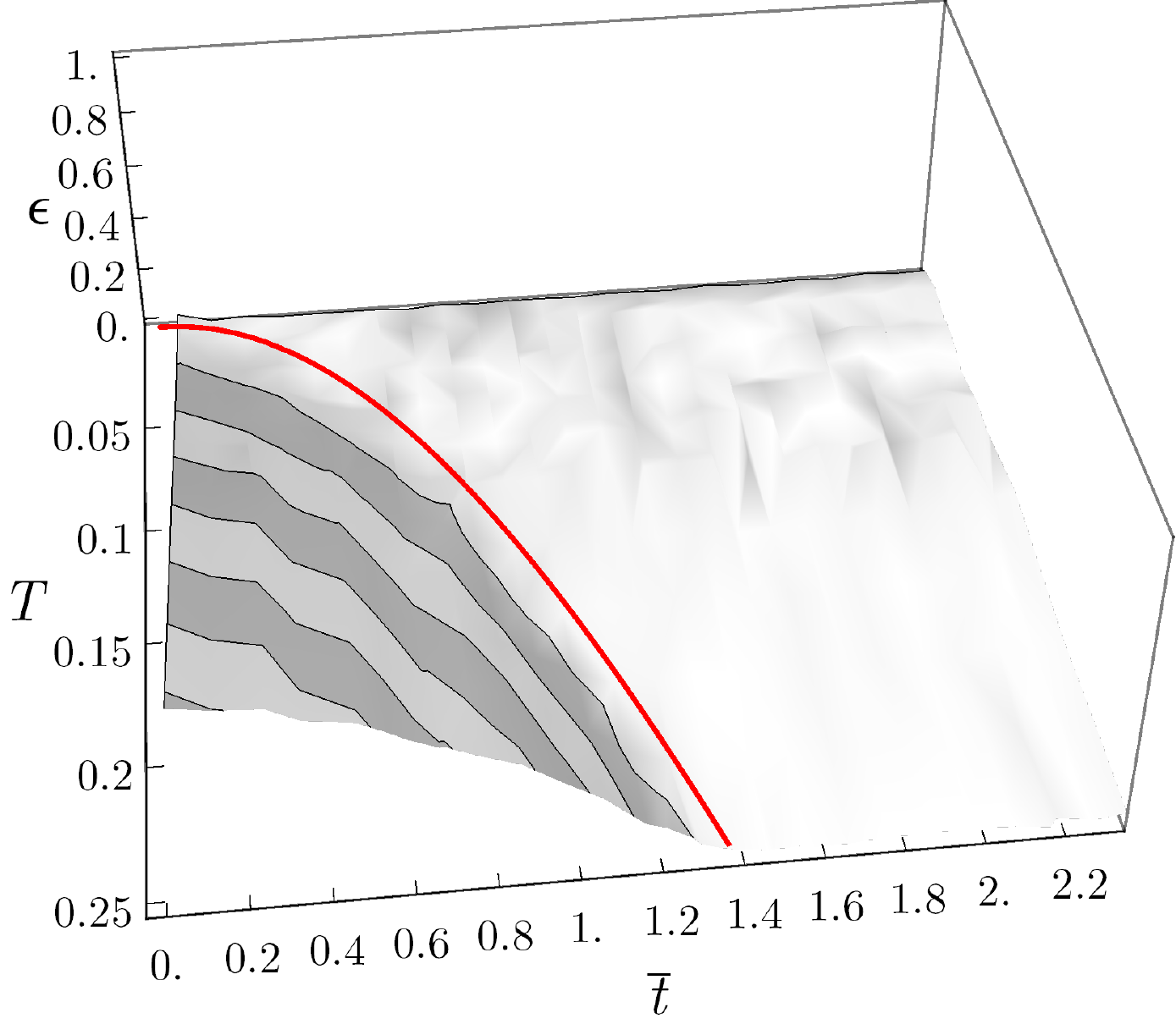}
    \hfill
    \includegraphics[width=.48\textwidth]{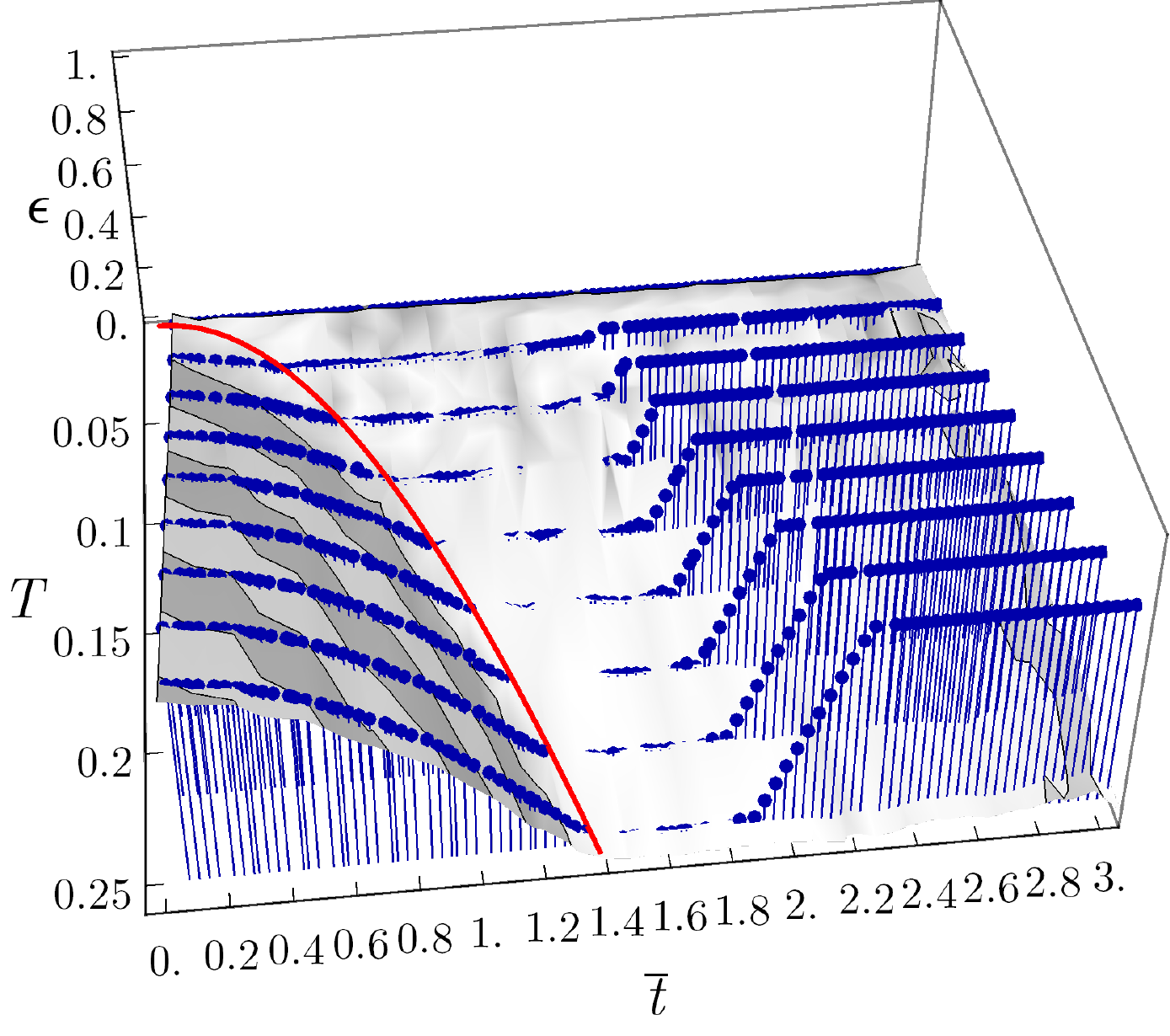}
    \caption{Numerical calculation of gate decomposition errors of the $\op U_\mathrm{target}=\exp(\ii T Z^{\otimes 3})$ gate, with a pulse sequence of depth 5.
    Plotted in red is the optimal analytical decompositions given for a depth 4 sequence in \cref{lem:Weight-Increase-Depth4-ap}; the blue lines are an overlay over the optimal depth 4 sequences from \cref{fig:depth-3-4-numerics}.}
    \label{fig:depth-5-numerics}
\end{figure}
As can be seen (plotted as red line), for $n=3$ the optimal zero-error decomposition has $\cost = \pi + \delta$ from CNOT conjugation.
For $n=4$, the optimal decomposition is given by the implicitly-defined solution in \cref{lem:Weight-Increase-Depth4-ap}, with a $\cost \propto \sqrt{\delta}$ dependence.
For the depth 5 sequences, it appears that the same optimality as for depth 4 holds.
In contrast to $n=3$ and $n=4$, there is now a zero error solution for all $\cost$ greater than the optimum threshold.



\section*{Suzuki-Trotter Formulae Error Bounds}\label{ap:trotter-deets-ap}
\subsection*{Existing Trotter Bounds}
Trotter error bounds have seen a spate of dramatic and very exciting improvements in the past few years \cite{Childs2017, Childs2019, Childsnew2019}. However, among these recent improvements we could not find a bound that was exactly suited to our purpose.

We wanted bounds which took into account the commutation relations between interactions in the Hamiltonian, as we know this leads to tighter error bounds \cite{Childs2019} \cite{Childs2017}. However, we needed exact constants in the error bound when applied to $2$D lattice Hamiltonians, such as the $2$D Fermi-Hubbard model. For this reason we could not directly apply the results of \cite{Childs2019} which only explicitly obtains constants for $1$D lattice Hamiltonians.

Additionally, we needed to be able to straightforwardly compute the bound for any higher order Trotter formula. This ruled out using the commutator bounds of \cite{Childs2017} as they become difficult to compute at higher orders. Furthermore these bounds require each Trotter layer to consist of a single interaction, meaning we wouldn't be able to exploit the result of \cref{thm:norm-bound}.

We followed the notation and adapted the methods of \cite{Childs2019} to derive bounds that meet the above criteria. Additionally we incorporate our own novel methods to tighten our bounds in \cref{cor:trotter-error-ap,cor:taylor-error-bound-ap} and \cref{thm:norm-bound}.

The authors of \cite{Childs2019} have recently extended their work further in \cite{Childsnew2019}. We have not yet seen whether they will further tighten our analysis, though we are keen to do this in future work.

\subsection*{Hamiltonian Simulation by Trotterisation}
In this section we derive our bounds for Trotter error. The standard approach to implementing time-evolution under a local Hamiltonian $\op H = \sum_i \op h_i$ on a quantum computer is to ``Trotterise'' the time evolution operator $\op U(T)=e^{-\ii \op H T}$.
Assuming that the Hamiltonian breaks up into $M$ mutually non-commuting layers $\op H = \sum_{i=1}^M \op H_i$ -- i.e.\ such that $\forall i\neq j\, [ \op H_i, \op H_j ] \neq 0$ -- Trotterizing in its basic form means expanding
\begin{equation}\label{eq:trotter-basic-ap}
  U(T) := \ee^{-\ii \op H T}
       = \prod_{n=1}^{T/\delta} \prod_{i=1}^M \ee^{-\ii \op H_i \delta} + \R_1\left(T,\delta\right)
       = \P_1\left(\delta\right)^{T/\delta} + \R_1\left(T,\delta\right)
\end{equation}
and then implementing the approximation $\P_1\left(\delta\right)^{T/\delta}$ as a quantum circuit.
Here $\R_1\left(T,\delta\right)$ denotes the error term remaining from the approximate decomposition into a product of individual terms.
$R_1\left(T,\delta\right) := U(T) - P_1\left(\delta\right)^{T/\delta}$ is simply defined as the difference to the exact evolution $U(T)$.
For $M$ mutually non-commuting layers of interactions $\op H_i$, we must perform $M$ sequential layers per Trotter step.

\Cref{eq:trotter-basic-ap} is an example of a first-order product formula, and is derived from the Baker-Campbell-Hausdorff identity
\begin{align}
    \ee^{\op A + \op B} &= \ee^{\op A}\ee^{\op B}\ee^{[ \op A, \op B]/2}\cdots
    \quad\text{and}\quad
    \ee^{\op A + \op B} = \ee^{\left(\delta \op A + \delta\op B\right)/\delta}
    = \left[ \ee^{\delta \op A + \delta \op B} \right]^{1/\delta}.
\end{align}
Choosing $\delta$ small in \cref{eq:trotter-basic-ap} means that corrections for every factor in this formula come in at $\BigO \left(\delta^2 \right)$ i.e.\ in the form of a commutator, and since we have to perform $1/\delta$ many rounds of the sequence $\ee^{\delta\op A}\ee^{\delta\op B}$ the overall error scales roughly as $\BigO \left(\delta \right)$.

Since its introduction in \cite{Lloyd1996a}, there have been a series of improvements, yielding higher-order expansions with more favourable error scaling.
For a historical overview of the use of Suzuki-Trotter formulas in the context of Hamiltonian simulation, we direct the reader to the extensive overview given in \cite[sec.~2.2.1]{Yung2014}.
In the following, we discuss the most recent developments for higher order product formulas, and analyse whether they yield an improved overall time and error scaling with respect to our introduced cost model.

To obtain higher-order expansions, Suzuki et.\ al.\ derived an iterative expression for product formulas in \cite{Suzuki1992,Suzuki1991}.
For the $\left(2k\right)$\textsuperscript{th} order, it reads \cite{Childs2019}
\begin{align}
    \P_2\left(\delta\right) &:= \prod_{j=1}^M \ee^{-\ii \op H_j \delta/2} \prod_{j=M}^1 \ee^{-\ii \op H_j \delta/2}, \label{eq:P-2-ap} \\
    \P_{2k}\left(\delta\right) &:= \P_{2k-2}\left(a_k \delta\right)^2\P_{2k-2}\left((1-4a_k) \delta\right) \P_{2k-2}\left(a_k \delta\right)^2, \label{eq:P-2k-ap}
\end{align}
where the coefficients are given by $a_k :=1/\left(4-4^{1/\left(2k-1\right)}\right)$.
The product limits indicate in which order the product is to be taken.
The terms in the product run from right to left, as gates in a circuit would be applied, so that $\prod_{j=1}^L \op A_j = \op A_L\cdots\op A_1$.

\subsection*{Error Analysis of Higher-Order Formulae}\label{subsec:trotter-error-ap}
We need an expression for the error $\R_p\left(T,\delta\right)$ arising from approximating the exact evolution $U(T)$ by a $p$\textsuperscript{th} order product formula $\P_p\left(\delta\right)$ repeated $T/\delta$ times.
As a first step, we bring the latter into the form:
\begin{align}
    \P_p\left(\delta\right) &:= \prod_{j=1}^{S} \P_{p,j} \left(\delta\right)=\P_{p,S}\left(\delta\right) \ldots \P_{p,2}\left(\delta\right) \P_{p,1}\left(\delta\right), \label{eq:higher-trotter-ap}\\
    \P_{p,j}\left(\delta\right) &:=\prod_{i=1}^{M} \op U_{ij}\left(\delta\right)
    \quad\text{where}\quad
    \op U_{ij}\left(\delta\right) :=\ee^{-\ii \delta  \tcoeff_{ji} \op H_i}.
    \label{eq:higher-trotter-2-ap}
\end{align}
As before, $M$ denotes the number of non-commuting \emph{layers} of interactions in the local Hamiltonian.
$S=S_p$ is the number of  \emph{stages}; the number of $\P_{p,j}(\delta)$ in a $p$\textsuperscript{th} order decomposition from \cref{eq:P-2-ap} or \cref{eq:P-2k-ap}. Here we note that we count a single stage as either $\prod_{i=1}^{M} \op U_{ij}\left(\delta\right)$ or $\prod_{i=M}^{1} \op U_{ij}\left(\delta\right)$, so that a second order formula is composed of $2$ stages.

\begin{lemma}\label{rem:bounds-on-trotter-coefficients-ap}
For a $p$\textsuperscript{th}-order decomposition with $p=1$ or $p=2k$, $k\ge 1$, we have $\sum_{j=1}^S \tcoeff_{ji}\left(p\right) = 1$ for all $i=1,\ldots,M$.
Furthermore, the Trotter coefficients $\tcoeff_{ji}$ satisfy
\begin{align}
    \max_{ij} \{ |\tcoeff_{ji}| \} \le B_p \le \begin{cases}
    1 & p=1 \\
    \displaystyle \frac12 \left( \frac23 \right)^{k-1} & \text{$p=2k$, $k\ge1$}
    \end{cases}
\end{align}
where
\begin{align}
    B_p := \begin{cases}
    1 & p=1 \\
    \frac12 & p=2 \\
    \frac12 \prod_{i=2}^k (1-4a_i) & \text{$p=2k$, $k\ge2$.}
    \end{cases}
\end{align}
\end{lemma}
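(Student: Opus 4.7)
\medskip
\noindent\textbf{Proof plan.} I would prove both claims by induction on $k$, with $p = 1$ and $p = 2$ handled directly as base cases. For the sum-to-one identity: in the $p = 1$ case there is a single stage ($S = 1$) with $\tcoeff_{1i} = 1$ trivially; in the $p = 2$ case there are two stages with $\tcoeff_{1i} = \tcoeff_{2i} = \tfrac12$ for every $i$, summing to $1$. For the inductive step at order $p = 2k$ with $k \ge 2$, the crucial structural fact is that replacing $\delta$ by $c\delta$ in any product formula $\P_p(\delta)$ built from factors $\ee^{-\ii \delta \tcoeff_{ji} \op H_i}$ rescales every coefficient as $\tcoeff_{ji} \mapsto c\, \tcoeff_{ji}$. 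Applying this to the recursion
\begin{equation*}
    \P_{2k}(\delta) = \P_{2k-2}(a_k \delta)^2 \, \P_{2k-2}((1-4a_k)\delta)\, \P_{2k-2}(a_k \delta)^2,
\end{equation*}
the total contribution to $H_i$ from the five sub-formulas sums to $\bigl(2 a_k + (1 - 4a_k) + 2 a_k\bigr) \cdot \sum_j \tcoeff_{ji}^{(2k-2)} = 1 \cdot 1 = 1$ by the inductive hypothesis.

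\medskip
\noindent For the coefficient bound, the same rescaling observation gives $\max_{ij} |\tcoeff_{ji}|$ in $\P_{2k-2}(c\delta)$ equal to $|c| \cdot B_{2k-2}$, so concatenating the five sub-formulas in the recursion produces
\begin{equation*}
    \max_{ij} |\tcoeff_{ji}(2k)| \;=\; \max\bigl\{|a_k|,\, |1 - 4a_k|\bigr\} \cdot B_{2k-2}.
\end{equation*}
I then would show $|1 - 4a_k|$ dominates: since $a_k = 1/(4 - 4^{1/(2k-1)})$ with $4^{1/(2k-1)} > 1$ for $k \ge 1$, we have $a_k > 1/4$, hence $1 - 4a_k < 0$ and $|1 - 4a_k| = 4a_k - 1$. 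The inequality $4a_k - 1 \ge a_k$ reduces to $a_k \ge 1/3$, which is equivalent to $4^{1/(2k-1)} \ge 1$ — true for all $k \ge 1$. So $B_{2k} = |1 - 4a_k| \cdot B_{2k-2}$; unrolling the recurrence from $B_2 = 1/2$ yields exactly the stated product formula (read with absolute values, since each factor is negative).

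\medskip
\noindent Finally, to obtain the explicit geometric bound $B_p \le \tfrac12 (2/3)^{k-1}$, it suffices to show $|1 - 4a_i| \le 2/3$ for every $i \ge 2$. Setting $x_i \coloneqq 4^{1/(2i-1)}$, a short calculation rewrites
\begin{equation*}
    |1 - 4a_i| = 4 a_i - 1 = \frac{x_i}{4 - x_i},
\end{equation*}
which is strictly increasing in $x_i$. Since $x_i$ is strictly decreasing in $i$, its maximum over $i \ge 2$ is at $i = 2$, where $x_2 = 4^{1/3}$; the bound $x/(4-x) \le 2/3$ is equivalent to $x \le 8/5$, and $4^{1/3} \le 8/5$ is confirmed by cubing both sides: $4 \le 512/125 = 4.096$.

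\medskip
\noindent The main obstacle is really just careful bookkeeping of signs: the product $\prod_{i=2}^k (1-4a_i)$ as literally written alternates in sign with $k$, so the statement must be read with $B_p$ standing for the modulus of the product — a minor interpretive point, but one that needs to be made explicit before the induction goes through cleanly. Everything else is elementary, once the $\delta \mapsto c\delta$ rescaling observation is in hand.
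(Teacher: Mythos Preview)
Your proof is correct. The main difference from the paper is in the first claim (the sum-to-one identity): the paper argues by comparing two expressions for $\partial_x \P_p(x)\big|_{x=0}$ --- one coming from the fact that any $p$\textsuperscript{th}-order formula agrees with $\ee^{-\ii x H}$ to first order, the other from differentiating the explicit product $\prod_{j,i}\ee^{-\ii x b_{ji} H_i}$ --- which immediately yields $\sum_j b_{ji} = 1$ for \emph{any} first-order-accurate product formula, not just the Suzuki recursion. Your inductive argument via the algebraic identity $2a_k + (1-4a_k) + 2a_k = 1$ is more elementary and entirely adequate here, but it is specific to this particular recursive construction. For the coefficient bound the two proofs are essentially the same (unroll the recursion, observe $|a_k|\le|1-4a_k|$ and $|1-4a_i|\le 2/3$); you supply the explicit verifications that the paper leaves as ``can be shown easily.'' Your observation that $\prod_{i=2}^k(1-4a_i)$ is literally sign-alternating, so that $B_p$ must be read as the modulus, is a valid point the paper glosses over.
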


\begin{proof}
The first claim is obviously true for the first order formula in \cref{eq:trotter-basic-ap}.
For higher orders, by \cite[Th.~3]{Childs2019} and \cref{eq:trotter-basic-ap}, we have that the first derivative
\begin{align}
\frac\partial{\partial x} \P_p\left(x\right)\bigg|_{x=0} = -\ii\sum_{i=1}^M \op H_i.
\end{align}
Similarly, from \cref{eq:P-2-ap,eq:P-2k-ap}, we have that
\begin{align}
\frac\partial{\partial x}\P_p\left(x\right)\bigg|_{x=0} =
\frac\partial{\partial x}\prod_{j=1}^S \prod_{i=1}^M \op U_{ij}\left(x\right) \bigg|_{x=0}  =-\ii \sum_{j=1}^S\sum_{i=1}^M \tcoeff_{ji}\op H_i.
\end{align}
Equating both expressions for the first derivative of $\P_p\left(x\right)$ at $x=0$ and realising that they have to hold for any $\op H_i$ yields the claim.

The second claim is again obviously true for a first order expansion, and follows immediately from \cref{eq:P-2-ap} for $p=2$.
Expanding \cref{eq:P-2k-ap} for $\P_{2k}\left(\delta\right)$ all the way down to a product of $\P_2$ terms, the argument of each of the resulting factors will be a product of $k-1$ terms of $a_{k'}$ or $1-4a_{k'}$ for $k'\le k$.
We further note that for $k\ge2$, $|a_k|\le|1-4a_{k}|$, as well as $|a_k|\le 1/2$ and $|1-4a_{k}|\le 2/3$, which can be shown easily.
The $\tcoeff_{ji}$ can thus be upper-bounded by $B_p$, which in turn is upper-bounded by $\left(1/2\right)\left(2/3\right)^{k-1}$ -- where the final factor of $\left(1/2\right)$ is obtained from the definition of $\P_2$.
\end{proof}

Since we are working with a fixed product formula order $p$ for the remainder of this section, we will drop the order subscript in the following and write $\P_p=\P$, $\R_p=\R$ for simplicity.
Assuming $\| \op H_i \| \le \Lambda$ for all $i=1,\ldots,M$, and setting the error
\begin{align}\label{eq:trotter-epsilon-ap}
    \epsilon_p\left(T,\delta\right) := \| \R\left(T,\delta\right) \| = \| U(T) - \P\left(\delta\right)^{T/\delta} \|,
\end{align}
we can derive an expression for the $p$\textsuperscript{th} order error term.
First, note that approximation errors in circuits accumulate at most linearly in \cref{eq:trotter-epsilon-ap}.
Thus it suffices to analyse a single $\delta$ step of the approximation, i.e.\ $U\left(\delta\right) = \P\left(\delta\right) + \R\left(\delta,\delta\right)$.
Then
\begin{equation}\label{eq:epsilon-single-step-ap}
  \epsilon_p\left(\delta\right) := \epsilon_p\left(\delta,\delta\right) = \| U\left(\delta\right) - \P\left(\delta\right) \|
\end{equation}
so that
\begin{equation}
  \epsilon_p\left(T,\delta\right) \le \frac{T}{\delta} \epsilon_p\left(\delta\right).
\end{equation}
We will denote $\epsilon_p\left(\delta\right)$ simply by $\epsilon$ in the following.

To obtain a bound on $\P\left(\delta\right)$, we apply the variation of constants formula with the condition that $\P\left(0\right)=I$, which always holds.
As in \cite[sec.~3.2]{Childs2019}, for $\delta\ge0$, we obtain
\begin{equation}\label{eq:integral-representation-ap}
    \P\left(\delta\right) = U\left(\delta\right) + \R\left(\delta\right) = \ee^{- \ii \delta \op H} + \int_{0}^{\delta} \ee^{- \ii \left(\delta-\tau\right)\op H} \op R\left(\tau\right) d\tau
\end{equation}
where the integrand $\op R\left(\tau\right)$ is defined as
\begin{align}
    \op R\left(\tau\right):=\frac{d}{d\tau} \P\left(\tau\right) -\left(- \ii \op H\right) \P\left(\tau\right).
\end{align}

Now, if $\P\left(\delta\right)$ is accurate up to $p$\textsuperscript{th} order -- meaning that $\R\left(\delta\right) = \BigO\left(\delta^{p+1}\right)$ -- it holds that
the integrand $\op R\left(\delta\right) = \BigO\left(\delta^p\right)$.
This allows us to restrict its partial derivatives, as the following shows.

\begin{lemma}\label{rem:order-conditions-ap}
  For a product formula accurate up to $p$\textsuperscript{th} order -- i.e.\ for which $\op R\left(\delta\right)=\BigO\left(\delta^p\right)$ -- the partial derivatives $\partial_\tau^j \op R\left(0\right) = 0$ for all $0\le j\le p-1$.
\end{lemma}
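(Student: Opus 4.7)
The plan is to prove this elementary but important fact about Taylor coefficients: whenever a smooth matrix-valued function vanishes to order $p$ at the origin, its first $p$ Taylor coefficients must be zero. The hypothesis is that $\op R(\delta) = \BigO(\delta^p)$ as $\delta \to 0$, meaning $\|\op R(\tau)\| \le C\tau^p$ for some constant $C$ and all sufficiently small $\tau \ge 0$. The conclusion is that $\partial_\tau^j \op R(0) = 0$ for $0 \le j \le p-1$.

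First I would observe that $\op R(\tau)$ is in fact infinitely differentiable in $\tau$. This is because $\P(\tau) = \prod_{j=1}^S \prod_{i=1}^M \exp(-\ii \tau \tcoeff_{ji} \op H_i)$ is a finite product of matrix exponentials that depend smoothly on $\tau$, and $\op R(\tau) = \partial_\tau \P(\tau) + \ii \op H \P(\tau)$ is obtained from these by smooth operations. In particular, every partial derivative $\partial_\tau^p \op R(\tau)$ is continuous, hence bounded on a neighbourhood of $\tau = 0$.

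Next, I would invoke Taylor's theorem with remainder. For any $p \ge 1$ we may write
\begin{equation}
    \op R(\tau) \;=\; \sum_{j=0}^{p-1} \frac{\tau^j}{j!}\, \partial_\tau^j \op R(0) \;+\; \frac{\tau^p}{p!}\, \partial_\tau^p \op R(\xi_\tau)
\end{equation}
for some $\xi_\tau \in [0,\tau]$. By smoothness, $\|\partial_\tau^p \op R(\xi_\tau)\|$ stays bounded for $\tau$ near $0$, so the remainder term is itself $\BigO(\tau^p)$. Combined with the hypothesis $\|\op R(\tau)\| \le C\tau^p$, this forces the polynomial part to satisfy
\begin{equation}
    \left\| \sum_{j=0}^{p-1} \frac{\tau^j}{j!}\, \partial_\tau^j \op R(0) \right\| \;=\; \BigO(\tau^p).
\end{equation}

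Finally, I would extract the vanishing of each coefficient by an inductive limit argument. Setting $\tau = 0$ immediately yields $\op R(0) = 0$. Assuming $\partial_\tau^k \op R(0) = 0$ for all $k < j$ with $j \le p-1$, the above polynomial identity reduces to $\frac{\tau^j}{j!} \partial_\tau^j \op R(0) = \BigO(\tau^{j+1})$, so dividing by $\tau^j$ and letting $\tau \to 0^+$ gives $\partial_\tau^j \op R(0) = 0$. Iterating up to $j = p-1$ completes the proof. There is no real obstacle here; the only subtlety worth mentioning explicitly is confirming smoothness of $\op R$, which is why I would begin with that observation.
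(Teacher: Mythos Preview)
Your proposal is correct and follows essentially the same approach as the paper: Taylor-expand $\op R$ about the origin and use the $\BigO(\delta^p)$ hypothesis together with an inductive divide-and-take-the-limit argument to kill the low-order coefficients one by one. The only cosmetic difference is that the paper invokes analyticity of $\op R$ and works with the full power series, whereas you use a finite Taylor expansion with Lagrange remainder and only need $C^p$-smoothness; your version is thus marginally more parsimonious in its hypotheses, but the logic is identical.
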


\begin{proof}
  We note that $\op R\left(\delta\right)$ is analytic, which means that we can expand it as a Taylor series $\op R\left(\delta\right)=\sum_{j=0}^\infty \op a_j \delta^i$.
  We proceed by induction.
  If $\op a_0 \neq 0$, then clearly $\op R\left(0\right)\neq 0$, which contradicts the assumption that $\op R\left(\delta\right)$ is accurate up to $p$\textsuperscript{th} order.
  Now assume for induction that $\forall j<j'<p-1: a_j = 0$ and $\op a_{j'}\neq 0$.
  Then
  \begin{align}
    \frac{\op R\left(\delta\right)}{T^{j'}} = \op a_{j'} + \sum_{i=1}^\infty \op a_{i+j'} T^i
    \xrightarrow{\delta\rightarrow 0} \op a_{j'} \neq 0,
  \end{align}
  which again contradicts that $\op R\left(0\right) = \BigO\left(\delta^p\right)$.
  The claim follows.
\end{proof}

Performing a Taylor expansion of $\op R\left(\tau\right)$ around $\tau=0$, the error bound $\epsilon$ given in \cref{eq:epsilon-single-step-ap} simplifies to
\begin{align}
    \epsilon &= \left\| \int_{0}^{\delta} \ee^{- \ii \left(\delta-\tau\right)\op H} \op R\left(\tau\right) \dd\tau \right\|
    \leq  \int_{0}^{\delta} \| \op R\left(\tau\right) \| \dd\tau\\
    &=\int_0^\delta \left( \| \op R\left(0\right) \| + \| \op R'\left(0\right) \| \tau + \ldots + \| \op R^{\left(p-1\right)}\left(0\right)\| \frac{\tau^{p-1}}{\left(p-1\right)!}
    + \| \op S\left(\tau, 0\right) \|
      \right) \dd\tau,
\end{align}
Further by \cref{rem:order-conditions-ap} all but the $p$\textsuperscript{th} or higher remainder terms $\op S\left(\tau, 0\right)$ equal zero, so
\begin{align}
    \epsilon &\le \int_0^\delta \| \op S\left(\tau, 0\right) \| \dd \tau = p \int_0^\delta \int_0^1 \left(1-x\right)^{p-1} \| \op R^{\left(p\right)}\left(x\tau\right)\| \frac{\tau^{p}}{p!} \dd x \dd\tau,
    \label{eq:higher-trotter-error-1-ap}
\end{align}
where we used the integral representation for the Taylor remainder $\op S\left(\tau,0\right)$.

Motivated by this, we look for a simple expression for the $p$\textsuperscript{th} derivative of the integrand $\op R\left(\tau\right)$, which capture this in the following technical lemma.
\begin{lemma}\label{lem:trotter-tech1-ap}
  For a product formula accurate to $p$\textsuperscript{th} order, having $S=S_p$ stages for $M$ non-commuting Hamiltonian layers with the upper-bound $\|\op H_i\|\le\Lambda$, the error term $\op R\left(\tau\right)$ satisfies
  \begin{align}
    \left\| \frac{\partial^p}{\partial\tau^p}\op R\left(\tau\right)\right\| \le \left(S M\right)^{p+1} \Lambda^{p+1} \begin{cases}
        2 & p=1 \\
        \displaystyle \frac{1}{2^p} \left(\frac23\right)^{\left(p+1\right)\left(p/2-1\right)} & \text{$p=2k$ for $k\ge1$}.
    \end{cases}
  \end{align}
\end{lemma}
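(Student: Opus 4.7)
The plan is to differentiate $\op R(\tau)$ directly, exploiting the explicit product structure of $\P(\tau)=\prod_{k=1}^{N}\op U_k(\tau)$ with $N=SM$ factors $\op U_k(\tau)=\ee^{-\ii\tau \tcoeff_k\op H_{i_k}}$, and then to plug in the layer norm bound $\|\op H_i\|\le\Lambda$ together with the coefficient bound $|\tcoeff_{ji}|\le B_p$ from \cref{rem:bounds-on-trotter-coefficients-ap}.

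First I would rewrite $\op R(\tau)=\partial_\tau\P(\tau)+\ii\op H\,\P(\tau)$, so that $\partial_\tau^p\op R(\tau)=\partial_\tau^{p+1}\P(\tau)+\ii\op H\,\partial_\tau^p\P(\tau)$. Applying the generalised Leibniz rule to the product $\P$ gives
\begin{equation*}
\partial_\tau^m\P(\tau)=\sum_{m_1+\ldots+m_N=m}\binom{m}{m_1,\ldots,m_N}\prod_{k=1}^{N}\partial_\tau^{m_k}\op U_k(\tau),
\end{equation*}
and since $\partial_\tau^{m_k}\op U_k(\tau)=(-\ii\tcoeff_k\op H_{i_k})^{m_k}\op U_k(\tau)$ with each $\op U_k$ unitary, every summand has operator norm at most $(B_p\Lambda)^m$. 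Because the multinomial coefficients sum to $N^m=(SM)^m$, this yields the clean bound $\|\partial_\tau^m\P(\tau)\|\le(SM B_p\Lambda)^m$. Combined with $\|\op H\|\le M\Lambda$, the triangle inequality gives
\begin{equation*}
\|\partial_\tau^p\op R(\tau)\|\le(SM B_p\Lambda)^{p+1}+M\Lambda\,(SM B_p\Lambda)^p
=M^{p+1}S^p\Lambda^{p+1}B_p^p\,(S B_p+1).
\end{equation*}

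It then remains to match this to the two stated prefactors. For $p=1$ we have $S_1=1$ and $B_1=1$, so the bound is $2M^2\Lambda^2=(SM)^2\Lambda^2\cdot 2$, as claimed. For $p=2k$ with $k\ge 1$, substituting $S_p=2\cdot 5^{k-1}$ and $B_p\le \tfrac12(2/3)^{k-1}$ gives $SB_p\le(10/3)^{k-1}$ and $2^p B_p^p\le(2/3)^{p(k-1)}$. A direct algebraic comparison then shows that the ratio of my bound to the target $(SM)^{p+1}\Lambda^{p+1}\,2^{-p}(2/3)^{(p+1)(p/2-1)}$ collapses to $\tfrac12+\tfrac12(3/10)^{k-1}\le 1$, closing the argument.

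The main obstacle is not the derivation, which is a clean application of Leibniz plus the triangle inequality, but the bookkeeping of the prefactors: turning the raw combinatorial bound $M^{p+1}S^p\Lambda^{p+1}B_p^p(SB_p+1)$ into the specific exponent $(p+1)(p/2-1)$ on the $(2/3)$ factor requires carefully tracking how $S_p=2\cdot 5^{k-1}$ and the bound on $B_p$ from \cref{rem:bounds-on-trotter-coefficients-ap} interact, and checking that the ``$+1$'' contribution from the $\ii\op H\,\partial_\tau^p\P$ term never dominates. A tighter constant could be obtained by pushing $\ii\op H$ inside $\partial_\tau^p\P$ and exploiting telescoping cancellations with $\sum_j \tcoeff_{ji}=1$, but the bound above is already of the claimed form, so no further sharpening is needed here.
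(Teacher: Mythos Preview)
Your proof is correct and follows essentially the same Leibniz-rule-plus-triangle-inequality strategy as the paper. The only cosmetic difference is that the paper rewrites $\op H=\ii\sum_{I}\op U_I^{(1)}(0)$ using $\sum_j\tcoeff_{ji}=1$ so that the two terms in $\op R^{(p)}$ become symmetric and combine directly into the clean bound $2(SMB_p\Lambda)^{p+1}$ (no need to invoke $S_p=2\cdot5^{k-1}$), whereas you bound $\|\op H\|\le M\Lambda$ separately and then finish with the explicit $S_p$-dependent algebraic check; both routes arrive at the stated inequality.
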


\begin{proof}
  We first express $\P\left(\tau\right)$ from \cref{eq:higher-trotter-ap,eq:higher-trotter-2-ap} with a joint index set $\Sigma=[S]\times[M]$ as
  \begin{align}
    \P\left( \tau\right) = \prod_{j=1}^{S}\prod_{i=1}^{M}\op U_{ij}\left(\tau\right)
    =\prod_{I \in \Sigma}\op U_{I}\left(\tau\right).
  \end{align}
  Then the $\left(p+1\right)$\textsuperscript{th} derivative of this with respect to $\tau$ is
  \begin{align}
    \P^{\left(p+1\right)}\left(\tau\right) = \sum_{\alpha:\,|\alpha| = p+1} \binom{p+1}{\alpha} \prod_{I}\op U_{I}^{\left(\alpha_I\right)}\left( \tau\right)
    \label{eq:P-multiindex-ap}
  \end{align}
  where $\alpha$ is a multiindex on $\Sigma$, and $|\alpha|=\sum_{I\in\Sigma} \alpha_{I}$.
  Following standard convention, the multinomial coefficient for a multiindex is defined as
  \begin{align}
    \binom{p+1}{\alpha} = \frac{\left(p+1\right)!}{\alpha!} =\frac{\left(p+1\right)!}{\prod_{I\in\Sigma}\alpha_I!}.
  \end{align}
  We can similarly express $\op H$ with the same index set $\sigma$, and as a derivative of $\op U$ via
  \begin{equation}
    \op H = \sum_{i=1}^{S}\op H_i
    = \sum_{j=1}^{S} \sum_{i=1}^{M} \tcoeff_{ji}\op  H_i
    = \ii \sum_{j=1}^{S} \sum_{i=1}^{M} \op U_{ij}^{\left(1\right)}\left(0\right)
    = \ii \sum_{I\in\Sigma} \op U_{I}^{\left(1\right)}\left(0\right)
    \label{eq:H-multiindex-ap}
  \end{equation}
  where we used the fact that $\sum_{j=1}^S \tcoeff_{ji} = 1$ by \cref{rem:bounds-on-trotter-coefficients-ap}, and the exponential expression of $\op U_I$ from \cref{eq:higher-trotter-2-ap}.

  Now we can combine \cref{eq:P-multiindex-ap,eq:H-multiindex-ap} as in \cref{eq:higher-trotter-error-1-ap} to obtain the $p$\textsuperscript{th} derivative of the integrand $\op R\left(\tau\right)$:
  \begin{equation}\label{eq:R-bound-ap}
    \op R^{\left(p\right)}\left(\tau\right) = \sum_{\alpha:\,|\alpha| = p+1} \binom{p+1}{\alpha} \prod_{I} \op U_{I}^{\left(\alpha_I\right)}\left( \tau\right) - \sum_{I} \op U_{I}^{\left(1\right)}\left(0\right)\sum_{\beta:\,|\beta| = p} \binom{p}{\beta} \prod_{I} \op U_{I}^{\left(\beta_I\right)}\left( \tau\right).
  \end{equation}
  Noting that $\|\op U_{I}^{\left(\beta_I\right)}\left(\tau\right)\|=\| \op U_{I}^{\left(\beta_I\right)}\left(0\right)\|$, and further $\op U_I^{\left(x\right)}\left(0\right)\op U_I^{\left(y\right)}\left(0\right) = \op U_I^{\left(x+y\right)}\left(0\right)$, we have
  \begin{align}
    \sum_{J} \left\| \op U_{J}^{\left(1\right)}\left(0\right) \right\| \sum_{\beta:\,|\beta| = p} \binom{p}{\beta} \prod_{I} \left\| \op U_{I}^{\left(\beta_I\right)}\left(0\right) \right\|
    &= \sum_{\beta:\,|\beta|=p+1}\binom{p+1}{\beta} \prod_I \left\| \op U_I^{\left(\beta_I\right)}\left(0\right) \right\|.
  \end{align}
  We can therefore bound the norm of $\op R^{\left(p\right)}$ as follows:
  \begin{align}
    \left\|\op R^{\left(p\right)}\left(\tau\right)\right\| &\leq \sum_{\alpha:\,|\alpha| = p+1}\binom{p+1}{\alpha} \prod_{I} \left\|\op U_{I}^{\left(\alpha_I\right)}\left(0\right)\right\| \\
    &\phantom{=}\hspace{.8cm} + \sum_{I} \left\|\op U_{I}^{\left(1\right)}\left(0\right)\right\|\sum_{\beta:\,|\beta| = p} \binom{p}{\beta}\prod_{I} \left\|\op U_{I}^{\left(\beta_I\right)}\left(0\right)\right\| \\
    & =  2 \sum_{\alpha:\,|\alpha| = p+1}\binom{p+1}{\alpha} \prod_{I} \left\|\op U_{I}^{\left(\alpha_I\right)}\left(0\right)\right\| \\
    &= 2 \sum_{\alpha:\,|\alpha| = p+1}\binom{p+1}{\alpha} \prod_{j=1}^S\prod_{i=1}^M \left|\tcoeff_{ji}\right|^{\alpha_{ij}} \left\| \op H_{i}\right\|^{\alpha_{ij}}.
  \end{align}

  By \cref{rem:bounds-on-trotter-coefficients-ap}, we know that $|\tcoeff_{ji}|=1$ when $p=1$ and $|\tcoeff_{ji}|\le \left(2/3\right)^{p/2-1}/2$ for all $j,i$ when $p=2k$ for $k\ge1$.
  Hence for $p=1$
  \begin{align}
    \left\| \op R^{\left(1\right)}\left(\tau\right)\right\| \le 2 \left(S M\right)^{2} \Lambda^{2},
  \end{align}
  and for $p=2k$ for $k\ge1$
  \begin{align}
    \left\| \op R^{\left(p\right)}\left(\tau\right)\right\| \le 2 \sum_{\alpha:\,|\alpha|=p+1} \binom{p+1}{\alpha} \left[ \left(\frac23 \right)^{p/2-1} \frac{\Lambda}{2} \right]^{|\alpha|}
    =: C_p\left(S,M\right) \left( \frac23 \right)^{\left(p+1\right)\left(p/2-1\right)} \frac{\Lambda^{p+1}}{2^p},
  \end{align}
  where $C_p\left(S,M\right)$ is the sum of the multinomial coefficients of length $p\in\field N$; a simple expression can be obtained by reversing the multinomial theorem, since
  \begin{align}
    \sum_{\alpha:\,|\alpha|+p+1} \binom{p+1}{\alpha} = \left(\underbrace{1+1+\ldots+1}_{|\Sigma|\ \text{terms}}\right)^{p+1} = |\Sigma|^{p+1} = \left(SM\right)^{p+1}.
  \end{align}
  \qedhere
\end{proof}

To obtain the final error bounds, we combine \cref{lem:trotter-tech1-ap} with the integral representation in \cref{eq:higher-trotter-error-1-ap}.

\begin{theorem}[Trotter Error]\label{th:trotter-error-ap}
  For a $p$\textsuperscript{th} order product formula $\P_p$ for $p=1$ or $p=2k$, $k\ge1$, with the same setup as in \cref{lem:trotter-tech1-ap}, a bound on the approximation error for the exact evolution $U(T)$ with $T/\delta$ rounds of the product formula $\P_p\left(\delta\right)$ is given by
  \begin{align}
    \epsilon_p\left(T,\delta\right) \le \frac {T}{\delta}\, \delta^{p+1} M^{p+1} \Lambda^{p+1} \times \begin{cases}
        1 & p=1 \\
        \displaystyle \frac{2}{\left(p+1\right)!} \left( \frac{10}3 \right)^{\left(p+1\right)\left(p/2-1\right)} & \text{$p=2k$, $k\ge 1$}.
    \end{cases}
  \end{align}
\end{theorem}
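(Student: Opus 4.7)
The plan is to combine the integral representation of $\epsilon_p(\delta)$ derived in \cref{eq:higher-trotter-error-1-ap} with the $p$\textsuperscript{th}-derivative bound from \cref{lem:trotter-tech1-ap}, then evaluate the remaining double integral in closed form. Concretely, starting from
\begin{equation*}
\epsilon_p(\delta) \le p \int_0^\delta \int_0^1 (1-x)^{p-1} \left\| \op R^{(p)}(x\tau) \right\| \frac{\tau^p}{p!}\,\dd x\, \dd\tau,
\end{equation*}
I would upper-bound $\|\op R^{(p)}(x\tau)\|$ uniformly in $x$ and $\tau$ by the constant bound from \cref{lem:trotter-tech1-ap}, which is valid because that bound holds pointwise for all arguments. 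Pulling this constant out of the integral reduces the problem to the elementary computation $\int_0^1(1-x)^{p-1}\,\dd x = 1/p$ and $\int_0^\delta \tau^p / p!\,\dd \tau = \delta^{p+1}/(p+1)!$, so that $\epsilon_p(\delta) \le \|\op R^{(p)}\|_{\max} \cdot \delta^{p+1}/(p+1)!$.

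Next, I would substitute explicit values for the stage count $S=S_p$. For $p=1$ the product formula in \cref{eq:trotter-basic-ap} consists of a single sweep, so $S_1=1$, which combined with the norm bound $\|\op R^{(1)}(\tau)\|\le 2(SM)^2\Lambda^2$ from \cref{lem:trotter-tech1-ap} immediately gives $\epsilon_1(\delta)\le M^2 \Lambda^2 \delta^2$, matching the first case after absorbing the prefactor into the cases-expression. For $p=2k$, the recursive definition \cref{eq:P-2k-ap} shows that $\P_{2k}$ is built from five copies of $\P_{2k-2}$, and $\P_2$ has two stages, giving $S_{2k}=2\cdot 5^{k-1}=2\cdot 5^{p/2-1}$. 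Substituting this into $(SM)^{p+1}$ and combining with the factor $(2/3)^{(p+1)(p/2-1)}/2^p$ from \cref{lem:trotter-tech1-ap} yields
\begin{equation*}
\epsilon_p(\delta) \le \frac{2}{(p+1)!}\, M^{p+1}\Lambda^{p+1}\delta^{p+1}\left(\tfrac{10}{3}\right)^{(p+1)(p/2-1)},
\end{equation*}
since $2^{p+1}\cdot (1/2^p)\cdot 5^{(p+1)(p/2-1)}\cdot (2/3)^{(p+1)(p/2-1)} = 2\cdot(10/3)^{(p+1)(p/2-1)}$.

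Finally, I would account for the $T/\delta$ Trotter rounds using sub-additivity of the circuit approximation error, i.e.\ $\epsilon_p(T,\delta) \le (T/\delta)\,\epsilon_p(\delta)$, which was already established above \cref{eq:epsilon-single-step-ap} and follows from a telescoping argument on the unitary product and the triangle inequality. This multiplicative factor gives the final expression exactly as stated.

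I do not expect any genuine obstacles: the heavy lifting has been done in \cref{lem:trotter-tech1-ap,rem:bounds-on-trotter-coefficients-ap}, and the remaining work is careful bookkeeping. The only subtlety worth double-checking is the closed-form value $S_p = 2\cdot 5^{p/2-1}$ and verifying that the numerical factors combine cleanly to produce the stated $(10/3)^{(p+1)(p/2-1)}$, which hinges on the identity $5^{(p+1)(p/2-1)}\cdot (2/3)^{(p+1)(p/2-1)} = (10/3)^{(p+1)(p/2-1)}$ and on the $2^{p+1}/2^p = 2$ simplification arising from the stage count and the coefficient bound.
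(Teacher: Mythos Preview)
Your proposal is correct and follows essentially the same route as the paper: bound $\|\op R^{(p)}\|$ uniformly via \cref{lem:trotter-tech1-ap}, evaluate the resulting double integral to $\delta^{p+1}/(p+1)!$, substitute $S_1=1$ resp.\ $S_p\le 2\cdot 5^{p/2-1}$, and multiply by $T/\delta$. Your explicit check of the constant $2^{p+1}/2^p\cdot 5^{(p+1)(p/2-1)}\cdot(2/3)^{(p+1)(p/2-1)}=2\cdot(10/3)^{(p+1)(p/2-1)}$ is exactly the bookkeeping the paper leaves implicit.
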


\begin{proof}
  We can use the bound on $\op R^{\left(p\right)}$ derived in \cref{lem:trotter-tech1-ap} and perform the integration over $\tau$ and $x$ in \cref{eq:higher-trotter-error-1-ap}, to obtain
  \begin{align}
    \epsilon \le \| \op R^{\left(p\right)} \| \int_0^\delta p \int_0^1  \left(1-x\right)^{p-1} \frac{\tau^p}{p!} \dd x \dd \tau
    = \frac{\delta^{p+1}}{\left(p+1\right)!}\| \op R^{\left(p\right)} \|.
  \end{align}
  By \cref{rem:order-conditions-ap}, for Trotter formulae of order $p=1$ we have precisely one stage, i.e.\ $S=1$, and $\tcoeff_{ji}=1$ for all $i,j$.
  This, together with \cref{lem:trotter-tech1-ap,eq:epsilon-single-step-ap}, yields the first bound.

  The number of stages in higher order formulae can be upper-bounded by \cref{eq:P-2-ap,eq:P-2k-ap}, giving $S_p \le 2\times 5^{p/2-1}$.
  Together with \cref{lem:trotter-tech1-ap,eq:epsilon-single-step-ap}, this yields the second bound.
\end{proof}

We remark that tighter bounds than the ones in \cref{th:trotter-error-ap} are achievable for any given product formula, where the form of its coefficients $\tcoeff_{ji}$ are explicitly available and not merely bounded as in \cref{rem:bounds-on-trotter-coefficients-ap}.
Summing up these stage times exactly is therefore an immediate way to obtain an improved error bound.
Furthermore, the triangle inequality on $\| \op R^{\left(p\right)}\left(\tau\right) \|$ in the proof of \cref{lem:trotter-tech1-ap} is a crude overestimate: it looses information about (i).~terms that could cancel between the two multi-index sums, and (ii.)~any commutation relations between the individual Trotter stages.

In the following subsection, we will provide a tighter error analysis, featuring more optimal but less clean analytical expressions which we can nonetheless evaluate efficiently numerically.

\subsection*{Explicit Summation of Trotter Stage Coefficients}
For the recursive Suzuki-Trotter formula in \cref{eq:P-2k-ap} we can immediately improve the error bound by summing the stage coefficients $\tcoeff_{ij}$ up exactly, instead of bounding them as in \cref{rem:bounds-on-trotter-coefficients-ap}.

\begin{corollary}[Trotter Error]\label{cor:trotter-error-ap}
For the recursive product formula in \cref{eq:P-2k-ap} and $p=2k$ for $k\ge1$,
\begin{align}
    \epsilon_p\left(T,\delta\right) \le \frac{2 T \delta^p M^{p+1} \Lambda^{p+1} }{\left(p+1\right)!} H_p^{p+1}
    \quad\text{where}\quad
    H_p := \prod_{i=1}^{p/2-1} \frac{ 4+4^{1/\left(2i+1\right)}}{\left| 4 - 4^{1/\left(2i+1\right)} \right| }.
\end{align}
\end{corollary}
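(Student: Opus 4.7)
The plan is to revisit the proof of \cref{lem:trotter-tech1-ap} and replace the uniform upper bound $|\tcoeff_{ji}|\le B_p$ from \cref{rem:bounds-on-trotter-coefficients-ap} by an exact evaluation of the row sums $\sum_{j=1}^{S_p} |\tcoeff_{ji}|$, which is possible because the recursion in \cref{eq:P-2k-ap} has a transparent multiplicative structure. All the other steps of the derivation (the variation-of-constants integral representation, the Taylor remainder of $\op R_p(\tau)$, and the integration against $\int_0^\delta p\int_0^1(1-x)^{p-1}\tau^p/p!\,\dd x\,\dd\tau$) remain unchanged, so the only work is to sharpen the bound on $\|\op R_p^{(p)}(\tau)\|$ obtained in \cref{eq:R-bound-ap}.

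First I would carry through the same steps as in \cref{lem:trotter-tech1-ap} to arrive at
\begin{equation*}
\|\op R_p^{(p)}(\tau)\| \le 2\sum_{\alpha:\,|\alpha|=p+1}\binom{p+1}{\alpha}\prod_{j=1}^{S_p}\prod_{i=1}^{M}|\tcoeff_{ji}|^{\alpha_{ji}}\|\op H_i\|^{\alpha_{ji}},
\end{equation*}
and then use $\|\op H_i\|\le\Lambda$ together with the multinomial theorem to rewrite
\begin{equation*}
\sum_{\alpha:\,|\alpha|=p+1}\binom{p+1}{\alpha}\prod_{j,i}|\tcoeff_{ji}|^{\alpha_{ji}} = \left(\sum_{j=1}^{S_p}\sum_{i=1}^{M}|\tcoeff_{ji}|\right)^{p+1}.
\end{equation*}
This reduces the task to computing $\sigma_i(p):=\sum_{j=1}^{S_p}|\tcoeff_{ji}|$ exactly for each $i$.

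The core of the proof is an induction in $k$ on the recursive definition \cref{eq:P-2k-ap}. For $p=2$ the claim $\sigma_i(2)=1$ is immediate since $\P_2$ has two stages each with coefficient $1/2$. For the inductive step, observe that $\P_{2k}(\delta)=\P_{2k-2}(a_k\delta)^{2}\,\P_{2k-2}((1-4a_k)\delta)\,\P_{2k-2}(a_k\delta)^{2}$ rescales every coefficient in the inner formula by one of $a_k$ or $1-4a_k$, and stacks four copies of the first and one of the second. Hence $\sigma_i(2k)=(4|a_k|+|1-4a_k|)\,\sigma_i(2k-2)$. For $k\ge 2$ one checks $4a_k>1$, so $4|a_k|+|1-4a_k|=8a_k-1$; substituting $a_k=1/(4-4^{1/(2k-1)})$ gives
\begin{equation*}
8a_k-1 \;=\; \frac{4+4^{1/(2k-1)}}{4-4^{1/(2k-1)}}.
\end{equation*}
Telescoping from $k'=2$ to $k$ and reindexing $i=k'-1$ yields $\sigma_i(2k)=H_p$ independently of $i$, so the double sum above equals $MH_p$.

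Putting the pieces together gives $\|\op R_p^{(p)}(\tau)\|\le 2\,\Lambda^{p+1}(MH_p)^{p+1}$, and inserting this into \cref{eq:higher-trotter-error-1-ap} followed by the elementary integration $\int_0^\delta p\int_0^1(1-x)^{p-1}\tau^p/p!\,\dd x\,\dd\tau = \delta^{p+1}/(p+1)!$, together with the factor $T/\delta$ from \cref{eq:epsilon-single-step-ap}, produces the stated bound. The only delicate point is verifying the sign $4a_k>1$ for all $k\ge 2$ so that the absolute values collapse cleanly into $8a_k-1$; everything else is bookkeeping with the multinomial theorem and the previously established integral representation.
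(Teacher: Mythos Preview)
Your proposal is correct and follows essentially the same approach as the paper: both replace the crude bound $|\tcoeff_{ji}|\le B_p$ by the exact identity $\sum_{j,i}|\tcoeff_{ji}|=MH_p$, apply the multinomial theorem inside the bound from \cref{lem:trotter-tech1-ap}, and then integrate as in \cref{th:trotter-error-ap}. The paper simply asserts that $\sum_{ij}|\tcoeff_{ij}(p)|=MH_p$ ``can easily be verified'' from the recursion, whereas you supply that verification explicitly via the inductive computation $\sigma_i(2k)=(4|a_k|+|1-4a_k|)\sigma_i(2k-2)$ and the reindexing $i=k'-1$; this is exactly the intended filling-in of that step.
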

\begin{proof}
This follows from explicitly summing up the magnitudes of all the $\tcoeff_{ji}$'s obtained by solving the recursive definition of the product formula, which can easily be verified to satisfy $\sum_{ij} |\tcoeff_{ij}\left(p\right)| = M H_p$.
Then from \cref{lem:trotter-tech1-ap},
\begin{align}
    \left\| \op R^{\left(p\right)}\left(\tau\right) \right\| \le 2 \Lambda^{p+1}\!\!\!\! \sum_{\alpha:\,|\alpha|=p+1} \binom{p+1}{\alpha} \prod_{j=1}^S \prod_{i=1}^M \left| \tcoeff_{ji}^{\alpha_{ij}} \right|
    = 2\Lambda^{p+1} \left( \sum_{j=1}^S \sum_{i=1}^M |\tcoeff_{ji}| \right)^{p+1},
\end{align}
and the claim follows as before.
\end{proof}

For later reference, we note that it is straightforward to generalise the error bound in \cref{cor:trotter-error-ap} for the case of a \emph{higher} derivative $\op R^{(q)}$, $q\ge p$, but still for a $p$\textsuperscript{th} order formula: the bound simply reads
\begin{equation}\label{eq:p-q-error-1-ap}
\epsilon_{p,q}(T,\delta) \le \frac{2 T \delta^q M^{q+1} \Lambda^{q+1} }{\left(q+1\right)!} H_p^{q+1}.
\end{equation}

\subsection*{Commutator Bounds}\label{ap:commutator_bounds}
Our analysis thus far has completely neglected the underlying structure of the Hamiltonian. In this subsection we establish commutator bounds which are easily applicable to $D$-dimensional lattice Hamiltonians.

We begin with the following technical lemmas.
\begin{lemma}\label{lem:trotter-tech-comm-ap}
For a product formula accurate to $p$\textsuperscript{th} order, having $S=S_p$ stages for $M$ non-commuting Hamiltonian layers with the upper-bound $\|\op H_i\|=\Lambda_I$, the error term $\op R\left(\tau\right)$ satisfies
\begin{align}
    \left\| \frac{\partial^p}{\partial\tau^p}\op R\left(\tau\right)\right\| &\leq \sum_{J} \sum_{\beta:\,|\beta| = p} \binom{p}{\beta} \sum_{I=J+1}^{SM}  \left(B_p \Lambda\right)^{p-\beta_{I}} \left\|\left[\op U_{J}^{\left(1\right)}\left(0\right) \ , \op U_I^{\left(\beta_I\right)}\left( \tau\right)\right]\right\|.
\end{align}
\end{lemma}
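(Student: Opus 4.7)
The plan is to first re-express $\op R(\tau)$ itself as a sum of commutators involving $D_J := \op U_J^{(1)}(0) = -\ii\tcoeff_J \op H_J$, and only then differentiate $p$ times and take norms. Starting from $\op R(\tau) = \frac{d}{d\tau}\P(\tau) + \ii\op H\,\P(\tau)$ with $\P(\tau)=\prod_{I=1}^{SM}\op U_I(\tau)$ (ordered with the largest index on the left, as in the paper), the generalised product rule gives $\frac{d\P}{d\tau}=\sum_J\bigl(\prod_{I>J}\op U_I\bigr)\op U_J^{(1)}(\tau)\bigl(\prod_{I<J}\op U_I\bigr)$. The key observation is that $\op U_J^{(1)}(\tau)=\op U_J(\tau)D_J=D_J\op U_J(\tau)$, so $D_J$ commutes with its own layer $\op U_J(\tau)$, but not with the $\op U_I(\tau)$ for $I\ne J$. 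I slide $D_J$ leftwards through all factors $\op U_I(\tau)$ with $I>J$, picking up one commutator $[D_J,\prod_{I>J}\op U_I(\tau)]$ per $J$; the ``clean'' residue is $\sum_J D_J\,\P(\tau)=-\ii\op H\,\P(\tau)$, which cancels exactly the $\ii\op H\,\P(\tau)$ term, yielding
\begin{equation*}
\op R(\tau) = -\sum_J \bigl[D_J,\,\textstyle\prod_{I>J}\op U_I(\tau)\bigr]\;\op U_J(\tau)\textstyle\prod_{I<J}\op U_I(\tau).
\end{equation*}

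Next, I expand the inner commutator by the standard identity $[A,B_1\cdots B_n]=\sum_k B_1\cdots B_{k-1}[A,B_k]B_{k+1}\cdots B_n$, which turns $[D_J,\prod_{I>J}\op U_I(\tau)]$ into $\sum_{K>J}$ of an ordered product of the $\op U_I(\tau)$'s ($I>J$, $I\ne K$) with a single commutator $[D_J,\op U_K(\tau)]$ inserted at position $K$. Combining with the trailing $\op U_J(\tau)\prod_{I<J}\op U_I(\tau)$, each summand of $\op R(\tau)$ is an ordered product of $SM$ factors, exactly one of which is the commutator $[D_J,\op U_K(\tau)]$ and the rest are $\op U_I(\tau)$.

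To obtain $\op R^{(p)}(\tau)$ I apply the generalised Leibniz rule to this ordered product: distributing $p$ derivatives among the $SM$ factors gives a multinomial sum over multi-indices $\beta$ with $|\beta|=p$ and coefficients $\binom{p}{\beta}$. Because $D_J$ is constant in $\tau$, the $\beta_K$-fold derivative of the commutator factor is simply $[D_J,\op U_K^{(\beta_K)}(\tau)]=[\op U_J^{(1)}(0),\op U_K^{(\beta_K)}(\tau)]$. Taking operator norms, using submultiplicativity, and applying $\|\op U_I^{(\beta_I)}(\tau)\|=\|D_I^{\beta_I}\op U_I(\tau)\|\le(|\tcoeff_I|\Lambda)^{\beta_I}\le(B_p\Lambda)^{\beta_I}$ from \cref{rem:bounds-on-trotter-coefficients-ap}, the non-commutator factors collectively contribute $\prod_{I\ne K}(B_p\Lambda)^{\beta_I}=(B_p\Lambda)^{p-\beta_K}$. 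Renaming $K\to I$ reproduces the stated bound.

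The only nontrivial step is the first one: the careful bookkeeping that (i) correctly identifies which side of $D_J$ the sliding produces commutators with (namely the $I>J$ side, dictated by the right-to-left ordering convention of $\P$), and (ii) verifies that exactly the $-\ii\op H\,\P(\tau)$ piece is produced to cancel the $\ii\op H\,\P(\tau)$ term in $\op R(\tau)$. Once the commutator representation of $\op R(\tau)$ is in hand, the rest is a routine application of Leibniz and the triangle inequality, and no appeal to $\partial_\tau^j\op R(0)=0$ order conditions is needed.
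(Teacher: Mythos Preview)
Your proof is correct, and it takes a genuinely different route from the paper's. The paper starts from the expression for $\op R^{(p)}(\tau)$ already obtained in \cref{lem:trotter-tech1-ap}, namely
\[
\op R^{(p)}(\tau)=\sum_{|\alpha|=p+1}\binom{p+1}{\alpha}\prod_I \op U_I^{(\alpha_I)}(\tau)-\sum_J \op U_J^{(1)}(0)\sum_{|\beta|=p}\binom{p}{\beta}\prod_I \op U_I^{(\beta_I)}(\tau),
\]
and then commutes each $\op U_J^{(1)}(0)$ rightwards past $\prod_{I>J}\op U_I^{(\beta_I)}(\tau)$; the ``clean'' residue of this commutation reproduces the first sum (via the multinomial Pascal identity $\sum_J\binom{p}{\beta_1,\ldots,\beta_J-1,\ldots}=\binom{p+1}{\beta}$), and only the commutator corrections survive. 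In other words: differentiate first, commute after.

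You reverse the order: you commute first at the level of $\op R(\tau)$ itself, obtaining the clean commutator representation
\[
\op R(\tau)=-\sum_J\sum_{K>J}\Bigl(\prod_{I>K}\op U_I(\tau)\Bigr)\bigl[D_J,\op U_K(\tau)\bigr]\Bigl(\prod_{I<K}\op U_I(\tau)\Bigr),
\]
and only then apply Leibniz to differentiate $p$ times. The advantage of your ordering is that the cancellation step is the elementary identity $\sum_J D_J=-\ii\op H$ (i.e.\ $\sum_j\tcoeff_{ji}=1$), with no multinomial combinatorics needed; the paper's ordering has the advantage that it recycles the expression for $\op R^{(p)}$ already established in the previous lemma. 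Both arrive at exactly the same bound.
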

\begin{proof}
As shown in \cref{lem:trotter-tech1-ap},
\begin{align}
    \op R^{\left(p\right)}\left(\tau\right) = \sum_{\alpha:\,|\alpha| = p+1} \binom{p+1}{\alpha} \prod_{I} \op U_{I}^{\left(\alpha_I\right)}\left( \tau\right) - \sum_{J} \op U_{J}^{\left(1\right)}\left(0\right) \sum_{\beta:\,|\beta| = p} \binom{p}{\beta} \prod_{I} \op U_{I}^{\left(\beta_I\right)}\left( \tau\right).
\end{align}
We begin by commuting every $\op U_{J}^{\left(1\right)}\left(0\right)$ past $\prod^{SM}_{I=J+1} \op U_I^{\left(\beta_I\right)}\left( \tau\right)$. Consider this for some fixed $J$ in the sum of over $J$. That is consider rewriting a particular summand from the second term above to obtain
\begin{align}
    \op U_{J}^{\left(1\right)}&\left(0\right) \sum_{\beta:\,|\beta| = p} \binom{p}{\beta} \prod_{I} \op U_{I}^{\left(\beta_I\right)}\left( \tau\right) \\
    &= \sum_{\beta:\,|\beta| = p} \binom{p}{\beta} \op U_{J}^{\left(1\right)}\left(0\right) \left(\op U_{sm}^{\left(\beta_{sm}\right)}\left( \tau\right) \ldots \op U_{J+1}^{\left(\beta_{J+1}\right)}\left( \tau\right) \right)\left(\op U_{J}^{\left(\beta_{J}\right)}\left( \tau\right) \ldots \op U_{1}^{\left(\beta_{1}\right)}\left( \tau\right) \right)\\
    &=\sum_{\beta:\,|\beta| = p} \binom{p}{\beta}  \left(\op U_{sm}^{\left(\beta_{sm}\right)}\left( \tau\right) \ldots \op U_{J+1}^{\left(\beta_{J+1}\right)}\left( \tau\right)\right) \left( \op U_{J}^{\left(\beta_{J}+1\right)}\left( \tau\right) \ldots \op U_{1}^{\left(\beta_{1}\right)}\left( \tau\right) \right) \\
    &\mspace{30mu} + \sum_{\beta:\,|\beta| = p} \binom{p}{\beta} \bigg[\op U_{J}^{\left(1\right)}\left(0\right)\ ,\ \prod^{SM}_{I=J+1} \op U_I^{\left(\beta_I\right)}\left( \tau\right)\bigg] \prod^{J}_{I=1} \op U_I^{\left(\beta_I\right)}\left( \tau\right).
\end{align}
Now, by inserting this into the full expression for $ \op R^{\left(p\right)}\left(\tau\right)$, we obtain
\begin{align}
    \op R^{\left(p\right)}\left(\tau\right) &= \sum_{\alpha:\,|\alpha| = p+1} \binom{p+1}{\alpha} \prod_{I} \op U_{I}^{\left(\alpha_I\right)}\left( \tau \right) - \sum_{\beta:\,|\beta| = p+1} \binom{p+1}{\beta} \prod_{I} \op U_{I}^{\left(\beta_I\right)}\left( \tau\right) \\
    &\mspace{30mu} -\sum_{J} \sum_{\beta:\,|\beta| = p} \binom{p}{\beta} \bigg[\op U_{J}^{\left(1\right)}\left(0\right)\ ,\ \prod^{SM}_{I=J+1} \op U_I^{\left(\beta_I\right)}\left( \tau\right)\bigg] \prod^{J}_{I=1} \op U_I^{\left(\beta_I\right)}\left( \tau\right) \\
    &= -\sum_{J} \sum_{\beta:\,|\beta| = p} \binom{p}{\beta} \bigg[\op U_{J}^{\left(1\right)}\left(0\right)\ ,\ \prod^{SM}_{I=J+1} \op U_I^{\left(\beta_I\right)}\left( \tau\right)\bigg] \prod^{J}_{I=1} \op U_I^{\left(\beta_I\right)}\left( \tau\right)\\
    &= -\sum_{J} \sum_{\beta:\,|\beta| = p} \binom{p}{\beta} \sum_{I=J+1}^{SM} \prod^{SM}_{K=I+1} \op U_K^{\left(\beta_K\right)} \left[\op U_{J}^{\left(1\right)}\left(0\right) \ , \op U_I^{\left(\beta_I\right)}\left( \tau\right)\right] \prod^{I-1}_{K=1} \op U_K^{\left(\beta_K\right)}.
\end{align}
Taking the norm of this expression gives
\begin{align}
    \left\| \frac{\partial^p}{\partial\tau^p}\op R\left(\tau\right)\right\| &\leq \sum_{J} \sum_{\beta:\,|\beta| = p} \binom{p}{\beta} \sum_{I=J+1}^{SM} \left( B_p  \Lambda\right)^{\sum_{K=I+1}^{SM} \beta_K} \left\|\left[\op U_{J}^{\left(1\right)}\left(0\right) \ , \op U_I^{\left(\beta_I\right)}\left( \tau\right)\right]\right\|  \left(B_p \Lambda\right)^{\sum_{K=1}^{I-1}\beta_{K}} \\
    & =\sum_{J} \sum_{\beta:\,|\beta| = p} \binom{p}{\beta} \sum_{I=J+1}^{SM}  \left(B_p \Lambda\right)^{p-\beta_{I}} \left\|\left[\op U_{J}^{\left(1\right)}\left(0\right) \ , \op U_I^{\left(\beta_I\right)}\left( \tau\right)\right]\right\|.
\end{align}
This completes the proof.
\end{proof}

\begin{lemma}\label{lem:zero-time-commutator-bounds-ap}
If every pair of Hamiltonians can be written as $\op H_I = \sum_{i=1}^{N} \op h^{I}_{i}$ and $\op H_J = \sum_{i=1}^{N} \op h^{J}_{i}$, where for any $i$ we have $\| \op h^{I}_{i} \|=\| \op h^{J}_{i} \|=1$ and for any fixed term $\op h^{J}$ there are at most $\n$ terms in $\op H_I$ which do not commute with that specific term, then
\begin{align}
    \left\|\left[\op U_{J}^{\left(1\right)}\left(0\right) \ , \op U_I^{\left(\beta_I\right)}\left( 0\right)\right]\right\| & \leq 2 \n \beta_I N^{\beta_I} B_{p}^{\beta_I + 1}.
\end{align}
\end{lemma}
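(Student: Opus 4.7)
My plan is to unpack the two derivatives using the definition $\op U_{ij}(\tau) = \ee^{-\ii\tau\,\tcoeff_{ji}\op H_i}$, peel off the scalar Trotter coefficients, and then reduce the bound to one on a plain operator commutator $\|[\op H_J,\op H_I^{\beta_I}]\|$ that is controlled by the hypothesised interaction-sparsity structure. I would then absorb the Trotter coefficients into the prefactor $B_p^{\beta_I+1}$.

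First, note that differentiating the exponential gives $\op U_J^{(1)}(0)=-\ii\,\tcoeff_J\op H_J$ and $\op U_I^{(\beta_I)}(0)=(-\ii\,\tcoeff_I)^{\beta_I}\op H_I^{\beta_I}$, where with a slight abuse of notation I use the pair-index $J$ (resp.\ $I$) both for the stage-layer index and for the associated Trotter coefficient $\tcoeff_J$ (resp.\ $\tcoeff_I$) and Hamiltonian layer $\op H_J$ (resp.\ $\op H_I$). Thus
\begin{align}
\bigl\|[\op U_J^{(1)}(0),\op U_I^{(\beta_I)}(0)]\bigr\|
= |\tcoeff_J|\,|\tcoeff_I|^{\beta_I}\,\bigl\|[\op H_J,\op H_I^{\beta_I}]\bigr\|
\le B_p^{\beta_I+1}\,\bigl\|[\op H_J,\op H_I^{\beta_I}]\bigr\|,
\end{align}
where the last step uses $\max_{ij}|\tcoeff_{ji}|\le B_p$ from \cref{rem:bounds-on-trotter-coefficients-ap}.

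Next, I would apply the standard identity $[A,B^n]=\sum_{k=0}^{n-1}B^k[A,B]B^{n-1-k}$ with $A=\op H_J$ and $B=\op H_I$. Combined with the triangle inequality and submultiplicativity, this yields
\begin{align}
\bigl\|[\op H_J,\op H_I^{\beta_I}]\bigr\|\le \beta_I\,\|\op H_I\|^{\beta_I-1}\,\bigl\|[\op H_J,\op H_I]\bigr\|.
\end{align}
Using $\op H_I=\sum_i\op h_i^I$ with $N$ terms of unit norm, clearly $\|\op H_I\|\le N$. For the remaining commutator, I expand bilinearly: $[\op H_J,\op H_I]=\sum_{i,j}[\op h_i^J,\op h_j^I]$. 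The sparsity hypothesis states that each of the $N$ terms $\op h_i^J$ fails to commute with at most $\n$ terms in $\op H_I$, so at most $N\n$ summands are nonzero; each has norm bounded by $2$ by the triangle inequality on $[\cdot,\cdot]$. Hence $\|[\op H_J,\op H_I]\|\le 2\n N$.

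Chaining the three inequalities gives $\|[\op H_J,\op H_I^{\beta_I}]\|\le 2\n\,\beta_I\,N^{\beta_I}$, and multiplying by $B_p^{\beta_I+1}$ yields the claim. There is no serious obstacle here; the only subtlety is the bookkeeping between the joint stage-layer index $J\in[S]\times[M]$ used in \cref{lem:trotter-tech-comm-ap} and the underlying Hamiltonian layer it refers to, but since every such index corresponds to a single $\op H_i$ with a single Trotter coefficient, both the coefficient bound and the sparsity hypothesis apply without modification.
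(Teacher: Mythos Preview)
Your proof is correct and arrives at the same bound as the paper, but the organisation differs. The paper first isolates a single term $\op h^{J}$ from $\op H_J$ (picking up a factor $N B_p$), then fully expands $\op H_I^{\beta_I}$ into $N^{\beta_I}$ monomials and counts those containing at least one factor that fails to commute with $\op h^{J}$; a union-bound over the $\beta_I$ positions gives at most $\n\,\beta_I N^{\beta_I-1}$ such monomials, each contributing $\le 2$ to the norm. Your route instead invokes the derivation identity $[A,B^{m}]=\sum_{k=0}^{m-1}B^{k}[A,B]B^{m-1-k}$ to reduce everything to the single-commutator bound $\|[\op H_J,\op H_I]\|\le 2\n N$ together with $\|\op H_I\|\le N$. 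The underlying combinatorics is identical---your $\beta_I$ summands in the Leibniz expansion are exactly the paper's choice of ``which slot holds the non-commuting factor''---but your packaging is cleaner and avoids the slightly informal over-counting narrative, while the paper's explicit expansion makes the counting structure more visible.
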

\begin{proof}
First note that
\begin{align}
\op U_{J}^{\left(1\right)}\left(0\right) = -\ii \tcoeff_{J} \op H_J = -\ii \tcoeff_{J} \sum_{i=1}^{N}  \op h^{J}_{i}
\end{align}
and
\begin{align}
\op U_{I}^{\left(\beta_I\right)}\left(0\right) = \left(-\ii \tcoeff_{I} \op H_I\right)^{\beta_I} = \left(-\ii \tcoeff_{I}\right)^{\beta_I}\left( \sum_{i=1}^{N} \op h^{I}_{i}\right)^{\beta_I}.
\end{align}

Consider a fixed term in $\op U_{J}^{\left(1\right)}\left(0\right)$ such as $-\ii \tcoeff_J \op h^{J}$, where we have dropped the subscript $i$. As there are $N$ of these, we can bound the norm of the commutator as follows
\begin{align}
    \left\|\left[\op U_{J}^{\left(1\right)}\left(0\right) , \op U_I^{\left(\beta_I\right)}\left( 0\right)\right]\right\| & \leq N B_p \left\|\left[ \op h^{J} , \op U_I^{\left(\beta_I\right)}\left( 0\right)\right]\right\|.
\end{align}
Where we have also used the triangle inequality and the fact that $\tcoeff_J \leq B_p$.

Now consider fully expanding  the $\op U_I^{\left(\beta_I\right)}$ so that it is a sum of $N^{\beta_I}$ norm-$1$ Hamiltonians with coefficients upper-bounded by $\left(B_p\right)^{\beta_I}$. As only
$\n$ of the $N$ normalised Hamiltonians do not commute with $\op h^{J}$, the number of Hamiltonians in the expanded $\op U_{I}^{\left(\beta_I\right)}$ which do not commute with $\op h^{J}$ can be upper-bounded by $\n \beta_I N^{\beta_I -1}$. Here we have assumed that if any of the $\n$ non-commuting terms appear at any point in the expansion (the $\n$), then that term will not commute with $\op h^{J}$ regardless of whatever other terms appear (the $N^{\beta_I -1}$). We can over-count by repeating this for each term expanded (the $\beta_{I}$). This gives
\begin{align}
    \left\|\left[\op U_{J}^{\left(1\right)}\left(0\right), \op U_I^{\left(\beta_I\right)}\left( 0\right)\right]\right\| & \leq 2 \n \beta_I N^{\beta_I} B_{p}^{\beta_I + 1}.
\end{align}
The extra factor of $2$ comes from bounding the commutators of the norm~1 Hamiltonians via triangle inequality.
\end{proof}

\begin{lemma}\label{lem:R-bound-Comms-ap}
If every pair of Hamiltonians can be written as $\op H_I = \sum_{i=1}^{N} \op h^{I}_{i}$ and $\op H_J = \sum_{i=1}^{N} \op h^{J}_{i}$, where all $\| \op h^{I}_{i} \|=\| \op h^{J}_{i} \|=1$, and if additionally for any fixed term $\op h^{J}$ there are at most $\n$ terms $\op h^{I}$ which do not commute with $\op h^{J}$, then
\begin{align}
    \left\| \frac{\partial^p}{\partial\tau^p}\op R\left(\tau\right)\right\| &\leq  \n p B_p^{p+1} \Lambda^{p-1} N \left(\left(SM-1\right)+\frac{N}{\Lambda}\right)^{p-1} \left(\left(SM\right)^{2}-\left(SM\right)\right) \\
    &+ \n \tau B_p^{p+2} \Lambda^{p} N  \ee^{\tau N B_p} \left(\left(SM\right)^{p+2}-\left(SM\right)^{p+1}\right).
\end{align}
\end{lemma}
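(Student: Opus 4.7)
The plan is to start from the expression in \cref{lem:trotter-tech-comm-ap} and derive a bound on the commutator $\|[\op U_{J}^{(1)}(0), \op U_I^{(\beta_I)}(\tau)]\|$ valid for arbitrary $\tau \ge 0$, thereby extending \cref{lem:zero-time-commutator-bounds-ap} which only treats $\tau = 0$. The key observation I would exploit is that $\op U_I(\tau) = \ee^{-\ii \tau \tcoeff_I \op H_I}$ is a one-parameter group, so $\op U_I^{(\beta_I)}(\tau) = \op U_I^{(\beta_I)}(0)\, \op U_I(\tau)$. Applying the Leibniz-type commutator identity $[A, BC] = [A,B]C + B[A,C]$ then splits the commutator as
\begin{equation*}
  [\op U_{J}^{(1)}(0), \op U_I^{(\beta_I)}(\tau)] \;=\; [\op U_{J}^{(1)}(0), \op U_I^{(\beta_I)}(0)]\, \op U_I(\tau) \;+\; \op U_I^{(\beta_I)}(0)\, [\op U_{J}^{(1)}(0), \op U_I(\tau)].
\end{equation*}
The first summand reduces directly to \cref{lem:zero-time-commutator-bounds-ap} using unitarity $\|\op U_I(\tau)\| = 1$, producing the contribution $2\n \beta_I N^{\beta_I} B_p^{\beta_I+1}$.

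For the second summand I would Taylor-expand $\op U_I(\tau) = \sum_{k=0}^\infty \frac{\tau^k}{k!}\, \op U_I^{(k)}(0)$, use linearity to pull the series through the commutator, apply \cref{lem:zero-time-commutator-bounds-ap} term by term (noting that the $k=0$ term vanishes), and recognise
\begin{equation*}
  \sum_{k=1}^\infty \frac{\tau^k}{k!}\, 2\n k\, N^k B_p^{k+1} \;=\; 2\n B_p^2\, \tau N \sum_{k=1}^\infty \frac{(\tau N B_p)^{k-1}}{(k-1)!} \;=\; 2\n B_p^2\, \tau N\, \ee^{\tau N B_p}.
\end{equation*}
Multiplying by the prefactor $\|\op U_I^{(\beta_I)}(0)\| \le (B_p \Lambda)^{\beta_I}$ then gives the second contribution $2\n B_p^{\beta_I+2} \Lambda^{\beta_I} \tau N\, \ee^{\tau N B_p}$. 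Substituting the resulting two-term bound back into \cref{lem:trotter-tech-comm-ap} yields two independent multinomial sums.

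For the $\tau$-independent piece the summand carries a factor of $\beta_I$, so I would use the identity $\binom{p}{\beta_I}\beta_I = p\binom{p-1}{\beta_I - 1}$ together with the auxiliary sum $\sum_{\gamma:\,|\gamma|=p-\beta_I}\binom{p-\beta_I}{\gamma} = (SM-1)^{p-\beta_I}$ (packing the remaining $SM-1$ slots) and the binomial theorem to collapse the multiindex sum to $pN \Lambda^{p-1}\bigl((SM-1) + N/\Lambda\bigr)^{p-1}$. Multiplying by the number of ordered pairs $\binom{SM}{2} = ((SM)^2 - SM)/2$ reproduces the first term of the claim. For the $\tau$-dependent piece the inner bound is \emph{independent} of $\beta_I$ after combining $B_p^{\beta_I + 2}\Lambda^{\beta_I}\cdot (B_p\Lambda)^{p-\beta_I} = B_p^{p+2}\Lambda^p$, so the multinomial sum simplifies by the multinomial theorem to $\sum_{\beta:\,|\beta|=p}\binom{p}{\beta} = (SM)^p$, and multiplying by $\binom{SM}{2}$ recovers the second term. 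The main obstacle will be the Taylor-series step: I will need to justify interchanging the infinite sum with the operator norm (which follows from absolute convergence since $\|\op U_I^{(k)}(0)\| \le (B_p \Lambda)^k$) and track the combinatorial derivative identity so the exponential emerges with exactly the prefactor $2\n B_p^2 \tau N$; from there the rest is careful multinomial bookkeeping where the main pitfall is avoiding double-counting the pair $(J,I)$ or losing a power of $B_p$ or $\Lambda$.
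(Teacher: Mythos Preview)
Your proposal is correct and follows essentially the same route as the paper: split $\op U_I^{(\beta_I)}(\tau)=\op U_I^{(\beta_I)}(0)\,\op U_I(\tau)$, apply \cref{lem:zero-time-commutator-bounds-ap} to the first piece, Taylor-expand the second piece and resum to the exponential, then feed the two-term bound into \cref{lem:trotter-tech-comm-ap}. The only cosmetic difference is that the paper evaluates $\sum_{\beta}\binom{p}{\beta}\beta_I x^{\beta_I}$ via the derivative trick $x\,\tfrac{d}{dx}(SM-1+x)^p$, whereas you use the equivalent identity $\binom{p}{\beta_I}\beta_I=p\binom{p-1}{\beta_I-1}$ followed by the binomial theorem; both give $p x (SM-1+x)^{p-1}$.
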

\begin{proof}
We must obtain a simplified form for the bounded commutator appearing in \cref{lem:trotter-tech-comm-ap}. We can sequentially expand this commutator and use the triangle inequality to write it as
\begin{align}
    \left\|\left[\op U_{J}^{\left(1\right)}\left(0\right) , \op U_I^{\left(\beta_I\right)}\left(\tau\right)\right]\right\| &= \left\|\left[\op U_{J}^{\left(1\right)}\left(0\right) , \op U_I^{\left(\beta_I\right)}\left(0\right) \ee^{\ii \tau \tcoeff_I \op H_{I}} \right]\right\|
    \\ &\leq   \left\|\left[\op U_{J}^{\left(1\right)}\left(0\right) , \op U_I^{\left(\beta_I\right)}\left(0\right)\right]\right\| + B_p^{\beta_{I}} \Lambda^{\beta_I} \left\|\left[\op U_{J}^{\left(1\right)}\left(0\right) , \ee^{\ii \tau \tcoeff_I \op H_{I}} \right]\right\|.
\end{align}
We can use \cref{lem:zero-time-commutator-bounds-ap} to bound the first term.
The commutator in the second term can be bounded as follows:
\begin{align}
    \left\|\left[\op U_{J}^{\left(1\right)}\left(0\right) , \ee^{\ii \tau \tcoeff_I \op H_{I}} \right]\right\| &= \left\|\left[\op U_{J}^{\left(1\right)}\left(0\right) , I+ \ii \tau \tcoeff_I \op H_{I}+\frac{1}{2!} \left(\ii \tau \tcoeff_I \op H_{I}\right)^2 + \ldots \right]\right\| \\
    &\leq \sum_{k=1}^{\infty} \frac{\tau^k}{k!}\left\| \left[\op U_{J}^{\left(1\right)}\left(0\right) , \op U_{I}^{\left(k\right)}\left(0\right) \right] \right\| \\
     &\leq \sum_{k=1}^{\infty}  \frac{2 \n \tau^k}{\left(k-1\right)!}  N^{k} B_{p}^{k + 1} \\
     &= 2 \n \tau N B^{2}_p \ee^{ N B_p \tau} .
\end{align}
Where we have used \cref{lem:zero-time-commutator-bounds-ap} to bound the norm of the commutator of $\op U_{J}^{\left(1\right)}\left(0\right) $ and $ \op U_{I}^{\left(k\right)}\left(0\right) $ by $2 \n k N^{k} B^{k+1}_p$ and simplified the resulting expression.
The first term can be bounded directly with \cref{lem:zero-time-commutator-bounds-ap}, so we obtain
\begin{align}
        \left\|\left[\op U_{J}^{\left(1\right)}\left(0\right) , \op U_I^{\left(\beta_I\right)}\left(\tau\right)\right]\right\| & \leq  2 \n \beta_I N^{\beta_I} B^{\beta_I + 1}_p + 2 \n \tau N \Lambda^{\beta_I} B^{2+\beta_I}_p \ee^{N B_p  \tau}.
\end{align}
Now by using this to bound the result of \cref{lem:trotter-tech-comm-ap} we obtain
\begin{align}
    \left\| \frac{\partial^p}{\partial\tau^p}\op R\left(\tau\right)\right\| &\leq \sum_{J} \sum_{\beta:\,|\beta| = p} \binom{p}{\beta} \sum_{I=J+1}^{SM}  \left(B_p \Lambda\right)^{p-\beta_{I}} \left\|\left[\op U_{J}^{\left(1\right)}\left(0\right) \ , \op U_I^{\left(\beta_I\right)}\left( \tau\right)\right]\right\| \\
    &\leq \sum_{J} \sum_{\beta:\,|\beta| = p} \binom{p}{\beta} \sum_{I=J+1}^{SM}  \left(B_p \Lambda\right)^{p-\beta_{I}} \left( 2 \n \beta_I N^{\beta_I} B^{\beta_I + 1}_p + 2 \n \tau N \Lambda^{\beta_I} B^{2+\beta_I}_p \ee^{N B_p  \tau} \right)\\
    &= \sum_{J} \sum_{\beta:\,|\beta| = p} \binom{p}{\beta} \sum_{I=J+1}^{SM}\left( 2 \n \beta_I \left(\frac{N}{\Lambda}\right)^{\beta_I} \Lambda^{p}B^{p+1}_p + 2 \n \tau N \Lambda^{p} B^{p+2}_p \ee^{N B_p  \tau} \right).
\end{align}
To simplify this expression, we must simplify an expression of the form
\begin{align}
    \sum_{\beta:\,|\beta| = p} \binom{p}{\beta} \beta_I x^{\beta_I}
\end{align}
where in our case $x= N/\Lambda$.
This can be done by rewriting this expression in terms of a derivative with respect to $x$ and reversing the multinomial theorem, which gives
\begin{align}
    \sum_{\beta:\,|\beta| = p} \binom{p}{\beta} \beta_I x^{\beta_I} &= x\frac{d}{dx}  \sum_{\beta:\,|\beta| = p}  \binom{p}{\beta} x^{\beta_I} \\
    &= x\frac{d}{dx} \left(\underbrace{1+\ldots+1+x}_{SM \ \text{terms}}\right)^{p}\\
    &=p x \left( SM - 1 + x\right)^{p-1}.
\end{align}
Using this and performing the summation over $J$ and $I$ simplifies the expression for $\left\| \op R^{\left(p\right)}\left(\tau \right) \right\|$ to
\begin{align}
    \left\| \frac{\partial^p}{\partial\tau^p}\op R\left(\tau\right)\right\| &\leq p \n  B_p^{p+1} \Lambda^{p-1} N \left(SM-1+\frac{N}{\Lambda}\right)^{p-1} \left(\left(SM\right)^{2}-\left(SM\right)\right) \\
    &+  \tau \n B_p^{p+2} \Lambda^{p} N   \left(\left(SM\right)^{p+2}-\left(SM\right)^{p+1}\right) \ee^{\tau N B_p}.
\end{align}
\end{proof}

Now we can use the preceding lemmas to establish a commutator bound for higher order Trotter formulae. Although it is cumbersome looking, it is easy to evaluate.
\begin{theorem}[Commutator Error Bound]\label{th:Trotter-Er-Commutator-ap}
  Let $\op H = \sum_{i=1}^{M} \op H_i$ with $\| \op H_i \| \leq \Lambda$ be a Hamiltonian with $M$ mutually commuting layers $\op H_I = \sum_{i=1}^{N} \op h^{I}_{i}$.
  Assume that for any $i$, $\| \op h^{I}_{i} \|=\| \op h^{J}_{i} \| \leq 1$.
  Additionally, assume that for any fixed term $\op h^{J}$ there exist at most $\n$ terms $\op h^{I}$ which do not commute with $\op h^{J}$.

  Then, for a $p$\textsuperscript{th} order product formula $\P_p$ with $p=1$ or $p=2k$, $k\ge1$ used to approximate the evolution operator under $H$, the approximation error for the exact evolution $U(T)$ with $T/\delta$ rounds of the product formula $\P_p\left(\delta\right)$ is bounded by
  \begin{align}
    \epsilon_p \left(T,\delta \right) \leq  C_1 \frac{T \delta^{p}}{\left(p+1\right)!} +  C_2 \frac{T}{\delta} \int_0^\delta p \int_0^1 \left(1-x\right)^{p-1}  \frac{x \tau^{p+1}}{p!} \ee^{x \tau N B_p} \dd x \dd\tau
  \end{align}
  with
  \begin{align}
    C_1 &= \n p B_p^{p+1} \Lambda^{p-1} N \left(\left(SM-1\right)+\frac{N}{\Lambda}\right)^{p-1} \left(\left(SM\right)^{2}-\left(SM\right)\right)\\
    C_2 &=\n  B_p^{p+2} \Lambda^{p} N   \left(\left(SM\right)^{p+2}-\left(SM\right)^{p+1}\right).
  \end{align}
\end{theorem}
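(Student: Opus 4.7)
The plan is to combine the Taylor-remainder integral representation of the Trotter error from equation (\ref{eq:higher-trotter-error-1-ap}) with the commutator-based bound on $\|R^{(p)}(\tau)\|$ established in Lemma \ref{lem:R-bound-Comms-ap}. No new ideas beyond these two ingredients are required; the theorem is obtained by direct substitution and a single Beta-function integration.

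First I would reduce to a single Trotter step via the linear error accumulation $\epsilon_p(T,\delta) \leq (T/\delta)\,\epsilon_p(\delta)$, which is already justified in the discussion preceding equation (\ref{eq:epsilon-single-step-ap}). Applying equation (\ref{eq:higher-trotter-error-1-ap}) then yields
\begin{equation*}
\epsilon_p(\delta) \leq p \int_0^\delta \int_0^1 (1-x)^{p-1} \left\| R^{(p)}(x\tau) \right\| \frac{\tau^p}{p!}\, dx\, d\tau.
\end{equation*}
Next, I would substitute the two-term bound of Lemma \ref{lem:R-bound-Comms-ap}, evaluated at the argument $s = x\tau$. Writing that bound schematically as $\|R^{(p)}(s)\| \leq C_1 + C_2 \cdot s \cdot e^{sNB_p}$ with the explicit constants $C_1$ and $C_2$ of the theorem, the constant piece matches $C_1$ verbatim, while the $s$-dependent piece becomes $C_2 \cdot x\tau \cdot e^{x\tau NB_p}$ under the substitution.

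The constant contribution integrates immediately using $p\int_0^1 (1-x)^{p-1} dx = 1$ and $\int_0^\delta \tau^p/p!\, d\tau = \delta^{p+1}/(p+1)!$, producing $C_1\,\delta^{p+1}/(p+1)!$; multiplying by $T/\delta$ recovers the first term $C_1 T\delta^p/(p+1)!$ of the claim. The second contribution is already in its final form: carrying the factor $x\tau \cdot e^{x\tau NB_p}$ through the outer integral and multiplying by $T/\delta$ reproduces exactly the double integral with the $C_2$ prefactor given in the statement. The only ``obstacle'' is purely bookkeeping: one must track that the factor of $p$ absorbed into $C_1$ comes from the lemma's bound rather than from the outer Taylor-remainder prefactor, and propagate the substitution $s=x\tau$ consistently through the exponential so that both $x$ and $\tau$ survive in the final integrand. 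No analytic difficulty arises beyond these substitutions.
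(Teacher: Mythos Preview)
Your proposal is correct and follows exactly the same approach as the paper: invoke \cref{eq:higher-trotter-error-1-ap}, insert the bound from \cref{lem:R-bound-Comms-ap} with argument $x\tau$, integrate the constant piece via the Beta integral, leave the $\tau$-dependent piece as the double integral, and multiply by $T/\delta$. The paper's own proof is in fact less detailed than yours, simply stating that evaluating \cref{eq:higher-trotter-error-1-ap} via \cref{lem:R-bound-Comms-ap} and then using $\epsilon_p(T,\delta)\le (T/\delta)\epsilon_p(\delta)$ gives the result.
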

\begin{proof}
  The error formula for a single Trotter step is given by \cref{eq:higher-trotter-error-1-ap} as
  \begin{align}
    \epsilon_p \left(\delta \right) &\le p \int_0^\delta \int_0^1 \left(1-x\right)^{p-1} \| \op R^{\left(p\right)}\left(x\tau\right)\| \frac{\tau^{p}}{p!} \dd x \dd\tau.
  \end{align}
  Evaluating this using \cref{lem:R-bound-Comms-ap} and then substituting the resultant expression in $\epsilon_p \left(T,\delta \right)\leq (T/\delta) \epsilon_p \left(\delta \right)$ gives the stated expression.
\end{proof}

For later reference, we note that it is straightforward to generalise the error bound in \cref{th:Trotter-Er-Commutator-ap}, by incorporating similar techniques to \cref{cor:trotter-error-ap} in order to sum up the $|\tcoeff_{ij}|$ exactly, instead of simply bounding them by $B_p$.
Additionally, we can also generalise to the case of a \emph{higher} derivative $\op R^{(q)}$, $q\ge p$, but still for a $p$\textsuperscript{th} order formula: with these two generalisations the bound simply reads
\begin{align}
    \epsilon_{p,q} \left(T,\delta \right) \leq  C_1 \frac{T \delta^{q}}{\left(q+1\right)!} +  C_2 \frac{T}{\delta} \int_0^\delta q \int_0^1 \left(1-x\right)^{q-1}  \frac{x \tau^{q+1}}{q!} \ee^{x \tau N B_p} \dd x \dd\tau
\end{align}
with
\begin{align}
    C_1 &= \n q B_p^{2} \Lambda^{q-1} N \left(M H_p - B_p + B_p \left(\frac{N}{\Lambda}\right)\right)^{q-1} \left(\left(S_p M\right)^{2}-\left(S_p M\right)\right)\\
    C_2 &= \n  B_p^{2}\left( M H_p \Lambda\right)^{q} N \left(\left(S_p M\right)^{2}-\left(S_p M\right)\right).
\end{align}

\subsection*{A Taylor Bound on the Taylor Bound}\label{ap:taylor-of-taylor}
Another method to obtain a tighter bound on a Taylor expansion as used on $\op R(\tau)$ in \cref{eq:integral-representation-ap} and which can be used together with the more sophisticated commutator-based error bound from \cref{th:Trotter-Er-Commutator-ap} derived in the last section, can be obtained by performing a Taylor expansion of the remainder term, and in turn bounding \emph{its} Taylor remainder by some other method \cite[Rem.~4]{Bausch2013}.

We first establish the following technical lemma:
\begin{lemma}[Taylor Error Bound]\label{lem:taylor-error-bound-ap}
Let the setup be as in \cref{lem:trotter-tech1-ap}, and let $q>p$.
The error term $\epsilon$ from \cref{eq:trotter-epsilon-ap} satisfies
\begin{align}
    \epsilon_p(\delta) &\le \sum_{l=p}^q \frac{\delta^{l+1}}{(l+1)!} \| \op R^{(l)}(0) \| + \epsilon_{p,q+1}(\delta)\\
\intertext{where}
    \op R(0)^{(l)} &= \sum_{\alpha:\,|\alpha|=p+1} \binom{p+1}{\alpha} \op F(\alpha) + \ii \op H \sum_{\beta:\,|\beta|=p} \binom{p}{\beta} \op F(\beta),\\
\intertext{with $\op H=\sum_{i=1}^M \op H_i$, and}
    \op F(\alpha) :&\!\!= \prod_{j=1}^S\prod_{i=1}^M (-\ii \tcoeff_{ji} \op H_i)^{\alpha_{(i,j)}}.
\end{align}
\end{lemma}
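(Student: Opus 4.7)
The plan is to start from the integral representation derived earlier in \cref{eq:integral-representation-ap}, namely
\[
    \R_p(\delta) = \int_0^\delta \ee^{-\ii(\delta-\tau)\op H}\,\op R(\tau)\,\dd\tau,
\]
and Taylor-expand the integrand $\op R(\tau)$ around $\tau=0$ to order $q$ (strictly beyond the $p$-th derivative that was used in \cref{lem:trotter-tech1-ap}). Writing the expansion with integral remainder,
\[
    \op R(\tau)=\sum_{l=0}^{q}\frac{\tau^{l}}{l!}\op R^{(l)}(0)+\op S_{q+1}(\tau,0),
\]
I would then invoke the order condition from \cref{rem:order-conditions-ap}, which guarantees that $\op R^{(j)}(0)=0$ for all $0\le j\le p-1$, so that only the terms with $l\ge p$ survive in the finite sum.

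Next, since $\ee^{-\ii(\delta-\tau)\op H}$ is unitary and hence norm-preserving, I would apply the triangle inequality in the form $\|\R_p(\delta)\|\le \int_0^\delta \|\op R(\tau)\|\,\dd\tau$ and integrate the surviving Taylor terms explicitly, using $\int_0^\delta \tau^{l}/l!\,\dd\tau=\delta^{l+1}/(l+1)!$. This produces the main sum
\[
    \sum_{l=p}^{q}\frac{\delta^{l+1}}{(l+1)!}\|\op R^{(l)}(0)\|,
\]
while the integrated Taylor remainder is collected into a single term that we identify with $\epsilon_{p,q+1}(\delta)$; concretely, one can use the integral form $\op S_{q+1}(\tau,0)=(q+1)\int_0^1(1-x)^{q}\op R^{(q+1)}(x\tau)\,\tau^{q+1}/(q+1)!\,\dd x$, and bound the resulting double integral via the same $(q+1)$-th-derivative machinery of \cref{cor:trotter-error-ap} or \cref{th:Trotter-Er-Commutator-ap}, generalised to the higher derivative $\op R^{(q+1)}$ as flagged after \cref{eq:p-q-error-1-ap}.

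Finally, to obtain the claimed closed form for the coefficients $\op R^{(l)}(0)$, I would differentiate the defining relation $\op R(\tau)=\P'(\tau)+\ii \op H\,\P(\tau)$ a total of $l$ times, yielding $\op R^{(l)}(\tau)=\P^{(l+1)}(\tau)+\ii \op H\,\P^{(l)}(\tau)$. Evaluating at $\tau=0$ and reusing the multi-index differentiation of products that gave \cref{eq:P-multiindex-ap} in the proof of \cref{lem:trotter-tech1-ap}, together with the identity $\sum_{j=1}^{S}\tcoeff_{ji}=1$ (\cref{rem:bounds-on-trotter-coefficients-ap}) to recognise $\sum_{I}\op U_I^{(1)}(0)=-\ii \op H$, recovers precisely the stated expression in terms of the operator $\op F(\alpha)=\prod_{j,i}(-\ii\tcoeff_{ji}\op H_i)^{\alpha_{(i,j)}}$ (with the indices in the binomial sums read as $l+1$ and $l$, respectively). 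The main obstacle is bookkeeping rather than conceptual: ensuring the order-vanishing argument is invoked at the right step so that the low-$l$ derivatives genuinely drop out, and that the $\epsilon_{p,q+1}$ term is set up with enough structure that the subsequent corollary (\cref{cor:taylor-error-bound-ap}) can evaluate it using the already-derived commutator bounds.
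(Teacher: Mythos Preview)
Your proposal is correct and follows essentially the same approach as the paper: Taylor-expand $\op R(\tau)$ to order $q$ rather than $p$, invoke the order conditions from \cref{rem:order-conditions-ap} to kill the $l<p$ terms, integrate the surviving polynomial terms in $\tau$, package the integral remainder as $\epsilon_{p,q+1}$, and read off $\op R^{(l)}(0)$ by setting $\tau=0$ in the multi-index formula \cref{eq:R-bound-ap}. Your parenthetical remark that the binomial indices in the stated formula should be read as $l+1$ and $l$ (rather than $p+1$ and $p$) is also on point; this is indeed what the derivation yields.
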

\begin{proof}
  The expression for $\epsilon$ stems from Taylor-expanding \cref{eq:higher-trotter-error-1-ap} to order $q$ instead of $p$, and integrating over $\tau$.
  The last term $\epsilon_{q+1,p}$ is then simply the overall remainder $\op S$, as before; and we can use \cref{eq:p-q-error-1-ap} to obtain a bound on it.
  The bound on $\| \op R(0)^{(l)} \|$ is an immediate consequence of \cref{eq:R-bound-ap}, where we set $\tau=0$.
\end{proof}
This allows us to calculate a numerical bound on $\| \op R(0)^{(l)} \|$, by bounding $\| \op H_i \| \le \Lambda$ and allowing terms within the two sums over $\alpha$ and $\beta$ to cancel.
The benefit of this approach is that it is generically applicable to any given Trotter formula, and only depends on the non-commuting layers of $\op H$.

We can therefore derive the following bounds:
\begin{corollary}[Taylor Error Bound]\label{cor:taylor-error-bound-ap}
Let $\op H=\sum_{i=1}^M \op H_i$ with $\| \op H_i \| \le \Lambda$ for all $i$.
Then for $\epsilon$ from \cref{eq:higher-trotter-error-1-ap}, and for a $p$\textsuperscript{th} Trotter formula, we have
\begin{align}
    \epsilon_p(\delta) &\le \sum_{l=p}^q \frac{\delta^{l+1}\Lambda^{l+1}}{(l+1)!} f(p,M,l)
    + \epsilon_{p,q+1}(\delta),\\
\intertext{where}
    f(p,M,l) &= \left\| \sum_{\alpha:\, |\alpha|=p+1} \binom{p+1}{\alpha} \op v(\alpha) + \ii \sum_{j=1}^M \ket j \otimes \sum_{\beta:\,|\beta|=p} \binom{p}{\beta} \op v(\beta) \right\|_1,\\
\intertext{and}
    \op v(\alpha) :&\!\!= \bigotimes_{j=1}^S \bigotimes_{i=1}^M(-\ii \tcoeff_{ji} \ket i)^{\otimes \alpha_{(i,j)}}.
\end{align}
for a basis $\{ \ket1, \ldots, \ket M \}$ of $\field C^M$.
\end{corollary}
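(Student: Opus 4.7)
The plan is to combine Lemma \ref{lem:taylor-error-bound-ap} with a ``lifting'' argument that bounds $\|\op R^{(l)}(0)\|$ by $\Lambda^{l+1} f(p,M,l)$; substituting this bound into the expression from Lemma \ref{lem:taylor-error-bound-ap} then immediately yields the claimed inequality for $\epsilon_p(\delta)$, with the tail $\epsilon_{p,q+1}(\delta)$ carried over unchanged. So the only real work is to establish the operator-norm bound on each $\op R^{(l)}(0)$.

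For this I would introduce the linear map $\phi$ from the free unital algebra generated by $\{\op H_1,\ldots,\op H_M\}$ into the graded tensor algebra $\bigoplus_{k\ge0}(\field C^M)^{\otimes k}$, defined on homogeneous degree-$k$ monomials by $\op H_{i_1}\op H_{i_2}\cdots\op H_{i_k}\mapsto \ket{i_1}\otimes\ket{i_2}\otimes\cdots\otimes\ket{i_k}$ and extended linearly. The essential inequality is that for any element $X$ homogeneous of degree $k$,
\begin{align}
  \|X\|\;\le\;\Lambda^{k}\,\|\phi(X)\|_1,
\end{align}
where $\|\cdot\|_1$ denotes the $\ell^1$ norm with respect to the orthonormal product basis of $(\field C^M)^{\otimes k}$. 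This follows from $\|\op H_i\|\le\Lambda$, submultiplicativity of the operator norm, and the triangle inequality applied to the basis expansion of $\phi(X)$. Crucially, distinct multiindex terms that produce the same ordered word of generators are aggregated into a single basis coefficient in $\phi(X)$, so the $1$-norm is in general strictly smaller than the naive term-wise sum of magnitudes and thereby captures the cancellations between the non-commuting Trotter stages alluded to in the statement.

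The second step is to compute $\phi$ on the explicit formula for $\op R^{(l)}(0)$ provided by Lemma \ref{lem:taylor-error-bound-ap}. Directly from the definitions, $\phi(\op F(\alpha)) = \op v(\alpha)$ with the scalar prefactor $\prod_{j,i}(-\ii \tcoeff_{ji})^{\alpha_{ij}}$ carried through intact, so the first sum maps to $\sum_\alpha\binom{p+1}{\alpha}\op v(\alpha)$. For the second sum, expand $\op H=\sum_{j=1}^M\op H_j$ and observe that left multiplication by each $\op H_j$ corresponds under $\phi$ to tensoring $\ket j$ on the outermost (left) slot, yielding $\ii\sum_j\ket j\otimes\sum_\beta\binom{p}{\beta}\op v(\beta)$. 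Adding the two contributions reproduces exactly the vector whose $1$-norm defines $f(p,M,l)$, so combining with the lifted-norm inequality gives $\|\op R^{(l)}(0)\|\le\Lambda^{l+1}f(p,M,l)$.

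The main obstacle I anticipate is not analytic but notational: one must verify that the fixed product order in $\prod_I$ used to define $\op F(\alpha)$ matches the tensor order used in $\op v(\alpha)$, and that the tensor factor $\ket j$ in the second term lands on the side consistent with left multiplication by $\op H_j$ in the derivation of $\op R^{(l)}(0)$. Once these ordering conventions are aligned with those of Lemma \ref{lem:taylor-error-bound-ap}, the rest is mechanical substitution.
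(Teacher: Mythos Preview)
Your proposal is correct and is precisely the argument the paper has in mind. The paper's proof consists of the single line ``Follows immediately from \cref{lem:taylor-error-bound-ap}'', and your lifting map $\phi$ from words in $\op H_1,\ldots,\op H_M$ to basis tensors in $(\field C^M)^{\otimes k}$, together with the bound $\|X\|\le\Lambda^k\|\phi(X)\|_1$, is exactly the content implicit in that step; you have simply unpacked what ``immediately'' means.
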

\begin{proof}
Follows immediately from \cref{lem:taylor-error-bound-ap}.
\end{proof}
A selection of the series coefficients $f(p,M,l)$ can be found in \cref{tab:coefficients-ap}.
\begin{table}[t]
    \centering
\begin{tabular}{@{}llllllll@{}}
\toprule
 &  & \multicolumn{6}{c}{$f(p,M,l)$ for $l=\cdot$}   \\
 & M & l=p & l=p+1 & l=p+2 & l=p+3 & l=p+4 & l=p+5 \\ \midrule
\multirow{4}{*}{p=1} & 2 & 2 & 6 & 14 & 30 & 62 & 126 \\
 & 3 & 6 & 26 & 90 & 290 & 906 & 2786 \\
 & 4 & 12 & 68 & 312 & 1340 & 5592 & 22988 \\
 & 5 & 20 & 140 & 800 & 4292 & 22400 & 115220 \\ \midrule
\multirow{4}{*}{p=2} & 2 & 3 & 9 & 22.75 & 50 &  108.344 & 225.531 \\
 & 3 & 13 & 57 & 213.25 & 711.25 & 2309.47 & 7283.06 \\
 & 4 & 34 & 198 & 980.5 & 4377.5 & 18926.6 & 79758 \\
 & 5 & 70 & 510 & 3141.5 & 17555 & 94765.3 & 499391 \\ \midrule
\multirow{4}{*}{p=4} & 2 & 4.89745 & 19.5277 & 79.5305 & 442.266 & 2312.73 & 11208.3 \\
 & 3 & 43.6604 & 277.994 & 1880.62 & 16924.7 &  &  \\
 & 4 & 194.476 & 1719.69 & 16226.8 &  &  &  \\
 & 5 & 610.187 & 6926.95 & 83775.9 &  &  &  \\ \bottomrule
\end{tabular}
    \caption{Trotter error coefficients $f(p,M,l)$ from \cref{cor:taylor-error-bound-ap}; values rounded to the precision shown.}
    \label{tab:coefficients-ap}
\end{table}
\Cref{cor:taylor-error-bound-ap} can then be applied in conjunction with e.g.\ the commutator error bound given in \cref{th:Trotter-Er-Commutator-ap} for the remaining term $\epsilon_{q+1}(\delta, \delta)$.

\section*{Spectral Norm of Fermionic Hopping Terms}\label{ap:normbound}
\def\ad{a^{\dagger}}
Let $\ad$ and $a$ be the standard fermionic creation and annihilation operators.
\begin{theorem}\label{thm:norm-bound}
  Let $\Omega = \{ij \}$ be a set of pairs of indices such that no two pairs share an index. Define:
  \begin{align}
    H_{\Omega} = \sum_{ij \in \Omega} h_{ij} \;,\;\; h_{ij}=\ad_i a_j + \ad_j a_i
  \end{align}
  Given a normalized fermionic state $\ket{\psi}$ such that $N \ket{\psi} = n \ket{\psi}$:
  \begin{align}
    \vert \bra{\psi} H_{\Omega} \ket{\psi} \vert \leq \textrm{min}(n, M-n, \vert \Omega \vert)
    \end{align}
  Where $M$ is the number of fermionic modes.
  This bound is tight.
\end{theorem}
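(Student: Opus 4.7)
My approach rests on the observation that the hopping terms $h_{ij}$ for $ij \in \Omega$ mutually commute, since the pairs in $\Omega$ are mode-disjoint. I would start by diagonalising a single $h_{ij}$ within the four-dimensional Fock subspace of modes $\{i,j\}$: it annihilates $\ket{00}_{ij}$ and $\ket{11}_{ij}$ and has eigenvalues $\pm 1$ on the singly-occupied subspace $\mathrm{span}\{\ket{10}_{ij},\ket{01}_{ij}\}$, with eigenvectors $(\ket{10}\pm\ket{01})/\sqrt 2$. Since $\{h_{ij}\}_{ij\in\Omega}$ is a commuting family of occupation-preserving operators on disjoint modes, they share a common eigenbasis in which each pair is independently in one of these four local eigenstates and each of the $M-2|\Omega|$ unpaired modes is in a definite occupation $\ket 0$ or $\ket 1$.

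Next, I would observe that any such joint eigenvector has a sharp total fermion number and its $H_\Omega$ eigenvalue is $\sum_{ij\in\Omega}\epsilon_{ij}$ with $\epsilon_{ij}\in\{-1,0,+1\}$, where $\epsilon_{ij}=\pm 1$ exactly when the pair is half-filled. Writing $k$ for the number of half-filled pairs in a given joint eigenvector, the eigenvalue has magnitude at most $k$. In the sector $N\ket\psi = n\ket\psi$ three constraints limit $k$: trivially $k\leq |\Omega|$; each half-filled pair contains one fermion so $k\leq n$; and each half-filled pair contains one hole so $k\leq M-n$. Expanding an arbitrary such $\ket\psi$ in this joint eigenbasis and applying the triangle inequality then yields $|\bra{\psi}H_\Omega\ket{\psi}|\leq \min(n, M-n, |\Omega|)$.

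For tightness, set $k^\star = \min(n, M-n, |\Omega|)$ and construct a saturating state explicitly. Pick $k^\star$ pairs of $\Omega$ and put each in the $+1$ eigenstate $(\ket{10}+\ket{01})/\sqrt 2$; place the remaining $n-k^\star$ fermions into doubly-occupied pairs from $\Omega$ and into unpaired modes until the total reaches $n$. The capacity for these additional fermions is $2(|\Omega|-k^\star) + (M-2|\Omega|)=M-2k^\star$, and the requirement $n-k^\star \leq M-2k^\star$ is exactly $k^\star \leq M-n$, which holds by the choice of $k^\star$. The resulting joint eigenstate has eigenvalue $+k^\star$, so the bound is saturated.

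The main technical care needed is to confirm that the "independent local eigenstate per pair" picture is genuinely rigorous for fermions rather than qubits. Because the pairs in $\Omega$ are mode-disjoint and each $h_{ij}$ preserves the occupation of its own modes, the anticommutation signs in the fermionic Fock representation do not obstruct simultaneous diagonalisation; a direct calculation using $\{a_j,\ad_k\}=0$ for distinct indices confirms $[h_{ij},h_{kl}]=0$ whenever $\{i,j\}\cap\{k,l\}=\emptyset$, after which everything reduces to the elementary counting argument above.
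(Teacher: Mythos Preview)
Your proposal is correct and follows essentially the same approach as the paper: both arguments hinge on the mutual commutativity of the disjoint $h_{ij}$, their spectrum $\{-1,0,1\}$ with $\pm 1$ supported exactly on singly-occupied pairs, passage to the joint eigenbasis, and the counting $k\le n$, $k\le M-n$, $k\le|\Omega|$. The paper phrases the upper bound as a contradiction and treats the three tightness cases separately (with an explicit check that the fermionic anticommutation signs can be absorbed into the choice of $b_{ij}$), whereas you give a direct argument and a single unified saturating construction; these are presentational rather than substantive differences.
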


\begin{proof}
Consider that $h_{ij}$ has eigenvalues in $\{-1,0,1\}$, since $h_{ij}^2 = (N_i-N_j)^2$ which has eigenvalues $\{0,1\}$. Suppose there existed a normalised state $\ket{\psi}$ such that $N \ket{\psi} = n \ket{\psi}$ and $H_{\Omega} \ket{\psi} = \lambda \ket{\psi}$ where $\vert \lambda \vert > \textrm{min}(n, M-n, \vert \Omega \vert)$ . Since $H_Y$, $N$ and all $h_{ij}$ are all mutually commuting, we may choose $\ket{\psi}$ to be an eigenstate of all $h_{ij}$ wlog (by convexity). Then it must be the case that $h_{ij}^2 \ket{\psi} = \ket{\psi}$ for at least $\vert \lambda \vert$ pairs $ij$, which implies that in the Fock basis $\ket{\psi} = a \ket{0_i, 1_j, ..} + b \ket{1_i, 0_j, ..}  $. Therefore for at least $\vert \lambda \vert$ pairs $ij \in \Omega$  we have $\bra{\psi} (N_i+N_j) \ket{\psi} = 1$. So $\bra{\psi}N \ket{\psi} \geq  \vert \lambda \vert $ and $M-\bra{\psi}N \ket{\psi} \geq  \vert \lambda \vert$. If $\textrm{min}(n, M-n, \vert \Omega \vert) =n$ then $\bra{\psi}N \ket{\psi} >n$ which is a contradiction. If $\textrm{min}(n, M-n, \vert \Omega \vert)  = M-n$ then  $M-\bra{\psi}N \ket{\psi} >M-n$ which is a contradiction. If $\textrm{min}(n, M-n, \vert \Omega \vert) = \vert \Omega \vert$ then $\vert \lambda \vert > \vert \Omega \vert$ which is a contradiction. This proves the bound.

Now we need only show the bound is tight. Consider the following state:
\begin{align}
  \ket{\phi_{ij}^{\pm}}=  (\ad_i\pm\ad_j) \Gamma_{s}\ket{0}
\end{align}
With $\Gamma_{s}$ composed of creation and annihilation operators which do not include $i$ or $j$. This state is an eigenstate of $h_{ij}=\ad_i a_j + \ad_j a_i$:

\begin{align}
  h_{ij} \ket{\phi_{ij}^{\pm}} = \pm  \ket{\phi_{ij}^{\pm}}
\end{align}

Observe:
\begin{align}
h_{ij} \ket{\phi_{ij}^{\pm}} &=  h_{ij}(\ad_i\pm\ad_j) \Gamma_{s} \ket{0}\\
 &= (\ad_i a_j\ad_i + \ad_j a_i \ad_i \pm \ad_i a_j \ad_j \pm \ad_j a_i\ad_j) \Gamma_{s} \ket{0}\\
 &=  ( \ad_j a_i \ad_i \pm \ad_i a_j \ad_j ) \Gamma_{s} \ket{0}\\
&= ( (-\ad_j \ad_i a_i  +\ad_j)\pm (-\ad_i \ad_j a_j + \ad_i) ) \Gamma_{s} \ket{0}\\
 &= (\ad_j\pm  \ad_i ) \Gamma_{s} \ket{0}\\
h_{ij} \ket{\phi_{ij}^{\pm}} &= \pm (\ad_i\pm  \ad_j ) \Gamma_{s} \ket{0}
\end{align}

Consider a set of pairs of indices $\omega \subseteq \Omega$. Choose an ordering on $\omega$ and define
\begin{align}
  \ket{\phi_{\omega}^b} = \prod_{ij \in \omega } (\ad_i +(-1)^{b_{ij}}\ad_j) \ket{0}
\end{align}
  with $b$ a bit-string indexed by $ij$. Note that $N \ket{\phi_{\omega}^b} = \vert \omega \vert \ket{\phi_{\omega}^b}$. We now argue that $b$ can always be chosen such that:
  \begin{align}
    H_{\Omega} \ket{\phi_{\omega}^b} = \vert \omega \vert \ket{\phi_{\omega}^b}.
   \end{align}
Choose a pair $ij \in \omega$, the state $\ket{\phi^b_{\omega}}$ can be expressed as:
\begin{align}
  \ket{\phi^b_{\omega}} = (\delta_i \ad_i + (-1)^{b_{ij}} \delta_j \ad_j) \Gamma_s \ket{0},\; \delta_i,\delta_j \in \{-1,1\}
\end{align}
With $\Gamma_{s}$ composed of creation and annihilation operators which do not include $i$ or $j$. So
\begin{align}
  \ket{\phi^b_{\omega}} = \delta_i \ket{\phi_{ij}^{\Delta_{ij}}}\;\;,\Delta_{ij} = \delta_i \delta_j (-1)^{b_{ij}}.
  \end{align}
Let us choose $b_{ij}$ such that $\Delta_{ij} = 1$. Noting that $\Delta_{ij}$ is independent of $\Delta_{pq}$ when $pq \neq ij$ we can do the same for all other $b_{pq}$. This gives:
\begin{align}
  H_{\Omega} \ket{\phi_{\omega}^b} &=\sum_{ij \in \omega} h_{ij}\delta_i \ket{\phi_{ij}^{+}} \\
H_{\Omega} \ket{\phi_{\omega}^b}&=\sum_{ij \in \omega} \delta_i \ket{\phi_{ij}^{+}}\\
H_{\Omega} \ket{\phi_{\omega}^b}&=\sum_{ij \in \omega} \ket{\phi_{\omega}^b} \\
H_{\Omega} \ket{\phi_{\omega}^b} &= \vert \omega \vert \ket{\phi_{\omega}^b}
\end{align}
Note that $n=\vert \omega \vert < \vert \Omega \vert $ and $\vert \Omega \vert < M/2$ and so the bound is shown to be tight in the case where $\textrm{min}(n, M-n, \vert Y \vert)=n$.

If we consider the case where $\textrm{min}(n, M-n, \vert \Omega \vert)=\vert \Omega \vert$, then we may always choose $\omega$ such that it is composed of a set of pairs of indices such that no two pairs share an index, and such that $\Omega \subseteq \omega$. In this case, by a similar argument
\begin{align}
  H_{\Omega} \ket{\phi_{\omega}^b} = \vert \Omega \vert \ket{\phi_{\omega}^b}.
\end{align}

Finally, in the case where $\textrm{min}(n, M-n, \vert \Omega \vert)=M-n$ one may choose the particle-hole symmetric state
\begin{align}
  \ket{\tilde{\phi}^b_{\omega}} = \prod_{ij \in \omega } (a_i +(-1)^{b_{ij}}a_j) \prod_{k=1}^M \ad_k \ket{0}
\end{align}
and a similar argument follows by particle hole symmetry.
\end{proof}

\section*{Simulating Fermi-Hubbard via Sub-Circuit Algorithms}\label{sec:FHanalysis}
\subsection*{Overview and Benchmarking of Analysis}
In the following sections we primarily adopt the per-time error model and associated metric for costing circuits, \Cref{def:time-cost}. We first establish asymptotic bounds on the run-time $\cost$ of performing a time-dynamics simulation of a 2D spin Fermi-Hubbard Hamiltonian using a $p$\textsuperscript{th}-order Trotter formula with $M=5$ Trotter layers, for a target time $T$ and target error $\epsilon_t$. We perform this analysis for both the compact and VC encodings and the results are summarised in \cref{th:FH-optimum-analogue}.

We first want to compare using sub-circuit vs standard circuit decompositions in a per-time error model.  We do this in conjunction with our Trotter bounds. To this end we establish the analytic bounds for the same simulation task, first using the standard conjugation method to generate evolution under higher weight interactions using only standard CNOT gates and single qubit rotations as opposed to a sub-circuit pulse sequence. We choose this method as it doesn't introduce any unfair and needless analytic error into the comparison. We decompose the Trotter steps into a standard gate set of CNOTs and single-qubit rotations which are gates of the form $\ee^{\pm \ii \pi/4 ZZ}$ up to single qubit rotations. We cost this with the same metric using a per-time error model, but do not allow the comparison to contain any gates of the form $\ee^{\pm \ii \delta ZZ}$ as this would constitute a sub-circuit gate.

In this comparative analytic expression we still use our Trotter error bounds, so \cref{th:FH-optimum-dig} only serves to evaluate the impact of differing Trotter step synthesis methods on the asymptotic scaling of the run-time $\cost$ in a per-time error model.

Later in this section we perform a tighter numerical analysis of both our proposal and our standard circuit model comparison. In these numerics we compare our Trotter bounds to readily applicable bounds from the literature \cite[Prop.~F.4.]{Childs2017}. We point out that these bounds do not exploit the underlying structure of the Hamiltonian or make use of the recent advances of \cite{Childs2019}, \cite{Childsnew2019}. However these bounds contained all constants, were applicable to 2D lattices and could easily be evaluated for arbitrary $p$ and allowed us to make use of \cref{thm:norm-bound}. We were able to compare our bounds to \cite{Childs2019} for the case of a simple 1D lattice and establish that our bounds are preferable for medium system sizes, not in asymptotic limits of system size, as was our intention in reformulating bounds for NISQ applications.

\begin{figure}[t]
    \begin{subfigure}[b]{0.44\linewidth}
    \centering
        \includegraphics[height=4.5cm]{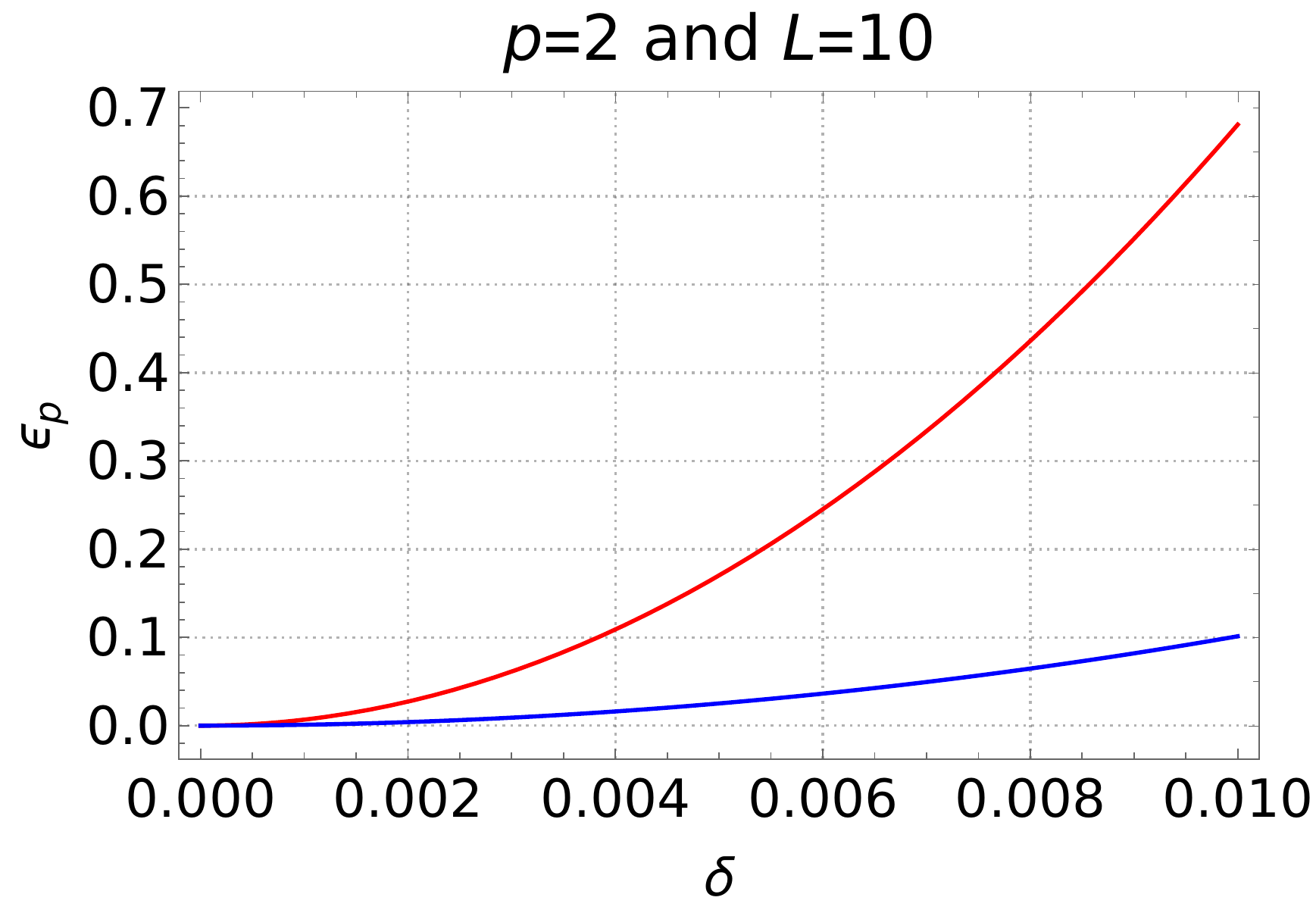}
    \end{subfigure}
    \hspace{1cm}
    \begin{subfigure}[b]{0.44\linewidth}
    \centering
        \includegraphics[height=4.5cm]{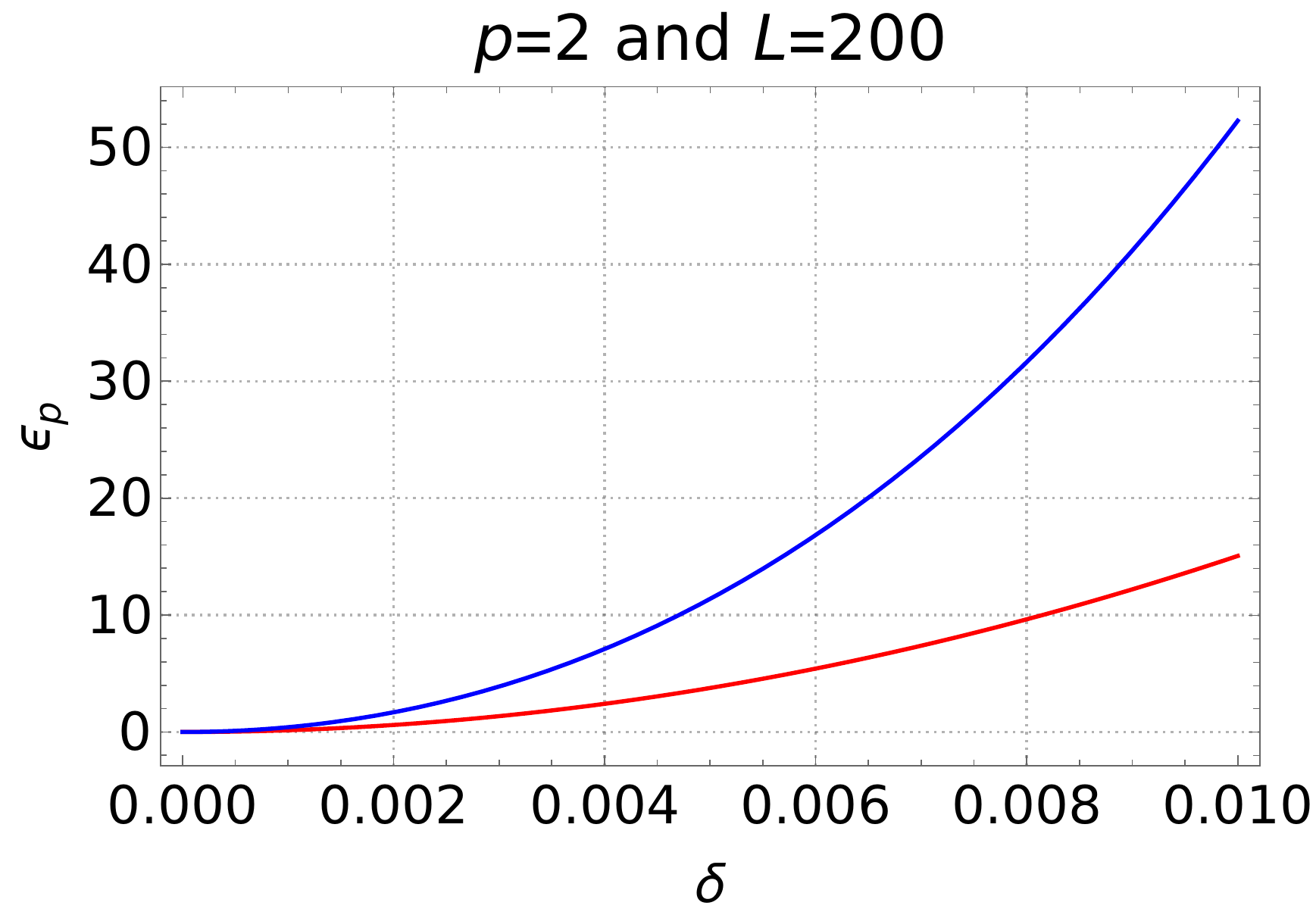}
    \end{subfigure}
    \caption{A comparison of $2$\textsuperscript{nd} order Trotter bounds $\epsilon_p(\delta,L)$ for a $1$D lattice Hamiltonian of length $L$ split into even and odd layers of interactions $H=H_{\text{even}}+H_{\text{odd}}$. The bounds are our \cref{th:Trotter-Er-Commutator-ap} (blue) and the main result stated in \cite{Childs2019} which we've evaluated from \cite[Sec.~B:eq.~57+58]{Childs2019} (red). This illustrates why we have derived Trotter bounds which sacrifice scaling in system size in favour of bounds with smaller constants in term of $p$. Here we evaluate \cref{th:Trotter-Er-Commutator-ap} with $\Lambda = L/2$ as this illustrative example is a generic, not fermionic, lattice Hamiltonian.
    }
    \label{fig:bounds-comparison}
\end{figure}

After this comparison in the framework of a per-time error model, we numerically analyse the impact of sub-circuit techniques in the per-gate error model, calculating the circuit depth of sub-circuit Trotter simulations. This is shown in \Cref{fig:cost-numeric-delta-ignore-pulse-lengths,fig:cost-analytic-delta-ignore-pulse-lengths}.

\subsection*{The Fermi-Hubbard Hamiltonian and Fermionic Encodings}
We consider a Fermi-Hubbard model on a 2D lattice of $N = L \times L$ fermionic sites. There is hopping between nearest neighbours only and on-site interactions between fermions of opposite spin. In terms of fermionic creation and annihilation operators the Hamiltonian for this system is
\begin{equation}\label{def:unencoded-H}
\op H_{\text{FH}} \coloneqq  \sum_{i=1}^{N} \op h_{\text{on-site}}^{(i)} \ + \sum_{i<j,\sigma} \op h_{\text{hopping}}^{(i,j,\sigma)}
\coloneqq  \os \sum_{i=1}^{N}   \ad_{i \uparrow} a_{i \uparrow} \ad_{i \downarrow} a_{i \downarrow}  +  \hop \sum_{i<j,\sigma}   \left(\ad_{i \sigma} a_{j \sigma} + \ad_{j\sigma} a_{i \sigma}\right),
\end{equation}
where $ \sigma \in \{ \uparrow, \downarrow \}$ and the sum over hopping terms runs over all nearest neighbour fermionic lattice sites $i$ and $j$. The interaction strengths are $\os$ and  $\hop$ and we assume that $\hop = 1$, and that they are bounded as $|\hop|,|\os| \leq r$. Before we proceed we have to choose how to encode this Hamiltonian in terms of spin operators. The choice of encoding has a significant impact on the run-time of the simulation. There are many encodings in the literature \cite{Jordan1928} but we will only analyse two, the Verstraete-Cirac (VC) encoding \cite{Verstraete2005}, and the recent compact encoding from \cite{DK}.

We choose our encoding in order to minimise the maximum Pauli weight of the encoded interaction terms. Using the VC and compact encodings this is constant at weight-$4$ and weight-$3$ respectively. In comparison the Jordan-Wigner encoding results in a maximum Pauli weight of the encoded interaction terms that scales with the lattice size as $\BigO\left(L\right)$, the Bravyi-Kitaev encoding \cite{Bravyi2002} has interaction terms of weight $O(\log L)$, and the Bravyi-Kitaev superfast encoding \cite{Bravyi2002} results in weight-$8$.

The encodings require the addition of ancillary qubits as well as two separate lattices encoding spin up and spin down fermions. For VC $4L^2$ qubits are needed to encode $L^2$ fermionic sites. In contrast compact requires $(L-1)^2$ ancillary qubits and $2L^2$ data qubits. The layout of these ancillary qubits are indicated in \cref{fig:ordering}. Note that we must also choose an ordering of the lattice sites. This is also indicated in \cref{fig:ordering}.

The two encodings map the Fermi-Hubbard Hamiltonian terms to interactions between qubits.
In both encodings, on-site interaction terms become
\begin{align}\label{eq:h-onsite}
     \op h_{\text{on-site}}^{(i)} \rightarrow \frac{\os}{4} \left(I - Z_{i \uparrow}\right)\left(I - Z_{i \downarrow}\right).
\end{align}
Only the encoded hopping terms differ. The exact expressions for hopping interactions depend on whether two nearest neighbour fermionic sites are horizontally or vertically connected on the lattice. The horizontally connected hopping terms are encoded as
\begin{align}
    \op h_{\text{hopping,hor}}^{(i,j,\sigma)} \rightarrow \frac{1}{2}
    \begin{cases}
      X_{i,\sigma}Z_{i',\sigma}X_{j,\sigma}+Y_{i,\sigma}Z_{i',\sigma}Y_{j,\sigma} & \text{VC}\\
     X_{i,\sigma}X_{j,\sigma}Y_{f_{ij}',\sigma}+Y_{i,\sigma}Y_{j,\sigma}Y_{f_{ij}',\sigma} & \text{compact}
    \end{cases},
\end{align}
while the vertically connected hopping terms are encoded as
\begin{align}
    \op h_{\text{hopping,vert}}^{(i,j,\sigma)} \rightarrow \frac{1}{2}
    \begin{cases}
      X_{i,\sigma}Y_{i',\sigma}Y_{j,\sigma}X_{j',\sigma}-Y_{i,\sigma}Y_{i',\sigma}X_{j,\sigma}X_{j',\sigma} & \text{VC} \\
        (-1)^{g(i,j)}\left( X_{i,\sigma}X_{j,\sigma}X_{f_{ij}',\sigma}+Y_{i,\sigma}Y_{j,\sigma}X_{f_{ij}',\sigma}\right) & \text{compact}
    \end{cases}.
\end{align}
In this notation $i$ labels the data qubit for lattice site $i$ and $\sigma$ its spin lattice. Dashed indices such as $i'$ refer to ancillary qubits. These are illustrated in grey in \Cref{fig:ordering}. In the VC encoding there is an ancillary qubit for every site on each spin lattice. In compact these ancillary qubits are laid out in a checker-board pattern on the faces on each spin lattice. Here $f_{ij}'$ labels the ancillary qubit to $i$ and $j$. There is also a sign determined by $g(i,j) = 0,1$. The details of this can be found in \cite{DK}.

\begin{figure}[t]
\begin{subfigure}[b]{0.5\linewidth}
\centering
\begin{tikzpicture}[x=2cm,y=2cm,scale=.75]
\draw[dashed][step=2cm,gray, thin] (0,0) grid (2,2);

\draw[solid][step=1cm,black, thick] (0,2) -- (2,2);
\draw[solid][step=1cm,black, thick] (0,1) -- (2,1);
\draw[solid][step=1cm,black, thick] (0,0) -- (2,0);

\draw[solid][step=1cm,black, thick] (2,2) -- (2,1);
\draw[solid][step=1cm,black, thick] (0,0) -- (0,1);

\node[label={[label distance=0.1mm]45:1}] at (0,2)[fill=black,circle,scale=0.5]{};

\node[label={[label distance=0.1mm, color=gray]45:1'}] at (0.5,1.5)[fill=gray,circle,scale=0.5]{};

\node[label={[label distance=0.1mm]45:2}] at (1,2)[fill=black,circle,scale=0.5]{};

\node[label={[label distance=0.1mm, color=gray]45:2'}] at (1.5,1.5)[fill=gray,circle,scale=0.5]{};

\node[label={[label distance=0.1mm]45:3}] at (2,2)[fill=black,circle,scale=0.5]{};

\node[label={[label distance=0.1mm, color=gray]45:3'}] at (2.5,1.5)[fill=gray,circle,scale=0.5]{};

\node[label={[label distance=0.1mm]45:4}] at (2,1)[fill=black,circle,scale=0.5]{};

\node[label={[label distance=0.1mm, color=gray]45:4'}] at (1.5,0.5)[fill=gray,circle,scale=0.5]{};

\node[label={[label distance=0.1mm]45:5}] at (1,1)[fill=black,circle,scale=0.5]{};

\node[label={[label distance=0.1mm, color=gray]45:5'}] at (0.5,0.5)[fill=gray,circle,scale=0.5]{};

\node[label={[label distance=0.1mm]45:6}] at (0,1)[fill=black,circle,scale=0.5]{};

\node[label={[label distance=0.1mm, color=gray]45:6'}] at (-0.5,0.5)[fill=gray,circle,scale=0.5]{};

\node[label={[label distance=0.1mm]45:7}] at (0,0)[fill=black,circle,scale=0.5]{};

\node[label={[label distance=0.1mm, color=gray]45:7'}] at (0.5,-0.5)[fill=gray,circle,scale=0.5]{};

\node[label={[label distance=0.1mm]45:8}] at (1,0)[fill=black,circle,scale=0.5]{};

\node[label={[label distance=0.1mm, color=gray]45:8'}] at (1.5,-0.5)[fill=gray,circle,scale=0.5]{};

\node[label={[label distance=0.1mm]45:9}] at (2,0)[fill=black,circle,scale=0.5]{};

\node[label={[label distance=0.1mm, color=gray]45:9'}] at (2.5,-0.5)[fill=gray,circle,scale=0.5]{};

\end{tikzpicture}
\caption{VC Encoding.}
\end{subfigure}
\begin{subfigure}[b]{0.5\linewidth}
\centering
\begin{tikzpicture}[x=2cm,y=2cm,scale=.75]
\draw[dashed][step=2cm,gray, thin] (0,0) grid (2,2);

\draw[solid][step=1cm,black, thick] (0,2) -- (2,2);
\draw[solid][step=1cm,black, thick] (0,1) -- (2,1);
\draw[solid][step=1cm,black, thick] (0,0) -- (2,0);

\draw[solid][step=1cm,black, thick] (2,2) -- (2,1);
\draw[solid][step=1cm,black, thick] (0,0) -- (0,1);

\node[label={[label distance=0.1mm]45:1}] at (0,2)[fill=black,circle,scale=0.5]{};

\node[label={[label distance=0.1mm, color=gray]45:$f'_{12}$}]at (0.5,1.5)[fill=gray,circle,scale=0.5]{};

\node[label={[label distance=0.1mm]45:2}] at (1,2)[fill=black,circle,scale=0.5]{};

\node[label={[label distance=0.1mm]45:3}] at (2,2)[fill=black,circle,scale=0.5]{};

\node[label={[label distance=0.1mm]45:4}] at (2,1)[fill=black,circle,scale=0.5]{};

\node[label={[label distance=0.1mm, color=gray]45:$f'_{54}$}] at (1.5,0.5)[fill=gray,circle,scale=0.5]{};

\node[label={[label distance=0.1mm]45:5}] at (1,1)[fill=black,circle,scale=0.5]{};

\node[label={[label distance=0.1mm]45:6}] at (0,1)[fill=black,circle,scale=0.5]{};

\node[label={[label distance=0.1mm]45:7}] at (0,0)[fill=black,circle,scale=0.5]{};

\node at (0.5,-0.5)[fill=white,circle,scale=0.5]{};

\node[label={[label distance=0.1mm]45:8}] at (1,0)[fill=black,circle,scale=0.5]{};

\node at (1.5,-0.5)[fill=white,circle,scale=0.5]{};

\node[label={[label distance=0.1mm]45:9}] at (2,0)[fill=black,circle,scale=0.5]{};

\node at (2.5,-0.5)[fill=white,circle,scale=0.5]{};

\end{tikzpicture}
\caption{compact Encoding.}
\end{subfigure}
\caption{The ordering of qubits for $L=2$ and the layout of ancillary qubits (grey) for each encoding. This figure only depicts a single spin lattice. Additionally, the first VC ancillary qubit $f'_{12}$ is also labelled by $f'_{25}$, $f'_{56}$ and $f'_{16}$. Similarly for the other (VC) ancillary qubit.}\label{fig:ordering}
\end{figure}
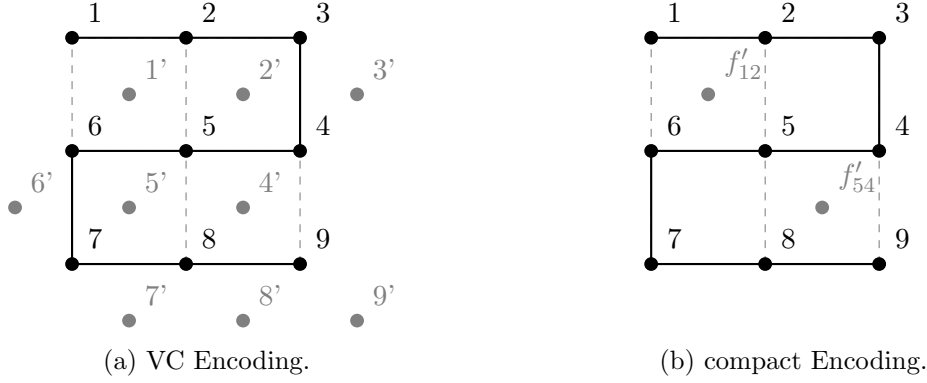

\subsection*{Choice of Trotter Layers}
We group the interactions into $5$ Trotter layers. Every pair of interactions within a layer must be disjoint. Under the assumptions of \Cref{def:short-pulse-circuit} all interactions within a single layer can then be implemented in parallel. For both encodings the five layers consist of all on-site interactions $\op H_5$, two alternating layers of horizontal hopping interactions $\op H_1$ and $\op H_2$ and, two alternating layers of vertical hopping interactions $\op H_3$ and $\op H_4$. Both cases are illustrated in \cref{fig:VC-layers-hor,fig:VC-layers-vert,fig:compact-layers-hor,fig:compact-layers-vert} for the case of $L=5$.

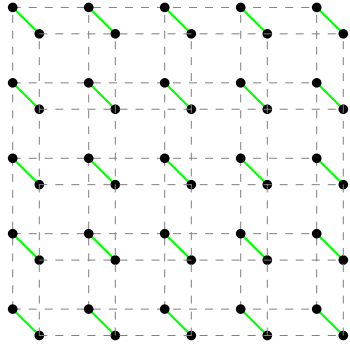
\begin{figure}[t]
\centering
\begin{tikzpicture}
\draw[dashed][step=1cm,gray,very thin] (0,0) grid (4,4);

\foreach \x in {0,1,2,3,4}{
    \foreach \y in {0,1,2,3,4}{
        \draw[solid][step=1cm,green,thick] (\x,\y) -- (\x+0.35,\y-0.35);
        }
    }

\foreach \x in {0,1,2,3,4}{
    \foreach \y in {0,1,2,3,4}{
        \node at (\x,\y)[fill=black,circle,scale=0.35]{};
        }
    }

\foreach \x in {0.35,1.35,2.35,3.35,4.35}{
    \foreach \y in {-0.35,0.65,1.65,2.65,3.65}{
        \node at (\x,\y)[fill=black,circle,scale=0.35]{};
        }
    }
\begin{scope}[shift={(0.35,-0.35)}]
\draw[dashed][step=1cm,gray,very thin] (0,0) grid (4,4);
\end{scope}
\end{tikzpicture}
\caption{The green lines connecting pairs of qubits represent a single on-site interaction term in either encoding $\op h_{\text{on-site}}^{(i)}$. }\label{fig:os-layer}
\end{figure}

The on-site interaction terms are the same in both cases and do not involve any ancillary qubits. They are shown in \Cref{fig:os-layer}, where the ancillary qubits are consequently not depicted. The hopping terms all act within a single spin lattice. They are shown for the VC encoding in \Cref{fig:VC-layers-hor} and \Cref{fig:VC-layers-vert} for a single spin lattice, and for compact these are shown in \Cref{fig:compact-layers-hor} and \Cref{fig:compact-layers-vert}

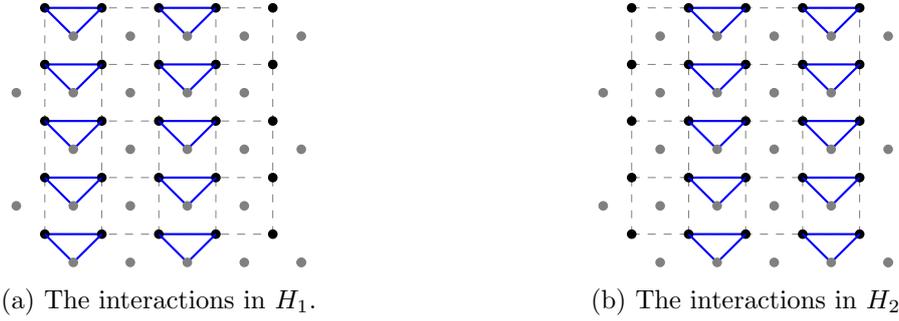
\begin{figure}[t]
\begin{subfigure}[b]{0.5\linewidth}
\centering
\begin{tikzpicture}[scale=0.75]
\draw[dashed][step=1cm,gray,very thin] (0,0) grid (4,4);

\foreach \x in {0,2}{
    \foreach \y in {0,1,2,3,4}{
        \draw[solid][step=1cm,blue,thick] (\x,\y) -- (\x+1,\y);
        }
    }

\foreach \x in {0,1,2,3,4}{
    \foreach \y in {0,1,2,3,4}{
        \node at (\x,\y)[fill=black,circle,scale=0.35]{};
        }
    }
\foreach \x in {0.5,2.5}{
    \foreach \y in {-0.5,0.5,1.5,2.5,3.5}{
        \draw[solid][step=1cm,blue,thick] (\x,\y) -- (\x-0.5,\y+0.5);
         \draw[solid][step=1cm,blue,thick] (\x,\y) -- (\x+0.5,\y+0.5);
        }
    }

\foreach \x in {0.5,1.5,2.5,3.5}{
    \foreach \y in {-0.5,0.5,1.5,2.5,3.5}{
        \node at (\x,\y)[fill=gray,circle,scale=0.35]{};
        }
    }
\node at (4.5,3.5)[fill=gray,circle,scale=0.35]{};
\node at (-0.5,2.5)[fill=gray,circle,scale=0.35]{};
\node at (4.5,1.5)[fill=gray,circle,scale=0.35]{};
\node at (-0.5,0.5)[fill=gray,circle,scale=0.35]{};
\node at (4.5,-0.5)[fill=gray,circle,scale=0.35]{};
\node at (4.5,-0.5)[fill=gray,circle,scale=0.35]{};
\end{tikzpicture}
\caption{The interactions in $\op H_1$.}
\end{subfigure}
\begin{subfigure}[b]{0.5\linewidth}
\centering
\begin{tikzpicture}[scale=0.75]

\draw[dashed][step=1cm,gray,very thin] (0,0) grid (4,4);

\foreach \x in {1,3}{
    \foreach \y in {0,1,2,3,4}{
        \draw[solid][step=1cm,blue,thick] (\x,\y) -- (\x+1,\y);
        }
    }

\foreach \x in {0,1,2,3,4}{
    \foreach \y in {0,1,2,3,4}{
        \node at (\x,\y)[fill=black,circle,scale=0.35]{};
        }
    }

\foreach \x in {1.5,3.5}{
    \foreach \y in {-0.5,0.5,1.5,2.5,3.5}{
        \draw[solid][step=1cm,blue,thick] (\x,\y) -- (\x-0.5,\y+0.5);
         \draw[solid][step=1cm,blue,thick] (\x,\y) -- (\x+0.5,\y+0.5);
        }
    }

\foreach \x in {0.5,1.5,2.5,3.5}{
    \foreach \y in {-0.5,0.5,1.5,2.5,3.5}{
        \node at (\x,\y)[fill=gray,circle,scale=0.35]{};
        }
    }
\node at (4.5,3.5)[fill=gray,circle,scale=0.35]{};
\node at (-0.5,2.5)[fill=gray,circle,scale=0.35]{};
\node at (4.5,1.5)[fill=gray,circle,scale=0.35]{};
\node at (-0.5,0.5)[fill=gray,circle,scale=0.35]{};
\node at (4.5,-0.5)[fill=gray,circle,scale=0.35]{};
\node at (4.5,-0.5)[fill=gray,circle,scale=0.35]{};
\end{tikzpicture}
\caption{The interactions in $\op H_2$}
\end{subfigure}
\caption{The blue lines connecting three qubits represent a single horizontal hopping term in the VC encoding: $\op h_{\text{hopping,hor}}^{(i,j,\sigma)}$}\label{fig:VC-layers-hor}
\end{figure}
\begin{figure}[t]
\begin{subfigure}[b]{0.5\linewidth}
\centering
\begin{tikzpicture}[scale=0.75]

\draw[dashed][step=1cm,gray,very thin] (0,0) grid (4,4);

\foreach \x in {0,1,2,3,4}{
    \foreach \y in {0,2}{
        \draw[solid][step=1cm,blue,thick] (\x,\y) -- (\x,\y+1);
        }
    }

\foreach \x in {0,1,2,3,4}{
    \foreach \y in {0,1,2,3,4}{
        \node at (\x,\y)[fill=black,circle,scale=0.35]{};
        }
    }

\foreach \x in {-0.5,0.5,1.5,2.5,3.5}{
    \foreach \y in {0.5,2.5}{
         \draw[solid][step=1cm,blue,thick] (\x,\y) -- (\x+0.5,\y+0.5);
        }
    }

\foreach \x in {0.5,1.5,2.5,3.5,4.5}{
    \foreach \y in {-0.5,1.5}{
         \draw[solid][step=1cm,blue,thick] (\x,\y) -- (\x-0.5,\y+0.5);
        }
    }

\foreach \x in {0.5,1.5,2.5,3.5}{
    \foreach \y in {-0.5,0.5,1.5,2.5,3.5}{
        \node at (\x,\y)[fill=gray,circle,scale=0.35]{};
        }
    }
\node at (4.5,3.5)[fill=gray,circle,scale=0.35]{};
\node at (-0.5,2.5)[fill=gray,circle,scale=0.35]{};
\node at (4.5,1.5)[fill=gray,circle,scale=0.35]{};
\node at (-0.5,0.5)[fill=gray,circle,scale=0.35]{};
\node at (4.5,-0.5)[fill=gray,circle,scale=0.35]{};
\node at (4.5,-0.5)[fill=gray,circle,scale=0.35]{};
\end{tikzpicture}
\caption{The interactions in $\op H_3$.}
\end{subfigure}
\begin{subfigure}[b]{0.5\linewidth}
\centering
\begin{tikzpicture}[scale=0.75]

\draw[dashed][step=1cm,gray,very thin] (0,0) grid (4,4);

\foreach \x in {0,1,2,3,4}{
    \foreach \y in {1,3}{
        \draw[solid][step=1cm,blue,thick] (\x,\y) -- (\x,\y+1);
        }
    }

\foreach \x in {0,1,2,3,4}{
    \foreach \y in {0,1,2,3,4}{
        \node at (\x,\y)[fill=black,circle,scale=0.35]{};
        }
    }

\foreach \x in {-0.5,0.5,1.5,2.5,3.5}{
    \foreach \y in {0.5,2.5}{
         \draw[solid][step=1cm,blue,thick] (\x,\y) -- (\x+0.5,\y+0.5);
        }
    }

\foreach \x in {0.5,1.5,2.5,3.5,4.5}{
    \foreach \y in {1.5,3.5}{
         \draw[solid][step=1cm,blue,thick] (\x,\y) -- (\x-0.5,\y+0.5);
        }
    }

\foreach \x in {0.5,1.5,2.5,3.5}{
    \foreach \y in {-0.5,0.5,1.5,2.5,3.5}{
        \node at (\x,\y)[fill=gray,circle,scale=0.35]{};
        }
    }
\node at (4.5,3.5)[fill=gray,circle,scale=0.35]{};
\node at (-0.5,2.5)[fill=gray,circle,scale=0.35]{};
\node at (4.5,1.5)[fill=gray,circle,scale=0.35]{};
\node at (-0.5,0.5)[fill=gray,circle,scale=0.35]{};
\node at (4.5,-0.5)[fill=gray,circle,scale=0.35]{};
\node at (4.5,-0.5)[fill=gray,circle,scale=0.35]{};

\end{tikzpicture}
\caption{The interactions in $\op H_4$.}
\end{subfigure}
\caption{The blue lines connecting four qubits represent a single vertical hopping term in the VC encoding: $\op h_{\text{hopping,vert}}^{(i,j,\sigma)}$}\label{fig:VC-layers-vert}
\end{figure}
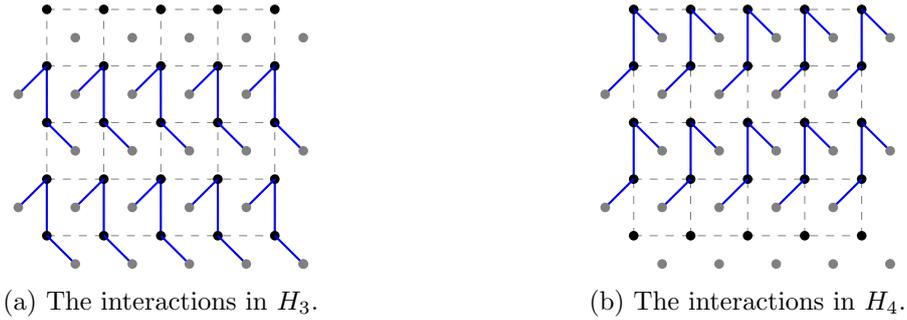

The alternating horizontal layers and alternating vertical layers are chosen to ensure that all pairs of interactions are disjoint and not just commuting. Note that we could have chosen to lay out the alternating horizontal and vertical layers in the VC encoding in the same fashion as the compact as depicted in \Cref{fig:compact-layers-hor} or \Cref{fig:compact-layers-vert}.

These are not the only choices of Trotter layers. In a later Supplementary Method called ``Regrouping Interaction Terms'' we show that we can implement a $p$\textsuperscript{th} order product formula with only $3$ Trotter layers. We do this for the compact encoding only as it is particularly neat. This is despite grouping the interactions in a way where not all interactions within a layer commute with one another. A combination of this and the previous results still enable us to directly implement each layer without incurring any further analytic error.

\begin{figure}[t]
\begin{subfigure}[b]{0.5\linewidth}
\centering
\begin{tikzpicture}[scale=0.75]

\draw[dashed][step=1cm,gray,very thin] (0,0) grid (4,4);

\foreach \x in {0,2}{
    \foreach \y in {0,2,4}{
        \draw[solid][step=1cm,red,thick] (\x,\y) -- (\x+1,\y);
        }
    }

\foreach \x in {1,3}{
    \foreach \y in {1,3}{
        \draw[solid][step=1cm,red,thick] (\x,\y) -- (\x+1,\y);
        }
    }

\foreach \x in {1.5,3.5}{
    \foreach \y in {0.5,2.5}{
        \draw[solid][step=1cm,red,thick] (\x,\y) -- (\x-0.5,\y+0.5);
         \draw[solid][step=1cm,red,thick] (\x,\y) -- (\x+0.5,\y+0.5);
        }
    }

\foreach \x in {0.5,2.5}{
    \foreach \y in {1.5,3.5}{
        \draw[solid][step=1cm,red,thick] (\x,\y) -- (\x-0.5,\y+0.5);
         \draw[solid][step=1cm,red,thick] (\x,\y) -- (\x+0.5,\y+0.5);
        }
    }

\foreach \x in {1.5,3.5}{
    \foreach \y in {0.5,2.5}{
        \node at (\x,\y)[fill=gray,circle,scale=0.35]{};
        }
    }

\foreach \x in {0.5,2.5}{
    \foreach \y in {1.5,3.5}{
        \node at (\x,\y)[fill=gray,circle,scale=0.35]{};
        }
    }

\foreach \x in {0,1,2,3,4}{
    \foreach \y in {0,1,2,3,4}{
        \node at (\x,\y)[fill=black,circle,scale=0.35]{};
        }
    }
\end{tikzpicture}
\caption{The interactions in $\op H_1$.}
\end{subfigure}
\begin{subfigure}[b]{0.5\linewidth}
\centering
\begin{tikzpicture}[scale=0.75]

\draw[dashed][step=1cm,gray,very thin] (0,0) grid (4,4);

\foreach \x in {1,3}{
    \foreach \y in {0,2,4}{
        \draw[solid][step=1cm,red,thick] (\x,\y) -- (\x+1,\y);
        }
    }

\foreach \x in {0,2}{
    \foreach \y in {1,3}{
        \draw[solid][step=1cm,red,thick] (\x,\y) -- (\x+1,\y);
        }
    }

\foreach \x in {1.5,3.5}{
    \foreach \y in {0.5,2.5}{
        \draw[solid][step=1cm,red,thick] (\x,\y) -- (\x-0.5,\y-0.5);
         \draw[solid][step=1cm,red,thick] (\x,\y) -- (\x+0.5,\y-0.5);
        }
    }

\foreach \x in {0.5,2.5}{
    \foreach \y in {1.5,3.5}{
        \draw[solid][step=1cm,red,thick] (\x,\y) -- (\x-0.5,\y-0.5);
         \draw[solid][step=1cm,red,thick] (\x,\y) -- (\x+0.5,\y-0.5);
        }
    }

\foreach \x in {1.5,3.5}{
    \foreach \y in {0.5,2.5}{
        \node at (\x,\y)[fill=gray,circle,scale=0.35]{};
        }
    }

\foreach \x in {0.5,2.5}{
    \foreach \y in {1.5,3.5}{
        \node at (\x,\y)[fill=gray,circle,scale=0.35]{};
        }
    }

\foreach \x in {0,1,2,3,4}{
    \foreach \y in {0,1,2,3,4}{
        \node at (\x,\y)[fill=black,circle,scale=0.35]{};
        }
    }
\end{tikzpicture}
\caption{The interactions in $\op H_2$.}
\end{subfigure}
\caption{The red lines connecting three qubits represent a single horizontal hopping term in the compact encoding $\op h_{\text{hopping,hor}}^{(i,j,\sigma)}$.}\label{fig:compact-layers-hor}
\end{figure}
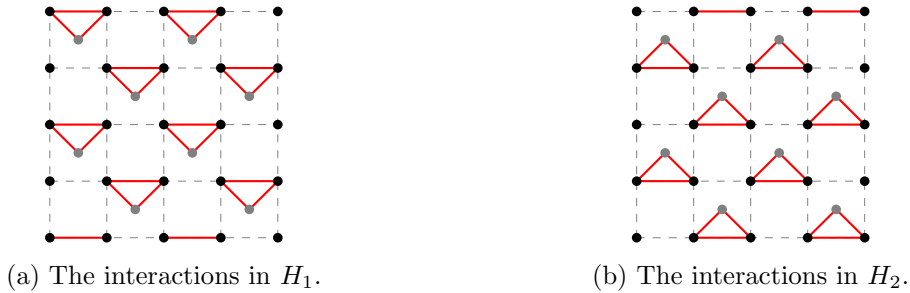
\begin{figure}[t]
\begin{subfigure}[b]{0.5\linewidth}
\centering
\begin{tikzpicture}[scale=0.75,rotate=90,x=-1cm]

\draw[dashed][step=1cm,gray,very thin] (0,0) grid (4,4);

\foreach \x in {0,2}{
    \foreach \y in {0,2,4}{
        \draw[solid][step=1cm,red,thick] (\x,\y) -- (\x+1,\y);
        }
    }

\foreach \x in {1,3}{
    \foreach \y in {1,3}{
        \draw[solid][step=1cm,red,thick] (\x,\y) -- (\x+1,\y);
        }
    }

\foreach \x in {1.5,3.5}{
    \foreach \y in {0.5,2.5}{
        \draw[solid][step=1cm,red,thick] (\x,\y) -- (\x-0.5,\y+0.5);
         \draw[solid][step=1cm,red,thick] (\x,\y) -- (\x+0.5,\y+0.5);
        }
    }

\foreach \x in {0.5,2.5}{
    \foreach \y in {1.5,3.5}{
        \draw[solid][step=1cm,red,thick] (\x,\y) -- (\x-0.5,\y+0.5);
         \draw[solid][step=1cm,red,thick] (\x,\y) -- (\x+0.5,\y+0.5);
        }
    }

\foreach \x in {1.5,3.5}{
    \foreach \y in {0.5,2.5}{
        \node at (\x,\y)[fill=gray,circle,scale=0.35]{};
        }
    }

\foreach \x in {0.5,2.5}{
    \foreach \y in {1.5,3.5}{
        \node at (\x,\y)[fill=gray,circle,scale=0.35]{};
        }
    }

\foreach \x in {0,1,2,3,4}{
    \foreach \y in {0,1,2,3,4}{
        \node at (\x,\y)[fill=black,circle,scale=0.35]{};
        }
    }
\end{tikzpicture}
\caption{The interactions in $\op H_3$.}
\end{subfigure}
\begin{subfigure}[b]{0.5\linewidth}
\centering
\begin{tikzpicture}[scale=0.75,rotate=90,x=-1cm]

\draw[dashed][step=1cm,gray,very thin] (0,0) grid (4,4);

\foreach \x in {1,3}{
    \foreach \y in {0,2,4}{
        \draw[solid][step=1cm,red,thick] (\x,\y) -- (\x+1,\y);
        }
    }

\foreach \x in {0,2}{
    \foreach \y in {1,3}{
        \draw[solid][step=1cm,red,thick] (\x,\y) -- (\x+1,\y);
        }
    }

\foreach \x in {1.5,3.5}{
    \foreach \y in {0.5,2.5}{
        \draw[solid][step=1cm,red,thick] (\x,\y) -- (\x-0.5,\y-0.5);
         \draw[solid][step=1cm,red,thick] (\x,\y) -- (\x+0.5,\y-0.5);
        }
    }

\foreach \x in {0.5,2.5}{
    \foreach \y in {1.5,3.5}{
        \draw[solid][step=1cm,red,thick] (\x,\y) -- (\x-0.5,\y-0.5);
         \draw[solid][step=1cm,red,thick] (\x,\y) -- (\x+0.5,\y-0.5);
        }
    }

\foreach \x in {1.5,3.5}{
    \foreach \y in {0.5,2.5}{
        \node at (\x,\y)[fill=gray,circle,scale=0.35]{};
        }
    }

\foreach \x in {0.5,2.5}{
    \foreach \y in {1.5,3.5}{
        \node at (\x,\y)[fill=gray,circle,scale=0.35]{};
        }
    }

\foreach \x in {0,1,2,3,4}{
    \foreach \y in {0,1,2,3,4}{
        \node at (\x,\y)[fill=black,circle,scale=0.35]{};
        }
    }
\end{tikzpicture}
\caption{The interactions in $\op H_4$.}
\end{subfigure}
\caption{The red lines connecting three qubits represent a single vertical hopping term in the compact encoding: $\op h_{\text{hopping,vert}}^{(i,j,\sigma)}$. }\label{fig:compact-layers-vert}
\end{figure}
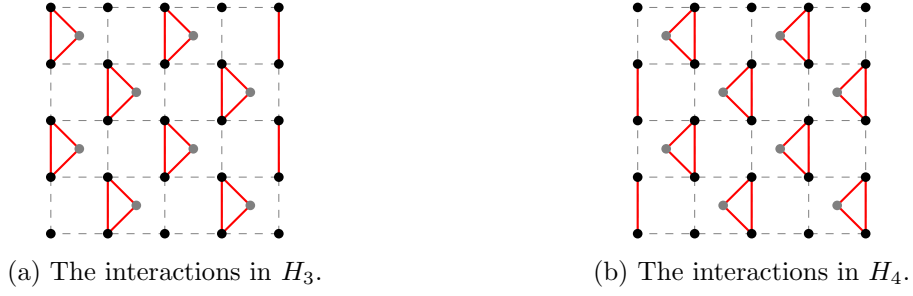

The norm of these layers appears in the Trotter bounds we derive. We bound these as $ \| \op H_i \|  \leq \Lambda$ for all $i$. In \Cref{thm:norm-bound} it is shown that $\Lambda$ can be related to fermion number and this fact is used to obtain tighter bounds on the Trotter error in the numerics we perform. We confine ourselves to a sector of $5$ fermions in all our numerical calculations. We do this both for pragmatic reasons, since the Hilbert space dimension is just large enough to be classically hard, and because for a $5 \times 5$ lattice a roughly quarter-filling already implies interesting crossover phenomena appear \cite{Keller2001,Kaneko2014}. We also leave $\Lambda$ explicit in our analytic bounds so that we can explore different parameter regions in later work.

\subsection*{Analytic Run-Time Bounds for Simulating Fermi Hubbard}
Now we can proceed with obtaining analytic bounds on the run-time of this simulation for each encoding. Throughout this section we assume a per-time error model. For
the recursive product formula in \cref{eq:P-2k-ap} with either $p=1$ or $p=2k$ for $k\ge1$ and $M$ non-commuting Hamiltonian layers $\op H=\sum_{i=1}^M \op H_i$, the cost of the simulation in terms of the single most expensive Trotter layer is
\begin{align}\label{eq:trotter-time}
\cost\left( \P_p\left(\delta\right)^{T/\delta} \right) \le \frac{MT}{\delta} \times \cost\left(\Umax\left(\op H,\delta B_p\right)\right) \times \begin{cases}
    1& p=1 \\
    2\times 5^{p/2-1} & \text{$p=2k$ for $k\ge1$},
\end{cases}
\end{align}
where $\Umax\left(\op H, \tau \right) \coloneqq \argmax_{ \op U_i } \{ \cost\left( \op U_i \right) \}$ for $\op U_i \coloneqq \exp\left(\ii \tau \op H_i\right)$ and $B_p$ given in \cref{rem:bounds-on-trotter-coefficients-ap}. This follows from the definitions of the product formula in \cref{eq:P-2k-ap}.

We proceed by obtaining bounds on the run-time of the most costly Trotter layer in each encoding. This expression depends on whether we use our methods (summarised in the main text in \cref{eq:intro-gatedec-3,eq:intro-gatedec-4}) or the conjugation method (see \cref{eq:conj-method} in the main text) to implement each Trotter step.

For the VC encoding the most costly layers will be the the vertical hopping layers $\op H_{3}$ and $\op H_{4}$. As they are both a sum of disjoint terms which we assume can be performed in parallel this is simply given by the cost of implementing a single vertical hopping interaction. Remembering that the interaction strengths satisfy $| \hop |$, $| \os| \leq r$, we bound this as
\begin{align}
    \cost \left(\Umax\left(\op H_{\text{VC}}, \delta B_p \right) \right)&\leq \cost\left( \ee^{\ii B_p \delta \frac{r}{2} \left( XYYX-YYXX\right) } \right) \\
    & = 2 \cost\left( \ee^{\ii B_p \delta \frac{r}{2} Z^{\otimes 4} } \right) \\
    &\leq 2 \times  \begin{cases}
            7 ( \frac{r}{2} B_p \delta)^{1/3} & \text{sub-circuit}\\
            6 \left(\frac{\pi}{4}\right) & \text{standard}
           \end{cases}
\end{align}
The second simplification follows from both terms in the interaction commuting, thus allowing them to be performed sequentially. The same is true for the compact encoding and so we have
\begin{align}
    \cost \left(\Umax\left(\op H_{\text{compact}}, \delta B_p \right) \right)&\leq\cost\left( \ee^{\ii B_p \delta \frac{r}{2} \left( XXY+YYY\right) } \right) \\
    & = 2 \cost\left( \ee^{\ii B_p \delta \frac{r}{2} Z^{\otimes 3} } \right)\\
    & \leq 2 \times\begin{cases}
            2 (  2 \frac{r}{2} B_p \delta)^{1/2} & \text{sub-circuit}\\
             4 \left(\frac{\pi}{4}\right) & \text{standard}
           \end{cases}
\end{align}
The final expressions now depend only on how we decompose local Trotter steps, either in terms of CNOT gates and single qubit rotations or using circuits such as those in \cref{fig:circuits}. The concrete bounds on $\cost \left(\Umax\left(\op H, \delta B_p \right) \right)$ are summarised and simplified in \cref{tab:maxInteractionCost}.
\begin{figure}[t]
    \centering
    \begin{tabular}{ l p{3cm} p{3cm}}
        \toprule
        & \multicolumn{2}{c}{$\cost\left(\Umax\left(\op H,\delta B_p\right)\right)$} \\
        \midrule
         Encoding  & standard & sub-circuit \\
        \midrule
         compact & $2 \pi$ & $4 \sqrt{B_p r \delta}$\\
         VC & $3 \pi$ & $12 \sqrt[3]{B_p r \delta}$ \\
        \bottomrule
    \end{tabular}
    \caption{Cost of implementing the highest weight interaction term in the encoded Fermi-Hubbard Hamiltonian in a per-time error model. Decomposing a $k$-local evolution in terms of the standard CNOT conjugation method has overhead $2 (k-1) \times \pi/4$. The overhead associated with sub-circuit synthesis follows from \cref{eq:decomp-bounds}. }
    \label{tab:maxInteractionCost}
\end{figure}

Substituting the bounds in \cref{tab:maxInteractionCost} into  \cref{eq:trotter-time} results in run-times $\propto \BigO(\delta^{-1})$ and $\propto \BigO(\delta^{\frac{2-k}{k-1}} )$, assuming decomposition via \cref{eq:conj-method} or \cref{eq:intro-gatedec-3,eq:intro-gatedec-4} in the main text, respectively. As both of these expressions diverge as $\delta \rightarrow 0$ it is optimal to maximise $\delta$ with respect to an allowable analytic Trotter error $\epsilon_t$. This is captured in the following lemma which uses the simplest bounds on Trotter error established previously.

\begin{lemma}[Optimal FH $\delta$]\label{lem:FH-max-delta}
For a target error rate $\epsilon_t$, the maximum Trotter step for a $p$\textsuperscript{th} formula saturating the error bound in \Cref{th:trotter-error-ap} is
\begin{align}
    \delta_0 = \left(\frac{\epsilon_t}{T M^{p+1} \Lambda^{p+1}}\right)^{1/p} \times \begin{cases}
     \displaystyle 1 &  p=1 \\begin{align}10pt]
     \displaystyle \left(\frac{(p+1)!}{2}\right)^{1/p} \left( \frac{3}{10} \right)^{p/2-1/2-1/p} &  p=2k\ \text{for}\ k\ge 1.
    \end{cases}
\end{align}
\end{lemma}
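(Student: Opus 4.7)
The lemma is purely an algebraic inversion of the bound derived in \cref{th:trotter-error-ap}, so the plan is essentially: set the stated error bound equal to the target $\epsilon_t$ and solve for $\delta$. I would first write out
\begin{equation*}
    \epsilon_p(T,\delta) \le \frac{T}{\delta}\,\delta^{p+1} M^{p+1} \Lambda^{p+1}\, G_p
    = T\,\delta^{p} M^{p+1} \Lambda^{p+1}\, G_p,
\end{equation*}
with $G_1 = 1$ and $G_p = \tfrac{2}{(p+1)!}(10/3)^{(p+1)(p/2-1)}$ for $p=2k$, $k\ge 1$, as extracted from \cref{th:trotter-error-ap}. Demanding that this right-hand side equal $\epsilon_t$ and solving for $\delta$ gives the optimal (largest) Trotter step that still saturates the bound,
\begin{equation*}
    \delta_0 \;=\; \left(\frac{\epsilon_t}{T M^{p+1} \Lambda^{p+1}\, G_p}\right)^{1/p}
           \;=\; \left(\frac{\epsilon_t}{T M^{p+1} \Lambda^{p+1}}\right)^{1/p} G_p^{-1/p}.
\end{equation*}

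For $p=1$ the prefactor is just $G_1^{-1/p}=1$, giving the first case directly. For $p=2k$, $k\ge 1$, I would substitute $G_p$ and obtain
\begin{equation*}
    G_p^{-1/p} \;=\; \left(\tfrac{(p+1)!}{2}\right)^{1/p}\!\left(\tfrac{3}{10}\right)^{(p+1)(p/2-1)/p}.
\end{equation*}
The only step worth spelling out is the exponent simplification: expanding
\begin{equation*}
    \frac{(p+1)(p/2-1)}{p} \;=\; \frac{p^2 - p - 2}{2p} \;=\; \frac{p}{2} - \frac{1}{2} - \frac{1}{p},
\end{equation*}
which matches the exponent in the stated lemma. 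Assembling the two factors reproduces the piecewise expression verbatim.

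There is no real obstacle here: monotonicity of $\delta \mapsto T\,\delta^{p} M^{p+1} \Lambda^{p+1}\, G_p$ in $\delta$ guarantees that the equality with $\epsilon_t$ gives the maximum admissible step, so ``saturating'' and ``maximum'' coincide. The only thing to be careful about is book-keeping of the constant $G_p$ from \cref{th:trotter-error-ap} and the algebraic simplification of the exponent of $3/10$; no further inequalities, commutation analysis, or case-splits beyond $p=1$ vs.\ $p=2k$ are needed.
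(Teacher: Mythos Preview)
Your proposal is correct and matches the paper's own proof, which consists of the single line ``Follows from \cref{th:trotter-error-ap} by solving for $\delta$.'' You have simply spelled out that inversion in detail, including the exponent simplification $(p+1)(p/2-1)/p = p/2 - 1/2 - 1/p$, which is exactly what is needed.
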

\begin{proof}
Follows from \cref{th:trotter-error-ap} by solving for $\delta$.
\end{proof}

Now we can obtain the final analytic bounds on the total run-time of simulating the Fermi-Hubbard Hamiltonian for each of these four cases.

\begin{corollary}[Standard-circuit Minimised Run-time]\label{th:FH-optimum-dig}
If standard synthesis techniques are used to implement local Trotter steps in terms of CNOT gates and single-qubit rotations with an optimal Trotter step size $\delta_0$ saturating \Cref{lem:FH-max-delta}, the simulation cost for the Fermi-Hubbard Hamiltonian with a $p$\textsuperscript{th} order Trotter formula with maximum error $\epsilon_t$ is as follows
\begin{align}
\cost(\P_p(\delta_0)^{T/\delta_0}) &\leq
    \begin{cases}
    f_p  \  M^{2 + \frac{1}{p}} \Lambda^{1 + \frac{1}{p}} T^{1+\frac{1}{p}} \epsilon^{-1/p}_t  & \text{VC}  \\
    g_p  \  M^{2 + \frac{1}{p}} \Lambda^{1 + \frac{1}{p}} T^{1+\frac{1}{p}} \epsilon^{-1/p}_t & \text{compact}
    \end{cases}
\end{align}
with
\begin{align}
    f_p
    &=3 \pi \times \begin{cases}
     \displaystyle 1 &  p=1 \\[10pt]
     \displaystyle 2^{\frac{p+1}{2}} 3^{-\frac{p}{2}+\frac{1}{p}+\frac{1}{2}} 5^{p-\frac{1}{p}-\frac{3}{2}}  (p+1)!^{-\frac{1}{p}} &  p=2k\ \text{for}\ k\ge 1
    \end{cases}
\end{align}
and
\begin{align}
    g_p
    &=2 \pi \times \begin{cases}
     1 &  p=1 \\[10pt]
     \displaystyle 2^{\frac{p+1}{2}} 3^{-\frac{p}{2}+\frac{1}{p}+\frac{1}{2}} 5^{p-\frac{1}{p}-\frac{3}{2}}  (p+1)!^{-\frac{1}{p}} &  p=2k\ \text{for}\ k\ge 1.
    \end{cases}
\end{align}
\end{corollary}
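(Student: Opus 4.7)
The plan is to obtain the claim by directly substituting the optimal Trotter step $\delta_0$ from \cref{lem:FH-max-delta} into the generic run-time bound \eqref{eq:trotter-time} after specialising the per-layer cost $\cost(\Umax(\op H,\delta B_p))$ to the standard CNOT-and-single-qubit decomposition. The key observation is that in the standard synthesis the cost of a single Trotter step is \emph{independent} of $\delta$: from \cref{tab:maxInteractionCost} we have $\cost(\Umax(\op H_{\text{compact}},\delta B_p))\le 2\pi$ and $\cost(\Umax(\op H_{\text{VC}},\delta B_p))\le 3\pi$, since the conjugation decomposition of a weight-$k$ Pauli evolution consists of $2(k-1)$ conjugating $\ee^{\pm\ii\pi/4 ZZ}$ gates plus a single $\BigO(\delta)$ rotation that we absorb into the constant. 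Hence \eqref{eq:trotter-time} reduces to an expression that is monotonically decreasing in $\delta$, so saturating the Trotter error budget by setting $\delta=\delta_0$ yields the minimum over $\delta$.

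Substituting $\delta_0=\left(\epsilon_t/(TM^{p+1}\Lambda^{p+1})\right)^{1/p} c_p$, where $c_p$ denotes the $p$-dependent prefactor in \cref{lem:FH-max-delta}, gives
\begin{align}
\cost(\P_p(\delta_0)^{T/\delta_0})
&\le \frac{MT}{\delta_0}\cdot S_p\cdot \cost(\Umax(\op H,\delta_0 B_p))\\
&= S_p\cdot \cost(\Umax(\op H,\delta_0 B_p))\cdot c_p^{-1}\cdot M^{2+\tfrac1p}\Lambda^{1+\tfrac1p}T^{1+\tfrac1p}\epsilon_t^{-\tfrac1p},
\end{align}
where $S_p\in\{1,2\cdot 5^{p/2-1}\}$ is the number of stages of the $p$\textsuperscript{th}-order formula and the per-layer cost is either $2\pi$ or $3\pi$. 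This already establishes the claimed asymptotic scaling in $M$, $\Lambda$, $T$, and $\epsilon_t$; all that remains is to bookkeep the constants.

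The only non-routine step is verifying that the composite prefactor $S_p\cdot c_p^{-1}$ collapses to the stated closed form. For $p=1$ we have $S_1=1$ and $c_1=1$, so $f_1=3\pi$ and $g_1=2\pi$. For $p=2k$ with $k\ge 1$,
\begin{align}
\frac{S_p}{c_p}
&= \frac{2\cdot 5^{p/2-1}}{\left((p+1)!/2\right)^{1/p}(3/10)^{p/2-1/2-1/p}}\\
&= 2\cdot 5^{p/2-1}\cdot \left(\tfrac{2}{(p+1)!}\right)^{1/p}\cdot\left(\tfrac{10}{3}\right)^{p/2-1/2-1/p},
\end{align}
and using $10=2\cdot 5$ this simplifies to $2^{(p+1)/2}\,3^{-p/2+1/p+1/2}\,5^{p-1/p-3/2}\,(p+1)!^{-1/p}$. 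Multiplying by $3\pi$ (resp.\ $2\pi$) yields exactly $f_p$ (resp.\ $g_p$) as stated.

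The only mildly non-trivial part is this last algebraic rearrangement; everything else is a direct substitution that inherits its correctness from \cref{lem:FH-max-delta,th:trotter-error-ap,tab:maxInteractionCost} and \eqref{eq:trotter-time}. In particular, no optimality claim beyond saturating the chosen Trotter bound is made: the bound is proved for the specific $\delta_0$ that makes the error estimate of \cref{th:trotter-error-ap} equal $\epsilon_t$, which is justified because the right-hand side of \eqref{eq:trotter-time} is a decreasing function of $\delta$ under a fixed per-step cost.
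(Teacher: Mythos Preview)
Your proposal is correct and follows precisely the approach of the paper's own proof, which simply states that one substitutes $\delta_0$ from \cref{lem:FH-max-delta} and the standard-decomposition costs from \cref{tab:maxInteractionCost} into \eqref{eq:trotter-time}. You have additionally spelled out the algebraic verification that $S_p/c_p$ collapses to the stated closed form and justified why saturating the Trotter bound is optimal here (the per-step cost being $\delta$-independent makes the right-hand side monotone in $\delta$), both of which the paper leaves implicit.
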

\begin{proof}
The proof follows by choosing $\delta$ such that the bound obtained in \cref{lem:FH-max-delta} is saturated and substituting this and the respective expressions in \cref{tab:maxInteractionCost} into \Cref{eq:trotter-time}.
\end{proof}

\begin{corollary}[Sub-Circuit Minimised Run-time]\label{th:FH-optimum-analogue}
If sub-circuit synthesis techniques are used to implement local Trotter steps with an optimal Trotter step size $\delta_0$ saturating \Cref{lem:FH-max-delta}, the simulation cost for the Fermi-Hubbard Hamiltonian with a $p$\textsuperscript{th} order Trotter formula with maximum error $\epsilon_t$ is as follows
\begin{align}
\cost(\P_p(\delta_0)^{T/\delta_0}) &\leq
    \begin{cases}
    f_p \ r^{1/3}   M^{5/3 + 2/(3p)} \Lambda^{2/3 + 2/(3p)} T^{1+2/(3p)} \epsilon^{-2/(3p)}_t  & \text{VC}  \\
    g_p \ r^{1/2}   M^{3/2 + 1/(2p)} \Lambda^{1/2 +1/(2p)} T^{1 + 1/(2p)} \epsilon^{-1/(2p)}_t & \text{compact}
    \end{cases}
\end{align}
with
\begin{align}
    f_p
    &=12 \times \begin{cases}
     \displaystyle 1 &  p=1 \\[10pt]
     \displaystyle 2^{\frac{p}{2}} 3^{\frac{1}{6} \left(-3 p+\frac{4}{p}+4\right)} 5^{\frac{1}{6} \left(5 p-\frac{4}{p}-8\right)} (p+1)!^{-\frac{2}{3 p}} &  p=2k\ \text{for}\ k\ge 1
    \end{cases}
\end{align}
and
\begin{align}
    g_p
    &=4 \times \begin{cases}
     1 &  p=1 \\[10pt]
     \displaystyle 2^{\frac{p}{2}-\frac{1}{4}} 3^{\frac{1}{4} \left(-2 p+\frac{2}{p}+3\right)} 5^{\frac{1}{4} \left(3 p-\frac{2}{p}-5\right)}  (p+1)!^{-\frac{1}{2 p}} &  p=2k\ \text{for}\ k\ge 1.
    \end{cases}
\end{align}
\end{corollary}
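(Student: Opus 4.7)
The plan is to mirror the proof of \cref{th:FH-optimum-dig} but to replace the CNOT-conjugation per-layer cost by the sub-circuit bound and then optimise over the Trotter step. First I would invoke the stage-count bound from \cref{eq:trotter-time},
\[
\cost\bigl(\P_p(\delta)^{T/\delta}\bigr) \le \frac{MT}{\delta}\,\cost\bigl(\Umax(H,\delta B_p)\bigr)\,S_p,
\]
with $S_p = 1$ for $p=1$ and $S_p = 2\cdot 5^{p/2-1}$ otherwise, and then use the sub-circuit entries of \cref{tab:maxInteractionCost} to bound the most expensive layer. Those entries follow from applying \cref{lem:Weight-Increase-Depth4-param-bounds-ap,lem:Weight-Increase-Depth5-param-bounds-ap} to the three- or four-local hopping Pauli rotations, after splitting each hopping term into its two commuting Pauli summands and bounding its coupling by $B_p r /2$, and yield $\cost(\Umax) \le 4\sqrt{B_p r\delta}$ (compact) or $\cost(\Umax) \le 12\sqrt[3]{B_p r\delta}$ (VC).

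Because this renders the total cost $\BigO(\delta^{-1/2})$ in the compact case and $\BigO(\delta^{-2/3})$ in the VC case, it is monotone decreasing in $\delta$ and is minimised by taking $\delta$ as large as the error target permits. I would therefore substitute the saturating Trotter step
\[
\delta_0 = \Bigl(\tfrac{\epsilon_t}{TM^{p+1}\Lambda^{p+1}}\Bigr)^{1/p} \cdot C_p
\]
from \cref{lem:FH-max-delta}, where $C_p$ is the $p$-dependent prefactor stated there. Collecting powers then immediately reproduces the exponents quoted in the statement: the factor $\delta_0^{-1/(k-1)}$ with $k\in\{3,4\}$ contributes $(p+1)/((k-1)p)$ powers of $M$ and $\Lambda$ and $1/((k-1)p)$ powers of $T$ and $\epsilon_t^{-1}$, giving the advertised VC and compact exponents.

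What remains is essentially bookkeeping of the scalar prefactors, which read $f_p = 12\,S_p B_p^{1/3} C_p^{-2/3}$ (VC) and $g_p = 4\,S_p B_p^{1/2} C_p^{-1/2}$ (compact). Substituting the explicit form of $B_p$ from \cref{rem:bounds-on-trotter-coefficients-ap} for $p=2k\ge2$, together with $C_p = ((p+1)!/2)^{1/p}(3/10)^{p/2-1/2-1/p}$, and then simplifying the resulting product of fractional powers of $2$, $3$, $5$ and $(p+1)!$ yields the displayed closed-form expressions; the $p=1$ case collapses immediately to $f_1 = 12$, $g_1 = 4$. The main difficulty is purely arithmetic—tracking the fractional exponents cleanly so that the $B_p$, $S_p$ and $C_p$ contributions combine into the stated closed forms—rather than any conceptual obstacle, since all the structural work has already been done in \cref{lem:FH-max-delta,tab:maxInteractionCost}.
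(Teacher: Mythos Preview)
Your proposal is correct and follows essentially the same route as the paper's proof, which simply says to saturate the bound of \cref{lem:FH-max-delta} and substitute the sub-circuit entries of \cref{tab:maxInteractionCost} into \cref{eq:trotter-time}. You have additionally spelled out the intermediate formulas $f_p = 12\,S_p B_p^{1/3} C_p^{-2/3}$ and $g_p = 4\,S_p B_p^{1/2} C_p^{-1/2}$ and the exponent-tracking argument, which the paper leaves implicit, but the logical structure is identical.
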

\begin{proof}
The proof follows by choosing $\delta$ such that the bound obtained in \Cref{lem:FH-max-delta} is saturated and substituting this and the respective expressions in \cref{tab:maxInteractionCost} into \Cref{eq:trotter-time}.
\end{proof}

\subsection*{Trivial Stochastic Error Bound}\label{sec:stochasticerror}

\let\O\relax
\newcommand{\O}{\mathcal O}
\DeclareDocumentCommand{\gateA}{m m}{%
    \draw[fill=white] (#1,#2-0.15) rectangle (#1+0.5,#2+1.15);
}
\DeclareDocumentCommand{\gateRow}{m m O{2} O{7}}{%
    \foreach \x in {0,...,#4}{
        \gateA{\x*#3+#1}{#2}
    }
}
\DeclareDocumentCommand{\errorRow}{m m O{2}}{%
    \foreach \x in {0,...,8}{%
        \draw[fill=white] (\x*#3+#1,#2) circle[radius=.35] node[] {$\mathcal E$};
    }
}
\begin{figure}
\centering
\begin{tikzpicture}[x=6.6mm,y=6.6mm]
\begin{scope}
\clip (-2.5,-2.25) rectangle (17.6,7.75);
\foreach \l/\ll in {-2/-2,-1/-1,0/,1/+1,2/+2}{
    \draw (-1,\l) node[left,xshift=-1]{$\l$} -- (17.5,\l);
    \errorRow{-.35}{\l}
    \draw (-1,\l+5.5) node[left,xshift=-1]{$n\ll$} -- (17.5,\l+5.5);
    \errorRow{-.35}{\l+5.5}
}
\foreach \l in {-2,0,2}{
    \gateRow{.25}{\l}[4][3]
    \gateRow{.25}{\l+5.5}[4][3]
}
\foreach \l in {-3,-1,1}{
    \gateRow{2.25}{\l}[4][3]
    \gateRow{2.25}{\l+5.5}[4][3]
}
\draw[fill=white,draw=none] (-1,2.32) rectangle (16,3.18);
\draw[fill=white,draw=none] (-1,2.18) rectangle (16,2.25);
\draw[fill=white,draw=none] (-1,3.24) rectangle (16,3.33);

\draw[fill=white,draw=none] (7.2,-3) rectangle (8.8,9);
\draw[fill=white,draw=none] (7.0,-3) rectangle (7.1,9);
\draw[fill=white,draw=none] (8.9,-3) rectangle (9.0,9);
\draw[fill=white] (16.7,5.5) circle[radius=.35] node[] {$\O$};
\end{scope}
\end{tikzpicture}
\caption{Saturated circuit model with intermediate errors $\mathcal E$, e.g.\ depolarizing noise $\mathcal E = \cN_q$ for some noise parameter $p$ given in \cref{eq:dep-channel}.
At the end of the circuit, an observable $\mathcal O$ is measured.
Drawn is a one-dimensional circuit; naturally, a similar setup can be derived for a circuit on a $2$-dimensional qudit lattice, for interactions shown in \cref{fig:compact-layers-hor,fig:compact-layers-vert,fig:VC-layers-hor,fig:VC-layers-vert,fig:os-layer}.}\label{fig:circuit-dense-2}
\end{figure}
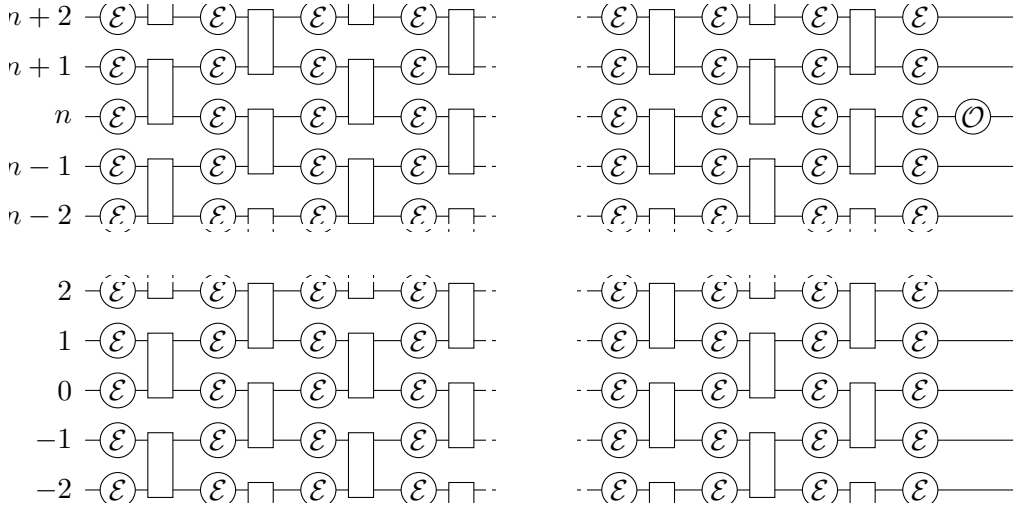

So far we have only considered the unitary error introduced by approximating the real Hamiltonian evolution with a Trotterized approximation.
However, in a near-term quantum device without error correction in place, we expect the simulated evolution to be noisy.
We model the noise by interspersing each circuit gate in the product formula by an iid channel $\mathcal E$; for simplicity we will assume that $\mathcal E$ is a single qubit depolarising channel $\mathcal E = \cN_q$, defined as
\begin{equation}\label{eq:dep-channel}
    \cN_q=(1-q)I+q \mathcal T,
    \quad
    \rho \longmapsto (1-q)\rho + \frac{q}{d}I
\end{equation}
for a noise parameter $q\in[0,1]$.\footnote{Strictly speaking $\cN_q$ defines a completely positive trace preserving map for all $p\le1+1/(d^2-1)$.
We emphasise that the error analysis which follows also works for a more general channel than the depolarising one.}
Here $I$ denotes the identity channel and $\mathcal T$ takes any state to the maximally mixed state $\tau=I/d$.

A trivial error bound for a circuit as in \cref{fig:circuit-dense-2} can then be found by just calculating the probability of no error occuring at all; disregarding the beneficial effects of a causal lightcone behind the observable $\O$, and denoting with $\mathcal U := \Ucirc^\dagger \cdot \Ucirc$ the clean circuit, and with $\mathcal U'$ the circuit saturated with intermediate errors, we get the expression
\begin{align}\label{eq:trivial-error}
    \epsilon = \left|\Tr\left[ (\mathcal U(\rho) - \mathcal U'(\rho))\O \right]\right|
    \le 1 - (1-q)^V,
\end{align}
where $V$ is the circuit's volume (i.e.\ the number of $\mathcal E$ interspersed in $\mathcal U'$).
It is clear to see that this error bound asymptotically approaches $1$, and does so exponentially quickly.
Thus, to stay below a target error rate $\epsilon_\text{tar}$, a sufficient condition is that
\begin{align}
    \label{eq:err-V-bound}
    1-(1-q)^V < \epsilon_\text{tar}
    \quad&\Longleftrightarrow\quad
    V < \log\left(\frac{1-\epsilon_\text{tar}}{1-q}\right),
    \\
\intertext{or alternatively}
    \label{eq:err-q-bound}
    &\Longleftrightarrow\quad
    q < 1 - \sqrt[V]{1-\epsilon_\text{tar}}.
\end{align}

Instead of assuming that each error channel $\mathcal E$ in \cref{fig:circuit-dense-2} has the same error probability $q$, we can analyse the case where $q$ is proportional to the pulse length of the preceding or anteceding gate; corresponding relations as given in \cref{eq:err-V-bound,eq:err-q-bound} can readily be derived numerically.

\subsection*{Error Mapping under Fermionic Encodings}\label{sec:2ndorder}

In \cite{Error-Mapping}, the authors analyse how noise on the physical qubits translates to errors in the fermionic code space.
To first order and in the W3 encoding, all of $\{X, Y, Z\}$ errors on the face, and $\{X, Y\}$ on the vertex qubits can be detected.
$Z$ errors on the vertex qubits -- as evident from the form of $\op h_\mathrm{on-site}$ from \cref{eq:h-onsite} -- result in an undetectable error; as shown in \cite[Sec.~3.2]{Error-Mapping}, this $Z$ error induces fermionic phase noise.

It is therefore a natural extension to the notion of simulation to allow for some errors occur -- if they correspond to physical noise in the fermionic space.
And indeed, as discussed more extensively in \cite[Sec.~2.4]{Error-Mapping}, phase noise is a natural setting for many fermionic condensed matter systems coupled to a phonon bath \cite{Ng2015,Kauch2020,Zhao2017,Melnikov2016,Openov2005,Fedichkin2004,Scully1993} and \cite[Ch.~6.1\&eq.~6.17]{Wellington2014}.

We further assume that we can measure all stabilisers (including a global parity operator) once at the end of the entire circuit.
We could imagine measuring these stabilisers after each individual gate of the form $\ee^{\ii \delta h_i}$ -- where $h_i$ is any term in the Hamiltonian. However, as every stabiliser commutes with every term in the Hamiltonian the outcome of the stabiliser measurement is unaffected and so we need only measure all stabilisers once at the end of the entire circuit. It is evident 
that measuring the stabilisers can be done by dovetailing an at most depth $4$ circuit to the end of our simulation -- much like measuring the stabilisers of the Toric code. 
It is thus a negligible overhead to the cost of simulation $\cost$.

However, errors may occur within the decomposition of gates of the form $\ee^{\ii \delta h_i}$ into single qubit rotations and two-qubit gates of the form $\ee^{\ii t ZZ}$. The stabilisers do not generally commute with these one- and two-local gates. In spite of this we can commute a Pauli error which occurs within a decomposition past the respective gates that make up that decomposition. For example, consider $h_i = X_1 Z_2 X_3$. Then we would decompose $\ee^{\ii \delta h_i}$ first via $\ee^{\ii \delta X_1 Z_2 X_3} \approx \ee^{\ii \delta^{1/2} X_1 X_2}\ee^{-\ii \delta^{1/2} Y_2 X_3}\ee^{-\ii \delta^{1/2} X_1 X_2}\ee^{\ii \delta^{1/2} Y_2 X_3}$ and then furthermore using identities such as $\ee^{\ii \delta^{1/2} X_1 X_2} = H_1 H_2 \ee^{\ii \delta^{1/2} Z_1 Z_2} H_1 H_2$. Suppose a Pauli X error were to occur on the second qubit during one of these steps, say for example instead leading us to perform the circuit $H_1 H_2 \ee^{\ii \delta^{1/2} Z_1 Z_2} X_2 H_1 H_2 \ee^{-\ii \delta^{1/2} Y_2 X_3}\ee^{-\ii \delta^{1/2} X_1 X_2}\ee^{\ii \delta^{1/2} Y_2 X_3}$. By commuting the error $X_2$ to the end of the circuit we can see that this is still $\mathcal{O}(\delta^{1/2}$) close to performing $Z_2 \ee^{\ii \delta X_1 Z_2 X_3}$ as $H Z = X H$, which is an error we can detect as previously discussed. As a Pauli Z error is equally likely to occur this doesn't introduce any noise bias. Therefore in cases where $\delta$ is sufficiently small, we can use error detection to reduce the fidelity requirements needed in the device by accounting for this additional error term. This is done numerically and shown in the dashed lines of \Cref{fig:cost-analytic-delta-ignore-pulse-lengths,fig:cost-analytic-delta}.

This means that while the fermionic encodings do not provide error correction, they do allow error detection to some extent;
we summarise all first order error mappings in \cref{tab:error-mapping}.
This means we can numerically simulate the occurrence of depolarising noise throughout the circuit, map the errors to their respective syndromes, and classify the resulting detectable errors, as well as non-detectable phase, and non-phase noise.

This means we can analyse $\cost$ with a demonstrably suppressed error, by allowing the non-detectable non-phase noise to saturate a target error bound.
The resulting simulation is such that it corresponds to a faithful simulation of the fermionic system, but where we allow fermionic phase error occurs -- where we emphasise that since detectable non-phase errors occur roughly with the same probability as non-detectable phase errors we know that, in expectation, only $O(1)$ phase error occurs throughout the simulation; in brief, it is not a very noisy simulation after all.

The resulting required depolarising noise parameters for various FH setups we summarise in \cref{fig:cost-analytic-delta,fig:cost-numeric-delta,fig:cost-analytic-delta-ignore-pulse-lengths,fig:cost-numeric-delta-ignore-pulse-lengths}, and the resulting post-selection probabilities in \cref{fig:postsel-probs}.

\begin{table}[t]
\centering
\begin{tabular}{rrl}
    \toprule
    Location & Syndrome & Effect \\
    \midrule
    \multirow{3}{*}{Vertex} & $X$ & detectable \\
     & $Y$ & detectable \\
     & $Z$ & detectable \\
    \midrule
    \multirow{3}{*}{\shortstack[r]{Face\\(Ancilla)}} & $X$ & detectable \\
     & $Y$ & detectable \\
     & $Z$ & phase noise \\
     \bottomrule
\end{tabular}
\caption{Error mapping from first order physical noise to the encoded fermionic code space, under the W3 encoding, by \cite{Error-Mapping}.
All but $Z$ errors on the faces are detectable; the latter result in fermionic phase noise.}\label{tab:error-mapping}
\end{table}

\subsection*{Numeric Results}\label{ap:numeric-results}
We can tighten the preceding analysis in several ways. First of, instead of crudely upper bounding the cost of individual gates we can sum these pulse times exactly. To this end, we use both the explicitly defined Trotter formulae coefficients $h_{ij}$, and also the exact formulae for the pulse times derived previously.

Secondly, we can use tighter bounds for Trotter error, which take into account the commutation relations between pairs of interactions across Trotter layers $\op H_i$, the coefficients defined in \Cref{rem:bounds-on-trotter-coefficients-ap}. Additionally, we obtain a bound which rewrites the Trotter error as a Taylor series, and then bound the Taylor remainder using methods which usually bound the total Trotter error; which bound is tighter depends on the order $p$ of the formula, and the target simulation time.
As explained in \cref{lem:FH-max-delta}, a tighter Trotter error allows us to choose a larger $\delta$ while achieving the same error $\epsilon_t$, reducing the total cost of the simulation.

Finally, as explained in the Supplementary Method entitled ``Trivial Stochastic Error Bound'', a certain simulation circuit size will determine the amount of stochastic error present within the circuit.
We assume that the depolarising noise precedes each Trotter layer; in the error-per-time model we assume that the noise parameter scales with the pulse length; in the more traditional error-per-gate model we assume that the noise is independent of the pulse length.

Furthermore, instead of taking the analytic Trotter bounds, we can also numerically calculate the optimal Trotter pulse length by calculating $\epsilon_p(T,\delta)$ from \cref{eq:trotter-epsilon-ap} explicitly, and maximizing $\delta$ till $\epsilon_p$ saturates an upper target error rate.
Naturally, this is computationally costly; so we perform this calculation for FH lattices up to a size $3\times 3$, and extrapolate these numbers by qubit count to the desired lattice lengths up to $10\times 10$; the dependence of $\delta_0$ on the target time and number of qubits was extracted from the asymptotic Trotter bound in \cref{lem:FH-max-delta}, i.e.
\begin{equation}\label{eq:fitted-delta-dependency}
    \delta_0 = \left(a_0 + \frac{b_0}{T^{1/p}}\right)\left(a_1 + \frac{b_1}{\Lambda^{(p+1)/p} } \right)
\end{equation}
for four fit parameters $a_0, a_1, b_0, b_1$; all other dependencies are assumed constant.
We show such numerically fitted data in \cref{fig:numeric-extrapolation}; a similar analysis was made for Trotter orders $p=1, 2, 4, 6$, and target error rates $\epsilon_t = 0.1, 0.05, 0.01$.
These numerical bounds are much tigher than the analytical bounds, but can still be assumed to overestimate the actual error: the operator norm distance between Trotter evolution and true evolution likely overestimates the error that would occur if starting from a particular initialized state.
We found that randomizing the Trotter layer order does not yield any advantage over keeping it fixed, and similarly we did not obtain an advantage by permuting the fixed order further.

\begin{figure}[t]
    \centering
    \includegraphics[width=\textwidth]{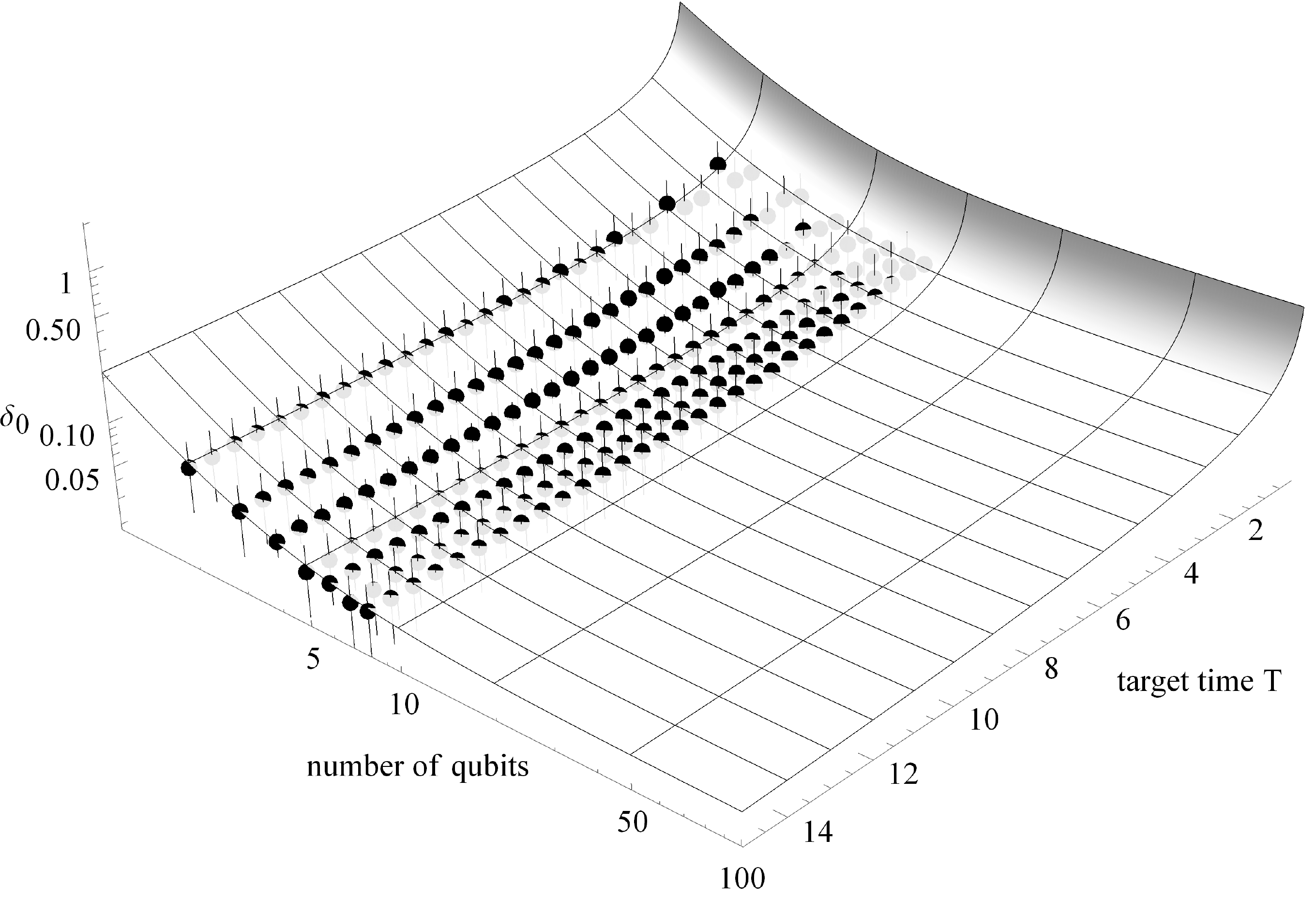}
    \caption{Extraploated optimal Trotter error step size $\delta_0$.
    The data points stem from the simulation of FH lattices up to size $3\times 3$, for a target error rate $\epsilon_t=0.1$, Trotter order $p=2$, and for times $T=0.5,1.0,\ldots,15.0$.
    The fitted surface follows the formula in \cref{eq:fitted-delta-dependency}.}
    \label{fig:numeric-extrapolation}
\end{figure}

\Cref{tab:NumericCost} compares these numerics for the case of a lattice with $L=5$ and for a sufficiently long time in terms of units set by the lattice spacing as we assume $\hop =1$. So for any $L$ we choose $T=  \lfloor \sqrt{2}L  \rfloor$. We choose an analytic error of $\epsilon_t = 0.1$ as there is no point making the analytic error smaller than the experimental error present in NISQ-era gates.
\begin{table}[]
    \centering
    \begin{tabular}{c c c c}
    \toprule
     & Trotter Bounds & Standard & Subcircuit \\
    \midrule
     \multirow{2}{*}{VC}
      & analytic & 95,409 & 17,100 \\
      & numeric & 4,234 & 1,669 \\
     \midrule
     \multirow{2}{*}{compact}
         & analytic & 77,236 & 1,686 \\
         & numeric & 3,428 & 259 \\
    \end{tabular}
    \caption{Per-time. A comparison of the run-time $\cost$ for lattice size $L\times L$ with $L=5$, overall simulation time $T=7$ and target Trotter error $\epsilon_\mathrm{target} = 0.1$, with $\Lambda=5$ fermions and coupling strengths $|\os|, |\hop|\le r=1$.
    Obtained by minimising over product formulas up to $4$\textsuperscript{th} order.
    $\cost=\cost(\P_p(\delta_0)^{T/\delta_0})$ for per-time error model.
    In either gate decomposition case---standard and sub-circuit---we account single-qubit rotations as a free resource; the value of $\cost$ depends only on the two-qubit gates/interactions. Two-qubit unitaries are counted by their respective pulse lengths. Here compact and VC denote the choice of fermionic encoding.}
    \label{tab:NumericCost}
\end{table}
\begin{table}[]
    \centering
    \begin{tabular}{c c c c}
    \toprule
     & Trotter Bounds & Standard & Subcircuit \\
    \midrule
     \multirow{2}{*}{VC}
      & analytic & 121,478 & 95,447 \\
      & numeric & 5391 & 4236 \\
     \midrule
     \multirow{2}{*}{compact}
         & analytic & 98,339 & 72,308 \\
         & numeric & 4,364 & 3,209 \\
    \end{tabular}
    \caption{Per-gate. A comparison of the run-time $\cost$ for lattice size $L\times L$ with $L=5$, overall simulation time $T=7$ and target Trotter error $\epsilon_\mathrm{target} = 0.1$, with $\Lambda=5$ fermions and coupling strengths $|\os|, |\hop|\le r=1$.
    Obtained by minimising over product formulas up to $4$\textsuperscript{th} order.
    $\cost=$~circuit-depth for per-gate error model.
    In either gate decomposition case---standard and sub-circuit---we account single-qubit rotations as a free resource; the value of $\cost$ depends only on the two-qubit gates/interactions. Two-qubit unitaries are counted by unit time per gate in the per gate error model.  Here compact and VC denote the choice of fermionic encoding.}
\end{table}
As the compact encoding results in the smallest run-time we investigate how $\cost$ varies with $\epsilon_t$ for $L=3$, $5$ and $10$ below. In these numerics we choose the order $p$ which minimise $\cost$ at each value of $T$.

\begin{figure}[t]
    \centering
    \vspace{-1cm}
    \centerline{\begin{minipage}{18cm}
    \includegraphics[width=\textwidth]{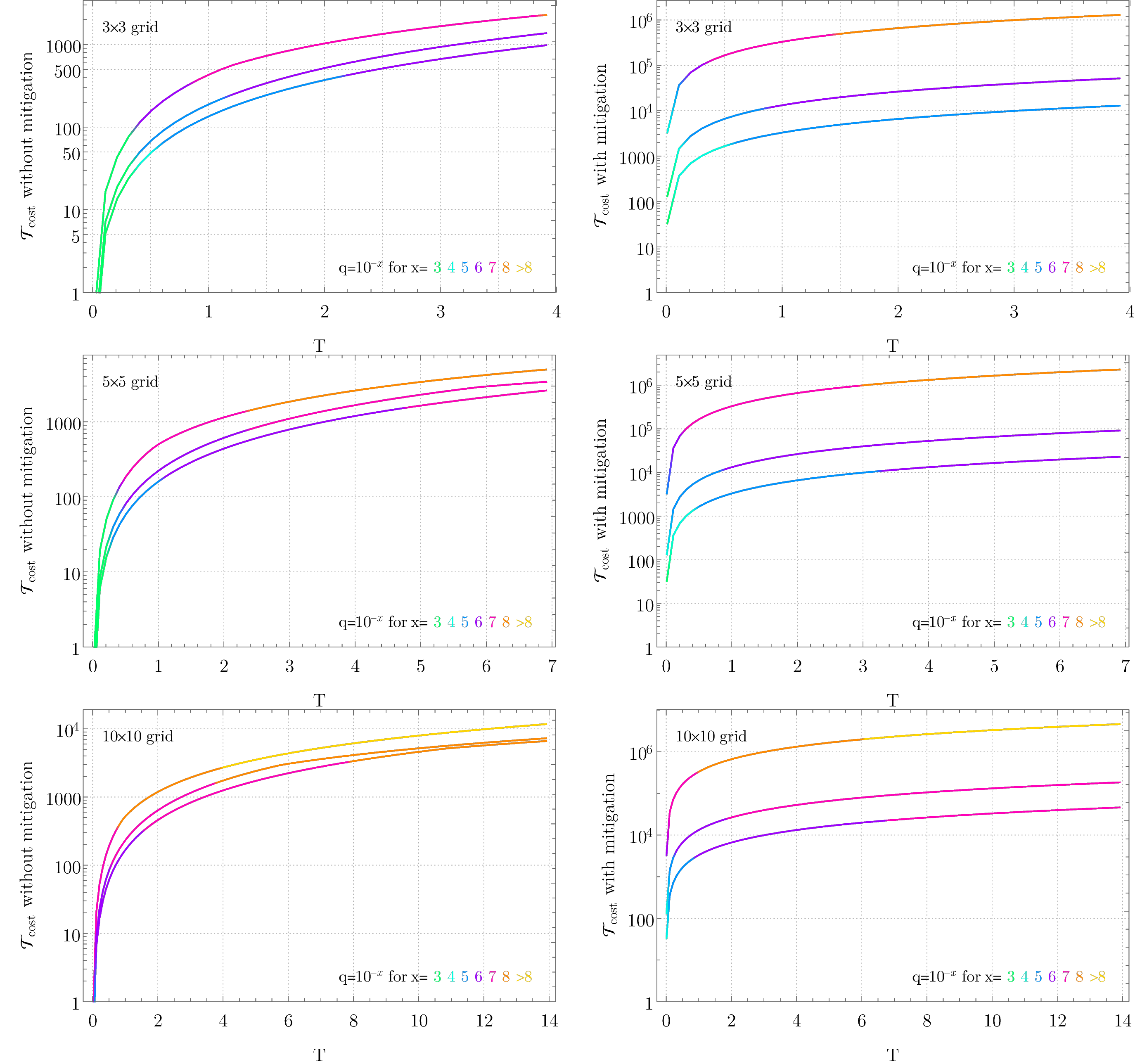}
    \end{minipage}}
    \caption{$\cost$ vs.~target time $T$ both without (left) and with (right) error mitigation, for lattice sizes $3\times3$, $5\times5$, and $10\times10$, for the Fermi-Hubbard Hamiltonian $\op H_{\text{FH}}$ from \cref{def:unencoded-H} in the compact encoding.
      These plots show a per-gate error model where $\cost$ equals circuit depth, and we synthesise local gates with a depth-3 gate decomposition by conjugation when error mitigation is not used (left) and our depth-4 decomposition (see 
      main text) if error mitigation is used (right).
    In each plot, three lines (top to bottom) represent $1\%$, $5\%$, and $10\%$ Trotter error $\epsilon$ given in \cref{eq:trotter-epsilon-ap}, where we minimize over the product formula order $p\in\{1,2,4,6\}$ and extrapolate a numerical simulation of FH lattices up to size $3\times3$ to obtain the optimal Trotter step size $\delta_0$.
    The line color corresponds to the intervals wherein the \emph{decoherence} error of the circuit is upper-bounded by the \emph{Trotter} error of the simulation, given a specific depolarizing noise parameter $q$. E.g.\ the blue sections of the top line within each row indicate that the decoherence error remains below $1\%$, for $q=10^{-5}$.
    }
    \label{fig:cost-numeric-delta-ignore-pulse-lengths}
\end{figure}

\begin{figure}[t]
    \centering
    \vspace{-1cm}
    \centerline{\begin{minipage}{18cm}
    \includegraphics[width=\textwidth]{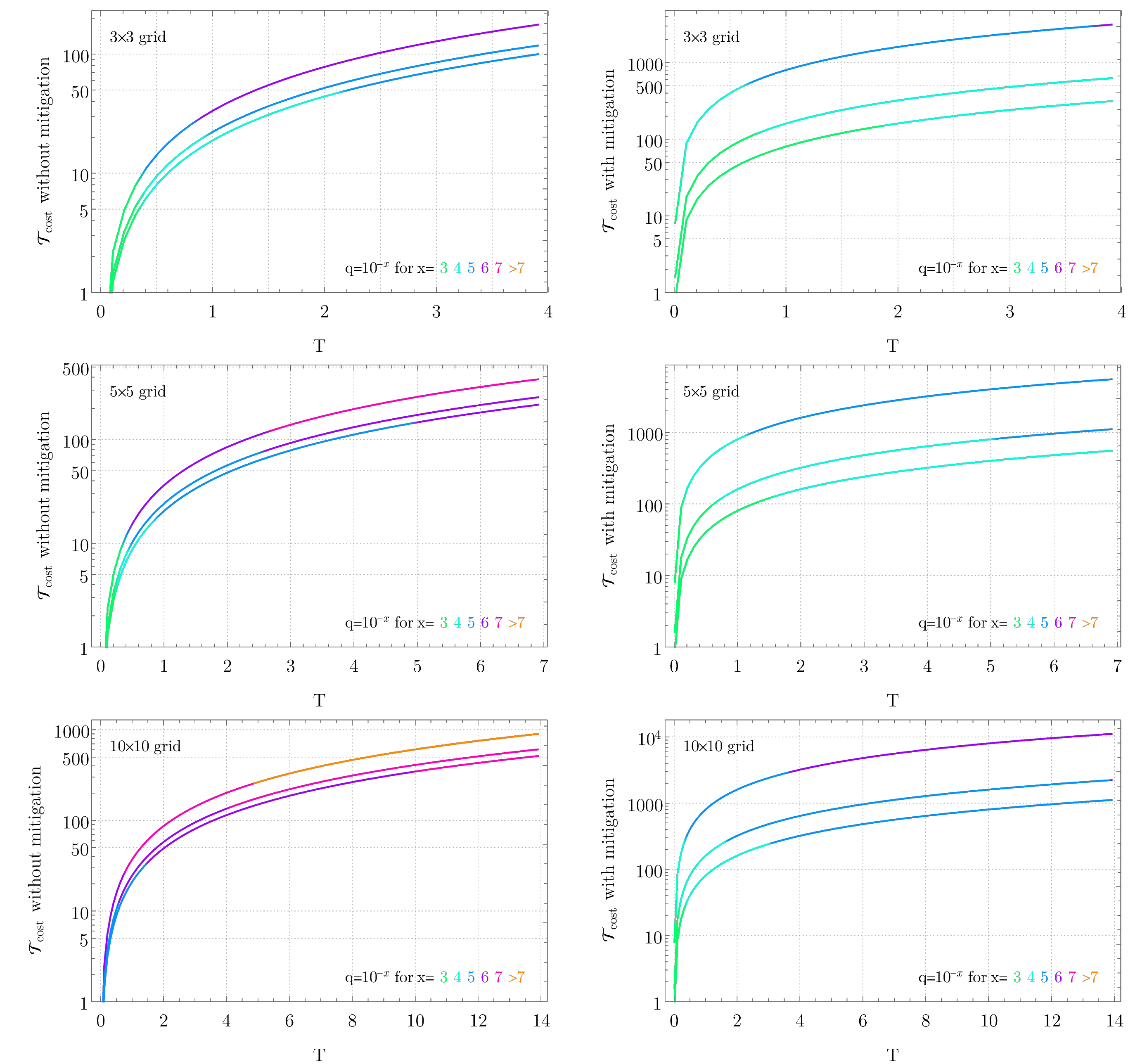}
    \end{minipage}}
    \caption{$\cost$ vs.~target time $T$ both without (left) and with (right) error mitigation , for lattice sizes $3\times3$, $5\times5$, and $10\times10$, for the Fermi-Hubbard Hamiltonian $\op H_{\text{FH}}$ from \cref{def:unencoded-H} in the compact encoding.
    Assuming the per-time error model where the noise probability is proportional to the gate lengths.
    Here $\cost$ denotes the circuit depth equivalent (i.e.\ the sum of all pulse times from \cref{def:cost-circuit depth}; all 3-local gates in the compact encoding are decomposed with the depth-4 method from \cref{lem:Weight-Increase-Depth4-ap}.
    In each plot, three lines represent $1\%$, $5\%$, and $10\%$ Trotter error $\epsilon$ given in \cref{eq:trotter-epsilon-ap}, where we minimize over the product formula order $p\in\{1,2,4,6\}$ and extrapolate a numerical simulation of FH lattices up to size $3\times3$ to obtain the optimal Trotter step size $\delta_0$.}
    \label{fig:cost-numeric-delta}
\end{figure}

\begin{figure}[t]
    \centering
    \vspace{-1cm}
    \centerline{\begin{minipage}{18cm}
    \includegraphics[width=\textwidth]{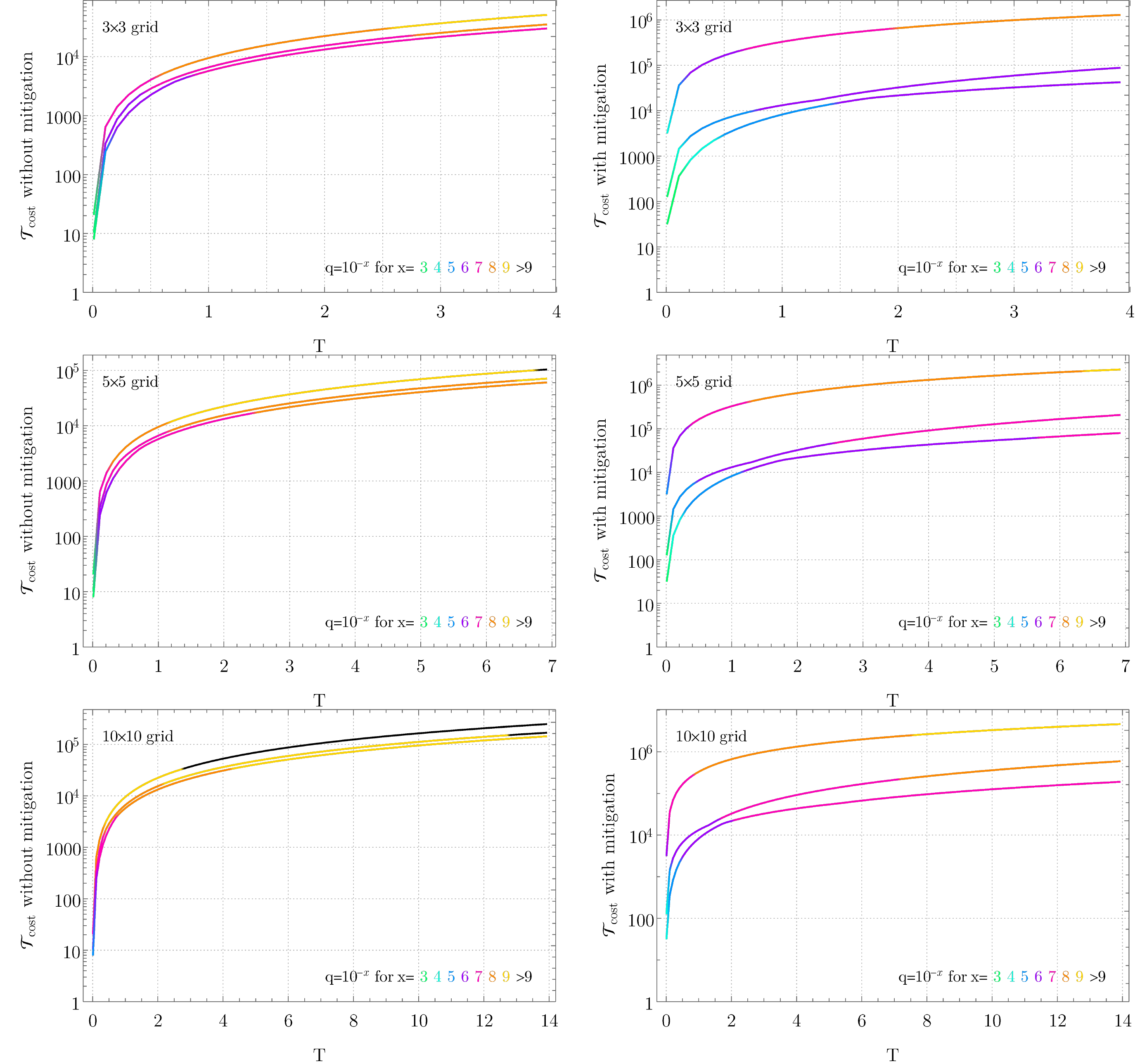}
    \end{minipage}}
    \caption{Setup as in \cref{fig:cost-numeric-delta-ignore-pulse-lengths} with the per-gate error model, but we use the tightest analytic (instead of numeric) error expression from \cref{cor:trotter-error-ap,th:Trotter-Er-Commutator-ap,cor:taylor-error-bound-ap}.
    We use a depth-3 gate decomposition when error mitigation is not used (left) and a depth-4 decomposition if error mitigation is used (right).}
    \label{fig:cost-analytic-delta-ignore-pulse-lengths}
\end{figure}

\begin{figure}[t]
    \centering
    \vspace{-1cm}
    \centerline{\begin{minipage}{18cm}
    \includegraphics[width=\textwidth]{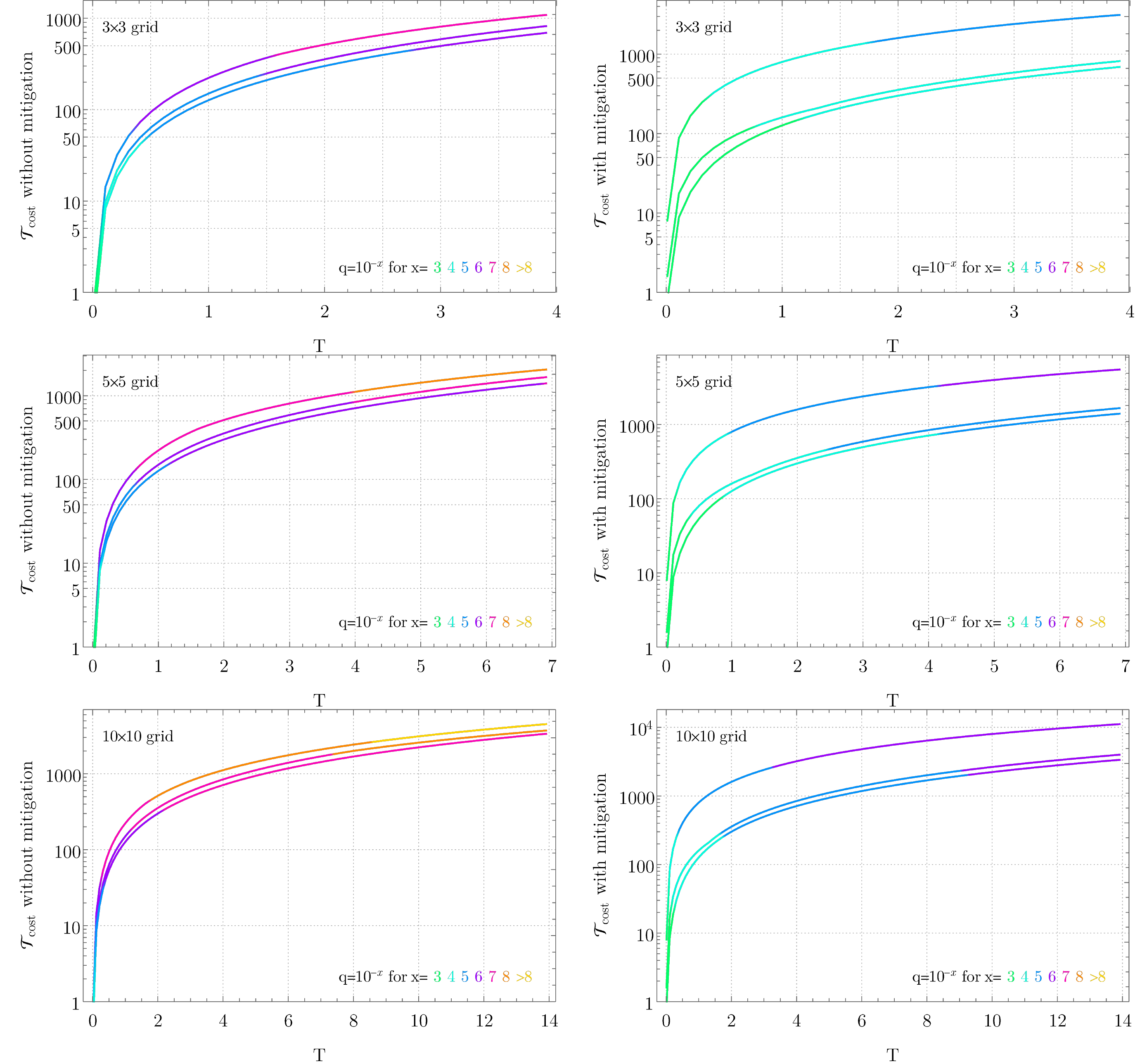}
    \end{minipage}}
    \caption{Setup as in \cref{fig:cost-numeric-delta} with the per-time error model, but we use the tightest error expression from \cref{cor:trotter-error-ap,th:Trotter-Er-Commutator-ap,cor:taylor-error-bound-ap}. All 3-local gates in the compact encoding are decomposed with the depth-4 method from \cref{lem:Weight-Increase-Depth4-ap}.}
    \label{fig:cost-analytic-delta}
\end{figure}

\begin{figure}[t]
    \centering
    \vspace{-1cm}
    \centerline{\begin{minipage}{18cm}
    \includegraphics[width=\textwidth]{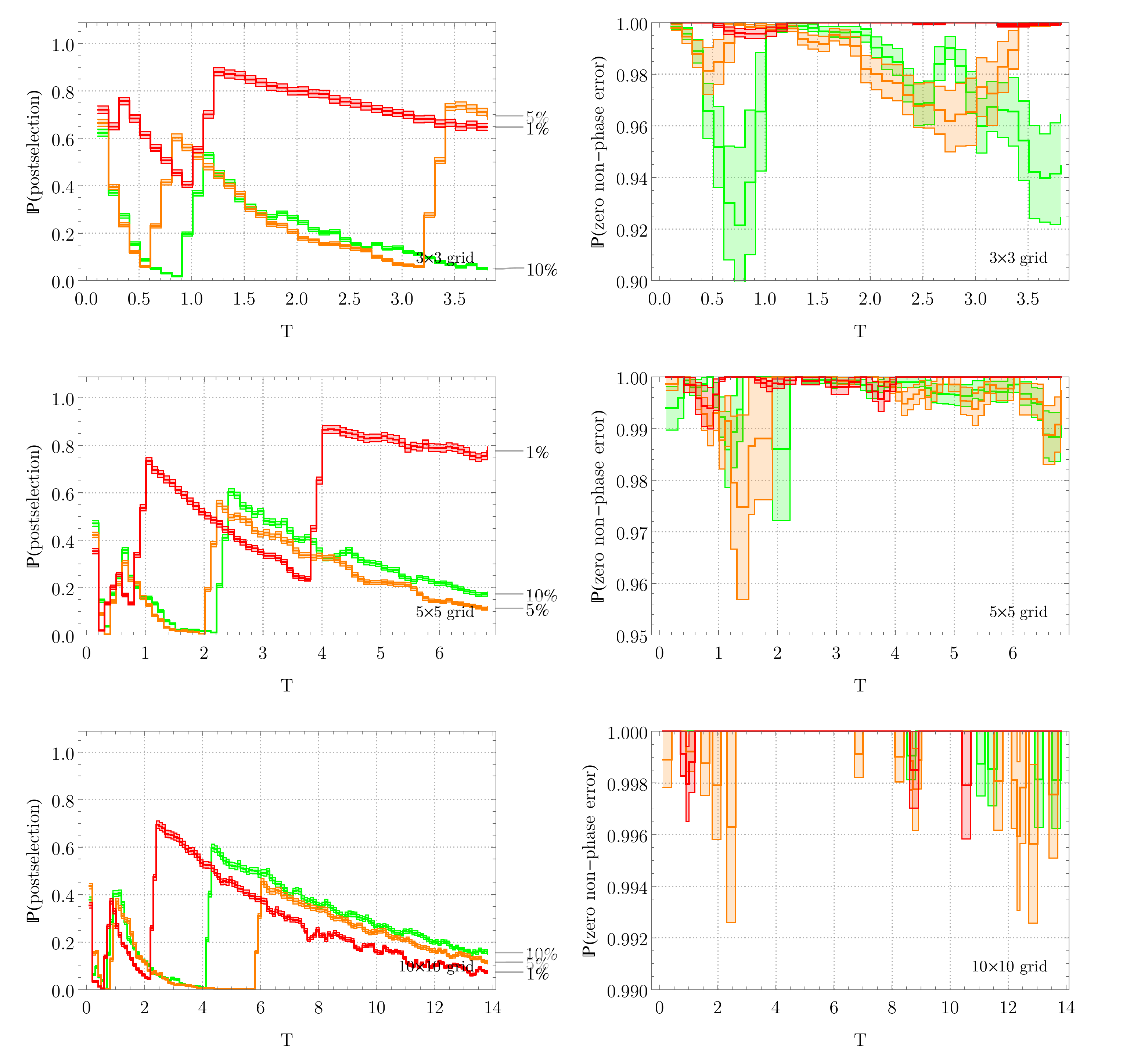}
    \end{minipage}}
    \caption{
    Post-selection probabilities (left column) and probability of zero undetectable non-phase error after post-selection (right column) for lattice sizes $3\times3$, $5\times5$, and $10\times10$, for the Fermi-Hubbard Hamiltonian $H_\mathrm{FH}$ from \cref{eq:FH-H-intro} (main text) in the compact encoding, to go alongside \cref{fig:cost-analytic-delta}.
    The choice between $1\%$, $5\%$ and $10\%$ Trotter error is made according to the colouring shown in \cref{fig:cost-analytic-delta}.
    }
    \label{fig:postsel-probs}
\end{figure}
\section*{Simulating Fermi-Hubbard with Three Trotter Layers}\label{ap:regroup}
\subsection*{Further Circuit Decompositions}
In this section we show that we can actually simulate a 2D spin Fermi-Hubbard model with $M=3$ Trotter layers as opposed to the previous $M=5$. First we need to introduce another circuit decomposition in the same spirit as before.
\begin{lemma}[Depth 3 Decomposition]\label{lem:Rank-Increase-Depth3-ap}
  Let $\op U(t)=\ee^{\ii t \left(\cos(\theta) \op h_1 + \sin(\theta) \op h_2\right)}$ be the time-evolution operator for time $t$ under a Hamiltonian $\op H_{\theta}=\cos(\theta) \op h_1 + \sin(\theta) \op h_2$.
  If $\op h_1$ and $\op h_2$ anti-commute and both square to identity, $\op U(t)$ can be decomposed as
  \begin{align}
    \op U(t) &=\ee^{\ii t_1 \op h_1} \ee^{\ii t_2 \op h_2} \ee^{\ii t_1 \op h_1}
  \end{align}
  where the pulse times $t_1,t_2$ as a function of the target time $t$ are given by
  \begin{align}
    t_1 &=\frac{1}{2} \tan^{-1}\left(\pm\frac{\cos (t)}{\sqrt{1-\sin^2(\theta ) \sin^2(t)}}, \; \pm\frac{\cos (\theta ) \sin (t)}{\sqrt{1-\sin^2(\theta ) \sin^2(t)}}\right)+\pi  c\\
    t_2 &=\tan^{-1}\left(\pm\sqrt{1-\sin^2(\theta ) \sin^2(t)}, \;\ \pm\sin (\theta ) \sin (t)\right)+2 \pi  c
  \end{align}
  where $c \in \mathbb{Z}$ and signs are taken consistently throughout.
\end{lemma}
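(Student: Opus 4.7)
The plan is to expand both sides of the claimed identity as linear combinations of $\{I, \op h_1, \op h_2, \op h_1 \op h_2\}$ and match coefficients. The anticommutation $\{\op h_1, \op h_2\} = 0$ together with $\op h_1^2 = \op h_2^2 = I$ implies that $\op H_\theta$ itself squares to identity, since
\begin{equation*}
    \op H_\theta^2 = \cos^2\theta\, \op h_1^2 + \sin^2\theta\, \op h_2^2 + \cos\theta\sin\theta\,\{\op h_1,\op h_2\} = I.
\end{equation*}
Therefore the target evolution admits the clean expansion
\begin{equation*}
    \op U(t) = \cos(t)\, I + \ii \sin(t)\cos\theta\, \op h_1 + \ii \sin(t)\sin\theta\, \op h_2.
\end{equation*}

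Next I would expand the right-hand side factor by factor, again exploiting that $\ee^{\ii t_j \op h_j} = \cos(t_j) I + \ii\sin(t_j)\op h_j$. Multiplying out $(\cos t_1 + \ii\sin t_1 \op h_1)(\cos t_2 + \ii\sin t_2 \op h_2)(\cos t_1 + \ii\sin t_1 \op h_1)$ and repeatedly applying $\op h_2 \op h_1 = -\op h_1\op h_2$ and $\op h_1\op h_2 \op h_1 = -\op h_2$, I expect the $\op h_1\op h_2$ cross terms to cancel exactly (this cancellation is the symmetry reason the decomposition is chosen to have the form $\ee^{\ii t_1 \op h_1}\ee^{\ii t_2 \op h_2}\ee^{\ii t_1 \op h_1}$ rather than an asymmetric product), leaving
\begin{equation*}
    \cos(t_2)\cos(2 t_1)\, I + \ii \cos(t_2)\sin(2 t_1)\, \op h_1 + \ii\sin(t_2)\, \op h_2.
\end{equation*}

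Matching coefficients against $\op U(t)$ then produces three real equations: $\sin(t_2) = \sin\theta\sin t$, $\cos(t_2)\sin(2t_1) = \cos\theta\sin t$, and $\cos(t_2)\cos(2t_1) = \cos t$. Solving the first equation yields $\cos(t_2) = \pm\sqrt{1 - \sin^2\theta\sin^2 t}$, which plugged into the remaining two gives
\begin{equation*}
    \sin(2t_1) = \pm\frac{\cos\theta\sin t}{\sqrt{1-\sin^2\theta\sin^2 t}},\qquad \cos(2t_1) = \pm\frac{\cos t}{\sqrt{1-\sin^2\theta\sin^2 t}},
\end{equation*}
with the two signs taken consistently. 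The stated formulas then follow by writing $2t_1$ and $t_2$ via the two-argument arctangent and allowing $\pi$-periodic (respectively $2\pi$-periodic) ambiguity, which amounts to the integer shift $c\in\field Z$.

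The only subtlety I anticipate is bookkeeping of branches and sign conventions: one must verify that the two $\pm$ choices in $\cos(t_2)$ remain consistent across the $t_1$ and $t_2$ formulas so that all three coefficient equations are solved simultaneously, and that the $2\pi c$ versus $\pi c$ ambiguities quoted in the statement match the periodicities of $\ee^{\ii t_2 \op h_2}$ and $\ee^{\ii t_1 \op h_1}\cdot\ee^{\ii t_1 \op h_1}$ respectively (note that the outer product doubles $t_1$, halving the fundamental period). Once this is checked, nothing remains except routine algebra, and the same schema will be reused for \cref{lem:Weight-Increase-Depth4-ap,lem:Weight-Increase-Depth5-ap} by taking $\op H = \tfrac{1}{2\ii}[\op h_1,\op h_2]$, which likewise squares to identity under the anticommutation hypothesis.
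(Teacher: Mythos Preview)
Your proposal is correct and follows essentially the same approach as the paper: expand both $\op U(t)$ and the product $\ee^{\ii t_1 \op h_1}\ee^{\ii t_2 \op h_2}\ee^{\ii t_1 \op h_1}$ as linear combinations of $I,\op h_1,\op h_2$ using the square-to-identity property, then equate coefficients and solve. You in fact supply more of the intermediate algebra (the explicit cancellation of the $\op h_1\op h_2$ term and the branch-tracking discussion) than the paper, which simply records the two expansions and says ``equating these and solving for $t_1$ and $t_2$ gives the expressions.''
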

\begin{proof}
  Since $h_1,h_2$ square to identity by assumption, we have
  \begin{align}
    \ee^{\ii t_1  h_1} \ee^{\ii t_2  h_2} \ee^{\ii t_1  h_1}  &= I \cos \left(2 t_1\right) \cos \left(t_2\right) + \ii \op h_1 \sin \left(2 t_1\right) \cos \left(t_2\right)+ \ii \op h_2 \sin \left(t_2\right),
  \end{align}
  and
  \begin{align}
    \ee^{\ii t \left(\cos(\theta) \op h_1 + \sin(\theta) \op h_2\right)} = I \cos \left(t \right) + \ii \sin \left(t \right) \left( \cos(\theta) \op h_1 + \sin(\theta) \op h_2 \right).
  \end{align}
  Equating these and solving for $t_1$ and $t_2$ gives the expressions in the \namecref{lem:Rank-Increase-Depth3-ap}.
\end{proof}
We then need to establish the overhead associated with implementing this decomposition. We will see that for a target time $t$, the pulse times in \cref{lem:Rank-Increase-Depth3-ap} are as $t_i(t) \propto t$.
\begin{lemma}\label{lem:Rank-Increase-Depth3-param-bounds-ap}
  Let $\op H = \cos(\theta) \op h_1 + \sin(\theta) \op h_2$ be as in \cref{lem:Rank-Increase-Depth3-ap}.
  For $0 \leq t \leq \pi/2$ and $0<\theta<\pi/2$, the pulse times $t_1,t_2$ in \cref{lem:Rank-Increase-Depth3-ap} can be bounded by
  \begin{align}
    |t_1|  &\leq \frac{t}{2}, \\
    |t_2|  &\leq t \theta.
  \end{align}
\end{lemma}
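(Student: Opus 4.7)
The plan is to first simplify the closed-form expressions for $t_1, t_2$ appearing in \cref{lem:Rank-Increase-Depth3-ap} into manifestly elementary form, and then bound each separately. Picking the branch of $\tan^{-1}$ and the sign choice so that $t_1, t_2 \ge 0$ for $t \ge 0$, the coordinates inside the two-argument $\tan^{-1}$ for $2t_1$ lie on the unit circle (they satisfy $\cos^2(t)/A^2 + \cos^2(\theta)\sin^2(t)/A^2 = 1$ with $A := \sqrt{1-\sin^2(\theta)\sin^2(t)}$), so I obtain
\begin{equation*}
    2t_1 = \arctan\bigl(\cos(\theta)\tan(t)\bigr), \qquad t_2 = \arcsin\bigl(\sin(\theta)\sin(t)\bigr).
\end{equation*}

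For the bound on $t_1$, I would then use the fact that $\arctan$ is monotone increasing and that $0 \le \cos(\theta) \le 1$ for $\theta\in(0,\pi/2)$. For $t\in[0,\pi/2)$ we have $\tan(t)\ge 0$, so $\cos(\theta)\tan(t) \le \tan(t)$, and hence $2t_1 \le \arctan(\tan(t)) = t$, giving $|t_1| \le t/2$. The endpoint $t=\pi/2$ is handled by continuity.

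For the bound on $t_2$, the plan is to show the sharper inequality $\arcsin(\sin(\theta)\sin(t)) \le \theta t$ on the full rectangle $[0,\pi/2]\times[0,\pi/2]$. I would split into two cases. For $\theta\in(0,1]$, the elementary inequality $\sin(x)/x$ is monotone decreasing on $(0,\pi)$, so $\sin(\theta t)/(\theta t) \ge \sin(t)/t$, i.e.\ $\sin(\theta t) \ge \theta\sin(t) \ge \sin(\theta)\sin(t)$. Since $\theta t\le \pi/2$ in this case, applying the monotone $\arcsin$ yields $\arcsin(\sin(\theta)\sin(t)) \le \theta t$. For $\theta\in(1,\pi/2)$ the previous trick fails because $\theta t$ can exceed $\pi/2$; here I would instead use the mean value theorem on $g(s) := \arcsin(\sin(\theta)\sin(s))$, writing $g(t) = t\,g'(c)$ for some $c\in(0,t)$ and reducing the claim $g'(c) \le \theta$ to the algebraic inequality
\begin{equation*}
    \bigl(\sin(\theta)/\theta\bigr)^2\bigl(1 + (\theta^2-1)\sin^2(c)\bigr) \le 1,
\end{equation*}
which follows since $1+(\theta^2-1)\sin^2(c) \le \theta^2$ and $\sin(\theta)\le 1$.

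The main obstacle is the $t_2$ bound in the regime $\theta>1$, where the target $\theta t$ may exceed $\pi/2$ and the naive sandwich $\arcsin(\sin(\theta)\sin(t)) \le \min(t,\theta)$ fails to imply the product bound $\theta t$ when both $t$ and $\theta$ are small. The MVT reduction above is the cleanest way to handle both the small-argument and large-argument regimes uniformly; the rest of the proof is essentially routine trigonometric manipulation, mirroring the terse \emph{``basic calculus''} style used in the neighbouring \cref{lem:Weight-Increase-Depth4-param-bounds-ap,lem:Weight-Increase-Depth5-param-bounds-ap}.
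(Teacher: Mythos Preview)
Your proof is correct but takes a genuinely different route from the paper. The paper's argument simply Taylor-expands the chosen branch of $t_1,t_2$ about $(t,\theta)=(0,0)$ to obtain $t_1=t/2+R_1$ and $t_2=t\theta+R_2$, and then asserts that ``basic calculus shows that the Taylor remainders $R_1,R_2$ are always negative'' on the stated domain. You instead first reduce the two-argument arctangents to the elementary closed forms $2t_1=\arctan(\cos\theta\,\tan t)$ and $t_2=\arcsin(\sin\theta\,\sin t)$, and then prove the two bounds directly: monotonicity of $\arctan$ with $\cos\theta\le 1$ handles $t_1$, and for $t_2$ you use the $\sin x/x$ monotonicity trick for $\theta\le 1$ and a mean-value-theorem reduction to the algebraic inequality $(\sin\theta/\theta)^2(1+(\theta^2-1)\sin^2 c)\le 1$ for $\theta>1$. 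Your approach is more explicit and fully self-contained---in particular it actually \emph{proves} the sign of the remainder that the paper only asserts---at the cost of a few more lines. (Incidentally, your MVT argument already works uniformly for all $\theta\in(0,\pi/2)$, so the case split is not strictly needed.)
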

\begin{proof}
  \Cref{lem:Rank-Increase-Depth3-ap} gives two valid sets of solutions for $t_1,t_2$.
  Choose the following solution:
  \begin{align}
    t_1 &=\frac{1}{2} \tan^{-1}\left(\frac{\cos (t)}{\sqrt{1-\sin^2(\theta ) \sin^2(t)}},\frac{\cos (\theta ) \sin (t)}{\sqrt{1-\sin^2(\theta ) \sin^2(t)}}\right)\\
    t_2 &=\tan^{-1}\left(\sqrt{1-\sin^2(\theta ) \sin^2(t)},\sin (\theta ) \sin (t)\right).
  \end{align}
  Taylor expanding these functions about $t=0$ and $\theta=0$, we have
  \begin{align}
    t_1 &= 
          \frac{t}{2} + R_1\left(t, \theta\right), \\
    t_2 &= 
           t \theta  + R_2\left(t, \theta\right),
  \end{align}
  Basic calculus shows that the Taylor remainders $R_1,R_2$ are always negative for the stated range of $t$, giving the stated bounds.
\end{proof}

\subsection*{Regrouping Interaction Terms}
Now we apply this and the previous decompositions to simulate $H_{\text{FH}}$ as encoded using the compact encoding, using only three Trotter layers: $\{H_0, H_1, H_2\}$. The first of these layers consists of all the on-site interactions:
\begin{align}
    \op H_0 := \frac{\os}{4} \sum_{i=1}^{N}\left(I - Z_{i \uparrow}\right)\left(I - Z_{i \downarrow}\right),
\end{align}
and the other two layers are a \emph{mix} of horizontal and vertical hopping terms. Each has the same form, but consists of different sets of interactions as shown in \cref{fig:M=3-Split-ap}.
\begin{align}
    \op H_{1/2} &:=+ \frac{\hop}{2}\sum_{i<j}\sum_{\sigma \in \{\uparrow,\downarrow\}} \left( X_{i,\sigma}X_{j,\sigma}Y_{f(i,j),\sigma}+Y_{i,\sigma}Y_{j,\sigma}Y_{f(i,j),\sigma}\right) \\
    &+\frac{\hop}{2}\sum_{i<j}\sum_{\sigma \in \{\uparrow,\downarrow\}} g(i,j)\left( X_{i,\sigma}X_{j,\sigma}X_{f(i,j),\sigma}+Y_{i,\sigma}Y_{j,\sigma}X_{f(i,j),\sigma}\right).
\end{align}
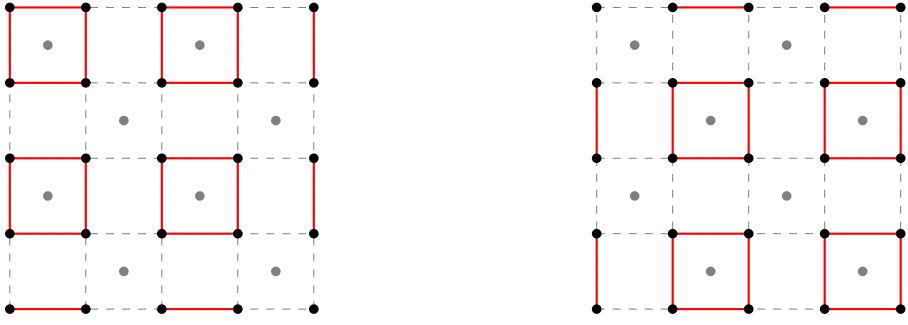
\begin{figure}
\begin{subfigure}[b]{.5\linewidth}
\centering
\begin{tikzpicture}

\draw[dashed][step=1cm,gray,very thin] (0,0) grid (4,4);

\draw[solid][step=1cm,red,thick] (0,0) -- (1,0);
\draw[solid][step=1cm,red,thick] (0,1) grid (1,2);
\draw[solid][step=1cm,red,thick] (0,3) grid (1,4);
\draw[solid][step=1cm,red,thick] (2,0) -- (3,0);
\draw[solid][step=1cm,red,thick] (2,1) grid (3,2);
\draw[solid][step=1cm,red,thick] (2,3) grid (3,4);
\draw[solid][step=1cm,red,thick] (4,1) -- (4,2);
\draw[solid][step=1cm,red,thick] (4,3) -- (4,4);

\foreach \x in {1.5,3.5}{
    \foreach \y in {0.5,2.5}{
        \node at (\x,\y)[fill=gray,circle,scale=0.35]{};
        }
    }

\foreach \x in {0.5,2.5}{
    \foreach \y in {1.5,3.5}{
        \node at (\x,\y)[fill=gray,circle,scale=0.35]{};
        }
    }

\foreach \x in {0,1,2,3,4}{
    \foreach \y in {0,1,2,3,4}{
        \node at (\x,\y)[fill=black,circle,scale=0.35]{};
        }
    }
\end{tikzpicture}
\end{subfigure}
\begin{subfigure}[b]{.5\linewidth}
\centering
\begin{tikzpicture}

\draw[dashed][step=1cm,gray,very thin] (0,0) grid (4,4);

\draw[solid][step=1cm,red,thick] (0,0) -- (0,1);
\draw[solid][step=1cm,red,thick] (1,0) grid (2,1);
\draw[solid][step=1cm,red,thick] (3,0) grid (4,1);
\draw[solid][step=1cm,red,thick] (0,2) -- (0,3);
\draw[solid][step=1cm,red,thick] (1,2) grid (2,3);
\draw[solid][step=1cm,red,thick] (3,2) grid (4,3);
\draw[solid][step=1cm,red,thick] (1,4) -- (2,4);
\draw[solid][step=1cm,red,thick] (3,4) -- (4,4);

\foreach \x in {1.5,3.5}{
    \foreach \y in {0.5,2.5}{
        \node at (\x,\y)[fill=gray,circle,scale=0.35]{};
        }
    }

\foreach \x in {0.5,2.5}{
    \foreach \y in {1.5,3.5}{
        \node at (\x,\y)[fill=gray,circle,scale=0.35]{};
        }
    }

\foreach \x in {0,1,2,3,4}{
    \foreach \y in {0,1,2,3,4}{
        \node at (\x,\y)[fill=black,circle,scale=0.35]{};
        }
    }
\end{tikzpicture}
\end{subfigure}
\caption{The interactions in $\op H_1$ (left) and those in $\op H_2$ (right). Gray nodes represent ancillary qubits, all non-gray qubits encode fermionic sites of a particular spin.}
\label{fig:M=3-Split-ap}
\end{figure}

Now we show that $\ee^{\ii \delta \op H_{1}}$ can be implemented directly; the case of $\ee^{\ii \delta \op H_{2}}$ follows similarly. As $\op H_{1}$ consists of interactions on disjoint sets of qubits, each forming a square on four qubits in \cref{fig:M=3-Split-ap}, we only need to show that we can implement evolution under the interactions making up one of the squares. We will denote these by $\op h_i$. We will label an example $\op h_1$ as shown in \cref{fig:interaction-square-ap} and demonstrate that this can be done. Using this labelling, the square of interactions is given by
\begin{align}
    \op h_1 &= \frac{1}{2} \left( X_1 X_2 + Y_1 Y_2 \right) Y_a +  \frac{1}{2} \left( X_3 X_4 + Y_3 Y_4 \right) Y_a \\
    &+\frac{1}{2} \left( X_1 X_4 + Y_1 Y_4 \right) X_a -  \frac{1}{2} \left( X_2 X_3 + Y_2 Y_3 \right) X_a.
\end{align}
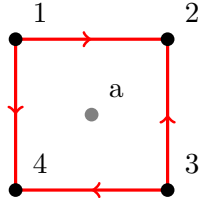
\begin{figure}
    \centering
\begin{tikzpicture}
\begin{scope}[red,very thick,decoration={
    markings,
    mark=at position 0.5 with {\arrow{>}}}
    ]

\draw[dashed][step=2cm,black,very thin] (0,0) grid (2,2);

\draw[postaction={decorate}] (2,0)--(0,0);
\draw[postaction={decorate}] (0,2)--(2,2);
\draw[postaction={decorate}] (2,0)--(2,2);
\draw[postaction={decorate}] (0,2)--(0,0);

\node[label={[black,label distance=0.1mm]45:a}] at (1,1)[fill=gray,circle,scale=0.5]{};

\node[label={[black,label distance=0.1mm]45:4}] at (0,0)[fill=black,circle,scale=0.5]{};
\node[label={[black,label distance=0.1mm]45:1}] at (0,2)[fill=black,circle,scale=0.5]{};
\node[label={[black,label distance=0.1mm]45:3}] at (2,0)[fill=black,circle,scale=0.5]{};
\node[label={[black,label distance=0.1mm]45:2}] at (2,2)[fill=black,circle,scale=0.5]{};
\end{scope}
\end{tikzpicture}
    \caption{The Hamiltonian $\op h_1$: a \emph{sum} of each Pauli interaction represented by a red line connecting a pair of qubits. Upward pointing arrows indicate $g(i,j)=-1$ and downward, left and right pointing arrows indicate $g(i,j)=1$ (See \cite{DK}). $\op H_1$ is a sum of disjoint Hamiltonians of this form, shown in \cref{fig:M=3-Split-ap}}
    \label{fig:interaction-square-ap}
\end{figure}
Now we will regroup these interactions in such away that we can use the methods of \cref{lem:Rank-Increase-Depth3-ap} to decompose $\ee^{\ii \delta \op h_1}$.
To do this we group the terms as $\op h_1= \op a_1 + \op a_2 + \op b_1 +\op b_2$, where
\begin{align}
    \op a_1 &= \frac{1}{\sqrt{2}}\left(\frac{X_1 X_2 Y_a - X_a X_2 X_3}{\sqrt{2}}\right) \\
    \op a_2 &= \frac{1}{\sqrt{2}}\left(\frac{Y_1 Y_4 X_a + Y_a Y_4 Y_3}{\sqrt{2}}\right) \\
    \op b_1 &= \frac{1}{\sqrt{2}}\left(\frac{Y_1 Y_2 Y_a - X_a Y_2 Y_3}{\sqrt{2}}\right) \\
    \op b_2 &= \frac{1}{\sqrt{2}}\left(\frac{X_1 X_4 X_a + Y_a X_4 X_3}{\sqrt{2}}\right).
\end{align}

We have reordered terms in order to easily verify the following commutation and anti-commutation relations: (i).~Every $\op a_i$ and $\op b_i$ squares to something proportional to the identity and consists of two anti-commuting Pauli terms. (ii).~The only pairs of $\op a_i$ and $\op b_i$ which don't commute, instead anti-commute:
\begin{align}
    \{\op a_1, \op b_2 \} &=0 \\
    \{\op a_2, \op b_1 \} &=0.
\end{align}
It is easy to verify that all other pairs of $\op a_i$ and $\op b_i$ commute. This allows us to implement the target evolution as follows
\begin{align}
    \ee^{\ii \delta \op h_1} &= \ee^{\ii \delta \left(\op a_1 + \op b_2 \right)} \ee^{\ii \delta \left(\op a_2 + \op b_1 \right)}.
\end{align}

Now we only need to show that we can implement $\ee^{\ii \delta \left(\op a_1 + \op b_2 \right)}$; the implementation of  $\ee^{\ii \delta \left(\op a_2 + \op b_1 \right)}$ follows similarly.
Consider the Hamiltonian
\begin{align}
  \op a_1 + \op b_2&= \frac{1}{\sqrt{2}}\left(\frac{X_1 X_2 Y_a - X_a X_2 X_3}{\sqrt{2}}\right) + \frac{1}{\sqrt{2}}\left(\frac{X_1 X_4 X_a + Y_a X_4 X_3}{\sqrt{2}}\right).
\end{align}
This meets the criteria to decompose $\ee^{\ii \delta \left(\op a_1 + \op b_2 \right)}$ with two applications of \cref{lem:Rank-Increase-Depth3-ap}.
The first application gives the decomposition
\begin{align}\label{eq:three_decomp}
    \ee^{\ii \delta \left(\op a_1 + \op b_2 \right)} &= \ee^{\ii t_1 \frac{X_1 X_2 Y_a - X_a X_2 X_3}{\sqrt{2}}} \ee^{\ii t_2\frac{X_1 X_4 X_a + Y_a X_4 X_3}{\sqrt{2}} }\ee^{\ii t_1\frac{X_1 X_2 Y_a - X_a X_2 X_3}{\sqrt{2}}}
\end{align}
where the pulse times are a function of the target time $t_i=t_i(\delta)$ as defined in \cref{lem:Rank-Increase-Depth3-ap}.
Then we apply \cref{lem:Rank-Increase-Depth3-ap} again, to each of the individual gates in \cref{eq:three_decomp}. The first gate in \cref{eq:three_decomp} decomposes as
\begin{align}
    \ee^{\ii t_1 \frac{X_1 X_2 Y_a - X_a X_2 X_3}{\sqrt{2}}} &= \ee^{t_1 ( t_1) X_1 X_2 Y_a} \ee^{t_2 ( t_1) X_a X_2 X_3} \ee^{t_1 ( t_1) X_1 X_2 Y_a},
\end{align}
and the others decompose similarly.
Now we need only apply \cref{lem:Weight-Increase-Depth4-ap} to decompose these three qubit unitaries into evolution under two-local Pauli interactions. As \cref{lem:Rank-Increase-Depth3-param-bounds-ap} shows that up until now the pulse times have remained $\propto \delta$, it is only this final step which introduces a root overhead. Hence the run-time remains asymptotically as \cref{th:FH-optimum-analogue}.

\section*{Feasible simulation time as a function of hardware error rate}
In this Appendix we analyse the simulation performance from the perspective of hardware error rate, rather than circuit depth. Specifically, we analyse the following question: given a maximum allowable simulation error $\epsilon_{tar}$ and a given hardware noise rate $q$, for how long can we simulate a given Fermi Hubbard Hamiltonian? That is, what is the maximum simulation time $T_{tar}$ such that the total simulation error remains below the target threshold, $\epsilon \leq \epsilon_{tar}$? In \cref{tab:extra} we consider this question for an $L=5$ Fermi Hubbard model with target error $\epsilon_{tar} = 10\%$.

When error mitigation is unused the only contribution to the total error $\epsilon$ is from Trotter error $\epsilon_t$ and error due to stochastic noise $\epsilon_s$. That is we have $\epsilon = \sqrt{\epsilon^2_{t} + \epsilon^2_{s}} \leq \epsilon_{tar}$. When mitigation is used there is an additional contribution $\epsilon_c$ from commuting errors past short-time gates. That is $\epsilon = \sqrt{\epsilon^2_{t} + \epsilon^2_{s} + \epsilon^2_{c}} \leq \epsilon_{tar}$. All Trotter bounds used to produce \cref{tab:extra} are numeric. We are considering a fermion occupation number of $\Lambda = 5$.

In \Cref{tab:extra} we consider both per-time and per-gate error models, as well as standard circuit decompositions and subcircuit decompositions. Recall that the former only allows a standard gate set, whereas the latter allows access to a continuous family of two-qubit gates.

In the per-gate error model, the error mitigation techniques described in the Supplementary Method entitled ``Trivial Stochastic Error Bound'' do not apply, and we choose the decomposition which yields the lowest depth circuit. In both the standard and subcircuit decompositions, this implies using the conjugation technique \cref{eq:conj-method} rather than the decomposition techniques of before.
For example, when decomposing 3-local terms, using a standard gate set this conjugation decomposition must be applied twice to leave us with only single qubit rotations and gates that are equivalent to CNOT gates (up to single-qubit rotations). Whereas if we are in a per-gate error model but allowing a subcircuit gate-set, then we only need to decompose via conjugation once.

Where error mitigation can be applied in the per-gate model, we decompose all gates via our subcircuit decompositions as otherwise error mitigation is not possible. Similarly, in the per-time error model, we always use these decompositions, as they are always more efficient than conjugation decompositions in that model.

We see that in the per-time error model using the compact encoding, using our subcircuit synthesis with the error mitigation techniques they enable yields the best performance across the hardware error rates considered. For $q=10^{-6}$ admitting subcircuit gates and allowing for error mitigation takes us from a maximum simulation time of $T_{tar}=3.17$ to $T_{tar}=896$.
For the VC encoding, error mitigation does not yield an improvement. However subcircuit gates and our subcircuit synthesis techniques do, taking us from $T_{tar}=2.25$ to $T_{tar}=3.81$ for $q=10^{-6}$.

For both encodings smaller improvements are seen for $q=10^{-5}$ and $q=10^{-4}$.
In the per-gate error model for the compact encoding, we find that our subcircuit decomposition techniques and error mitigation strategy yields an improvement, but over error rates of $q=10^{-6}$ and $q=10^{-5}$. For the VC encoding we see that error mitigation does not help. However for $q=10^{-6}$ subcircuit gates still yield an improvement, taking us from $T_{tar}=1.63$ to $T_{tar}=2.2$.

In cases where error mitigation is used we include the classical post-selection overhead. We have deliberately capped this at a maximum of $\approx 10^{4}$, to disallow excessive post-selection overhead. In some cases this cap is reached, indicating that our error mitigation techniques could yield further improvement still, but at a potentially unreasonably high post-selection overhead.
\newpage
\clearpage
\begin{table}[h]
\thisfloatpagestyle{empty}
\hspace*{-1.0cm}
\begin{tabular}{@{}lllllllll@{}}
\toprule
error  model               & encoding            & decomp & mitigation & \begin{tabular}[c]{@{}l@{}}post\\selection\\ overhead\end{tabular} & \begin{tabular}[c]{@{}l@{}} $T_{tar}$\end{tabular} & \begin{tabular}[c]{@{}l@{}} $\delta$\end{tabular}   & \begin{tabular}[c]{@{}l@{}}$\lceil T_{tar}/\delta\rceil$\end{tabular} & $\cost$                                                     \\
\midrule
\multirow[t]{6}{*}{per time} & \multirow[t]{3}{*}{compact} & standard      & False      & n/a                                                                          & \begin{tabular}[c]{@{}l@{}}3.17\\ 0.74\\ 0.23 \\[3mm] \end{tabular}                              & \begin{tabular}[c]{@{}l@{}}0.133\\ 0.258\\ 0.456\\[3mm]\end{tabular}       & \begin{tabular}[c]{@{}l@{}}24\\ 3\\ 1\\[3mm]\end{tabular}                           & \begin{tabular}[c]{@{}l@{}}1273\\ 155\\ 27\\[3mm]\end{tabular}     \\
                          &                     & subcircuit    & False      & n/a                                                                          & \begin{tabular}[c]{@{}l@{}}28.2\\ 3.77\\ 0.55\\[3mm]\end{tabular}                              & \begin{tabular}[c]{@{}l@{}}0.054\\ 0.123\\ 0.299\\[3mm]\end{tabular}       & \begin{tabular}[c]{@{}l@{}}520\\ 31\\ 2\\[3mm]\end{tabular}                         & \begin{tabular}[c]{@{}l@{}}1377\\ 123\\ 12\\[3mm]\end{tabular}     \\
                          &                     & subcircuit    & True       & \begin{tabular}[c]{@{}l@{}}9.1e3\\ 9.8e3\\ 9.1e3\end{tabular}                & \begin{tabular}[c]{@{}l@{}}896.\\ 89.8\\ 8.84\end{tabular}                              & \begin{tabular}[c]{@{}l@{}}0.005\\ 0.005\\ 0.005\end{tabular}       & \begin{tabular}[c]{@{}l@{}}179e3\\ 18.0e3\\ 1770\end{tabular}                & \begin{tabular}[c]{@{}l@{}}144e3\\ 14e3\\ 1416\end{tabular} \\
                          \cmidrule{2-9}
                          & \multirow[t]{3}{*}{VC} & standard      & False      & n/a                                                                          & \begin{tabular}[c]{@{}l@{}}2.25\\ 0.48\\ 0.23\\[3mm]\end{tabular}                              & \begin{tabular}[c]{@{}l@{}}0.155\\ 0.318\\ 0.456\\[3mm]\end{tabular}       & \begin{tabular}[c]{@{}l@{}}15\\ 2\\ 1\\[3mm]\end{tabular}                           & \begin{tabular}[c]{@{}l@{}}956\\ 100\\ 33\\[3mm]\end{tabular}      \\
                          &                     & subcircuit    & False      & n/a                                                                          & \begin{tabular}[c]{@{}l@{}}3.81\\ 1.00\\ 0.23\\[3mm]\end{tabular}                              & \begin{tabular}[c]{@{}l@{}}0.123\\ 0.226\\ 0.456\\[3mm]\end{tabular}       & \begin{tabular}[c]{@{}l@{}}31\\ 5\\ 1\\[3mm]\end{tabular}                           & \begin{tabular}[c]{@{}l@{}}803\\ 141\\ 21\\[3mm]\end{tabular}      \\
                          &                     & subcircuit    & True       & \begin{tabular}[c]{@{}l@{}}1.8e4\\ 8.5e3\\ 9.8e3\end{tabular}                & \begin{tabular}[c]{@{}l@{}}0.27\\ 0.04\\ 0.00416\end{tabular}                           & \begin{tabular}[c]{@{}l@{}}1.65e-6\\ 1.65e-6\\ 1.65e-6\end{tabular} & \begin{tabular}[c]{@{}l@{}}165e3\\ 25.0e3\\ 2530\end{tabular}                & \begin{tabular}[c]{@{}l@{}}95e3\\ 14e3\\ 1451\end{tabular}  \\
\midrule
\multirow[t]{6}{*}{per gate} & \multirow[t]{3}{*}{compact} & standard      & False      & n/a                                                                          & \begin{tabular}[c]{@{}l@{}}2.87\\ 0.51\\ 0.23\\[3mm]\end{tabular}                              & \begin{tabular}[c]{@{}l@{}}0.139\\ 0.308\\ 0.456\\[3mm]\end{tabular}       & \begin{tabular}[c]{@{}l@{}}21\\ 2\\ 1\\[3mm]\end{tabular}                           & \begin{tabular}[c]{@{}l@{}}1407\\ 113\\ 34\\[3mm]\end{tabular}     \\
                          &                     & subcircuit    & False      & n/a                                                                          & \begin{tabular}[c]{@{}l@{}}3.53\\ 0.83\\ 0.23\\[3mm]\end{tabular}                              & \begin{tabular}[c]{@{}l@{}}0.127\\ 0.245\\ 0.456\\[3mm]\end{tabular}       & \begin{tabular}[c]{@{}l@{}}28\\ 4\\ 1\\[3mm]\end{tabular}                           & \begin{tabular}[c]{@{}l@{}}1393\\ 170\\ 25\\[3mm]\end{tabular}     \\
                          &                     & subcircuit    & True       & \begin{tabular}[c]{@{}l@{}}1.2e4\\ 2.1e4\\ 1.6e4\end{tabular}                & \begin{tabular}[c]{@{}l@{}}11.1\\ 1.08\\ 0.11\end{tabular}                              & \begin{tabular}[c]{@{}l@{}}0.005\\ 0.005\\ 0.005\end{tabular}       & \begin{tabular}[c]{@{}l@{}}2211\\ 216\\ 22\end{tabular}                      & \begin{tabular}[c]{@{}l@{}}146e3\\ 14e3\\ 1465\end{tabular} \\
                          \cmidrule{2-9}
                          & \multirow[t]{3}{*}{VC} & standard      & False      & n/a                                                                          & \begin{tabular}[c]{@{}l@{}}1.63\\ 0.48\\ 0.23\\[3mm]\end{tabular}                              & \begin{tabular}[c]{@{}l@{}}0.180\\ 0.318\\ 0.456\\[3mm]\end{tabular}       & \begin{tabular}[c]{@{}l@{}}9\\ 2\\ 1\\[3mm]\end{tabular}                            & \begin{tabular}[c]{@{}l@{}}761\\ 126\\ 42\\[3mm]\end{tabular}      \\
                          &                     & subcircuit    & False      & n/a                                                                          & \begin{tabular}[c]{@{}l@{}}2.20\\ 0.50\\ 0.23\\[3mm]\end{tabular}                              & \begin{tabular}[c]{@{}l@{}}0.157\\ 0.312\\ 0.456\\[3mm]\end{tabular}       & \begin{tabular}[c]{@{}l@{}}14\\ 2\\ 1\\[3mm]\end{tabular}                           & \begin{tabular}[c]{@{}l@{}}929\\ 106\\ 33\\[3mm]\end{tabular}      \\
                          &                     & subcircuit    & True       & \begin{tabular}[c]{@{}l@{}}2.6e4\\ 5.1e3\\ 8.0e3\end{tabular}                & \begin{tabular}[c]{@{}l@{}}0.000932\\ 0.000135\\ 0.0000140\end{tabular}                 & \begin{tabular}[c]{@{}l@{}}1.65e-6\\ 1.65e-6\\ 1.65e-6\end{tabular} & \begin{tabular}[c]{@{}l@{}}567\\ 82\\ 9\end{tabular}                         & \begin{tabular}[c]{@{}l@{}}96e3\\ 14e3\\ 1440
                         \end{tabular} \\
                         \bottomrule
\end{tabular}
\caption{For $L=5$ with $\epsilon_{tar} = 10\%$ and $q = 10^{-6},\ 10^{-5}$ and $10^{-4}$ from top to bottom.}\label{tab:extra}
\end{table}

\thispagestyle{empty}
\FloatBarrier
\bibliographystyle{unsrt}
\bibliography{bibliography}
\end{document}